\documentclass[unpublished,onecolumn, letterpaper, 11pt]{quantumarticle}
\pdfoutput=1

\usepackage[usenames,dvipsnames,svgnames]{xcolor}
\definecolor{mypurple}{RGB}{95,45,145}

\usepackage[colorlinks,citecolor=mypurple,linkcolor=mypurple,urlcolor=mypurple,bookmarks=true,backref=page]{hyperref}

\renewcommand*{\backref}[1]{}

\renewcommand*{\backrefalt}[4]{
 \ifcase #1
 \or
 [p.~#2]
 \else
 [pp.~#2]
 \fi
}

\usepackage{xspace}
\usepackage{amssymb,amsmath,amsthm,graphicx}
\graphicspath{{figures/}{figs/}}
\usepackage{mathtools}

\usepackage{thmtools, thm-restate}
\usepackage[margin=1.in]{geometry}
\allowdisplaybreaks
\usepackage{orcidlink}
\usepackage{microtype}
\usepackage[utf8]{inputenc}
\usepackage[english]{babel}
\usepackage[T1]{fontenc}
\usepackage{braket}
\usepackage{dsfont}
\usepackage{cite}
\usepackage[capitalise,nameinlink]{cleveref}
\usepackage{enumitem}

\theoremstyle{plain}
\newtheorem{theorem}{Theorem}[section]
\newtheorem{proposition}[theorem]{Proposition}
\newtheorem{lemma}[theorem]{Lemma}
\newtheorem{claim}[theorem]{Claim}
\newtheorem{fact}[theorem]{Fact}
\newtheorem{corollary}[theorem]{Corollary}

\theoremstyle{definition}
\newtheorem{definition}[theorem]{Definition}

\theoremstyle{remark}
\newtheorem{remark}[theorem]{Remark}

\theoremstyle{definition}
\newtheorem{setting}[theorem]{Setting}
\theoremstyle{plain}

\def\Reals{{\mathbb{R}}}
\def\Complex{{\mathbb{C}}}
\def\Density{{\mathbb{D}}}
\renewcommand{\Pr}{\mathop{\bf Pr\/}}
\newcommand{\E}{\mathop{\mathbb E\/}}
\newcommand{\eps}{\varepsilon}

\renewcommand{\tilde}{\widetilde}
\newcommand{\ketbra}[2]{\ket{#1}\!\bra{#2}}
\newcommand{\diag}{\mathrm{diag}}
\newcommand{\B}{\{0,1\}}
\newcommand{\dd}{\mathrm{d}}
\newcommand{\nn}{\nonumber\\}
\newcommand{\shadow}{\mathrm{shadow}}
\newcommand{\sgn}{\mathrm{sgn}}
\DeclareMathOperator{\tr}{tr}
\DeclareMathOperator{\me}{\mathrm{e}}
\newcommand{\norm}[1]{\left\lVert#1\right\rVert}
\DeclareMathOperator{\Var}{\mathrm{Var}}
\DeclareMathOperator{\comm}{comm}
\DeclareMathOperator{\Rad}{Rad}
\DeclareMathOperator{\Alt}{Alt}
\DeclareMathOperator{\End}{End}
\DeclareMathOperator{\im}{im}
\DeclareMathOperator{\rank}{rank}
\DeclareMathOperator{\spec}{spec}

\newcommand{\channel}{\mathcal{E}}
\newcommand{\mcl}{\mathcal{L}}
\newcommand{\mcc}{\mathcal{C}}
\newcommand{\mch}{\mathcal{H}}
\newcommand{\mcu}{\mathcal{U}}
\newcommand{\mcm}{\mathcal{M}}
\newcommand{\mct}{\mathcal{T}}
\newcommand{\mcd}{\mathcal{D}}
\newcommand{\mcw}{\mathcal{W}}
\newcommand{\mcv}{\mathcal{V}}

\newcommand{\mbs}{\mathbb{S}}
\newcommand{\mbu}{\mathbb{U}}
\newcommand{\mbo}{\mathbb{O}}
\newcommand{\id}{\mathds{1}}
\newcommand{\om}{\ketbra{\Omega}{\Omega}}
\newcommand{\etal}{\textit{et al.}\xspace}
\newcommand{\Wg}{\mathrm{Wg}}
\newcommand{\Brauer}{\mathfrak{B}}
\newcommand{\sit}{Engineering Cluster, \href{https://ror.org/01v2c2791}{Singapore Institute of Technology}, 1 Punggol Coast Road, Singapore 828608, Republic of Singapore\looseness=-1}

\begin{document}

\title{Real Classical Shadows with Noise}
\author{Atharva Hingane\!
\orcidlink{0009-0004-3848-4060}}
\email[]{atharvah@iisc.ac.in}
\affiliation{Department of Computational and Data Sciences, \href{https://ror.org/04dese585}{Indian Institute of Science}, Bengaluru 560012, India}
\author{Dax Enshan Koh\! \orcidlink{0000-0002-8968-591X}}
\email[]{dax.koh@singaporetech.edu.sg}
\affiliation{\sit}

\begin{abstract}
The real classical shadows protocol of West \etal\ [J.~Phys.~A: Math.~Theor. \textbf{58}, 245304 (2025)] replaces the unitary (Clifford) ensemble of the Huang--Kueng--Preskill scheme by the orthogonal (real Clifford) ensemble, and for symmetric observables achieves strictly smaller estimator variances: a factor approaching two for global evolution and an exponential factor $(3/2)^k$ for $k$-local real Pauli observables. Real hardware, however, never implements the ideal evolution. Building on the noisy classical shadows framework of Koh and Grewal [Quantum \textbf{6}, 776 (2022)], we give a complete theory of the real classical shadows protocol in the presence of a known completely positive trace-preserving noise channel $\channel$ acting after the orthogonal evolution. We derive the noisy global and local orthogonal shadow channels from first principles using the Weingarten calculus of the orthogonal group, prove that each is a depolarizing channel acting on the symmetric (respectively locally symmetric) component of its input, and derive from it the exact single-shot variance in closed form, together with the associated shadow seminorm, two-sided bounds on it, and the resulting sample-complexity guarantees. We prove that the noiseless sample-complexity advantages survive intact under noise. Because the variances are exact rather than bounded, the ratio is controlled by a single dimensionless parameter, which gives a closed-form criterion for when the factor of two is attainable: both the second-moment and the variance ratio reach it exactly when the observable's norm profile grows, and the noise enters that limit only through a factor lying within $2/(d+2)$ of two, so the advantage is uniform in the noise. For rank-one targets it is provably unattainable, saturating strictly below two. The local real-Pauli advantage remains $(3/2)^k$. We treat complex measurement bases through a ``reality'' parameter and a transposed-noise scalar~$\tilde\beta$, recovering unitary shadows in the appropriate limit. We survey the Brauer-algebra and orthogonal-Weingarten machinery in an extended appendix, including a representation-theoretic account of the singularity of the order-three Gram matrix at $d=2$.
\end{abstract}

\maketitle
\tableofcontents

\section{Introduction}
\label{sec:intro}

Estimating expectation values $\tr(O\rho)$ of an unknown $n$-qubit state $\rho$ is a basic subroutine across near-term quantum algorithms, and it is often the dominant cost. The \emph{classical shadows} protocol of Huang, Kueng and Preskill~\cite{huang2020predicting} addresses this by replacing full state tomography with randomized single-copy measurements: one evolves $\rho$ by a random unitary $U$ drawn from an ensemble $\mcu$, measures in a fixed basis $\mcw$, and stores a classical description of the back-evolved measurement projector. A short classical dataset then predicts many observables, with a sample complexity governed by the estimator variance, which in turn depends on the choice of $\mcu$~\cite{huang2020predicting,elben2022randomized}.

Different ensembles are suited to different observables, and what decides the match is how much of operator space the observable occupies rather than any one of its invariants. For a global Clifford ensemble the sample cost is governed by $\tr(O_0^2)$ against $\norm{O_0}_\infty^2$, so the protocol is efficient for observables whose relevant operator support is small compared with the Hilbert space. Rank is a proxy for this and not the criterion itself: it enters only through $\tr(O_0^2)\le\operatorname{rank}(O_0)\,\norm{O_0}_\infty^2$, and the identity has maximal rank while costing nothing at all, since $\id_0=0$ and the protocol never estimates it. Local Clifford unitaries are instead efficient for local Pauli observables, the cost being exponential in Pauli weight. The two are not interchangeable, because a global ensemble that is a $3$-design has the Haar shadow norm and that norm is unitarily invariant, so it is indifferent to how entangled the target is, whereas Pauli weight --- which is what the local cost tracks --- is precisely what entanglement inflates. Fermionic Gaussian ensembles are efficient for low-degree Majorana observables~\cite{zhao2021fermionic,wan2023matchgate,heyraud2025unified}. West, Mele, Larocca and Cerezo~\cite{west2025real} recently closed a conceptual gap by analyzing the orthogonal group equivalently, by the real-Clifford orthogonal $3$-design property~\cite{hashagen2018real}, the \emph{real Clifford} ensemble together with real measurement bases. For symmetric observables whose norm profile grows with the number of qubits they obtain a variance reduction over global unitary shadows approaching a factor of two, and for $k$-local observables whose Pauli support avoids $Y$ they obtain an exponential-in-$k$ improvement, from $4^k$ to $3^k$ for general local operators and from $3^k$ to $2^k$ for individual Pauli strings.

These advantages are derived under the idealization of noiseless evolution. On real hardware, the evolution is never ideal, and the practical question is whether the advantage survives once noise is accounted for. For the unitary protocol this question was answered by Koh and Grewal~\cite{koh2022classical}, who modelled a known completely positive trace-preserving (CPTP) error channel $\channel$ acting after the ideal unitary and showed that (i) for a global Clifford ensemble the noisy shadow channel remains a depolarizing channel, so that inverting the noisy channel in post-processing restores an unbiased estimator, and (ii) the sample complexity degrades only by a controllable factor captured by a \emph{shadow seminorm}. We carry out the analogous program here for the real classical shadows protocol.

\subsection{Contributions}
\label{sec:contributions}
We give a complete and self-contained theory of noisy real classical shadows. Concretely:
\begin{enumerate}[leftmargin=*]
\item \textbf{Global orthogonal channel (\cref{prop:global_noisy_channel}).} For an orthogonal $3$-design and a real basis, and for a CPTP (more generally trace-preserving or unital) channel $\channel$, the noisy shadow channel is $\mcm_{\mbo,\channel}=\mcd_{n,f(\channel)}\circ(\cdot)_{\mathrm{sym}}$, the composition of a depolarizing channel with the symmetrization map $(\cdot)_{\mathrm{sym}}:A\mapsto A_{\mathrm{sym}}=\tfrac12(A+A^\intercal)$, a depolarizing channel of parameter
$f(\channel)=\tfrac{2(\beta-1)}{(d-1)(d+2)}$ acting on the symmetric part, where $\beta=\tr[\channel\circ\diag]$ is the diagonal weight of the noise ($\diag:A\mapsto\sum_b\Pi_bA\Pi_b$ is the completely dephasing map) and $d=2^n$. Inverting on the visible (symmetric) subspace restores unbiasedness.
\item \textbf{Shadow seminorm and sample complexity (\cref{prop:shadow_norm,cor:seminorm_bounds,cor:sample_complexity}).} We compute the shadow seminorm exactly via the orthogonal order-three Weingarten calculus, give two-sided bounds of matching form (they differ by the factor $5$ that the relaxation $\norm{O_0^2}_{\mathrm{sp}}\le\tr(O_0^2)$ costs, and are not tight, see \cref{sec:comparison}), and obtain a sample-complexity guarantee $N_{\mathrm{tot}}\le 170\,(d-1)^2(\beta-1)^{-2}\eps^{-2}\log(2M/\delta)\max_i\tr(O_i^2)$.
\item \textbf{Robustness of the advantage (\cref{sec:comparison}).} We show that for observables with $\kappa\to\infty$ the exact second-moment ratio tends to $2$, uniformly in the noise, since the noise enters only through $\varrho_L$, and $\sup_{1<\beta\le d}|\varrho_L-2|=2/(d+2)$, attained as $\beta\to1^{+}$, because $\varrho_L$ is increasing in $\beta$, running from $2(d+1)/(d+2)$ at $\beta\to1^{+}$ to $2(d+1)(d+4)/(d+2)^2$ at $\beta=d$, so the two endpoints give $|\varrho_L-2|=2/(d+2)$ and $2d/(d+2)^2$ respectively and the former is the larger. \cref{cor:var_ratio} transfers this to the variance ratio for observables with $\kappa\to\infty$, and shows that for bounded $\kappa$ the two ratios have different limits. The local real-Pauli advantage remains $(3/2)^k$. We carefully distinguish this exact factor-of-two from the weaker factor $204/170=1.2$ that compares the loose seminorm upper bounds.
\item \textbf{Local orthogonal channel (\cref{prop:local_noisy}).} For product ensembles and product noise $\channel=\channel_1^{\otimes n}$ the channel factorizes, and for a Pauli string $P$ of weight $\mathrm{wt}(P)$ (the number of qubits on which $P$ acts non-trivially) the locally symmetric $k$-Pauli seminorm equals $(2f_1(\channel_1)^2)^{-\mathrm{wt}(P)}$, where $f_1(\channel_1)=\tfrac12(\tr[\channel_1\circ\diag]-1)$ is the single-qubit depolarizing parameter.
\item \textbf{Complex bases (\cref{prop:complex_basis,prop:local_complex}).} We generalize to bases of arbitrary \emph{reality} $\alpha_{\mathrm r}$, introducing the transposed-noise scalar $\tilde\beta=\tr[\channel\circ\widetilde\diag]$ (the analogue of $\beta$ with the transposed dephasing map $\widetilde\diag:A\mapsto\sum_b\bra bA\ket b\,\Pi_b^\intercal$ of \eqref{eq:diagtilde_def}), treat both the global (Part III) and local (Part IV) complex-basis ensembles, and recover the unitary shadow \emph{channel} in the large-$d$, zero-reality limit. For a complex basis, the visible space becomes the full operator space, so non-symmetric observables become estimable.
\item \textbf{An exact advantage criterion (\cref{sec:comparison}) and many-body case studies (\cref{sec:casestudy}).} We prove that the orthogonal-versus-unitary second-moment ratio depends on the state and observable through a single dimensionless parameter $\kappa$, derive the resulting master curve, identify exactly when its finite-$d$ ceiling exceeds two (a condition on the noise, not on the dimension), and use it to show that the advantage saturates strictly below two for rank-one targets but approaches two exponentially in $n$ for extensive Hamiltonians. We instantiate this for GHZ fidelity and the critical transverse-field Ising energy, and show that the scalar spin chirality, being antisymmetric and hence time-reversal-odd, lies in the kernel of the real protocol and requires the complex bases of Parts~III--IV.
\item \textbf{Design-independence (\cref{sec:optimality}).} All orthogonal $3$-designs give literally identical predictions (\cref{prop:design_indep}), so no result here depends on which design is implemented, and in particular the real Cliffords may be substituted for the Haar measure on $\mbo(d)$ without altering any quantity in this paper.
\item \textbf{Examples (\cref{sec:examples}).} We specialize to depolarizing, dephasing, amplitude-damping, coherent over-rotation, and readout-error noise, showing in particular that arbitrary CPTP noise (including complex coherent errors) preserves the real-basis structure and enters only through $\beta$.
\item \textbf{Weingarten survey (\cref{app:brauer}).} We give a complete, self-contained account of the Brauer algebra, the orthogonal Weingarten/Gram matrix, and, in particular, a representation-theoretic explanation of the singularity of the order-three Gram matrix at $d=2$, together with explicit null vectors and a proof that the physical linear system remains solvable there.
\end{enumerate}

\subsection{Relation to prior work}
\label{sec:related}
Our noise model and post-processing strategy follow Koh and Grewal~\cite{koh2022classical}, who treated the unitary ensemble, whereas the technical novelty here is that the relevant Haar averages are over the orthogonal group, whose commutant is the Brauer algebra rather than the group algebra of the symmetric group~\cite{brauer1937algebras,goodman2009symmetry,collins2006integration}. West \etal~\cite{west2025real} give the noiseless real protocol and its variance analysis. We reproduce their results as the $\channel=\id$ limit of ours and confirm them numerically. 

Robust shadow estimation with unknown noise was studied by Chen \etal~\cite{chen2021robust}. As in~\cite{koh2022classical}, we instead assume the noise is characterized in advance, which allows deterministic inversion. Other works treat noise in shadow estimation from complementary angles: Brieger \etal~\cite{brieger2025stability} establish stability of the classical-shadows estimator under gate-dependent noise, going beyond the gate-independent Markovian model assumed here and in~\cite{koh2022classical}, while Jnane \etal~\cite{jnane2024quantum} combine classical shadows with quantum error mitigation to suppress the bias of noisy snapshots rather than invert a known channel. Closest to the present setting is the error-mitigated fermionic protocol of Wu and Koh~\cite{wu2024error}, which adopts the same gate-independent, time-stationary, Markovian noise model used here, but for the matchgate (fermionic Gaussian) ensemble and with the noise calibrated rather than assumed, so that a trusted noiseless input state supplies $\tilde{\mathcal O}(\sqrt n)$ calibration measurements, after which $k$-RDMs are estimated with $\tilde{\mathcal O}(kn^k)$ copies. That ensemble is a sibling of ours in a precise sense, a fermionic Gaussian unitary acts on Majorana operators as $\gamma_\mu\mapsto\sum_\nu R_{\mu\nu}\gamma_\nu$ with $R\in\mbo(2n)$, so matchgate shadows are also governed by orthogonal rather than unitary Haar averages, though of $\mbo(2n)$ on the Majorana index space rather than of $\mbo(2^n)$ on the Hilbert space itself. 

Where the noise sits relative to the evolution decides what kind of theory is possible. In the model used here and in~\cite{koh2022classical}, a single channel $\channel$ acts once, between the evolution and the measurement, so the twirl sees it as one object and the whole noise dependence collapses onto the scalar $\beta$, identically for every observable (\cref{prop:global_noisy_channel}). If instead a channel acts after each layer of the circuit that implements the evolution, each copy is twirled by only the part of the circuit that follows it, and no single scalar absorbs them all. Yu \etal~\cite{yu2026light} work out that case, giving a microscopic account of how in-circuit gate and readout errors bias randomized-measurement estimates for locally scrambled shallow circuits, showing that independent local twirling reduces the local implementation noise to stochastic Pauli damping, a noise event contributes only where it overlaps the Heisenberg evolution of the measured Pauli operator, and in one dimension the resulting activated noise volume grows linearly in the support of a contiguous observable, so the damping of the estimated Pauli coefficient is exponential in the observable's size. Their bias is therefore observable-dependent where ours is a single number, and their small-string calibration --- predicting large-string observables from small ones without learning the full noisy measurement channel --- is the tool that situation calls for.

Exploiting a symmetry to reduce the sampling cost is a programme in its own right, for which Sauvage \etal~\cite{sauvage2024classical} give criteria for symmetry-adapted shadows and work out the permutation-invariant case in detail. Their symmetry is carried by the \emph{state}: a group under which $\rho$ is invariant, permutation invariance being the case they develop. The symmetry here is carried instead by the \emph{observables and the measurement basis}, and it is the involution $O\mapsto O^\intercal$, transposition in the computational basis. Its fixed points are the symmetric operators --- for Hermitian $O$, exactly the real-valued ones (\cref{sec:observables}) --- and its $(-1)$ eigenspace, the antisymmetric operators, is precisely what the real protocol cannot see (\cref{lem:reduction}). Restricting the ensemble from $\mbu(d)$ to the subgroup that respects this involution is what buys the reduced variance, and it is the same trade Sauvage \etal\ describe, made on the other argument of the protocol. What is particular to the present instance is that the involution is small enough for every quantity to be computed in closed form. Further noise-aware developments treat Pauli-invariant unitary ensembles~\cite{bu2024classical} and shallow quantum circuits~\cite{farias2025robust}, and other practical adaptations of classical shadows to experimental imperfections~\cite{nguyen2023shadow,onorati2024noise,hu2025demonstration,rozon2024optimal}. Readout error specifically, which enters our framework through a single scalar in \cref{sec:readout}, has been treated at the level of correlated confusion matrices by~\cite{guo2026tensor}.

Our approach is orthogonal to these approaches. We retain the known-noise, deterministic-inversion setting of~\cite{koh2022classical} and ask instead how the orthogonal (real-Clifford) ensemble changes the resulting channel, seminorm, and sample complexity. We use standard orthogonal Weingarten calculus~\cite{weingarten1978asymptotic,collins2006integration,collins2009some,hashagen2018real}. Garc\'ia-Mart\'in \etal~\cite{garciamartin2025quantum} set up the same Brauer scaffolding for the orthogonal group, working in the large-$d$ asymptotic regime. Our contribution in \cref{app:brauer} is a unified and explicit treatment tailored to the third moment and to $d=2$, where the order-three Gram matrix is inverted exactly at every $d$ rather than asymptotically, and the statement that the $(2k)!/(2^kk!)$ Brauer diagrams form a basis of $\comm(\mbo(d),k)$---exact for $d\ge k$---fails at $d=2$, $k=3$, where the rank drops from $15$ to $10$. That is the single-qubit case Parts~II and~IV rest on.

\subsection{Structure of the paper}
\cref{sec:background} fixes notation and reviews classical shadows, the real protocol, the noise model, the scalar invariants $\alpha,\beta,\tilde\beta$, and the orthogonal Weingarten toolbox. We then develop the theory in four parallel Parts, one per measurement setting, each self-contained with its own reconstruction, inverse, estimator, seminorm/variance, sample complexity, and representative noise models: \cref{sec:results} (Part~I: real basis, global ensemble); \cref{sec:partII} (Part~II: real basis, local ensemble); \cref{sec:partIII} (Part~III: complex basis, global ensemble); \cref{sec:partIV} (Part~IV: complex basis, local ensemble). We prove the exact criterion for when the factor-of-two advantage is realized in \cref{sec:comparison}, alongside the two comparisons it disentangles, and \cref{sec:casestudy} applies it to three many-body case studies (GHZ certification, transverse-field Ising energy estimation, and the scalar spin chirality). We prove design-independence in \cref{sec:optimality}, where all orthogonal $3$-designs are shown to coincide; we conclude in \cref{sec:discussion}. \cref{app:brauer} is the Brauer/Weingarten survey including the order-three Gram matrix, its singularity at $d=2$, and the separate treatment of the $d=2$ complex-basis third moment that Part~IV needs, \cref{app:global_channel_proof,app:shadow_norm_proof,app:seminorm_bounds,app:complex_basis_proof,app:local_proof} the deferred proofs, \cref{app:complex_third_moment} the closed form of the complex-basis third moment, \cref{app:examples} the example details.

\section{Preliminaries and Notation}
\label{sec:background}

We write $\mch=\Complex^{d}$ with $d=2^n$, and $\mcl(\mch)$ for the linear operators on $\mch$. We write $\langle A,B\rangle=\tr(A^\dagger B)$ for the Hilbert--Schmidt inner product. We take transpose and complex conjugation in the computational basis. For a vector $\ket v$ we write $\ket{v^*}$ for its entrywise complex conjugate in that basis (so $\Pi_v^\intercal=\ketbra{v^*}{v^*}$, and $\ket{v^*}=\ket v$ for a computational-basis state), and $A_{\mathrm{sym}}=\tfrac12(A+A^\intercal)$ denotes the symmetric part. Symmetrization and the traceless projection $A\mapsto A-\tr(A)\id/d$ commute, since $\tr(A^\intercal)=\tr A$ and $\id^\intercal=\id$, so symmetrizing and then removing the trace gives the same operator as removing the trace and then symmetrizing. We write $A_{\mathrm{sym};0}$ for that common value, and $A_{\mathrm{sym};0}=A_0$ whenever $A$ is symmetric. For $b\in\B^n$ we write $\Pi_b=\ketbra{b}{b}$. The depolarizing channel of parameter $f$ (Koh--Grewal convention) is
\begin{equation}\label{eq:depol_def}
\mcd_{n,f}(A) = f\,A + (1-f)\,\tr(A)\,\frac{\id}{d},
\end{equation}
so that $\mcd_{n,f}^{-1}=\mcd_{n,1/f}$ whenever $f\neq0$.

\subsection{Classical shadows}
\label{sec:cs_review}
A classical shadows protocol is fixed by an ensemble $\mcu$ (a set of unitaries with a probability measure) and a measurement basis $\mcw=\{\ket{w}\}_w$. Each copy of $\rho$ is evolved by $U\sim\mcu$, measured in $\mcw$ to give outcome $\ket w$ with probability $\bra w U\rho U^\dagger\ket w$, and the snapshot $U^\dagger\Pi_w U$ is stored. In expectation, this implements the shadow channel
\begin{equation}\label{eq:cs_channel}
\mcm_{\mcu,\mcw}(\rho)=\sum_{w\in\mcw}\int_{U\sim\mcu} \tr[\rho\,U^\dagger\Pi_w U]\,U^\dagger\Pi_w U .
\end{equation}
When $\mcm_{\mcu,\mcw}$ is invertible on the relevant subspace, the classical shadow $\hat\rho=\mcm_{\mcu,\mcw}^{-1}(U^\dagger\Pi_w U)$ satisfies $\E[\hat\rho]=\rho$, and $\hat o=\tr(O\hat\rho)$ is an unbiased estimator of $\tr(O\rho)$ for every $O$ in the visible space $\mathrm{span}\{U^\dagger\Pi_w U\}$. Its variance $\Var[\hat o]$ controls the number of samples through a median-of-means construction~\cite{jerrum1986random,fu2025classical}, recalled in \cref{fact:mom}.
\begin{fact}[Median of means; the construction is~\cite{jerrum1986random} and the constant $34$ is the one used by~\cite{koh2022classical}, which follows from the Chebyshev step and the stated tail rather than from~\cite{jerrum1986random}]\label{fact:mom}
Let $X$ have variance $\sigma^2$. Then $K$ independent means of $N=\lceil 34\sigma^2/\eps^2\rceil$ samples each yield an estimator $\hat\mu(N,K)$ with $\Pr[\,|\hat\mu-\E X|\ge\eps\,]\le 2\me^{-K/2}$ for all $\eps>0$.
\end{fact}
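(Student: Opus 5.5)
The argument has two independent steps: Chebyshev's inequality for a single batch mean, and a Hoeffding bound on how many batch means are bad. The median can only be far from $\E X$ if at least half of the batch means are.

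\emph{Step 1: one batch.} Let $\bar X_j$ be the mean of the $j$-th batch of $N$ i.i.d.\ copies of $X$, for $j=1,\dots,K$. Then $\Var\bar X_j=\sigma^2/N$. With $N\ge 34\sigma^2/\eps^2$, Chebyshev gives
\[
p:=\Pr\bigl[|\bar X_j-\E X|\ge\eps\bigr]\le \frac{\sigma^2}{N\eps^2}\le\frac{1}{34}.
\]
If $\sigma^2=0$, then $X$ is almost surely constant and the claim is trivial, so we may assume $\sigma^2>0$.

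\emph{Step 2: the median.} Let $\hat\mu=\mathrm{median}(\bar X_1,\dots,\bar X_K)$. If $|\hat\mu-\E X|\ge\eps$, then at least $K/2$ of the $\bar X_j$ lie on one side of the median at distance at least $\eps$ from $\E X$. Hence at least $K/2$ of the indicators $Z_j=\mathbf 1\{|\bar X_j-\E X|\ge\eps\}$ equal $1$. The $Z_j$ are i.i.d.\ Bernoulli with mean $p\le 1/34$. Hoeffding's inequality then gives
\[
\Pr\Bigl[\textstyle\sum_j Z_j\ge K/2\Bigr]\le \exp\bigl(-2K(1/2-p)^2\bigr).
\]
Now $2(1/2-1/34)^2=2(16/34)^2=512/1156\approx0.443$, which is slightly short of $1/2$. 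Hoeffding alone therefore gives $\me^{-0.44K}$, not $\me^{-K/2}$.

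\emph{The main obstacle is this constant.} To reach $\me^{-K/2}$ I would replace Hoeffding with the Chernoff--KL bound
\[
\Pr\Bigl[\textstyle\sum_j Z_j\ge K/2\Bigr]\le \exp\bigl(-K\,D(1/2\,\|\,p)\bigr),
\]
where $D(1/2\|p)=\tfrac12\ln\frac{1}{4p(1-p)}$. At $p=1/34$ this equals $\tfrac12\ln\frac{1156}{132}\approx1.085$, which is well above $1/2$. Since $D(1/2\|p)$ is decreasing in $p$ for $p<1/2$, the bound holds for every $p\le 1/34$. This gives $\Pr[|\hat\mu-\E X|\ge\eps]\le \me^{-K/2}$, which is stronger than the stated $2\me^{-K/2}$.

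The factor $2$ in the statement can also be read as coming from a union bound over the two one-sided events $\hat\mu\ge\E X+\eps$ and $\hat\mu\le\E X-\eps$, each handled by the same Chernoff argument with the one-sided Chebyshev probability.
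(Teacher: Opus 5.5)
Your proof is correct and follows the route the paper points to; the paper gives no proof of its own and only notes that the constant $34$ comes from the Chebyshev step (per-batch failure probability at most $1/34$) together with a binomial tail on the number of failed batches. Your remark that Hoeffding cannot supply the tail is the one detail that attribution leaves implicit, and it is worth keeping: its exponent $2(1/2-p)^2$ is strictly below $1/2$ for every $p>0$, whatever the batch size. The Chernoff--KL bound, giving $\exp\bigl(-\tfrac{K}{2}\ln\tfrac{1156}{132}\bigr)\approx e^{-1.08K}$, is what actually delivers $2e^{-K/2}$.
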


\subsection{Real classical shadows}
\label{sec:rcs_review}
In the real protocol~\cite{west2025real} the ensemble is the Haar measure on the orthogonal group $\mbo(d)$ (globally) or $\mbo(2)^{\otimes n}$ (locally). By the orthogonal $3$-design property of the real Cliffords $\mcc_n\cap\mbo(d)$ and $(\mcc_1\cap\mbo(2))^{\otimes n}$~\cite{hashagen2018real,nebe2006self}, these Haar averages may be realized with real Clifford circuits without changing any moment up to the third. Throughout, we take ``orthogonal evolution'' to mean $U\in\mbo(d)$, so $U^\dagger=U^\intercal$. With a real measurement basis ($\Pi_w=\Pi_w^\intercal$) the noiseless global channel is~\cite{west2025real}
\begin{equation}\label{eq:west_global}
\mcm_{\mbo(d),\mcw}(A)=\frac{\tr[A]\id+A+A^\intercal}{d+2}=\mcd^{\mathrm{W}}_{d/(d+2)}(A_{\mathrm{sym}}),
\end{equation}
where $\mcd^{\mathrm{W}}_p(A)=\tfrac{p\,\tr[A]}{d}\id+(1-p)A$ is the depolarizing channel in the strength convention of~\cite{west2025real}. Since $\mcd^{\mathrm{W}}_p=\mcd_{n,1-p}$, the same channel in the convention~\eqref{eq:depol_def} reads $\mcm_{\mbo(d),\mcw}=\mcd_{n,2/(d+2)}\circ(\cdot)_{\mathrm{sym}}$. We return to this identification in \cref{rem:convention}. Here the visible space is the symmetric subspace, of dimension $d(d+1)/2$; the noisy version is derived in \cref{sec:global_channel}.

\subsection{Observables: symmetry and the traceless part}
\label{sec:observables}
We state two reductions here once, and apply both to every result below. The first fixes the class of observables the real protocol addresses, since for Hermitian $O$, symmetry $O=O^\intercal$ and being real-valued, $O=\bar O$ entrywise, are the same condition.\footnote{Indeed $O=O^\dagger=\bar O^\intercal$, so $O=O^\intercal$ forces $\bar O^\intercal=O^\intercal$ and hence $\bar O=O$, and conversely. We say \emph{symmetric} throughout, since $O=O^\intercal$ is the property the orthogonal commutant sees. \cite{west2025real} use both terms, though not interchangeably in practice, because every formal statement there is written for $A=A^\intercal$, while ``real-valued'' carries the abstract and the introduction, where the equivalence is asserted in a parenthesis and without the Hermiticity hypothesis it requires --- the implication above fails for non-Hermitian $A$.}

Neither condition is basis-independent: both refer to $O$ in the computational basis, in which the transpose is defined. Under a change of basis an operator transforms by similarity, $O\mapsto V^\dagger OV$, and for real $V$ this preserves symmetry,
\begin{equation}\label{eq:sym_basis}
\big(V^\intercal OV\big)^\intercal=V^\intercal O^\intercal V=V^\intercal OV ,\qquad V^\dagger=V^\intercal\ \text{ exactly when }V\ \text{is real},
\end{equation}
whereas a general unitary gives $\big(V^\dagger OV\big)^\intercal=V^\intercal O\bar V$, which is not $V^\dagger OV$. Symmetry is therefore preserved by $\mbo(d)$ and not by $\mbu(d)$. This is the structural reason the ensemble and the measurement basis have to be real together, and it is why the real protocol is not a change of variables inside the unitary one. \cite{west2025real} fix the transpose to the computational basis, but do not draw this consequence for observables.

A second reduction isolates the part of an observable that the protocol has to estimate. Write
\begin{equation}\label{eq:traceless_part}
O=\frac{\tr O}{d}\,\id+O_0,\qquad \tr O_0=0 ,
\end{equation}
for the decomposition of $O$ into its identity component and its traceless part $O_0$.

Both reductions are used in every bound below. Because $\tr O/d$ is a number rather than a random variable, the identity component of $O$ is known before any measurement is made and costs no samples to learn. Everything the protocol must actually estimate therefore sits in $O_0$, and every seminorm, variance and sample-complexity statement in this paper is written for $O_0$ on the strength of the lemma below.

All four shadow channels derived below have the same shape, a depolarizing map composed with one further linear map,
\begin{equation}\label{eq:factored_form}
\mcm=\mcd_{n,f}\circ\Psi .
\end{equation}
Its depolarizing factor carries the scalar $f$ and is what the estimator inverts. The second factor is the \emph{non-depolarizing residue}, whatever the twirl leaves once that scalar is stripped off: the symmetrization $\Psi(A)=A_{\mathrm{sym}}=\tfrac12(A+A^\intercal)$ in Parts~I--II, and in Parts~III--IV the map $\Psi_q(A)=qA+(1-q)A^\intercal$ of which it is the case $q=\tfrac12$. Stating the lemma for \eqref{eq:factored_form} proves it once instead of four times, one per Part.

Three decorated copies of $\mcm$ occur in the lemma, and we fix them here. Here $\mcm^{\dagger}$ is the Hilbert--Schmidt adjoint, defined by $\langle A,\mcm(B)\rangle=\langle\mcm^{\dagger}(A),B\rangle$, exactly as $\channel^{*}$ was in \cref{sec:scalars}. We write a star for the adjoint of a channel and a dagger for that of a superoperator built from one. A reader comparing with~\cite{koh2022classical} should note that they reserve the dagger for the operator adjoint. Next, $\mcm^{-1}$ denotes the inverse of $\mcm$ restricted to $\mcv$, extended by zero on $\mcv^{\perp}$: the restriction is what the lemma's hypothesis grants, and the extension is a choice, made so that $\mcm^{-1}$ is defined on every snapshot. Finally $\mcm^{-1,\dagger}$ is the adjoint of that map, and it annihilates $\mcv^{\perp}$ because $\mcm^{-1}$ does.

We define the visible space through the adjoint, $\mcv=\im\mcm^{\dagger}$, because that is the side the observable enters on. The estimator reaches $O_0$ only through $\hat o=\langle O_0,\mcm^{-1}(\sigma)\rangle=\langle\mcm^{-1,\dagger}(O_0),\sigma\rangle$, so what decides visibility is which observables survive the adjoint, not which operators $\mcm$ can produce. This is the computable form of the definition of~\cite{west2025real}, for whom the visible space is the set of observables whose estimator is unbiased: computable because $\im\mcm^{\dagger}=(\ker\mcm)^{\perp}$ for any linear map, so it is found by identifying $\ker\mcm$, which \cref{sec:global_channel} does.

Everything the lemma needs about $\Psi$, and the identification of $\mcv$ in each Part, follows from one statement.

\begin{claim}[The non-depolarizing residue]\label{claim:residue}
For $q\in\Reals$ let $\Psi_q(A)=qA+(1-q)A^\intercal$ on $\mcl(\Complex^d)$, so that $\Psi_{1/2}=(\cdot)_{\mathrm{sym}}$. Then, for every $f$:
\begin{enumerate}[label=(\alph*),leftmargin=*]
\item\label{it:res_comm} $\Psi_q$ is trace-preserving and fixes the identity, and consequently commutes with $\mcd_{n,f}$;
\item\label{it:res_sa} $\Psi_q$ is Hilbert--Schmidt self-adjoint, and hence so is $\mcm=\mcd_{n,f}\circ\Psi_q$, i.e.\ $\mcm^{\dagger}=\mcm$;
\item\label{it:res_inv} $\Psi_q$ is invertible on all of $\mcl(\Complex^d)$ if and only if $q\neq\tfrac12$, with $\Psi_q^{-1}(X)=\tfrac{qX-(1-q)X^\intercal}{2q-1}$; at $q=\tfrac12$ it is instead the orthogonal projector onto the symmetric operators, which it fixes pointwise.
\end{enumerate}
In both cases, and whenever $f\neq0$, the restriction $\Psi_q|_{\mcv}$ is a trace-preserving bijection of $\mcv=\im\mcm^{\dagger}$: for $q\neq\tfrac12$ because $\mcv=\mcl(\Complex^d)$ and $\Psi_q$ is invertible there, and for $q=\tfrac12$ because $\mcv$ is the symmetric subspace, on which $\Psi_{1/2}$ is the identity.
\end{claim}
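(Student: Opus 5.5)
The plan is to treat $\Psi_q$ as a linear combination of the identity superoperator and the transpose superoperator $T:A\mapsto A^\intercal$. Almost everything then reduces to three facts about $T$: it is an involution ($T^2=\mathrm{id}$), it preserves the trace, and it fixes $\id$.

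\textbf{Part (a).} We have $\tr\Psi_q(A)=q\tr A+(1-q)\tr A^\intercal=\tr A$, and $\Psi_q(\id)=\id$. Writing $\mcd_{n,f}(A)=fA+(1-f)\tr(A)\id/d$, we get
\[
\Psi_q\mcd_{n,f}(A)=f\Psi_q(A)+(1-f)\tr(A)\id/d .
\]
Also $\mcd_{n,f}\Psi_q(A)=f\Psi_q(A)+(1-f)\tr(\Psi_qA)\id/d$, which is the same expression by trace preservation. So the two maps commute.

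\textbf{Part (b).} I will check that $T$ is Hilbert--Schmidt self-adjoint: $\tr((A^\intercal)^\dagger B)=\tr(\bar A B)=\tr((\bar AB)^\intercal)=\tr(B^\intercal A^\dagger)=\tr(A^\dagger B^\intercal)$. Since $q$ is real, $\Psi_q$ is self-adjoint. The map $\mcd_{n,f}$ is self-adjoint for real $f$, because $\langle A,\tr(B)\id\rangle=\overline{\tr A}\,\tr B=\langle \tr(A)\id,B\rangle$. Here I will note that $f$ is real in every Part, since $\beta$ is real. Together with the commutation from (a), the product of two commuting self-adjoint maps is self-adjoint, so $\mcm^\dagger=\Psi_q^\dagger\mcd_{n,f}^\dagger=\Psi_q\mcd_{n,f}=\mcd_{n,f}\Psi_q=\mcm$.

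\textbf{Part (c).} Decompose $\mcl(\Complex^d)=S\oplus K$ into the symmetric and antisymmetric operators, the $\pm1$ eigenspaces of $T$. These are orthogonal because $T$ is a self-adjoint involution. On $S$, $\Psi_q$ acts as $1$; on $K$, it acts as $q-(1-q)=2q-1$. So $\Psi_q$ is invertible if and only if $2q-1\neq0$. The stated inverse can be verified directly by composing and using $T^2=\mathrm{id}$; the coefficient of $A$ is $(q^2-(1-q)^2)/(2q-1)=1$ and the coefficient of $A^\intercal$ vanishes. At $q=\tfrac12$ the map is $\tfrac12(\mathrm{id}+T)$, which is a self-adjoint idempotent with image $S$, i.e.\ the orthogonal projector onto $S$, fixing $S$ pointwise.

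\textbf{Final statement.} By (b), $\mcv=\im\mcm^\dagger=\im\mcm$. For $f\neq0$, $\mcd_{n,f}$ is invertible with inverse $\mcd_{n,1/f}$, so $\im\mcm=\im(\mcd_{n,f}\Psi_q)=\mcd_{n,f}(\im\Psi_q)$. For $q\neq\tfrac12$ this is all of $\mcl(\Complex^d)$. For $q=\tfrac12$ it is $\mcd_{n,f}(S)$, and this equals $S$ because $\mcd_{n,f}$ maps $S$ into $S$ (as $\id\in S$) and is injective on the finite-dimensional space $S$. The restriction claims then follow from (c) together with the trace preservation in (a).

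The only step needing care is this identification of $\mcv$: one must use $f\neq0$ together with the fact that $\mcd_{n,f}$ preserves $S$. Everything else is routine linear algebra.
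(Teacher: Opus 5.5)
Your proof is correct and follows the paper's route. Parts (a) and (b) are the same computations, and your eigenspace decomposition of the transpose in (c) and your image-based identification of $\mcv$ are only cosmetic variants of the paper's elimination argument and its $\ker\mcm=\ker\Psi_q$ step. Your remark that $\mcd_{n,f}$ is Hilbert--Schmidt self-adjoint only for real $f$ is a worthwhile precision, since the paper's ``for every $f$'' tacitly relies on it.
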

\begin{proof}
\ref{it:res_comm} $\tr\Psi_q(A)=q\tr A+(1-q)\tr A^\intercal=\tr A$ and $\Psi_q(\id)=\id$. Hence $\mcd_{n,f}(\Psi_q(A))=f\Psi_q(A)+(1-f)\tr[\Psi_q(A)]\tfrac{\id}{d}=f\Psi_q(A)+(1-f)\tr(A)\tfrac{\id}{d}$, while $\Psi_q(\mcd_{n,f}(A))=f\Psi_q(A)+(1-f)\tr(A)\tfrac{\Psi_q(\id)}{d}$, which is the same. Only these two properties of $\Psi_q$ are used.

\ref{it:res_sa} The transpose is Hilbert--Schmidt self-adjoint, since $\tr[B^\dagger A^\intercal]=\tr[(B^\dagger A^\intercal)^\intercal]=\tr[A\bar B]=\tr[\bar BA]$; as $q$ is real, $\Psi_q$ is a real combination of it and the identity and is therefore self-adjoint too. So is $\mcd_{n,f}$, being a real combination of the identity and the map $A\mapsto\tr(A)\id/d$. Then $\mcm^{\dagger}=(\mcd_{n,f}\circ\Psi_q)^{\dagger}=\Psi_q^{\dagger}\circ\mcd_{n,f}^{\dagger}=\Psi_q\circ\mcd_{n,f}=\mcd_{n,f}\circ\Psi_q=\mcm$, the last step by \ref{it:res_comm}. That $q$ is real is not automatic and is checked where $q_\beta$ is introduced in \eqref{eq:Atilde}.

\ref{it:res_inv} Write $Y=qX+(1-q)X^\intercal$ and transpose it, $Y^\intercal=qX^\intercal+(1-q)X$. Eliminating $X^\intercal$ between the two gives $\big(q^2-(1-q)^2\big)X=qY-(1-q)Y^\intercal$, and $q^2-(1-q)^2=2q-1$, so the stated inverse is the unique solution when $q\neq\tfrac12$. At $q=\tfrac12$ the map is idempotent, $\Psi_{1/2}\circ\Psi_{1/2}=\Psi_{1/2}$, with range the symmetric operators and kernel the antisymmetric ones; being also self-adjoint by \ref{it:res_sa}, it is the orthogonal projector onto its range, and it annihilates the antisymmetric operators, so it has no inverse on $\mcl(\Complex^d)$.

For the last statement, $\mcm=\mcd_{n,f}\circ\Psi_q$ with $f\neq0$ makes $\mcd_{n,f}$ invertible, so $\ker\mcm=\ker\Psi_q$ and $\mcv=(\ker\Psi_q)^{\perp}$ by \ref{it:res_sa}. For $q\neq\tfrac12$ that kernel is trivial and $\mcv=\mcl(\Complex^d)$, on which $\Psi_q$ is a bijection by \ref{it:res_inv}; for $q=\tfrac12$ it is the antisymmetric operators, so $\mcv$ is the symmetric ones and $\Psi_{1/2}$ restricts there to the identity. Trace preservation is \ref{it:res_comm}.
\end{proof}

With $\Psi$ settled, the reduction every Part relies on can be stated once and for all four.

\begin{lemma}[Reduction to the traceless part]\label{lem:reduction}
Suppose the shadow channel factors as $\mcm=\mcd_{n,f}\circ\Psi$ with $f\neq0$, with $\Psi$ linear and trace-preserving, and with $\mcm$ invertible on the visible space $\mcv=\im\mcm^{\dagger}$; write $\mcm^{-1}$ for the inverse of $\mcm|_{\mcv}$, extended by zero on $\mcv^{\perp}$. This holds in each of \cref{prop:global_noisy_channel,prop:local_noisy,prop:complex_basis,prop:local_complex}, where by \cref{claim:residue} $\Psi$ restricts to a trace-preserving bijection of $\mcv$ --- the identity on $\mcv$ in Parts~I--II, and an invertible map that is \emph{not} the identity in Parts~III--IV. Then, for every unitary $U$ and outcome $w$ whose raw snapshot $\sigma\coloneqq U^\dagger\Pi_wU$ lies in $\mcv$:
\begin{enumerate}[label=(\roman*),leftmargin=*]
\item the classical shadow has unit trace, $\tr\hat\rho=1$, shot by shot;
\item consequently $\hat o=\tr(O\hat\rho)=\tfrac{\tr O}{d}+\tr(O_0\hat\rho)$ identically, so the identity component contributes a known constant and no variance:
$\Var[\hat o]=\Var[\tr(O_0\hat\rho)]$;
\item $\hat o$ depends on $O$ only through the component of $O_0$ in the visible space $\mcv=\im\mcm^{\dagger}$, because $\mcm^{-1,\dagger}$ annihilates $\mcv^{\perp}$.
\end{enumerate}
\end{lemma}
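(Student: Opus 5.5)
The plan is to prove (i) by showing that $\mcm^{-1}$ is trace-preserving on $\mcv$, and then to read off (ii) and (iii) from linearity and the definition of the adjoint.

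For (i), let $X\in\mcv$. Since $\mcm|_{\mcv}$ is a bijection of $\mcv$, there is a unique $Y=\mcm^{-1}(X)\in\mcv$ with $\mcm(Y)=X$. Both factors of $\mcm=\mcd_{n,f}\circ\Psi$ preserve trace: $\Psi$ by hypothesis, and $\mcd_{n,f}$ directly from \eqref{eq:depol_def}, since $\tr\mcd_{n,f}(A)=f\tr A+(1-f)\tr A=\tr A$. Hence $\tr X=\tr\mcm(Y)=\tr Y$, so $\mcm^{-1}$ preserves trace on $\mcv$. Applied to the snapshot $\sigma=U^\dagger\Pi_wU\in\mcv$, this gives $\tr\hat\rho=\tr\sigma=\tr\Pi_w=1$, shot by shot. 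The one point to handle with care is that $\mcm$ must map $\mcv$ into $\mcv$, so that "inverse of $\mcm|_{\mcv}$" makes sense. This holds because $\mcm$ is self-adjoint by \cref{claim:residue}\ref{it:res_sa}, so $\mcv=\im\mcm$ and $\mcm(\mcv)\subseteq\mcv$. In the abstract statement this is also part of the hypothesis that $\mcm$ is invertible on $\mcv$, so I will simply note it.

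For (ii), substitute the decomposition \eqref{eq:traceless_part} and use linearity of the trace:
\[
\hat o=\tr(O\hat\rho)=\tfrac{\tr O}{d}\tr\hat\rho+\tr(O_0\hat\rho)=\tfrac{\tr O}{d}+\tr(O_0\hat\rho),
\]
where the last step uses (i). The first term is a deterministic constant, so the variance is invariant under this shift, giving $\Var[\hat o]=\Var[\tr(O_0\hat\rho)]$. The identity holds almost surely as long as every snapshot lies in $\mcv$; that is the stated hypothesis, and I will restrict the claim to those outcomes.

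For (iii), first rewrite $\tr(O_0\hat\rho)$ as an inner product, $\tr(O_0\hat\rho)=\langle O_0^\dagger,\mcm^{-1}(\sigma)\rangle=\langle\mcm^{-1,\dagger}(O_0^\dagger),\sigma\rangle$. For Hermitian $O$ we have $O_0^\dagger=O_0$. Next, split $O_0=P_{\mcv}O_0+P_{\mcv^\perp}O_0$ orthogonally. Since $\mcm^{-1}$ takes values in $\mcv$ (its range is $\mcv$, and it is zero on $\mcv^\perp$), for any $Z\in\mcv^\perp$ and any $B$ we get $\langle\mcm^{-1,\dagger}Z,B\rangle=\langle Z,\mcm^{-1}B\rangle=0$. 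So $\mcm^{-1,\dagger}$ annihilates $\mcv^\perp$, and $\hat o$ depends only on $P_{\mcv}O_0$. A small subtlety is that I should argue from the range of $\mcm^{-1}$ lying in $\mcv$, rather than from its vanishing on $\mcv^\perp$, which is the property the paper's wording suggests.

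There is no serious obstacle in this argument. The only step needing care is the bookkeeping just mentioned: that $\mcm$ preserves $\mcv$, and that the adjoint annihilation comes from the range of $\mcm^{-1}$.
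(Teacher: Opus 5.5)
Your proof is correct and is essentially the paper's. You prove (i) by showing the inverse preserves trace on $\mcv$, (ii) by linearity, and (iii) by moving $\mcm^{-1}$ onto the observable. In (i) you apply ``the inverse of a trace-preserving bijection is trace-preserving'' to $\mcm|_{\mcv}$ directly, rather than factor by factor as $(\Psi|_{\mcv})^{-1}\circ\mcd_{n,1/f}$; this spares you needing $\Psi|_{\mcv}$ to be invertible or $\mcd_{n,f}$ to preserve $\mcv$.

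Your justification in (iii) is the right one. Since $\ker L^{\dagger}=(\im L)^{\perp}$, it is $\im\mcm^{-1}\subseteq\mcv$ that makes $\mcm^{-1,\dagger}$ annihilate $\mcv^{\perp}$. The extension by zero that the paper cites gives only $\im\mcm^{-1,\dagger}\subseteq\mcv$. The paper's conclusion still stands, because both properties hold by construction.
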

\begin{proof}
Write $\sigma=U^\dagger\Pi_wU$ for the raw snapshot and $\hat\rho=\mcm^{-1}(\sigma)$ for the classical shadow it produces.

\emph{(i).} We first check $\tr\sigma=1$. Since $\Pi_w$ is a rank-one orthogonal projector, $\tr\Pi_w=1$, and the trace is invariant under the similarity $\Pi_w\mapsto U^\dagger\Pi_wU$; hence $\tr\sigma=1$. We next check that $\mcm^{-1}$ preserves the trace. By hypothesis $\mcm|_{\mcv}=\mcd_{n,f}\circ\Psi$ is a bijection of $\mcv$, so on $\mcv$ we may write $\mcm^{-1}=(\Psi|_{\mcv})^{-1}\circ\mcd_{n,f}^{-1}$, the inverse being that of the restriction, which is what the hypothesis grants and which $\Psi$ itself does not possess, and it suffices that each factor preserve the trace. For any trace-preserving bijection $\Lambda$ the inverse is trace-preserving, since $\tr\Lambda^{-1}(X)=\tr\Lambda\big(\Lambda^{-1}(X)\big)=\tr X$; this applies to $\Psi|_{\mcv}$. And $\mcd_{n,f}^{-1}=\mcd_{n,1/f}$, which is trace-preserving because $\mcd_{n,g}$ is for every $g\neq0$, which is where $f\neq0$ is used. It is also where invertibility \emph{on $\mcv$} is used rather than on all of $\mcl(\Complex^d)$, because the symmetrization has no global inverse. Combining, $\tr\hat\rho=\tr\mcm^{-1}(\sigma)=\tr\sigma=1$.

\emph{(ii).} Substitute the decomposition \eqref{eq:traceless_part} into $\hat o=\tr(O\hat\rho)$ and use linearity of the trace:
\[
\hat o=\tr\!\Big[\Big(\tfrac{\tr O}{d}\,\id+O_0\Big)\hat\rho\Big] =\tfrac{\tr O}{d}\,\tr\hat\rho+\tr(O_0\hat\rho) =\tfrac{\tr O}{d}+\tr(O_0\hat\rho),
\]
the last equality by (i). Because it is determined by $O$ alone and depends on neither $U$ nor $w$, the first term is a \emph{constant}, the same on every shot. Adding a constant to a random variable shifts its mean and leaves its spread untouched, whence $\Var[\hat o]=\Var[\tr(O_0\hat\rho)]$. We note that $\Psi$ does not appear anywhere in this argument, since its entire contribution to (ii) is through (i), that is, through $\tr\hat\rho=1$.

\emph{(iii).} Move the inverse channel off the state and onto the observable. Writing $\langle A,B\rangle=\tr(A^\dagger B)$ for the Hilbert--Schmidt inner product, and using that $O_0$ is Hermitian,
\[
\tr(O_0\hat\rho)=\langle O_0,\mcm^{-1}(\sigma)\rangle=\langle\mcm^{-1,\dagger}(O_0),\sigma\rangle ,
\]
the second equality being the definition of the adjoint. Now decompose $O_0$ orthogonally against the visible space, $O_0=P_{\mcv}O_0+P_{\mcv^{\perp}}O_0$. Since $\mcm^{-1}$ was extended by zero on $\mcv^{\perp}$, its adjoint annihilates $\mcv^{\perp}$, so the second summand contributes nothing:
\[
\mcm^{-1,\dagger}(O_0)=\mcm^{-1,\dagger}\big(P_{\mcv}O_0\big).
\]
Substituting back into (ii), $\hat o=\tfrac{\tr O}{d}+\langle\mcm^{-1,\dagger}(P_{\mcv}O_0),\sigma\rangle$, which depends on $O$ only through $\tr O$ and $P_{\mcv}O_0$. Two observables with the same trace whose traceless parts agree on $\mcv$ therefore give the same estimator on every shot, not merely in expectation.
\end{proof}

Every seminorm, variance and sample-complexity statement below is therefore stated for $O_0$, and by (iii) only its visible component matters. In Parts~I--II the visible space is the symmetric subspace, so an antisymmetric observable is invisible; Parts~III--IV enlarge $\mcv$ to all of $\mcl(\Complex^d)$, both by \cref{claim:residue}\ref{it:res_inv}.

Two consequences of \cref{claim:residue}\ref{it:res_sa} are worth separating from the lemma, because the lemma is stated without them. Since $\mcm^{\dagger}=\mcm$, the visible space is also $\im\mcm$, and the restriction $\mcm|_{\mcv}$ is an automorphism of $\mcv$. Extending $\mcm^{-1}$ by zero then gives the Moore--Penrose pseudoinverse. The two readings of $\mcm^{-1,\dagger}$, the adjoint of the inverse and the inverse of the adjoint, therefore agree. We nonetheless write the lemma with the dagger. Part~(iii) uses only that $\mcm^{-1,\dagger}$ annihilates $\mcv^{\perp}$, and that holds by construction rather than by self-adjointness. It would therefore survive a channel with $\mcm^{\dagger}\neq\mcm$. None of the four Parts is such a channel, but a gate-dependent noise model need not respect the equality (\cref{sec:discussion}).

\subsection{Noise model}
\label{sec:noise_model}
We adopt the gate-independent, time-stationary, Markovian model of~\cite{koh2022classical}: a known CPTP channel $\channel$ acts on the state after the ideal evolution and before measurement. The \emph{noisy shadow channel} is
\begin{equation}\label{eq:noisy_channel}
\mcm_{\mcu,\channel}(\rho)=\E_{U\sim\mcu}\sum_{b\in\B^n}\bra b\channel(U\rho U^\dagger)\ket b\,U^\dagger\Pi_b U .
\end{equation}
Because $\channel$ is known, the post-processing inverts $\mcm_{\mcu,\channel}$ rather than the noiseless channel, which (when the inverse exists on the visible space) keeps the estimator unbiased. As in~\cite{koh2022classical}, the inverse need not itself be a channel. It is applied as a classical map on a classical description.

\subsection{Scalar invariants}
\label{sec:scalars}
Write $\channel^*$ for the dual of $\channel$ under the bilinear trace pairing, defined by $\tr[\channel^*(X)\,Y]=\tr[X\,\channel(Y)]$ for all $X,Y$. It exists for every linear superoperator and needs no Kraus representation; when $\channel(A)=\sum_i K_i A K_i^\dagger$ it is the familiar $\channel^*(A)=\sum_i K_i^\dagger A K_i$. This is the dual every identity below uses. It agrees with the Hilbert--Schmidt adjoint on Hermitian arguments whenever $\channel$ preserves Hermiticity, which the channels of \cref{sec:noise_model} do, and differs from it otherwise. We keep the dagger for the Hilbert--Schmidt adjoint of a \emph{shadow} channel, as in $\mcm^{\dagger}$; there the two again agree, since $\mcm$ preserves Hermiticity and is applied to Hermitian arguments. We write $\diag(A)=\sum_i \Pi_i A\Pi_i$ for the completely dephasing channel; its transposed variant is
\begin{equation}\label{eq:diagtilde_def}
\widetilde\diag(A)=\sum_i \bra i A\ket i\,(\ketbra ii)^{\intercal}=\sum_i \bra i A\ket i\,\ketbra{i^*}{i^*}.
\end{equation}
The three scalar invariants that will govern every result are
\begin{equation}\label{eq:scalars}
\alpha=\tr[\channel(\id)],\qquad
\beta=\tr[\channel\circ\diag]=\sum_b\bra b\channel(\Pi_b)\ket b,\qquad
\tilde\beta=\tr[\channel\circ\widetilde\diag]=\sum_b\bra b\channel(\Pi_b^\intercal)\ket b .
\end{equation}
Both right-hand sides of \eqref{eq:scalars} follow from the superoperator traces on their left by evaluating a superoperator trace in the Hilbert--Schmidt orthonormal basis $\{\ketbra ij\}$ gives $\tr\Phi=\sum_{i,j}\bra i\Phi(\ketbra ij)\ket j$. Both $\diag$ and $\widetilde\diag$ annihilate every off-diagonal $\ketbra ij$ with $i\neq j$, leaving $\diag(\ketbra ii)=\Pi_i$ and $\widetilde\diag(\ketbra ii)=\Pi_i^\intercal$, so only the $d$ diagonal terms survive and
\begin{equation}\label{eq:scalars_reduction}
\tr[\channel\circ\diag]=\sum_i\bra i\channel(\Pi_i)\ket i,\qquad
\tr[\channel\circ\widetilde\diag]=\sum_i\bra i\channel(\Pi_i^\intercal)\ket i ,
\end{equation}
which is \eqref{eq:scalars}. For a real basis $\Pi_b^\intercal=\Pi_b$ and hence $\tilde\beta=\beta$.

Both scalars are confined to the same interval, and the hypothesis that confines them is complete positivity rather than the linearity that \cref{prop:global_noisy_channel} assumes. The bound on $\beta$ is stated for a quantum channel by~\cite{koh2022classical} in the course of bounding their depolarizing parameter; the companion bound on $\tilde\beta$, the equality condition, and the proof are ours.
\begin{claim}[Range of the diagonal weights]\label{claim:beta_range}
Let $\channel$ be completely positive and either trace-preserving or unital. Then
\begin{equation}\label{eq:beta_range}
0\;\le\;\beta\;\le\;d,\qquad 0\;\le\;\tilde\beta\;\le\;d ,
\end{equation}
and $\beta=d$ if and only if $\channel(\Pi_b)=\Pi_b$ for every $b$, which is the \emph{inconsequential} case of \cref{sec:examples}.
\end{claim}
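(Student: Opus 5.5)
The plan is to bound each summand in \eqref{eq:scalars} separately. Write $\beta=\sum_b\beta_b$ with $\beta_b=\bra b\channel(\Pi_b)\ket b$, and $\tilde\beta=\sum_b\tilde\beta_b$ with $\tilde\beta_b=\bra b\channel(\Pi_b^\intercal)\ket b$. Since $\Pi_b$ and $\Pi_b^\intercal=\ketbra{b^*}{b^*}$ are positive semidefinite and $\channel$ is positive (complete positivity is more than enough), $\channel(\Pi_b)\ge0$ and $\channel(\Pi_b^\intercal)\ge0$. Hence every $\beta_b,\tilde\beta_b\ge0$, which gives the lower bounds.

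For the upper bounds I would show $\beta_b\le1$ and $\tilde\beta_b\le1$ termwise, splitting on the two hypotheses.

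\emph{Trace-preserving case.} Here $\bra b\channel(P)\ket b\le\tr\channel(P)=\tr P=1$ for any rank-one projector $P$. This holds because the diagonal entry of a positive semidefinite operator is bounded by its trace.

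\emph{Unital case.} Here $P\le\id$ together with positivity gives $\channel(P)\le\channel(\id)=\id$. Hence $\bra b\channel(P)\ket b\le1$.

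Applying either case with $P=\Pi_b$ or $P=\Pi_b^\intercal$ and summing over the $d$ values of $b$ yields $\beta,\tilde\beta\le d$.

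For the equality statement, $\beta=d$ forces $\beta_b=1$ for every $b$, since each term is at most one. In the trace-preserving case, $\channel(\Pi_b)$ is positive semidefinite with trace $1$ and $\bra b\channel(\Pi_b)\ket b=1$. So the remaining diagonal entries sum to zero and are nonnegative, hence vanish. Positivity then kills the off-diagonal entries in those rows and columns, because of the $2\times2$ minor condition $|X_{ij}|^2\le X_{ii}X_{jj}$. This leaves $\channel(\Pi_b)=\Pi_b$.

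In the unital case, set $Y=\id-\channel(\Pi_b)$. Then $Y\ge0$ and $\bra bY\ket b=0$, so $Y\ket b=0$, and $\ket b$ is an eigenvector of $\channel(\Pi_b)$ with eigenvalue $1$. I also need the orthogonal complement to vanish. For this I would use $\sum_b\channel(\Pi_b)=\channel(\id)=\id$. Each $\channel(\Pi_{b'})$ already contributes $\Pi_{b'}$ as a block, and all the summands are positive, so $\sum_b\big(\channel(\Pi_b)-\Pi_b\big)=0$ with every term positive semidefinite. Hence every term is zero.

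The converse direction is immediate: if $\channel(\Pi_b)=\Pi_b$ for all $b$, then $\beta=\sum_b 1=d$.

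I expect the main obstacle to be this equality characterization in the unital case, in particular making sure the positivity of $\channel(\Pi_b)-\Pi_b$ is justified cleanly. That positivity comes from $\channel(\Pi_b)\ket b=\ket b$ combined with $\channel(\Pi_b)\ge0$: the operator is block-diagonal with respect to $\ket b$ and its complement, and equals $1$ on $\ket b$. The inequalities themselves are routine.
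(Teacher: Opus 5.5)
Your proof is correct. For the lower bounds, the trace-preserving upper bound and the trace-preserving equality case it is essentially the paper's argument.

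For the unital upper bound you use the operator inequality $\channel(P)\preceq\channel(\id)=\id$ term by term. The paper instead sums the positive operators $\channel(\Pi_{b'})$ to $\id$ and bounds $\beta$ by the full trace $\sum_{b,b'}\bra b\channel(\Pi_{b'})\ket b=d$. These are the same positivity fact at two granularities. Yours applies to any rank-one projector, so it covers $\Pi_b^\intercal$ without the identity $\sum_b\Pi_b^\intercal=\id$ that the paper's ``verbatim'' transfer to $\tilde\beta$ implicitly uses.

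The real difference is the equality statement. The paper's proof treats it only for trace-preserving $\channel$ (``Equality in the trace-preserving case forces\dots''), although the claim asserts it under either hypothesis, and it leaves the converse implicit. Your unital argument correctly fills that gap: you decompose $\channel(\Pi_b)=\Pi_b+Z_b$ with $Z_b\succeq0$ supported on $\ket b^\perp$, then use $\sum_bZ_b=\channel(\id)-\id=0$.

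A shorter alternative reuses the paper's own inequality. Equality in $\beta\le\sum_{b,b'}\bra b\channel(\Pi_{b'})\ket b$ forces $\bra b\channel(\Pi_{b'})\ket b=0$ for $b\neq b'$. Positivity then gives $\channel(\Pi_{b'})=c_{b'}\Pi_{b'}$, and $\sum_{b'}c_{b'}\Pi_{b'}=\id$ gives $c_{b'}=1$. Your route yields something sharper per outcome: $\beta_b=1$ alone already forces $\channel(\Pi_b)\ket b=\ket b$, and the sum over $b$ is needed only at the end.
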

\begin{proof}
Each $\Pi_b$ is a state and $\Pi_b^\intercal=\ketbra{b^*}{b^*}$ is again one, so it is enough to bound $\bra b\channel(\Pi)\ket b$ for a state $\Pi$. Complete positivity gives $\channel(\Pi)\succeq0$, whence $\bra b\channel(\Pi)\ket b\ge0$ and the lower bounds follow by summing over $b$. For the upper bound, if $\channel$ is trace-preserving then $\bra b\channel(\Pi)\ket b\le\tr\channel(\Pi)=\tr\Pi=1$, because a positive semidefinite operator has every diagonal entry at most its trace; summing the $d$ terms gives $\beta\le d$. If instead $\channel$ is unital, then $\sum_b\channel(\Pi_b)=\channel(\id)=\id$ and every summand is positive semidefinite, so $\beta=\sum_b\bra b\channel(\Pi_b)\ket b\le\sum_{b,b'}\bra b\channel(\Pi_{b'})\ket b=\tr\id=d$. Both arguments apply verbatim to $\tilde\beta$ with $\Pi_b^\intercal$ in place of $\Pi_b$. Equality in the trace-preserving case forces $\bra b\channel(\Pi_b)\ket b=\tr\channel(\Pi_b)$ for every $b$, and a positive semidefinite operator of unit trace whose $\ket b$ diagonal entry equals its trace is $\Pi_b$ itself.
\end{proof}
\noindent No ordering between $\beta$ and $\tilde\beta$ holds in general; for depolarizing noise \eqref{eq:depol_complex} gives $\beta-\tilde\beta=p(d-\alpha_{\mathrm r})\ge0$, but that is a statement about that channel and that basis. One further scalar belongs to the measurement basis rather than to the noise. Following~\cite{west2025real} we set $\alpha_w=|\braket{w}{w^*}|^2$ and write $\alpha_{\mathrm r}=\sum_w\alpha_w$ for the \emph{basis reality}, so $\alpha_{\mathrm r}=d$ exactly when the basis is real. Normalizing it gives the \emph{reality fraction}
\begin{equation}\label{eq:reality_fraction}
\varsigma\;=\;\frac{\alpha_{\mathrm r}}{d}\;\in[0,1] ,
\end{equation}
so $\varsigma=1$ is a real basis and $\varsigma\to0$ the opposite extreme. Parts~I--II fix $\varsigma=1$; Parts~III--IV vary it. We use the following invariance repeatedly; it makes the adjoint bookkeeping harmless.
\begin{lemma}[Adjoint invariance of $\alpha,\beta$]\label{lem:adjoint}
For any linear superoperator $\channel$, $\ \sum_b\tr[\channel^*(\Pi_b)]=\tr[\channel(\id)]=\alpha$ and $\ \sum_b\bra b\channel^*(\Pi_b)\ket b=\sum_b\bra b\channel(\Pi_b)\ket b=\beta$.
\end{lemma}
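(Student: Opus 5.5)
The plan is to derive both identities directly from the defining relation of the dual, $\tr[\channel^*(X)\,Y]=\tr[X\,\channel(Y)]$ for all $X,Y$. This relation needs only linearity of $\channel$, so no Kraus form and no complete positivity enter, which matches the hypothesis "any linear superoperator". Each identity comes from one well-chosen substitution $(X,Y)$ followed by a sum over $b$.

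\textbf{First identity.} For each $b$ I will write $\tr[\channel^*(\Pi_b)]=\tr[\channel^*(\Pi_b)\,\id]$. Applying the defining relation with $X=\Pi_b$ and $Y=\id$ turns this into $\tr[\Pi_b\,\channel(\id)]=\bra b\channel(\id)\ket b$. Summing over $b\in\B^n$ gives the trace of $\channel(\id)$ in the computational basis, which is $\alpha$ by \eqref{eq:scalars}. An equivalent route is to sum first, using $\sum_b\Pi_b=\id$ and linearity of $\channel^*$ to get $\tr[\channel^*(\id)\,\id]$, and then apply the relation once.

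\textbf{Second identity.} For each $b$ I will write $\bra b\channel^*(\Pi_b)\ket b=\tr[\channel^*(\Pi_b)\,\Pi_b]$. Applying the relation with $X=Y=\Pi_b$ gives $\tr[\Pi_b\,\channel(\Pi_b)]=\bra b\channel(\Pi_b)\ket b$. Summing over $b$ yields $\beta$ by the reduction \eqref{eq:scalars_reduction}.

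\textbf{Main obstacle.} The computation itself is routine; the only point needing care is the convention. I must use the bilinear trace-pairing dual fixed in \cref{sec:scalars}, not the Hilbert--Schmidt adjoint. With the bilinear pairing, no complex conjugation appears when $\Pi_b$ is moved across, so the identities hold for arbitrary linear $\channel$ and not only for Hermiticity-preserving ones. I will also note that the argument works because $\Pi_b$ appears on both sides of the pairing in the second identity, and because $\Pi_b$ is real in the computational basis. The analogous statement for $\tilde\beta$ would instead pair $\Pi_b^\intercal$ against $\Pi_b$, so it is not claimed here.
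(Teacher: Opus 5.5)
Your proof is correct and is essentially the paper's: the second identity is the same substitution $X=Y=\Pi_b$, and your alternative route for the first identity (sum $\Pi_b$ to $\id$, then apply the relation with $X=Y=\id$) is exactly the paper's argument; your per-$b$ version with $X=\Pi_b$, $Y=\id$ is an equally valid variant. One aside is slightly off: the reality of $\Pi_b$ plays no role, since the bilinear relation with $X=Y=\Pi_b$ holds for any operator, so the same argument goes through unchanged for a complex measurement basis.
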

\begin{proof} We use only the defining property of $\channel^*$, $\tr[\channel^*(X)\,Y]=\tr[X\,\channel(Y)]$ for all $X,Y$, so no positivity or Kraus assumption is needed. For the first identity, linearity and $\sum_b\Pi_b=\id$ give
\begin{equation}
\sum_b\tr[\channel^*(\Pi_b)]=\tr\!\Big[\channel^*\!\Big(\sum_b\Pi_b\Big)\Big]=\tr[\channel^*(\id)]=\tr[\channel^*(\id)\,\id]=\tr[\id\,\channel(\id)]=\tr[\channel(\id)]=\alpha,
\end{equation}
where the middle equality applies the adjoint property with $X=\id$, $Y=\id$. For the second, applying the adjoint property with $X=Y=\Pi_b$,
\begin{equation}
\bra b\channel^*(\Pi_b)\ket b=\tr[\channel^*(\Pi_b)\,\Pi_b]=\tr[\Pi_b\,\channel(\Pi_b)]=\bra b\channel(\Pi_b)\ket b;
\end{equation}
Summing over $b$ gives the claim.
\end{proof}
For a trace-preserving or unital channel $\alpha=d$, and $\beta$ is confined by \cref{claim:beta_range}. Invertibility of the shadow channel will require only $\beta\neq1$.

\subsection{The orthogonal Weingarten toolbox}
\label{sec:weingarten_toolbox}
Every average below is over the orthogonal group, for which the relevant object is the $k$th moment operator (twirl) $\mct^{(k)}_{\mbo(d)}(A)=\int_{\mbo(d)}U^{\otimes k}A\,U^{\intercal\otimes k}\dd U$, equal to the orthogonal projector onto the commutant $\comm(\mbo(d),k)$. Brauer diagrams span the commutant: for $k=2$ it is $\mathrm{span}\{\id,\mbs,\om\}$, and for $k=3$ it is $15$-dimensional when $d\ge3$. We use the Weingarten formula $\mct^{(k)}(A)=\sum_{ij}\Wg_{ij}\langle x_i,A\rangle x_j$ with $\Wg=G(d)^{+}$ the Moore--Penrose pseudoinverse of the Gram matrix $G(d)_{ij}=\langle x_i,x_j\rangle$. This machinery, including the order-three Gram determinant and its singularity at $d=2$, is developed from first principles in \cref{app:brauer}.

Throughout we compute moments over the Haar measure on $\mbo(d)$, but the protocol is implemented with the finite \emph{real Clifford} group $\mcc_n\cap\mbo(d)$ (globally) or $(\mcc_1\cap\mbo(2))^{\otimes n}$ (locally). This is justified by the orthogonal $3$-design property~\cite{hashagen2018real,nebe2006self}, i.e., the group average of $U^{\otimes k}(\cdot)U^{\intercal\otimes k}$ over the real Cliffords equals the Haar-$\mbo(d)$ twirl for all $k\le3$, hence every quantity in this paper, all of which depend only on the second and third moments, is unchanged.

\section{Part I: Real Basis, Global Orthogonal Ensemble}
\label{sec:results}

\begin{setting}[Real basis, global ensemble]\label{set:I}
Fix $n\ge1$ and $d=2^n$, and assume throughout this Part:
\begin{enumerate}[label=(S\arabic*),leftmargin=*]
\item\label{it:I_ens} \emph{Ensemble.} $\mbo$ is the Haar measure on $\mbo(d)$ or any orthogonal $3$-design, for instance $\mcc_n\cap\mbo(d)$. By \cref{prop:design_indep} the choice is immaterial.
\item\label{it:I_basis} \emph{Basis.} The measurement basis $\mcw$ is real, $\Pi_w^\intercal=\Pi_w$ for every $w$, equivalently $\varsigma=1$ in \eqref{eq:reality_fraction}.
\item\label{it:I_noise} \emph{Noise.} $\channel$ is the completely positive map of \cref{sec:noise_model}, applied after the evolution, with $d\beta\neq\alpha$ so that the shadow channel is invertible on its visible space, and with $\beta>1$. The latter is what every seminorm and variance bound of this Part needs, and \cref{sec:shadow_norm} says where. Complete positivity is what bounds $\beta$ (\cref{claim:beta_range}) and what makes the outcome weights of \eqref{eq:seminorm_def} non-negative; the channel formula \eqref{eq:global_general} itself needs only linearity, and is stated that way. Statements labelled ``trace-preserving or unital'' additionally use $\alpha=d$.
\item\label{it:I_obs} \emph{Observable.} $O$ is Hermitian and symmetric, and only its traceless part $O_0$ of \eqref{eq:traceless_part} enters, by \cref{lem:reduction}. Symmetry alone is what the \emph{channel} needs --- \cref{prop:global_noisy_channel} holds for every $A\in\mcl(\Complex^d)$ --- and Hermiticity is used in three specific places: to make symmetry and real-valuedness the same condition (\cref{sec:observables}), to identify $\langle O_0,X\rangle$ with $\tr(O_0X)$ in the proof of \cref{lem:reduction}(iii), and to make $\max_\sigma\tr[\sigma O_0^2]=\norm{O_0^2}_{\mathrm{sp}}$ in \eqref{eq:seminorm_general}. It is also what makes $\tr(O\rho)$ real, which is the quantity being estimated.
\end{enumerate}
\end{setting}

\noindent Parts~II--IV each alter exactly one of \ref{it:I_ens}--\ref{it:I_basis} and state which results change. Proofs are in \cref{app:global_channel_proof,app:shadow_norm_proof,app:seminorm_bounds}.

\subsection{Reconstruction: the noisy global shadow channel}
\label{sec:global_channel}

What the protocol needs first is the channel it actually implements. Averaging
\eqref{eq:noisy_channel} over an orthogonal $3$-design gives it in closed form:

\begin{restatable}[Noisy global orthogonal shadow channel]{proposition}{propglobal}
\label{prop:global_noisy_channel}
Let $\mbo$ be an orthogonal $3$-design on $\Complex^d$, let $\channel$ be a linear superoperator, and let the measurement basis be real. Then for all $A\in\mcl(\Complex^d)$,
\begin{equation}\label{eq:global_general}
\mcm_{\mbo,\channel}(A)=\frac{(d+1)\alpha-2\beta}{d(d-1)(d+2)}\,\tr(A)\,\id
+\frac{2\,(d\beta-\alpha)}{d(d-1)(d+2)}\,A_{\mathrm{sym}},
\end{equation}
with $\alpha,\beta$ as in \eqref{eq:scalars}. If $\channel$ is trace-preserving or unital, then $\alpha=d$ and
\begin{equation}\label{eq:global_depol}
\mcm_{\mbo,\channel}(A)=\mcd_{n,f(\channel)}\!\big(A_{\mathrm{sym}}\big),
\qquad
f(\channel)=\frac{2(\beta-1)}{(d-1)(d+2)} .
\end{equation}
\end{restatable}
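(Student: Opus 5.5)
The plan is to avoid the order-three Weingarten expansion altogether. I would use covariance to pin down the form of $\mcm_{\mbo,\channel}$ up to two unknown scalars, and then fix those scalars with two trace identities, one of which is exact before any averaging.

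\emph{Step 1 (covariance and form).} Moving $\channel$ onto the measurement side with the dual of \cref{sec:scalars}, \eqref{eq:noisy_channel} reads
\[
\mcm_{\mbo,\channel}(A)=\E_U\sum_b\tr[\channel^*(\Pi_b)\,UAU^\intercal]\,U^\intercal\Pi_bU .
\]
The integrand has degree two in $U$ and two in $U^\intercal$, so only the second moment $\mct^{(2)}_{\mbo(d)}$ enters. Any orthogonal $3$-design, and in particular any orthogonal $2$-design, therefore gives the Haar value. Invariance of the Haar measure under $U\mapsto UV$ then gives covariance: $\mcm_{\mbo,\channel}(V^\intercal AV)=V^\intercal\mcm_{\mbo,\channel}(A)V$ for all $V\in\mbo(d)$. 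Under the Choi correspondence, a covariant superoperator is an element of $\comm(\mbo(d),2)=\mathrm{span}\{\id,\mbs,\om\}$ (\cref{sec:weingarten_toolbox}; this holds for every $d\ge2$, so no $d=2$ subtlety arises here). These three diagrams correspond, in some order, to the maps $A\mapsto\tr(A)\id$, $A\mapsto A$ and $A\mapsto A^\intercal$. Hence
\[
\mcm_{\mbo,\channel}(A)=x\,\tr(A)\id+y\,A+z\,A^\intercal .
\]
Because the basis is real, every snapshot $U^\intercal\Pi_bU$ is symmetric, so $\mcm_{\mbo,\channel}(A)$ is symmetric for every $A$. Applying this to an antisymmetric $A$ forces $y=z$, and the form becomes $x\,\tr(A)\id+2y\,A_{\mathrm{sym}}$.

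\emph{Step 2 (two scalar equations).} First take the trace of the output. Each snapshot has unit trace, so
\[
\tr\mcm_{\mbo,\channel}(A)=\E_U\tr[\channel^*(\id)\,UAU^\intercal]=\tr[\channel^*(\id)]\,\tr(A)/d=\alpha\,\tr(A)/d .
\]
This uses only the first moment $\E_U UAU^\intercal=\tr(A)\id/d$ and \cref{lem:adjoint}. Since $\tr A_{\mathrm{sym}}=\tr A$, it gives $xd+2y=\alpha/d$.

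Second, take the superoperator trace in the basis $\{\ketbra ij\}$. Before averaging, for fixed $U$, the identity $\sum_{ij}\bra i X\ket j\,\bra i Y^\intercal\ket j$-type contraction gives
\[
\sum_{ij}\tr[\channel^*(\Pi_b)U\ketbra ijU^\intercal]\,\bra i U^\intercal\Pi_bU\ket j=\tr[\channel^*(\Pi_b)\,UU^\intercal\Pi_bUU^\intercal]=\bra b\channel^*(\Pi_b)\ket b .
\]
Summing over $b$ and using \cref{lem:adjoint}, the superoperator trace is exactly $\beta$, independent of $U$. The three basis maps have superoperator traces $d$, $d^2$ and $d$ respectively, so the symmetrizer has trace $d(d+1)/2$ and we get $xd+y\,d(d+1)=\beta$.

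\emph{Step 3 (solve).} Subtracting the two equations gives $y\,(d-1)(d+2)=\beta-\alpha/d$. Hence the coefficient of $A_{\mathrm{sym}}$ is $2y=2(d\beta-\alpha)/\big(d(d-1)(d+2)\big)$, and back-substitution gives $x=\big((d+1)\alpha-2\beta\big)/\big(d(d-1)(d+2)\big)$. This is \eqref{eq:global_general}. For trace-preserving or unital $\channel$ we have $\alpha=d$. Then $f:=2y=2(\beta-1)/\big((d-1)(d+2)\big)$, and one checks $x\,d=1-f$, so the map is $f\,A_{\mathrm{sym}}+(1-f)\tr(A)\id/d=\mcd_{n,f}(A_{\mathrm{sym}})$, which is \eqref{eq:global_depol}.

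\emph{Main obstacle.} The step needing the most care is Step 1. One must justify that covariance plus the $k=2$ Brauer commutant really exhausts the covariant superoperators. Concretely, this means identifying, via the partial-transpose/Choi dictionary, which of $\id$, $\mbs$, $\om$ corresponds to $A\mapsto A$, $A\mapsto A^\intercal$ and $A\mapsto\tr(A)\id$, with the transposes taken in the right places. It also requires noting that the design only needs the second moment. If this identification turns out delicate, the fallback is to expand the same second moment directly with the $3\times3$ orthogonal Weingarten matrix for $\{\id,\mbs,\om\}$. That route yields the same two contractions, $\alpha$ and $\beta$, as the only data from $\channel$.
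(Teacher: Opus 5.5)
Your proof is correct, and it reaches the result by a genuinely different route from the paper. The paper writes the channel as $\sum_b\tr_1[\mct^{(2)}_{\mbo(d)}(\channel^*(\Pi_b)\otimes\Pi_b)(A\otimes\id)]$. It pairs the target against $\{\id,\mbs,\om\}$ to get the vector $(t_1,t_2,t_2)$, then inverts the $3\times3$ Gram matrix to obtain per-outcome coefficients $c_{\id}$ and $c_{\mbs}=c_{\om}$ before summing over $b$. There, $c_{\mbs}=c_{\om}$ comes from $t_2$ appearing twice; you get it instead from the symmetry of every snapshot.

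You never invert anything. Covariance fixes a three-parameter form, and your Choi dictionary ($\id\otimes\id$, $\om$, $\mbs$ for $\tr(\cdot)\id$, the identity map and the transpose) is right. Two trace identities then fix the remaining scalars, and the arithmetic checks out.

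Your Step~2 has an especially clean reading. At fixed $U$ the integrand is the map $\mathrm{Ad}_{U^\intercal}\circ\diag\circ\channel\circ\mathrm{Ad}_U$, with $\mathrm{Ad}_U(A)=UAU^\intercal$. By cyclicity its superoperator trace is $\tr[\channel\circ\diag]=\beta$ for every $U$. Separately, $\tr\mcm_{\mbo,\channel}(A)=\tfrac{\alpha}{d}\tr(A)$ uses only the first moment.

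What your route buys is an explanation of why $\alpha$ and $\beta$ are the only noise data that can appear: they are exactly the two traces of the shadow channel. It also extends directly to \cref{prop:complex_basis}. Drop the symmetry step and add the superoperator trace of $\mcm\circ(\cdot)^\intercal$, which by the same cyclicity equals $\sum_w\bra w\channel(\Pi_w^\intercal)\ket w=\tilde\beta$ pointwise in $U$. That gives three linear equations for $x,y,z$, and they reproduce the coefficients in the paper's proof of that proposition.

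What the paper's route buys is the per-outcome coefficients. It also rehearses the pairing-vector and Gram-matrix computation that the paper reuses with \cref{tab:trace_rel} at order three for the seminorm, where the output depends on $O_0$ and is not pinned down by a pair of $U$-independent traces.

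One small slip in wording: ``any orthogonal $3$-design, and in particular any orthogonal $2$-design'' runs the implication backwards. Being a $2$-design suffices, so in particular every $3$-design works.
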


The channel annihilates the antisymmetric part $(A-A^\intercal)/2$, and $\im\mcm^{\dagger}=(\ker\mcm)^{\perp}$ for any linear map, so the visible space is exactly the symmetric operators,
\begin{equation}
\mathsf{Vis}(\mbo,\mcw)=\{A: A=A^\intercal\},\qquad \dim\mathsf{Vis}=\tfrac{d(d+1)}{2},
\end{equation}
the dimension counting the free entries on and above the diagonal. Both statements hold independently of $\channel$, since noise does not enlarge or shrink the visible space but only rescales the depolarizing parameter. This is the noisy analogue of the observation of~\cite{west2025real} that orthogonal shadows cannot estimate non-symmetric observables.

\begin{remark}[Convention, resolving a discrepancy]\label{rem:convention}
In the noiseless limit $\channel=\id$ we have $\beta=\tr[\diag]=d$, hence
\[
f(\id)=\frac{2(d-1)}{(d-1)(d+2)}=\frac{2}{d+2},
\]
and \eqref{eq:global_depol} reproduces \eqref{eq:west_global} exactly, since $\mcd_{n,2/(d+2)}(A_{\mathrm{sym}})=\tfrac{2}{d+2}A_{\mathrm{sym}}+\tfrac{d}{d+2}\tr(A)\tfrac{\id}{d}=\tfrac{A+A^\intercal+\tr(A)\id}{d+2}$. Note that $f(\id)=2/(d+2)$ in the convention~\eqref{eq:depol_def}, where the value $d/(d+2)$ is the depolarizing \emph{strength} $p$ of the complementary convention $\mcd^{\mathrm{W}}_p$ used in~\cite{west2025real}, related by $f=1-p$. We fix $f=1-p$ throughout.

A second, independent clash with the notation of~\cite{west2025real} follows from this choice, and we settle it here too. Because their $f$ is not committed to a depolarizing parameter, \cite{west2025real} are free to write the basis reality of \cref{sec:complex} as $\alpha_{\mathrm r}=fd$. For us $f$ is already taken, so we write $\alpha_{\mathrm r}=\varsigma d$ with the reality fraction $\varsigma\in[0,1]$ of \eqref{eq:reality_fraction} throughout Parts~III--IV. Thus $\varsigma=1$ is a real basis and $\varsigma\to0$ the unitary limit, and no statement of the form ``$f\to0$'' refers to the depolarizing parameter.

A third discrepancy is arithmetic, and we flag it because a reader comparing the two papers will meet it at the visible space. It is not a disagreement about the dimension of the same object: the object is the same in both papers, and only its name differs. \cite{west2025real} describe the visible space correctly as $\{A:A=A^\intercal\}$, then identify it with $\Complex\mathfrak{o}(d)$, the complexified orthogonal Lie algebra, and quote the dimension $d(d-1)/2$ of \emph{that} object. But $\mathfrak{o}(d)$ is the \emph{antisymmetric} algebra --- the complement of the set they described, and precisely what the channel annihilates. The set has dimension $d(d+1)/2$, as counted above. Their asymptotic conclusion is untouched, since $d(d\pm1)/2\big/d^2\to1/2$ either way, but the identification and the exact dimension are. We use $d(d+1)/2$ throughout.

The same misidentification recurs for the local ensemble, and there it is harmless. \cite{west2025real} write the local visible space as $\mathrm{span}_{\Complex}\{\id,X,Z\}^{\otimes n}=\bigotimes_j\Complex\mathfrak{o}(2)$, where the span is correct and $\Complex\mathfrak{o}(2)$ is one-dimensional, so the right-hand side is one-dimensional for every $n$. Their quoted ratio $(3/4)^n$ is nevertheless right, because it is computed from the span rather than from the name --- which is precisely what went wrong globally, where the dimension was read off the label instead. We agree with their span, and \cref{prop:local_noisy} states it.
\end{remark}

\begin{claim}[Invertibility and the classical shadow]\label{claim:invert}
For $\channel$ trace-preserving or unital, so that $\alpha=d$, the channel $\mcm_{\mbo,\channel}=\mcd_{n,f(\channel)}\circ(\cdot)_{\mathrm{sym}}$ is invertible on the symmetric subspace iff $f(\channel)\neq0$, i.e.\ iff $\beta\neq1$; for general $\alpha$ the condition is $d\beta\neq\alpha$, as in \cref{prop:shadow_norm}. In that case the inverse is $\mcd_{n,1/f(\channel)}\circ(\cdot)_{\mathrm{sym}}$, and the classical shadow for a snapshot $U^\intercal\Pi_b U$ (which is symmetric) is
\begin{equation}\label{eq:classical_shadow}
\hat\rho=\mcm_{\mbo,\channel}^{-1}(U^\intercal\Pi_b U)=\frac{1}{f(\channel)}\,U^\intercal\Pi_b U+\Big(1-\frac{1}{f(\channel)}\Big)\frac{\id}{d}.
\end{equation}
\end{claim}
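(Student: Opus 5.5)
The plan is to work directly from the factored form of \cref{prop:global_noisy_channel} and the properties of the symmetrization established in \cref{claim:residue}. With $\alpha=d$, \eqref{eq:global_depol} gives $\mcm_{\mbo,\channel}=\mcd_{n,f(\channel)}\circ(\cdot)_{\mathrm{sym}}$, and by \cref{claim:residue}\ref{it:res_inv} with $q=\tfrac12$ the symmetrization acts as the identity on the symmetric subspace. Hence $\mcm_{\mbo,\channel}$ restricted to symmetric operators is just $\mcd_{n,f(\channel)}$. That map preserves the symmetric subspace, because $\id$ is symmetric.

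\emph{Invertibility.} First I will settle when this restriction is invertible. If $f\neq0$, then $\mcd_{n,f}^{-1}=\mcd_{n,1/f}$ by the remark after \eqref{eq:depol_def}. If $f=0$, the map is $A\mapsto\tr(A)\id/d$, which kills every traceless symmetric operator, for instance $\Pi_0-\Pi_1$ embedded in $d$ dimensions. So it is invertible exactly when $f\neq0$, and since $f(\channel)=2(\beta-1)/((d-1)(d+2))$ with $d\ge2$, that is exactly when $\beta\neq1$.

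For general $\alpha$ I will read off \eqref{eq:global_general}. The restriction to the symmetric subspace is $A\mapsto c_1\tr(A)\id+c_2A$ with $c_2=2(d\beta-\alpha)/(d(d-1)(d+2))$. On traceless symmetric operators this acts as multiplication by $c_2$, so $c_2\neq0$ is necessary. On $\id$ it gives $(dc_1+c_2)\id$, and I will compute $dc_1+c_2=\frac{(d+1)\alpha-2\beta}{(d-1)(d+2)}+\frac{2(d\beta-\alpha)}{d(d-1)(d+2)}=\alpha/d$. The second condition is therefore automatic whenever $\alpha\neq0$, leaving $d\beta\neq\alpha$ as the condition. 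This algebra is the only step I expect to need care, because it is where a nonzero-$\alpha$ caveat enters. I will note that $\alpha=0$ would force $\channel$ to annihilate the identity, which is excluded for the channels considered.

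\emph{The classical shadow.} The snapshot $U^\intercal\Pi_bU$ is symmetric: its transpose is $U^\intercal\Pi_b^\intercal U=U^\intercal\Pi_bU$, using that $U$ is real and the basis is real \ref{it:I_basis}. So the inverse on the visible space applies, and symmetrizing first changes nothing. This justifies writing the inverse as $\mcd_{n,1/f}\circ(\cdot)_{\mathrm{sym}}$, which also agrees with the extension-by-zero convention of \cref{lem:reduction}, since symmetrization kills antisymmetric inputs. Evaluating \eqref{eq:depol_def} at $1/f$ on a unit-trace argument gives \eqref{eq:classical_shadow} directly.
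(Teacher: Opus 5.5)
Your argument is correct and is essentially the paper's: restrict to the symmetric subspace, where the channel acts as $\mcd_{n,f}$, invert it by $\mcd_{n,1/f}$, and evaluate on the unit-trace symmetric snapshot. Your general-$\alpha$ step goes beyond the paper, which only asserts $d\beta\neq\alpha$. The computation $dc_1+c_2=\alpha/d$ on the identity direction correctly exposes the hidden proviso $\alpha\neq0$, and for completely positive $\channel$ this proviso already follows from $d\beta\neq\alpha$, since $\alpha=\tr\channel(\id)=0$ forces $\channel(\id)=0$, hence $\channel=0$ and $\beta=0$.
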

\begin{proof}
On symmetric $Y$ the image $\mcd_{n,f}(Y)=fY+(1-f)\tr(Y)\id/d$ is again symmetric and has the same trace, $\tr[\mcd_{n,f}(Y)]=f\tr(Y)+(1-f)\tr(Y)=\tr(Y)$. Solving $\mcd_{n,f}(Y)=X$ for $Y$ gives $Y=\mcd_{n,1/f}(X)$, well defined iff $f\neq0$. As $U^\intercal\Pi_b U$ is symmetric and has unit trace, \eqref{eq:classical_shadow} follows from \eqref{eq:depol_def} with $f\mapsto1/f$. Unbiasedness $\E[\hat\rho]=\rho_{\mathrm{sym}}$ is immediate from $\mcm^{-1}\mcm=\mathrm{id}$ on the visible space.
\end{proof}

Since $\beta\le d$ we have $f(\channel)\le f(\id)=2/(d+2)$, so noise can only decrease the depolarizing parameter. That it can never \emph{improve} the protocol is a statement about the variance rather than about $f$, and is established in \eqref{eq:var_monotone}. Writing $f_{\mbu}(\channel)=\dfrac{\beta-1}{(d-1)(d+1)}$ for the corresponding global unitary parameter~\cite{koh2022classical}, we record the exact, noise-independent relation
\begin{equation}\label{eq:fO_fU}
\frac{f(\channel)}{f_{\mbu}(\channel)}
=\frac{2(\beta-1)/[(d-1)(d+2)]}{(\beta-1)/[(d-1)(d+1)]}
=\frac{2(d+1)}{d+2}\ \xrightarrow[d\to\infty]{}\ 2 ,
\end{equation}
which underlies the factor-of-two comparison below (at $d=2$ it equals $3/2$, matching the local single-qubit ratio of \cref{sec:local}). The ratio is in fact free of $\alpha$ as well as of $\channel$: the general forms are $f(\channel)=2(\beta-\alpha/d)/[(d-1)(d+2)]$ from \eqref{eq:global_general} and $f_{\mbu}(\channel)=(\beta-\alpha/d)/(d^2-1)$ from~\cite{koh2022classical}, and the common factor $(\beta-\alpha/d)$ cancels. So a single calibration of the noise serves both ensembles, without assuming the channel is trace-preserving or unital.

Suppose the snapshots are post-processed with the noiseless inverse $\mcm_{\mbo,\id}^{-1}$ while the true channel is $\channel$, the ``noise-blind'' protocol. That estimator is then biased by the multiplicative factor $f(\channel)/f(\id)=(\beta-1)/(d-1)$ computed in \cref{rem:blind}.

\begin{remark}[Cost of noise-blind post-processing]\label{rem:blind}
By \cref{lem:reduction} the estimator carries information only through the traceless symmetric part, and there $\mcm_{\mbo,\channel}(O_0)=f(\channel)\,O_0$, so applying $\mcm_{\mbo,\id}^{-1}$ instead of $\mcm_{\mbo,\channel}^{-1}$ yields
\begin{equation}\label{eq:blind_bias}
\E\big[\hat o_{\mathrm{blind}}\big]
=\frac{f(\channel)}{f(\id)}\,\tr(O_0\rho)
=\frac{\beta-1}{d-1}\;\tr(O_0\rho).
\end{equation}
The bias is multiplicative, scaling with the signal, so it does not average away with more shots and cannot be detected by checking self-consistency of repeated runs; the left panel of \cref{fig:noise_blind_monotone} runs the biased protocol and measures exactly this factor. For depolarizing noise $\beta-1=p(d-1)$ and the factor is exactly $p$. For amplitude damping it is $\big((1+p)^n-1\big)/(d-1)$. For any $\channel$ with $\beta=d$ (\cref{sec:examples}) it is $1$ and noise-blind post-processing is already correct.
\end{remark}

\noindent Only the ratio $f(\channel)/f(\id)$ enters, so \eqref{eq:blind_bias} is the exact price of ignoring known noise, for every trace-preserving or unital $\channel$. What happens when $\channel$ is not known, and must be calibrated from data as in the robust shadow estimation of~\cite{chen2021robust}, lies outside the known-noise setting of this paper and of~\cite{koh2022classical}. We return to it in \cref{sec:discussion}.

\subsection{Shadow seminorm, bounds, and sample complexity}
\label{sec:shadow_norm}

We bound the variance of $\hat o=\tr(O\hat\rho)$ by the shadow seminorm of \cref{lem:seminorm_props}, taken on the traceless part $O_0=O-\tr(O)\id/d$, where the seminorm is defined exactly as in~\cite{koh2022classical} with the unitary average replaced by the orthogonal one,
\begin{equation}\label{eq:seminorm_def}
\norm{O_0}^2_{\shadow,\mbo,\channel}=\max_{\sigma\in\Density_d}\ \E_{U\sim\mbo}\sum_{b\in\B^n}\bra b\channel(U\sigma U^\intercal)\ket b\Big(\bra b U\,\mcm_{\mbo,\channel}^{-1,\dagger}(O_0)\,U^\intercal\ket b\Big)^2 .
\end{equation}
\begin{lemma}[Seminorm properties]\label{lem:seminorm_props}
$\norm{\cdot}_{\shadow,\mbo,\channel}$ defined by \eqref{eq:seminorm_def} is a seminorm on $\mcl(\Complex^d)$: it is absolutely homogeneous and satisfies the triangle inequality. It is not a norm, its kernel containing every antisymmetric operator.
\end{lemma}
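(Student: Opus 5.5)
The plan is to write the right-hand side of \eqref{eq:seminorm_def} as a supremum of squared weighted $L^2$ norms of a linear function of $O_0$; the seminorm properties then follow as they do for any such quantity. Fix $\sigma\in\Density_d$ and set $L(X)\coloneqq\mcm_{\mbo,\channel}^{-1,\dagger}(X)$. This map is linear in $X$, because $\mcm^{-1}$ (the inverse on $\mcv$ extended by zero) is linear and so is its adjoint. Define the outcome weights
\[
p_\sigma(U,b)=\bra b\channel(U\sigma U^\intercal)\ket b .
\]
These are non-negative because $\channel$ is completely positive, which is hypothesis \ref{it:I_noise}. Let $g_X(U,b)=\bra bU\,L(X)\,U^\intercal\ket b$. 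Then the inner expression of \eqref{eq:seminorm_def} is $\Phi_\sigma(X)=\E_U\sum_b p_\sigma(U,b)\,|g_X(U,b)|^2$. Here we read the square as a modulus squared, which agrees with the square for Hermitian $X$, since $L$ preserves Hermiticity by \cref{claim:residue}\ref{it:res_sa}. So $\Phi_\sigma(X)^{1/2}$ is the $L^2(\mu_\sigma)$ norm of $g_X$, where $\mu_\sigma$ is the positive measure $p_\sigma\,\dd U\otimes(\text{counting})$, and $X\mapsto g_X$ is linear.

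\emph{Homogeneity and the triangle inequality.} For each fixed $\sigma$ we have $g_{cX}=c\,g_X$, so $\Phi_\sigma(cX)^{1/2}=|c|\,\Phi_\sigma(X)^{1/2}$. Taking the maximum over $\sigma$ gives absolute homogeneity. For the triangle inequality, Minkowski's inequality in $L^2(\mu_\sigma)$ gives
\[
\Phi_\sigma(X+Y)^{1/2}\le\Phi_\sigma(X)^{1/2}+\Phi_\sigma(Y)^{1/2}\le\norm{X}_{\shadow,\mbo,\channel}+\norm{Y}_{\shadow,\mbo,\channel}
\]
for every $\sigma$, and maximizing the left side over $\sigma$ yields the claim. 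The maximum exists because $\Density_d$ is compact and $\Phi_\sigma$ is affine, hence continuous, in $\sigma$. Non-negativity is immediate. We define the seminorm on all of $\mcl(\Complex^d)$ through the same formula with $O_0$ replaced by the argument. When the paper applies it to $O$, it means $O_0$, and the two are consistent because $\mcm^{-1,\dagger}(\id)$ contributes only a constant.

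\emph{Kernel.} Let $X$ be antisymmetric. By \cref{prop:global_noisy_channel} the visible space is $\mcv=\{A:A=A^\intercal\}$, and antisymmetric operators lie in $\mcv^\perp$: for $A$ symmetric and $X$ antisymmetric, the self-adjointness of the transpose in \cref{claim:residue}\ref{it:res_sa} gives $\langle A,X\rangle=\langle A^\intercal,X\rangle=\langle A,X^\intercal\rangle=-\langle A,X\rangle$. Since $\mcm^{-1,\dagger}$ annihilates $\mcv^\perp$ (\cref{lem:reduction}(iii)), we get $L(X)=0$, so $g_X\equiv0$ and $\norm{X}_{\shadow,\mbo,\channel}=0$. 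Because nonzero antisymmetric operators exist for $d\ge2$, the functional is not a norm. The only delicate point is bookkeeping, namely keeping the weights non-negative. This is exactly where complete positivity of $\channel$ enters; without it, Minkowski's inequality would fail. Linearity of the zero-extended inverse is the other ingredient, and it holds by construction.
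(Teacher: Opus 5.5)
Your proposal is correct and follows the paper's proof essentially step for step. Complete positivity makes the weights non-negative, so each $\sqrt{\Phi_\sigma}$ is a weighted $L^2$ seminorm of a linear function of $X$ (you use Minkowski where the paper cites Cauchy--Schwarz); both properties survive the maximum over the compact set $\Density_d$; and antisymmetric operators lie in $\mcv^\perp$, which $\mcm^{-1,\dagger}$ annihilates by \cref{lem:reduction}(iii). Reading the square explicitly as a modulus squared is a small but genuine tightening, because the paper's identity $\Phi_\sigma(cX)=|c|^2\Phi_\sigma(X)$, for complex $c$ and non-Hermitian $X\in\mcl(\Complex^d)$, tacitly requires that reading.
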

\begin{proof}
For a state $\sigma$ put
\begin{equation}\label{eq:Phisigma}
\Phi_\sigma(X)=\E_{U}\sum_b\bra b\channel(U\sigma U^\intercal)\ket b\big(\bra bU\mcm^{-1,\dagger}_{\mbo,\channel}(X)U^\intercal\ket b\big)^2 ,\qquad \norm{X}^2_{\shadow,\mbo,\channel}=\max_{\sigma\in\Density_d}\Phi_\sigma(X).
\end{equation}
\emph{Step 1: each $\Phi_\sigma$ is a non-negative quadratic form.} Because $\channel$ is completely positive (\ref{it:I_noise}), $\channel(U\sigma U^\intercal)\succeq0$ and the weights $\bra b\channel(U\sigma U^\intercal)\ket b$ are non-negative; they do not depend on $X$, and $X\mapsto\bra bU\mcm^{-1,\dagger}(X)U^\intercal\ket b$ is linear. For a linear $\channel$ that is not positive the weights can be negative, and $\sqrt{\Phi_\sigma}$ is then undefined. Hence $\sqrt{\Phi_\sigma}$ is a seminorm, by Cauchy--Schwarz.
\emph{Step 2: both properties survive the maximum.} Absolute homogeneity is immediate, $\Phi_\sigma(cX)=|c|^2\Phi_\sigma(X)$ for every $\sigma$. For the triangle inequality, for each $\sigma$
\begin{equation}
\sqrt{\Phi_\sigma(X+Y)}\;\le\;\sqrt{\Phi_\sigma(X)}+\sqrt{\Phi_\sigma(Y)}\;\le\;\max_{\sigma'}\sqrt{\Phi_{\sigma'}(X)}+\max_{\sigma'}\sqrt{\Phi_{\sigma'}(Y)} ,
\end{equation}
and the right-hand side no longer depends on $\sigma$, so it bounds the maximum of the left, which is the claim. The maximum is attained: $\Phi_\sigma(X)$ is linear in $\sigma$, hence continuous, and $\Density_d$ is compact.
\emph{Step 3: the kernel.} By \cref{lem:reduction}(iii) $\mcm^{-1,\dagger}_{\mbo,\channel}$ annihilates $\mcv^\perp$, which in \cref{set:I} contains every antisymmetric operator. Such an $X$ therefore has $\norm{X}_{\shadow,\mbo,\channel}=0$ and the map is not a norm. On the symmetric subspace it \emph{is} a norm whenever $\channel$ is unital. Take $\sigma=\id/d$, for which every outcome weight is $\bra b\channel(\id/d)\ket b=1/d>0$, so $\Phi_{\id/d}(X)=0$ forces $\bra bU Y U^\intercal\ket b=0$ for almost every $U\in\mbo(d)$ and every $b$, where $Y=\mcm^{-1,\dagger}_{\mbo,\channel}(X)$. Given any unit eigenvector $y$ of the symmetric $Y$ with eigenvalue $\lambda$, choose $U$ with $U^\intercal\ket b=y$, then $\bra bUYU^\intercal\ket b=\lambda$, so every eigenvalue vanishes and $Y=0$, hence $X=0$. For non-unital $\channel$ some weight may vanish and we fall back on the non-degeneracy condition of~\cite{koh2022classical}. Only the seminorm property is used below.
\end{proof}

That \eqref{eq:seminorm_def} is a seminorm says nothing about its value. The orthogonal
Weingarten calculus of \cref{app:brauer} evaluates it exactly:

\begin{restatable}[Global shadow seminorm]{proposition}{propseminorm}
\label{prop:shadow_norm}
Let $\mbo$ be an orthogonal $3$-design, let $\channel$ be a linear superoperator with $d\beta\neq\alpha$, and let $O_0$ be a traceless symmetric observable. Then, with $\alpha,\beta$ as in \eqref{eq:scalars},
\begin{equation}\label{eq:seminorm_general}
\norm{O_0}^2_{\shadow,\mbo,\channel}
=\frac{d(d-1)(d+2)}{(d+4)(d\beta-\alpha)^2}\left[\frac{(d+3)\alpha-4\beta}{2}\,\tr(O_0^2)+2(d\beta-\alpha)\,\norm{O_0^2}_{\mathrm{sp}}\right].
\end{equation}
If in addition $\channel$ is trace-preserving or unital, so that $\alpha=d$ and the condition reads $\beta\neq1$, this specializes to
\begin{equation}\label{eq:seminorm_exact}
\norm{O_0}^2_{\shadow,\mbo,\channel}
=\frac{(d-1)(d+2)}{(d+4)(\beta-1)}\left[\frac{d(d+3)-4\beta}{2d(\beta-1)}\,\tr(O_0^2)+2\,\norm{O_0^2}_{\mathrm{sp}}\right].
\end{equation}
\end{restatable}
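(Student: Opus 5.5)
The plan is to reduce the seminorm to a single third-moment orthogonal twirl, evaluate that twirl on the relevant Brauer diagrams, and then carry out the maximization over $\sigma$ in closed form.

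\emph{Step 1: invert on the observable side.} By \cref{claim:residue}\ref{it:res_sa} the shadow channel is self-adjoint. By \eqref{eq:global_general}, on a traceless symmetric $O_0$ it acts as multiplication by
\[
c\coloneqq\frac{2(d\beta-\alpha)}{d(d-1)(d+2)},
\]
since the $\tr(O_0)\id$ term vanishes. So $\mcm^{-1,\dagger}_{\mbo,\channel}(O_0)=O_0/c$, and this is where the hypothesis $d\beta\neq\alpha$ is used.

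\emph{Step 2: rewrite as a twirl.} Write each outcome weight as $\bra b\channel(U\sigma U^\intercal)\ket b=\tr[\channel^*(\Pi_b)\,U\sigma U^\intercal]$, using the dual of \cref{sec:scalars}. The quantity to be maximized then becomes
\[
\Phi_\sigma=\frac{1}{c^2}\,\tr\!\Big[(\sigma\otimes O_0\otimes O_0)\;\mct^{(3)}_{\mbo(d)}\Big(\textstyle\sum_b \channel^*(\Pi_b)\otimes\Pi_b\otimes\Pi_b\Big)\Big],
\]
with $U^\intercal$ in place of $U$, which is harmless because the Haar measure is inversion-invariant. By the $3$-design hypothesis this twirl is exactly the Haar one. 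Set $X\coloneqq\sum_b\channel^*(\Pi_b)\otimes\Pi_b\otimes\Pi_b$.

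\emph{Step 3: evaluate the Brauer coefficients.} Expand the twirl with the Weingarten formula of \cref{sec:weingarten_toolbox} over the $15$ Brauer diagrams $x_i$ of order three.

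\begin{itemize}
\item \emph{Overlaps $\langle x_i,X\rangle$.} Because the basis is real, $\Pi_b^\intercal=\Pi_b$, so every partial transpose that a pairing diagram introduces acts trivially. Each overlap therefore collapses to one of two scalars. If the first tensor factor is traced out on its own, the overlap is $\sum_b\tr[\channel^*(\Pi_b)]=\alpha$. Otherwise it is $\sum_b\bra b\channel^*(\Pi_b)\ket b=\beta$. Both identities are \cref{lem:adjoint}. The only other ingredient is $\tr[\Pi_b\Pi_b]=1$.
\item \emph{Contractions with $\sigma\otimes O_0\otimes O_0$.} Contracting each $x_j$ against this operator and using $\tr O_0=0$, $O_0^\intercal=O_0$, $\tr\sigma=1$ leaves only three kinds of surviving term: $\tr(O_0^2)$, $\tr(\sigma O_0^2)$ and $\tr(\sigma^\intercal O_0^2)$. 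The last two are equal, since $O_0^2$ is symmetric.
\item \emph{Symmetry reduction.} $X$ is invariant under swapping tensor factors $2\leftrightarrow3$ and under full transposition. Hence the Weingarten sum only needs the orbit sums of the $15$ diagrams under this symmetry group. This reduces the computation to a small linear system in the unknown coefficients $A(\alpha,\beta)$ and $B(\alpha,\beta)$ multiplying $\tr(O_0^2)$ and $\tr(\sigma O_0^2)$.
\end{itemize}

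\emph{Step 4: maximize over $\sigma$.} The target is
\[
c^2\,\Phi_\sigma=\frac{2}{d(d-1)(d+2)(d+4)}\Big[\tfrac{(d+3)\alpha-4\beta}{2}\tr(O_0^2)+2(d\beta-\alpha)\tr(\sigma O_0^2)\Big],
\]
and dividing by $c^2$ gives \eqref{eq:seminorm_general}. The maximum over $\sigma\in\Density_d$ of $\tr(\sigma O_0^2)$ is $\norm{O_0^2}_{\mathrm{sp}}$. It is attained at a (real) top eigenvector of the positive semidefinite matrix $O_0^2$, which is where Hermiticity is used. This step requires the coefficient of $\tr(\sigma O_0^2)$ to be non-negative, which holds when $d\beta>\alpha$; this is the $\beta>1$ clause of \ref{it:I_noise}. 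Finally, setting $\alpha=d$ and simplifying gives \eqref{eq:seminorm_exact}.

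\emph{Main obstacle.} I expect the hardest step to be the exact inversion of the $15\times15$ order-three Gram matrix. For $d\ge3$ I would exploit the orbit structure of Step 3 so that only the needed rows of $\Wg$ are computed. At $d=2$ the Gram matrix is singular. There I would use the pseudoinverse and invoke the appendix result that the physical linear system remains solvable, i.e. that the vector of overlaps $\langle x_i,X\rangle$ lies in the range of $G(2)$. Then the resulting twirl is independent of the choice of generalized inverse, and the same rational formula in $d$ holds.

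As a check, the noiseless case $\alpha=\beta=d$ should reproduce the variance reported in~\cite{west2025real}.
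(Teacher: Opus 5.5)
\paragraph{Verdict.} Your route is the paper's, but the computation that fixes the constants is never done, and the expression standing in for it is wrong by a factor of two. The shared steps are these: invert on the observable side to get $O_0/c$, and lift the outcome weights to the order-three twirl of $\sum_b\channel^*(\Pi_b)\otimes\Pi_b\otimes\Pi_b$. Then collapse every overlap on the real basis to $t_1=\tr[\channel^*(\Pi_b)]$ or $t_2=\bra b\channel^*(\Pi_b)\ket b$, contract, and maximize. Now $c^2=4(d\beta-\alpha)^2/[d^2(d-1)^2(d+2)^2]$, so dividing your displayed target by $c^2$ gives exactly one half of \eqref{eq:seminorm_general}. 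The prefactor must be $4/[d(d-1)(d+2)(d+4)]$, not $2/[\cdots]$. The noiseless check you propose, $\alpha=\beta=d$ against $\tfrac{d+2}{2d+8}[\tr(O_0^2)+4\norm{O_0^2}_{\mathrm{sp}}]$, would have caught it.

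\paragraph{How the paper closes Step 3.} It uses a coarser reduction than your orbit sums. The overlap vector is constant on two blocks: $\{\id,\mbs_{(23)},\Omega_{23;23}\}$ and the remaining twelve diagrams, each block a union of several of your orbits. The Gram row sums are block-independent, so a block-constant ansatz reduces $G(d)x=r$ to a $2\times2$ system. Its solution is $c_A=[(d+3)t_1-4t_2]/p_D$ and $c_B=(dt_2-t_1)/p_D$, with $p_D=d(d-1)(d+2)(d+4)$. Now contract against $\id\otimes O_0\otimes O_0$. The identity diagram dies by tracelessness, so block $A$ gives $2\tr(O_0^2)\,\id$. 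The four block-$B$ diagrams carrying a lone $\tr O_0$ die, so block $B$ gives $8O_0^2$. Summing over $b$ yields $\tfrac{4}{p_D}\big[\tfrac{(d+3)\alpha-4\beta}{2}\tr(O_0^2)\,\id+2(d\beta-\alpha)O_0^2\big]$.

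\paragraph{The case $d=2$.} Your sketch is thin here. Solvability of $G(2)x=r$ plus a rational formula for $d\ge3$ does not by itself give the formula at $d=2$. What does is the explicit block solution: it satisfies $G(d)x(d)=r(d)$ identically and has no $(d-2)$ in its denominator. So $x(2)$ solves the singular system, and every solution yields the same operator because $\ker G(2)=\ker\phi_2$.

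\paragraph{Two smaller points.} First, $X$ is not invariant under full transposition when $\channel^*(\Pi_b)$ is non-real, for instance a coherent error with complex $V$. Only its overlap vector is invariant, since $\bra b M^\intercal\ket b=\bra b M\ket b$ for real $\ket b$. That is all your reduction needs, and the $2\leftrightarrow3$ swap alone would already suffice.

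Second, your Step~4 caveat is a genuine improvement on the paper. The paper's proof replaces $\max_\sigma\tr[\sigma O_0^2]$ by $\norm{O_0^2}_{\mathrm{sp}}$ without checking the sign of its coefficient. So under the bare hypothesis $d\beta\neq\alpha$ the formula fails whenever $d\beta<\alpha$, because the maximum then sits at $\lambda_{\min}(O_0^2)$. CPTP noise with $\beta<1$ realizes this, e.g.\ readout with $\tr R<1$. The correct hypothesis is $d\beta>\alpha$. For $\alpha=d$ this is the Setting's $\beta>1$, but for general $\alpha$ it is a separate condition.
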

Proved in \cref{app:shadow_norm_proof}. As an identity \eqref{eq:seminorm_general} holds for any linear $\channel$, the unital hypothesis being used only to set $\alpha=d$; that the quantity it evaluates is a seminorm is \cref{lem:seminorm_props}, which needs positivity as well.

\begin{restatable}[Two-sided bounds]{corollary}{corbounds}
\label{cor:seminorm_bounds}
Under the hypotheses of \cref{prop:shadow_norm} with $\channel$ trace-preserving or unital,
\begin{equation}\label{eq:seminorm_bounds}
\frac{(d-1)^2}{2(\beta-1)^2}\,\tr(O_0^2)\ \le\ \norm{O_0}^2_{\shadow,\mbo,\channel}\ \le\ \frac{5(d-1)^2}{2(\beta-1)^2}\,\tr(O^2).
\end{equation}
\end{restatable}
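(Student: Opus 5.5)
The plan is to start from the exact closed form \eqref{eq:seminorm_exact} of \cref{prop:shadow_norm} and sandwich its one non-trivial ingredient, $\norm{O_0^2}_{\mathrm{sp}}$, between multiples of $\tr(O_0^2)$. Write $X=\tr(O_0^2)$ and $S=\norm{O_0^2}_{\mathrm{sp}}$. Since $O_0^2\succeq0$ has $d$ non-negative eigenvalues summing to $X$, the largest of them satisfies
\begin{equation*}
X/d\ \le\ S\ \le\ X .
\end{equation*}
Under \cref{set:I} we have $1<\beta\le d$ (using \cref{claim:beta_range} for the upper end). Hence the prefactor $(d-1)(d+2)/[(d+4)(\beta-1)]$ is positive, and $S$ enters the bracket with the positive coefficient $2$. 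The bracket is therefore monotone in $S$, so substituting either end of the sandwich gives a valid inequality in the right direction.

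\emph{Lower bound.} Substitute $S\ge X/d$ into the bracket. The $\beta$-dependence of the numerator cancels, and the bracket becomes at least
\begin{equation*}
X\,\frac{d(d+3)-4\beta+4(\beta-1)}{2d(\beta-1)}=X\,\frac{(d-1)(d+4)}{2d(\beta-1)} .
\end{equation*}
The factor $(d+4)$ cancels against the prefactor, leaving
\begin{equation*}
\norm{O_0}^2_{\shadow,\mbo,\channel}\ \ge\ \frac{(d-1)^2(d+2)}{2d(\beta-1)^2}\,X .
\end{equation*}
Dropping $(d+2)/d\ge1$ then gives the left inequality of \eqref{eq:seminorm_bounds}.

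\emph{Upper bound.} Substitute $S\le X$. The bracket is at most
\begin{equation*}
X\,\frac{d^2-d+4\beta(d-1)}{2d(\beta-1)}=X\,\frac{(d-1)(d+4\beta)}{2d(\beta-1)} ,
\end{equation*}
so the seminorm is at most
\begin{equation*}
\frac{(d-1)^2}{2(\beta-1)^2}\cdot\frac{(d+2)(d+4\beta)}{d(d+4)}\,X .
\end{equation*}
Now use $\beta\le d$ to bound $d+4\beta\le5d$, so the second factor is at most $5(d+2)/(d+4)\le5$. Finally, $X=\tr(O_0^2)=\tr(O^2)-(\tr O)^2/d\le\tr(O^2)$, which gives the right inequality with $\tr(O^2)$ exactly as stated.

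There is no real obstacle once \cref{prop:shadow_norm} is in hand; the argument is pure algebra. The only points needing care are bookkeeping ones. First, the signs: $\beta>1$ is used so that replacing $S$ by its bounds moves the whole expression in the intended direction. Second, the two uses of the range of $\beta$: $\beta\le d$ is needed only in the upper bound, while the lower bound holds for every $\beta>1$. Third, the factorizations $d^2+3d-4=(d-1)(d+4)$ and $d^2-d+4\beta(d-1)=(d-1)(d+4\beta)$, which I would verify explicitly. The constant $5$ arises solely from $d+4\beta\le5d$ combined with the relaxation $S\le X$, consistent with the remark in the contributions that the bounds differ by that factor.
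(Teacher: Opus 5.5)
Your proposal is correct and follows the paper's proof essentially step for step. You substitute the sandwich $d^{-1}\tr(O_0^2)\le\norm{O_0^2}_{\mathrm{sp}}\le\tr(O_0^2)$ into \eqref{eq:seminorm_exact}, use the same factorizations $(d-1)(d+4)$ and $(d-1)(d+4\beta)$, and finish with $(d+2)/d\ge1$, $d+4\beta\le5d$, $(d+2)/(d+4)\le1$ and $\tr(O_0^2)\le\tr(O^2)$. Your explicit observation that $\beta>1$ (from \cref{set:I}) is what fixes the direction of both substitutions is a point the paper's proof uses only implicitly.
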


The upper bound is what a sample-complexity guarantee needs, and \cref{fact:mom} turns it into
one.

\begin{restatable}[Sample complexity]{corollary}{corsample}
\label{cor:sample_complexity}
Let $\{O_i\}_{i=1}^M$ be observables and let $\channel$ be trace-preserving or unital with $\beta\neq1$, as in \cref{cor:seminorm_bounds}, whose bound the proof inserts. To estimate all $\tr(O_i\rho)$ to additive error $\eps$ with failure probability $\delta$, orthogonal shadows of total size
\begin{equation}\label{eq:sample_complexity}
N_{\mathrm{tot}}\le \frac{170\,(d-1)^2\,\log(2M/\delta)}{(\beta-1)^2\,\eps^2}\,\max_{1\le i\le M}\tr(O_i^2)
\end{equation}
suffice. The corresponding global unitary bound of~\cite{koh2022classical} has prefactor $204$ in place of $170$.
\end{restatable}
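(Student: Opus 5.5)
The plan is to chain the variance bound, the seminorm upper bound of \cref{cor:seminorm_bounds}, and the median-of-means guarantee of \cref{fact:mom}, with a union bound over the $M$ observables.

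\emph{Variance.} Fix $i$. By \cref{lem:reduction}(ii), $\Var[\hat o_i]=\Var[\tr(O_{i,0}\hat\rho)]$. Write $\tr(O_{i,0}\hat\rho)=\bra b U\mcm^{-1,\dagger}_{\mbo,\channel}(O_{i,0})U^\intercal\ket b$, where the snapshot is $U^\intercal\Pi_bU$ and the outcome $b$ occurs with probability $\bra b\channel(U\rho U^\intercal)\ket b$. The second moment of this quantity is then exactly $\Phi_\rho(O_{i,0})$ from \eqref{eq:Phisigma}. Bounding the variance by the second moment and maximizing over states gives
\[
\Var[\hat o_i]\le\norm{O_{i,0}}^2_{\shadow,\mbo,\channel}.
\]

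\emph{Seminorm bound.} Next, insert the upper bound of \eqref{eq:seminorm_bounds}, which gives $\Var[\hat o_i]\le \tfrac{5(d-1)^2}{2(\beta-1)^2}\tr(O_i^2)$. The right-hand side is written with $O_i$, not $O_{i,0}$, and that is harmless because $\tr(O_{i,0}^2)=\tr(O_i^2)-(\tr O_i)^2/d\le\tr(O_i^2)$. Taking the maximum over $i$ gives a uniform bound
\[
\sigma^2\le\tfrac{5(d-1)^2}{2(\beta-1)^2}\max_i\tr(O_i^2).
\]

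\emph{Median of means.} Apply \cref{fact:mom} with batch size $N=34\sigma^2/\eps^2$ and $K=2\log(2M/\delta)$ batches. Each estimate then fails with probability at most $2\me^{-K/2}=\delta/M$. A union bound over the $M$ observables brings the total failure probability to at most $\delta$. The same $NK$ snapshots serve every observable. The total size is
\[
N_{\mathrm{tot}}=NK=\frac{68\,\sigma^2\log(2M/\delta)}{\eps^2}\le\frac{68\cdot\tfrac52\,(d-1)^2}{(\beta-1)^2\eps^2}\log(2M/\delta)\max_i\tr(O_i^2),
\]
and $68\cdot\tfrac52=170$, which is \eqref{eq:sample_complexity}.

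\emph{Comparison with the unitary bound.} The unitary prefactor $204=68\cdot3$ follows the same way from the Koh--Grewal seminorm bound $3(d^2-1)^2/(\beta-1)^2\cdots$. I would state this only as a citation and not re-derive it.

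\emph{Main obstacle.} The delicate step is the first one: identifying the single-shot second moment with $\Phi_\rho$. This needs the outcome weights to be genuine probabilities, which holds because $\channel$ is CPTP. It also needs the estimator to involve only $\mcm^{-1,\dagger}(O_{i,0})$. The second point uses \cref{lem:reduction}(iii), and symmetric $O_i$ if visibility is required. Ceiling effects in $N$ are absorbed by the ``$\le$'' up to rounding, as in~\cite{koh2022classical}.
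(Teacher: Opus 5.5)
Your proof is correct and follows the paper's proof step for step. The steps are: variance $\le$ single-shot second moment $\le$ seminorm; insert the $\tfrac52(d-1)^2(\beta-1)^{-2}\tr(O_i^2)$ upper bound; then median of means with $N=34\sigma^2/\eps^2$, $K=2\log(2M/\delta)$, a union bound, and $68\cdot\tfrac52=170$. Passing to $O_{i,0}$ via \cref{lem:reduction}(ii) before identifying the second moment with $\Phi_\rho$ is slightly cleaner than the paper's chain $\Var[\hat o_i]\le\E[\hat o_i^2]\le\norm{O_{i,0}}^2_{\shadow,\mbo,\channel}$, whose middle term still carries the identity component.

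There is one slip in the part you only cite. The Koh--Grewal bound behind $204=68\cdot3$ is $3(d-1)^2(\beta-1)^{-2}\tr(O^2)$. As you wrote it, $3(d^2-1)^2/(\beta-1)^2$, it is too large by a factor $(d+1)^2$, and would not put the two prefactors in front of the same $(d-1)^2/(\beta-1)^2$.
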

Both corollaries are proved in \cref{app:seminorm_bounds}, and both prefactors come from \cref{fact:mom} applied to the respective seminorm bounds of \cref{cor:seminorm_bounds}.

\subsection{The factor of two, exactly, and what the bound $170/204$ does and does not say}
\label{sec:comparison}

The exact variance of the estimator for a symmetric observable $A$ (traceless symmetric part $A_{\mathrm{sym};0}$) is, from \eqref{eq:seminorm_exact} with $\norm{O_0^2}_{\mathrm{sp}}\mapsto\tr(\rho O_0^2)$ and subtracting the squared mean,
\begin{multline}\label{eq:varO}
\Var_{\mbo}(\hat a)=\frac{(d-1)(d+2)}{(d+4)(\beta-1)}\left[\frac{d(d+3)-4\beta}{2d(\beta-1)}\tr(A_{\mathrm{sym};0}^2)+2\tr(\rho A_{\mathrm{sym};0}^2)\right]-\tr(A_{\mathrm{sym};0}\rho)^2 ,
\end{multline}
At $\beta=d$ this collapses to the noiseless orthogonal variance $\tfrac{d+2}{2d+8}\big[\tr(A_{\mathrm{sym};0}^2)+4\tr(\rho A_{\mathrm{sym};0}^2)\big]-\tr(A_{\mathrm{sym};0}\rho)^2$ of~\cite{west2025real}, which fixes every constant above. It is to be compared with the unitary result of~\cite{koh2022classical}, which likewise needs only a unitary $3$-design and so is realized by the multiqubit Clifford group~\cite{zhu2017multiqubit},
\begin{multline}\label{eq:varU}
\Var_{\mbu}(\hat a)=\frac{d^2-1}{(d+2)(\beta-1)}\left[\frac{d+d^2-2\beta}{d(\beta-1)}\tr(A_0^2)+2\tr(\rho A_0^2)\right]-\tr(A_0\rho)^2 .
\end{multline}
For symmetric $A$ we have $A_{\mathrm{sym};0}=A_0$. When the $O_0$-only terms dominate the $\rho$-weighted ones,
\begin{equation}\label{eq:factor2}
\mathbb E_{\mbu}[\hat a^2]\;\sim\;\frac{d^2}{(\beta-1)^2}\tr(A_0^2),
\qquad
\mathbb E_{\mbo}[\hat a^2]\;\sim\;\frac{d^2}{2(\beta-1)^2}\tr(A_0^2),
\end{equation}
so the second-moment ratio tends to $2$, as in the noiseless case~\cite{west2025real}. That is an asymptotic statement, and both the condition under which the $O_0$-only terms dominate, and the exact ratio at finite $d$ follow from a stronger result, which we therefore prove first.

Both variances have the same shape once we abbreviate the three scalars they depend on. Fix a state $\rho$ and a traceless symmetric observable $O_0$, and set
\begin{equation}\label{eq:tr_scalars}
t=\tr(O_0^2)=\norm{O_0}_2^2,\qquad r=\tr(\rho\,O_0^2),\qquad m=\tr(O_0\rho),
\end{equation}
so that
\begin{equation}\label{eq:var_shape}
\Var_{\mbo}[\hat o]+m^2=C_{\mbo}\big(a_{\mbo}t+2r\big),\qquad
\Var_{\mbu}[\hat o]+m^2=C_{\mbu}\big(a_{\mbu}t+2r\big),
\end{equation}
with
\begin{equation}\label{eq:CaCa}
C_{\mbo}=\frac{(d-1)(d+2)}{(d+4)(\beta-1)},\quad
a_{\mbo}=\frac{d(d+3)-4\beta}{2d(\beta-1)},\quad
C_{\mbu}=\frac{d^2-1}{(d+2)(\beta-1)},\quad
a_{\mbu}=\frac{d+d^2-2\beta}{d(\beta-1)} .
\end{equation}
Since the estimator is unbiased in both ensembles, $m$ is ensemble-independent and the left-hand sides of \eqref{eq:var_shape} are the single-shot \emph{second} moments $\mathbb E[\hat o^2]$. This shape settles the claim of \cref{sec:global_channel} that noise never improves the protocol, which the bound $f(\channel)\le f(\id)$ recorded there does not: that bound constrains the depolarizing parameter, and the protocol is judged by its variance. The argument is that more noise means a smaller $\beta$ (\cref{claim:beta_range}, with $\beta=d$ the noiseless value), that the single-shot second moment is strictly decreasing in $\beta$, and hence that any channel lowering $\beta$ strictly raises it. Differentiating the orthogonal case,
\begin{equation}\label{eq:var_monotone}
\frac{\partial}{\partial\beta}\,\mathbb E_{\mbo}[\hat o^2]\;=\;-\,\frac{(d-1)(d+2)\big[\,2dr(\beta-1)+t\big(d^2+3d-2\beta-2\big)\big]}{d\,(\beta-1)^3(d+4)}\;<\;0 ,
\end{equation}
since on $1<\beta\le d$ the bracket is positive: $2dr(\beta-1)\ge0$, and $d^2+3d-2\beta-2\ge d^2+d-2=(d-1)(d+2)>0$ at the largest admissible $\beta=d$, growing as $\beta$ falls. The second moment is therefore strictly decreasing in $\beta$, so any noise that lowers $\beta$ strictly raises it. The same computation with $C_{\mbu}$ and $a_{\mbu}$ gives the unitary statement,
\begin{equation}\label{eq:var_monotone_U}
\frac{\partial}{\partial\beta}\,\mathbb E_{\mbu}[\hat o^2]\;=\;-\,\frac{2(d^2-1)\big[\,t\big(d^2+d-\beta-1\big)+d\,r(\beta-1)\big]}{d\,(d+2)(\beta-1)^3}\;<\;0 ,
\end{equation}
negative on $1<\beta\le d$ for the same reason: $d\,r(\beta-1)\ge0$, and $d^2+d-\beta-1\ge d^2-1>0$ at the largest admissible $\beta=d$. Neither ensemble is helped by noise. Their ratio turns out to depend on $\rho$ and $O_0$ through a single dimensionless combination.

\noindent Both panels of \cref{fig:noise_blind_monotone} measure what this subsection and \cref{rem:blind} derive; the right-hand one is the monotonicity above, seen directly.

\begin{proposition}[Exact advantage criterion]
\label{prop:criterion}
Let $\channel$ be trace-preserving or unital with $\beta\neq1$, let $O_0$ be symmetric and traceless with $r>0$, and set
\begin{equation}\label{eq:kappa_def}
\kappa\;=\;\frac{a_{\mbo}t}{2r}\;=\;\frac{\big(d(d+3)-4\beta\big)\,\tr(O_0^2)}{4d(\beta-1)\,\tr(\rho O_0^2)} .
\end{equation}
Then the single-shot second moments obey exactly
\begin{equation}\label{eq:ratio_identity}
\frac{\mathbb E_{\mbu}[\hat o^2]}{\mathbb E_{\mbo}[\hat o^2]}
=\frac{\varrho_L\,\kappa+\varrho_S}{\kappa+1},
\qquad
\varrho_S=\frac{(d+1)(d+4)}{(d+2)^2},\qquad
\varrho_L=\frac{2(d+1)(d+4)\big(d^2+d-2\beta\big)}{(d+2)^2\big(d^2+3d-4\beta\big)} .
\end{equation}
The right-hand side is strictly increasing in $\kappa$ for every $d\ge2$, and interpolates between $\varrho_S$ (as $\kappa\to0$) and $\varrho_L$ (as $\kappa\to\infty$). In the noiseless limit $\beta=d$ we have $\varrho_L=2\varrho_S$ exactly, and for large $d$, $\varrho_S\to1$ and $\varrho_L\to2$, so that
\begin{equation}\label{eq:ratio_master}
\frac{\mathbb E_{\mbu}[\hat o^2]}{\mathbb E_{\mbo}[\hat o^2]}\;\xrightarrow[\ d\to\infty\ ]{}\;\frac{2\kappa+1}{\kappa+1}=2-\frac{1}{1+\kappa}.
\end{equation}
Two readings of \eqref{eq:ratio_master} have to be resisted. It is a ratio of \emph{second} moments, not of variances --- \cref{cor:var_ratio} converts it --- and the limit it displays is the one in $d$, taken at fixed $\kappa$. In particular the value is not $2$ in the noiseless case: at $\beta=d$ and any finite $\kappa$ it is still $2-1/(1+\kappa)$, since the noise has already left through $\varrho_L\to2$, and the remaining shortfall is a property of the observable. The factor of two is the $\kappa\to\infty$ corner, not the noiseless one. At finite $d$ the limiting value $\varrho_L$ is strictly increasing in $\beta$, and $\varrho_L>2$ if and only if
\begin{equation}\label{eq:betastar}
\beta\;>\;\beta^{*}(d)\;=\;\frac{d\,(d^2+7d+8)}{2\,(d^2+3d+4)} ,
\end{equation}
a threshold that always lies strictly below $d$.
\end{proposition}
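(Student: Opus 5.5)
The plan is to take the two single-shot second moments directly from the common shape \eqref{eq:var_shape}, with the constants \eqref{eq:CaCa}, and divide one by the other. Every claim then reduces to an elementary rational-function computation in $d$ and $\beta$.

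\textbf{Step 1: the ratio identity.} Since $r>0$, I will divide numerator and denominator of
\[
\frac{C_{\mbu}(a_{\mbu}t+2r)}{C_{\mbo}(a_{\mbo}t+2r)}
\]
by $2r$. By the definition \eqref{eq:kappa_def}, $a_{\mbo}t/(2r)=\kappa$ and $a_{\mbu}t/(2r)=(a_{\mbu}/a_{\mbo})\kappa$. The ratio therefore takes the form $(\varrho_L\kappa+\varrho_S)/(\kappa+1)$ with
\[
\varrho_S=\frac{C_{\mbu}}{C_{\mbo}},\qquad \varrho_L=\varrho_S\,\frac{a_{\mbu}}{a_{\mbo}} .
\]
The factors $(\beta-1)$ cancel in both quotients. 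Using $d^2-1=(d-1)(d+1)$,
\[
\frac{C_{\mbu}}{C_{\mbo}}=\frac{(d+1)(d+4)}{(d+2)^2},
\]
and the common factor $d(\beta-1)$ cancels in $a_{\mbu}/a_{\mbo}$, leaving
\[
\frac{a_{\mbu}}{a_{\mbo}}=\frac{2(d^2+d-2\beta)}{d^2+3d-4\beta}.
\]
This reproduces \eqref{eq:ratio_identity}.

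\textbf{Step 2: sign bookkeeping.} This is the only place where care is needed. I will read the hypotheses as those of \cref{set:I}, namely $1<\beta\le d$, with $\beta\le d$ coming from \cref{claim:beta_range}. On that range $d^2+3d-4\beta\ge d^2-d>0$. Hence $a_{\mbo}>0$, $\kappa>0$, and all denominators are positive. I would flag explicitly that $\beta>1$, rather than merely $\beta\neq1$, is what is used here.

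\textbf{Step 3: monotonicity and the limits.} The map $\kappa\mapsto(\varrho_L\kappa+\varrho_S)/(\kappa+1)$ has derivative $(\varrho_L-\varrho_S)/(\kappa+1)^2$. So it is strictly increasing exactly when $\varrho_L>\varrho_S$, which is equivalent to $2(d^2+d-2\beta)>d^2+3d-4\beta$, that is, to $d^2-d>0$. This holds for every $d\ge2$. The endpoint values at $\kappa\to0$ and $\kappa\to\infty$ are then immediate.

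At $\beta=d$ the factor $a_{\mbu}/a_{\mbo}$ equals $2(d^2-d)/(d^2-d)=2$, so $\varrho_L=2\varrho_S$. As $d\to\infty$ at fixed $\beta$-ratio bounds, $\varrho_S\to1$ and $a_{\mbu}/a_{\mbo}\to2$, which gives \eqref{eq:ratio_master}.

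\textbf{Step 4: monotonicity of $\varrho_L$ in $\beta$ and the threshold.} The derivative of $g(\beta)=(d^2+d-2\beta)/(d^2+3d-4\beta)$ has the sign of
\[
-2(d^2+3d-4\beta)+4(d^2+d-2\beta)=2d(d-1)>0,
\]
so $\varrho_L$ is strictly increasing in $\beta$. Setting $\varrho_L=2$ gives an equation that is linear in $\beta$. Its coefficient is
\[
4(d+2)^2-2(d+1)(d+4)=2(d^2+3d+4),
\]
and its constant term is
\[
d\big[(d+2)^2(d+3)-(d+1)^2(d+4)\big]=d(d^2+7d+8).
\]
This yields \eqref{eq:betastar}. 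By monotonicity, $\varrho_L>2$ if and only if $\beta>\beta^*$. Finally, $\beta^*<d$ is equivalent to $d^2+7d+8<2(d^2+3d+4)$, that is, to $d^2-d>0$, which again holds for every $d\ge2$.
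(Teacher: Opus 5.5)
Your proposal is correct and follows the paper's proof essentially step for step: you divide the ratio of \eqref{eq:var_shape} by $2r$ (the paper divides by $2rC_{\mbo}$), read off $\varrho_S=C_{\mbu}/C_{\mbo}$ and $\varrho_L=\varrho_S\,a_{\mbu}/a_{\mbo}$, and reduce both the monotonicity in $\kappa$ and $\beta^{*}<d$ to $d^2-d>0$. The only difference is at the threshold, where the paper writes $\varrho_L-2$ as one fraction with a positive denominator and a numerator linear and increasing in $\beta$; your derivative of $g(\beta)$, whose sign is $2d(d-1)>0$, also proves the stated strict monotonicity of $\varrho_L$ in $\beta$, which the paper asserts but its proof does not separately establish.
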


\begin{proof}
Divide numerator and denominator of the ratio of \eqref{eq:var_shape} by $2rC_{\mbo}$:
\begin{equation}
\frac{C_{\mbu}(a_{\mbu}t+2r)}{C_{\mbo}(a_{\mbo}t+2r)}
=\frac{C_{\mbu}}{C_{\mbo}}\cdot\frac{\tfrac{a_{\mbu}t}{2r}+1}{\tfrac{a_{\mbo}t}{2r}+1}
=\frac{C_{\mbu}}{C_{\mbo}}\cdot\frac{\tfrac{a_{\mbu}}{a_{\mbo}}\kappa+1}{\kappa+1},
\end{equation}
where the last step used $\tfrac{a_{\mbu}t}{2r}=\tfrac{a_{\mbu}}{a_{\mbo}}\cdot\tfrac{a_{\mbo}t}{2r}=\tfrac{a_{\mbu}}{a_{\mbo}}\kappa$. Setting $\varrho_S=C_{\mbu}/C_{\mbo}$ and $\varrho_L=C_{\mbu}a_{\mbu}/(C_{\mbo}a_{\mbo})$ gives \eqref{eq:ratio_identity}. Explicitly, the $(\beta-1)$ factors cancel:
\begin{equation}
\varrho_S=\frac{d^2-1}{(d+2)(\beta-1)}\cdot\frac{(d+4)(\beta-1)}{(d-1)(d+2)}=\frac{(d+1)(d+4)}{(d+2)^2},
\end{equation}
and, using $C_{\mbu}a_{\mbu}=\tfrac{(d^2-1)(d^2+d-2\beta)}{d(d+2)(\beta-1)^2}$ and $C_{\mbo}a_{\mbo}=\tfrac{(d-1)(d+2)(d^2+3d-4\beta)}{2d(d+4)(\beta-1)^2}$,
\begin{equation}
\varrho_L=\frac{(d^2-1)(d^2+d-2\beta)}{d(d+2)}\cdot\frac{2d(d+4)}{(d-1)(d+2)(d^2+3d-4\beta)}
=\frac{2(d+1)(d+4)(d^2+d-2\beta)}{(d+2)^2(d^2+3d-4\beta)} .
\end{equation}
Differentiating \eqref{eq:ratio_identity}, $\partial_\kappa[\,\cdot\,]=(\varrho_L-\varrho_S)/(\kappa+1)^2$, so monotonicity is equivalent to $\varrho_L>\varrho_S$, i.e.\ to $2(d^2+d-2\beta)>d^2+3d-4\beta$, i.e.\ to $d^2-d>0$, which holds for $d\ge2$. Setting $\beta=d$ gives $d^2+d-2\beta=d^2-d$ and $d^2+3d-4\beta=d^2-d$, whence $\varrho_L=2\varrho_S$. Finally $\varrho_S=1+d/(d+2)^2\to1$ and, since $\beta\le d$ is $O(d)$, $\varrho_L\to2$.

For \eqref{eq:betastar}, note first that $\beta\le d$ and $d\ge2$ give $4\beta\le4d<d^2+3d$, so the denominator of $\varrho_L$ is positive. Clearing it,
\begin{equation}\label{eq:rhoL_minus2}
\varrho_L-2=\frac{2\big[2\beta(d^2+3d+4)-d(d^2+7d+8)\big]}{(d+2)^2(d^2+3d-4\beta)} ,
\end{equation}
whose numerator is strictly increasing in $\beta$ and vanishes at $\beta=\beta^{*}(d)$. Since $\beta^{*}(d)/d=(d^2+7d+8)/[2(d^2+3d+4)]<1$, the threshold always lies inside the admissible range $\beta\in(1,d]$, and $\varrho_L>2$ if and only if $\beta>\beta^{*}(d)$.
\end{proof}

\eqref{eq:ratio_identity} compares second moments. The operationally relevant quantity is the ratio of \emph{variances}, and the two differ by a term that is controlled by the same parameter $\kappa$.
\begin{corollary}[From second moments to variances]\label{cor:var_ratio}
Under the hypotheses of \cref{prop:criterion} write $R=\mathbb E_{\mbu}[\hat o^2]/\mathbb E_{\mbo}[\hat o^2]$ for the second-moment ratio and $x=m^2/\mathbb E_{\mbo}[\hat o^2]\in[0,1)$. Then
\begin{equation}\label{eq:var_from_moment}
\frac{\Var_{\mbu}[\hat o]}{\Var_{\mbo}[\hat o]}=\frac{R-x}{1-x}\;\ge\;R,\qquad\text{with}\qquad x\;\le\;\frac{1}{2\,C_{\mbo}\,(\kappa+1)} ,
\end{equation}
and equality in the bound exactly when $\rho$ is an eigenstate of $O_0$. Hence the variance ratio exceeds the second-moment ratio, and the two agree to $O(1/\kappa)$: in particular they share the limit $2$ as $\kappa\to\infty$, $d\to\infty$, while for bounded $\kappa$ their limits differ.
\end{corollary}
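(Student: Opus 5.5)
The argument is algebra on the two second moments, with \eqref{eq:var_shape} as the only input. Write $S_{\mbo}=\mathbb E_{\mbo}[\hat o^2]$ and $S_{\mbu}=\mathbb E_{\mbu}[\hat o^2]$. Because both estimators are unbiased for the same $m=\tr(O_0\rho)$, we have $\Var_{\mbu}=S_{\mbu}-m^2$ and $\Var_{\mbo}=S_{\mbo}-m^2$. Dividing numerator and denominator by $S_{\mbo}$ gives $(R-x)/(1-x)$ immediately.

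For $x<1$ it suffices that $\Var_{\mbo}>0$. This follows from the bound on $x$ proved next, provided $2C_{\mbo}(\kappa+1)>1$. Note that $\kappa>0$ because $d(d+3)>4\beta$, and $C_{\mbo}\ge(d-1)(d+2)/((d+4)(d-1))\cdot$, which is at least $2/3$ at $\beta\le d$ for $d\ge2$. Hence $2C_{\mbo}(\kappa+1)>2C_{\mbo}\ge 4/3>1$. The inequality $(R-x)/(1-x)\ge R$ is equivalent, after clearing the positive $1-x$, to $x(R-1)\ge0$. So I need $R\ge1$, which is \cref{prop:criterion}: $R$ lies between $\varrho_S$ and $\varrho_L$, and $\varrho_S=1+d/(d+2)^2>1$. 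Equality then holds only if $x=0$.

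For the bound on $x$, write $S_{\mbo}=C_{\mbo}(a_{\mbo}t+2r)=2rC_{\mbo}(\kappa+1)$ using the definition \eqref{eq:kappa_def}. Cauchy--Schwarz in the state inner product gives $m^2=\tr(\rho O_0)^2\le\tr(\rho O_0^2)\tr\rho=r$. Hence $x=m^2/S_{\mbo}\le r/(2rC_{\mbo}(\kappa+1))=1/(2C_{\mbo}(\kappa+1))$. Equality in Cauchy--Schwarz, $\langle\rho^{1/2}O_0,\rho^{1/2}\rangle$ parallel, means $\rho^{1/2}O_0=\lambda\rho^{1/2}$. For a pure or commuting $\rho$ this is exactly the statement that $\rho$ is supported in an eigenspace of $O_0$, which is the stated "eigenstate" condition. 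I will phrase it as $O_0$ acting as a scalar on the support of $\rho$.

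For the asymptotic comparison, $(R-x)/(1-x)-R=x(R-1)/(1-x)$, and this is $O(x)=O(1/\kappa)$ because $C_{\mbo}$ is bounded below and $R\le\varrho_L$ is bounded. As $\kappa\to\infty$ this vanishes, so both ratios tend to $\varrho_L$, and then to $2$ as $d\to\infty$ by \cref{prop:criterion}. For bounded $\kappa$, $C_{\mbo}$ grows like $d/(\beta-1)$ only when $\beta$ stays bounded. In the generic case $\beta\sim d$, $C_{\mbo}$ is $O(1)$, so $x$ need not vanish. I will exhibit the discrepancy with an eigenstate example, where $x$ attains $1/(2C_{\mbo}(\kappa+1))$. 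There the variance-ratio limit $(R-x)/(1-x)$ differs from $R\to2-1/(1+\kappa)$ whenever $x>0$ and $R>1$.

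The main obstacle is this last step: pinning down in what sense "their limits differ". I would handle it by computing the limits explicitly in the noiseless eigenstate case, where $C_{\mbo}\to1/2$.
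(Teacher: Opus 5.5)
Your proof is correct and follows the paper's route. Unbiasedness makes $m$ common to both ensembles, and dividing by $\mathbb E_{\mbo}[\hat o^2]$ gives $(R-x)/(1-x)$. Then $R\ge1$ gives the inequality. Finally $m^2\le r$, combined with $\mathbb E_{\mbo}[\hat o^2]=2rC_{\mbo}(\kappa+1)$, gives the bound; the paper phrases $m^2\le r$ as $\tr[\rho(O_0-m)^2]\ge0$, which is the same inequality as your Cauchy--Schwarz step.

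You also supply two facts the paper only asserts: $x<3/4$, from $C_{\mbo}\ge(d+2)/(d+4)\ge2/3$, and $R\ge\varrho_S>1$, from the convex-combination form of \eqref{eq:ratio_identity}. The qualifier ``pure or commuting'' can be dropped. The condition $\rho^{1/2}O_0=\lambda\rho^{1/2}$ already says that $O_0$ acts as $\lambda$ on the support of $\rho$, for every $\rho$.

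One slip needs fixing before the explicit computation you plan at the end. At $\beta=d$ one has $C_{\mbo}=(d+2)/(d+4)\to1$, not $1/2$. The factor tending to $\tfrac12$ is $\tfrac{d+2}{2d+8}$, the prefactor of the differently normalized bracket $\tr(O_0^2)+4\tr(\rho O_0^2)$ in the noiseless formula. The value $1/2$ would also contradict your own bound $C_{\mbo}\ge2/3$.

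The slip matters. With $C_{\mbo}=1/2$ and $m^2=r$, the large-$d$ variance ratio comes out as exactly $2$ for every $\kappa$. For a rank-one target in its own state the correct values are $\kappa\to1/4$, $R\to6/5$, $x\to2/5$ and $\Var_{\mbu}/\Var_{\mbo}\to4/3$, consistent with \cref{sec:ghz}.

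You can also drop the case split on whether $\beta$ stays bounded. Since $r\le t$ and $d(d+3)-4\beta\ge d(d-1)$, \eqref{eq:kappa_def} gives $\kappa\ge(d-1)/[4(\beta-1)]>C_{\mbo}/4$, so $\kappa\ge1/4$ and, for an eigenstate, $x>1/[8\kappa(\kappa+1)]$. Meanwhile $R-1\ge(\varrho_L-1)\kappa/(\kappa+1)\ge1/10$, because $\varrho_L\ge3/2$. Bounded $\kappa$ therefore keeps the two limits apart without reference to $\beta$.

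Your reading of ``their limits differ'' as an existence statement is the right one. For $m=0$ one has $x=0$, and the two ratios coincide exactly.
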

\begin{proof}
Both estimators are unbiased, so $m=\tr(O_0\rho)$ is common and $\Var=\mathbb E[\hat o^2]-m^2$. Dividing numerator and denominator by $\mathbb E_{\mbo}[\hat o^2]$ gives the identity, and $R\ge1$ makes $\tfrac{R-x}{1-x}-R=\tfrac{x(R-1)}{1-x}\ge0$. For the bound, non-negativity of the variance of $O_0$ in the state $\rho$ gives $m^2=\tr(O_0\rho)^2\le\tr(\rho O_0^2)=r$, with equality iff $\rho$ is an eigenstate of $O_0$. Then, by \eqref{eq:var_shape} and \eqref{eq:kappa_def},
\begin{equation}
x=\frac{m^2}{C_{\mbo}(a_{\mbo}t+2r)}\;\le\;\frac{r}{C_{\mbo}(a_{\mbo}t+2r)}=\frac{1}{C_{\mbo}\big(2\kappa+2\big)} .
\end{equation}
\end{proof}

\begin{remark}[The comparison survives noise-blind post-processing]\label{rem:blind_ratio}
The proof of \cref{cor:var_ratio} invokes unbiasedness only to make $m=\tr(O_0\rho)$ common to the two ensembles, and that conclusion holds under a weaker hypothesis than unbiasedness. By \eqref{eq:global_depol} the shadow channel acts on the visible traceless subspace as the scalar $f(\channel)$, so $\mcm_{\mbo,\channel}^{-1}$ acts there as $f(\channel)^{-1}$ and noise-blind post-processing multiplies the estimator by the constant $f(\channel)/f(\id)$ of \eqref{eq:blind_bias} \emph{shot by shot}, not merely in mean --- which is what lets it cancel from a ratio. By \eqref{eq:fO_fU} the ratio $f_{\mbo}/f_{\mbu}=2(d+1)/(d+2)$ does not depend on the channel, so it cancels between $\channel$ and $\id$ and the blind factor is the \emph{same} $(\beta-1)/(d-1)$ in both ensembles. A common deterministic rescaling $\hat o\mapsto\kappa\hat o$ sends $m\mapsto\kappa m$ and $\mathbb E[\hat o^2]\mapsto\kappa^2\,\mathbb E[\hat o^2]$, so both $R$ and $x$ of \eqref{eq:var_from_moment} are unchanged, and with them the variance ratio itself. Because the two ensembles are biased identically rather than unbiased, the factor-of-two comparison does not depend on the post-processing being noise-aware. What the argument does require is that \emph{both} ensembles invert with the same assumed channel, since assuming $\channel_{\mbo}$ for one and $\channel_{\mbu}$ for the other rescales $R$ by $(\kappa_{\mbu}/\kappa_{\mbo})^{2}$ and the cancellation fails.
\end{remark}
Equation \eqref{eq:factor2} is thus the $\kappa\to\infty$, $d\to\infty$ corner of \eqref{eq:ratio_identity}, and both limits are needed. The finite-$d$ ceiling $\varrho_L$ can sit on either side of $2$, and which side is a question about the noise rather than about the dimension, since by \eqref{eq:betastar} it exceeds $2$ exactly when $\beta>\beta^{*}(d)$, which for depolarizing noise reads $p>\tfrac34$ at $d=4$ and $p>0.65$ at $d=8$, relaxing to $p>\tfrac12$ as $d\to\infty$. At $d=4$ and $p=0.9$, for instance, $\varrho_L=2.121$. At the same dimension and $p=0.5$ it is $1.852$. The factor-of-two advantage is in either case robust to noise, and the ratio climbs monotonically toward $2$ with $n$ (\cref{fig:variance_ratio}).

\begin{figure}[t]
\centering
\includegraphics[width=0.59\linewidth]{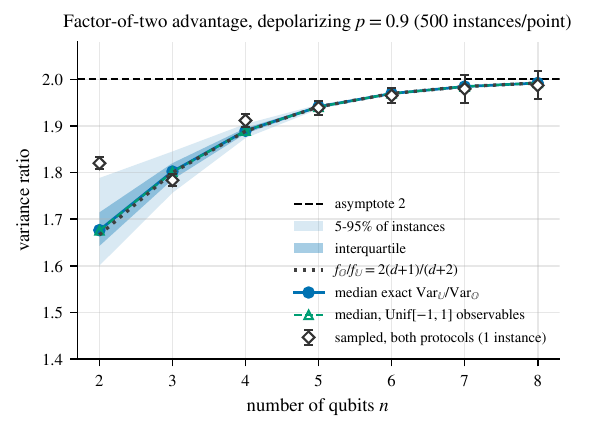}
\caption{Exact variance ratio $\Var_{\mbu}/\Var_{\mbo}$ under depolarizing noise ($p=0.9$) from \eqref{eq:varO}--\eqref{eq:varU}, over $500$ independent $(\rho,O)$ instances per point: median (circles), interquartile and $5$--$95\%$ bands. Observables are random symmetric matrices made traceless and normalized; states are random mixed. The ratio rises monotonically toward the noise-independent asymptote $2$ (dashed) and concentrates with $n$. Triangles: medians for the observable distribution of~\cite{west2025real}. Dotted: the depolarizing-parameter ratio \eqref{eq:fO_fU}. Diamonds: the ratio measured by simulating both protocols on one instance per $n$, with $95\%$ confidence intervals; $2.5\times10^5$ shots per protocol for $n\le6$ and $6.8\times10^4$ for $n=7,8$, the confidence interval at fixed shot count being independent of $n$.}
\label{fig:variance_ratio}
\end{figure}

By contrast, the ratio $204/170=1.2$ of the \emph{sample-complexity prefactors} in \cref{cor:sample_complexity} is a comparison of loose upper bounds, obtained after the relaxation $\norm{O_0^2}_{\mathrm{sp}}\le\tr(O_0^2)$ that discards precisely the structure responsible for the factor of two. The ratio $1.2$ is therefore a conservative worst-case guarantee, not the typical improvement, and the two numbers are easily interchanged. The primary result is the exact factor of two of \eqref{eq:factor2}.

Neither ratio is the operationally meaningful quantity; that is the number of shots required to reach a target $(\varepsilon,\delta)$. We therefore ran the median-of-means estimator of \cref{cor:sample_complexity} directly (\cref{fig:sample_complexity}; $n=3$, $\varepsilon=0.1$, $\delta=0.05$, $M=1$, so $K=8$). At the prescribed batch size the measured failure rate is $0/1200$ for both ensembles, so the guarantee holds with a wide margin. It is also loose by a factor of about $40$ in shots, the smallest $N$ meeting $(\varepsilon,\delta)$ being $68$ against a prescribed $2721$ for the orthogonal protocol and $122$ against $4910$ for the unitary one. Here the prescribed counts are $\lceil 34\,\Var/\varepsilon^2\rceil$ evaluated at the \emph{exact} variances of \eqref{eq:varO} and \eqref{eq:varU}, not at the seminorm relaxation ---so the factor of about $40$ measures the looseness of the Chebyshev-type constant $34$ and the median-of-means argument alone, and not that of the norm bound. Chaining \eqref{eq:sample_complexity} in full, which does relax the variance to the seminorm, is looser again. Both are loose by the same factor, so the measured shot ratio $122/68=1.79$ tracks the exact variance ratio $\Var_{\mbu}/\Var_{\mbo}=1.805$ and not the prefactor ratio $1.2$.

\begin{figure}[t]
\centering
\includegraphics[width=\linewidth]{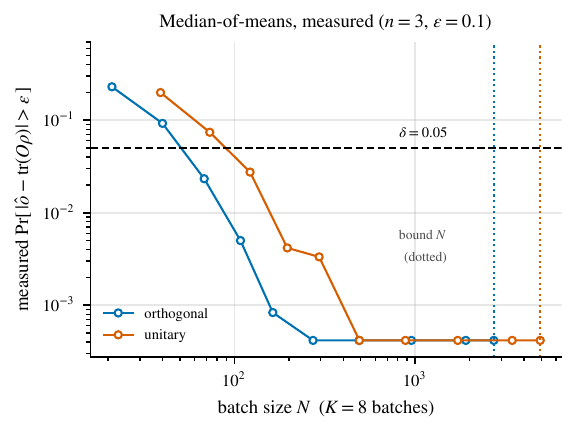}
\caption{The median-of-means estimator of \cref{cor:sample_complexity}, run: measured $\Pr[\lvert\hat o-\tr(O\rho)\rvert>\varepsilon]$ against batch size $N$ at fixed $K=8$, for $\varepsilon=0.1$, $\delta=0.05$, $M=1$, $n=3$ and the unit-$2$-norm transverse-field Ising energy; $1200$ repetitions per point, the floor being zero observed failures. Dotted verticals mark the $N$ the guarantee prescribes. The guarantee holds with wide margin and is loose by a factor of about $40$ in $N$ for both ensembles.}
\label{fig:sample_complexity}
\end{figure}

\begin{figure}[t]
\centering
\includegraphics[width=\linewidth]{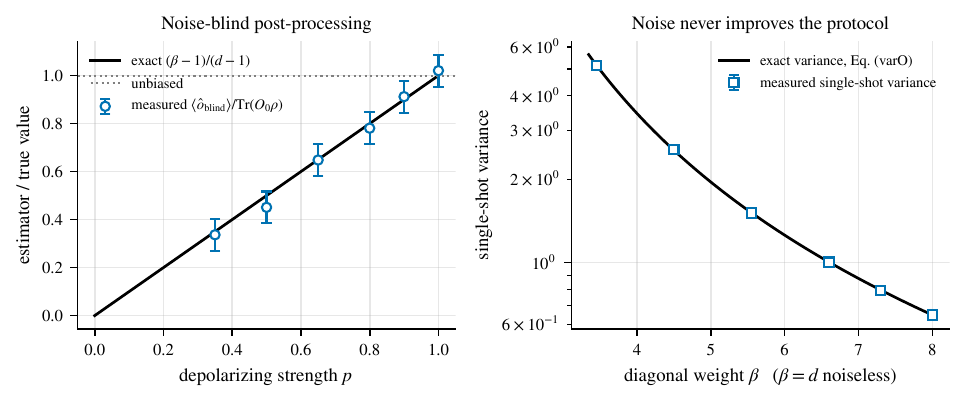}
\caption{Two statements of this Part, run rather than only derived ($n=3$, $4\times10^5$ shots per point, $95\%$ confidence intervals). \emph{Left:} the noise-blind protocol of \cref{rem:blind}, in which snapshots produced under $\mcd_{n,p}$ are post-processed with the \emph{noiseless} inverse $\mcm_{\mbo,\id}^{-1}$. The measured ratio $\E[\hat o_{\mathrm{blind}}]/\tr(O_0\rho)$ follows the predicted multiplicative bias $(\beta-1)/(d-1)$ of \eqref{eq:blind_bias}, which for depolarizing noise is exactly $p$. Because the bias multiplies the signal rather than shifting it, the line passes through the origin and no amount of averaging removes it. \emph{Right:} the same protocol post-processed correctly. The measured single-shot variance tracks \eqref{eq:varO} and rises monotonically as $\beta$ falls from its noiseless value $d=8$, by a factor $8.0$ down to $\beta=3.45$, which is \eqref{eq:var_monotone} in the form one would observe. All twelve measurements sit within $1.4$ of their own confidence interval.}
\label{fig:noise_blind_monotone}
\end{figure}

\subsection{Noise models}
\label{sec:examples}
We specialize the global results to standard channels (real basis; derivations for depolarizing, amplitude damping and dephasing in \cref{app:examples}, and for the coherent and readout models inline below).

\subsubsection{Depolarizing noise}
For $\channel=\mcd_{n,p}$ we find $\alpha=d$ and $\beta=pd+1-p$, hence
\begin{equation}\label{eq:depol_f}
f(\mcd_{n,p})=\frac{2\big(pd+1-p-1\big)}{(d-1)(d+2)}=\frac{2p}{d+2},
\end{equation}
so that $\mcm_{\mbo,\mcd_{n,p}}=\mcd_{n,2p/(d+2)}\circ(\cdot)_{\mathrm{sym}}$. This is simply the noiseless channel with $f$ scaled by $p$. Since $\beta-1=p(d-1)$, the factor $(d-1)^2/(\beta-1)^2=1/p^2$ in \cref{cor:sample_complexity} gives the clean bound
\begin{equation}
N_{\mathrm{tot}}\le\frac{170\,\log(2M/\delta)}{p^2\,\eps^2}\,\max_i\tr(O_i^2),
\end{equation}
mirroring the depolarizing scaling of the unitary case ($204/p^2$) with the improved prefactor.

\subsubsection{Amplitude damping}
For $\channel=\mathrm{AD}_{n,p}=\mathrm{AD}_{1,p}^{\otimes n}$ with $\bra0\mathrm{AD}_{1,p}(\Pi_0)\ket0=1$ and $\bra1\mathrm{AD}_{1,p}(\Pi_1)\ket1=p$, the diagonal weight is multiplicative,
\begin{equation}\label{eq:ad_beta}
\beta=\tr[\mathrm{AD}_{n,p}\circ\diag]=\sum_{b\in\B^n}\prod_{i}\bra{b_i}\mathrm{AD}_{1,p}(\Pi_{b_i})\ket{b_i}=\sum_{b}p^{\,\ell_1(b)}=(1+p)^n ,
\end{equation}
where $\ell_1(b)$ is the Hamming weight. Hence $f(\mathrm{AD}_{n,p})=\tfrac{2((1+p)^n-1)}{(d-1)(d+2)}$ and
\begin{equation}
N_{\mathrm{tot}}\le\frac{170\,(2^n-1)^2\log(2M/\delta)}{\big((1+p)^n-1\big)^2\,\eps^2}\max_i\tr(O_i^2).
\end{equation}
On the computational basis $\Pi_b^\intercal=\Pi_b$, so $\tilde\beta=\beta=(1+p)^n$. A complex basis gives a basis-angle-dependent $\tilde\beta$ (\cref{app:examples}). \cref{fig:noise_dependence} compares the two channels, showing that depolarizing noise degrades $f$ linearly in $p$, whereas amplitude damping degrades it through $(1+p)^n-1$, and the right panel shows the resulting sample-complexity blow-up $\propto f^{-2}$.

\begin{figure}[t]
\centering
\includegraphics[width=\linewidth]{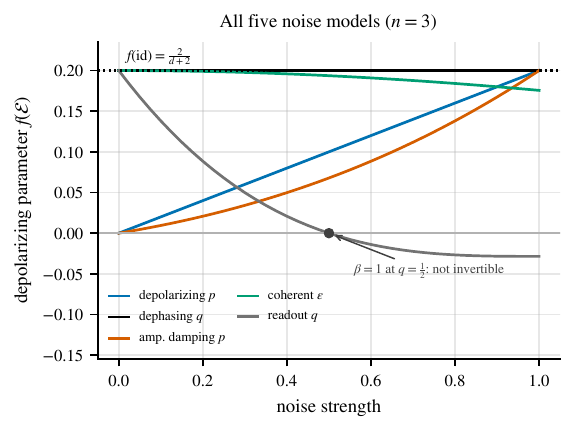}
\caption{The depolarizing parameter $f(\channel)=\tfrac{2(\beta-1)}{(d-1)(d+2)}$ for all five noise models of \cref{sec:examples} ($n=3$), against the noiseless value $f(\id)=2/(d+2)$ (dotted). Because dephasing is diagonal-preserving, $\beta=d$ at every strength and $f$ is flat at $f(\id)$. Readout bit-flip at rate $q$ has $\beta=d(1-q)^n$, which vanishes to $\beta=1$ at $q=\tfrac12$ for every $n$; past that point $f<0$ and the inverse carries a sign flip.}
\label{fig:noise_dependence}
\end{figure}

\subsubsection{Dephasing noise is inconsequential}
For the completely dephasing channel $\channel=\diag$ we have $\beta=\tr[\diag\circ\diag]=\tr[\diag]=d$, so $f(\diag)=f(\id)=2/(d+2)$ and $\mcm_{\mbo,\diag}=\mcm_{\mbo,\id}$, so dephasing before a computational-basis measurement leaves the shadow channel unchanged. More generally, any $\channel$ with $\beta=d$ is inconsequential, exactly as in the unitary case~\cite{koh2022classical}.

\subsubsection{Coherent over-rotations}
\label{sec:coherent}
A coherent control error is a unitary channel $\channel(A)=VAV^\dagger$ with $V\in\mathrm{U}(d)$ not necessarily real, and its Kraus operator is complex, so this is the natural test of whether the real structure survives non-conjugation-symmetric noise. Being unitary, $\channel$ is trace-preserving and unital, hence $\alpha=d$, and \cref{prop:global_noisy_channel} applies verbatim: the shadow channel is $\mcd_{n,f}\circ(\cdot)_{\mathrm{sym}}$ with the entire noise dependence in
\begin{equation}\label{eq:coherent_beta}
\beta=\sum_b\bra b\channel(\Pi_b)\ket b=\sum_b\bra b V\Pi_bV^\dagger\ket b=\sum_b\lvert V_{bb}\rvert^2 .
\end{equation}
Since each column of $V$ is a unit vector, $\lvert V_{bb}\rvert\le1$ and $\beta\le d$, with $\beta=d$ (the inconsequential case) precisely when $V$ is diagonal, i.e.\ a phase error in the measurement basis---consistent with dephasing. Write $V=e^{i\varepsilon H}$ for Hermitian $H$ and expand $V_{bb}=1+i\varepsilon H_{bb}-\tfrac{\varepsilon^2}{2}(H^2)_{bb}+O(\varepsilon^3)$. Hermiticity enters twice: it makes $H_{bb}$ and $(H^2)_{bb}$ real, so the expansion splits into a real part $1-\tfrac{\varepsilon^2}{2}(H^2)_{bb}$ and an imaginary part $\varepsilon H_{bb}$, and it gives the row-norm identity $(H^2)_{bb}=\sum_{b'}H_{bb'}H_{b'b}=\sum_{b'}\lvert H_{bb'}\rvert^2$ from $H_{b'b}=\overline{H_{bb'}}$. Taking the modulus squared,
\begin{equation}\label{eq:coherent_expansion}
\lvert V_{bb}\rvert^2=1-\varepsilon^2\big[(H^2)_{bb}-H_{bb}^2\big]+O(\varepsilon^3)
=1-\varepsilon^2\!\!\sum_{b'\neq b}\lvert H_{bb'}\rvert^2+O(\varepsilon^3),
\end{equation}
using $(H^2)_{bb}=\sum_{b'}\lvert H_{bb'}\rvert^2$. Summing over $b$,
\begin{equation}\label{eq:coherent_dminusbeta}
d-\beta=\varepsilon^2\!\!\sum_{b\neq b'}\lvert H_{bb'}\rvert^2+O(\varepsilon^3)=\varepsilon^2\,\lVert H_{\mathrm{off}}\rVert_{\mathrm F}^2+O(\varepsilon^3),
\end{equation}
where $H_{\mathrm{off}}$ is the off-diagonal part of $H$. Only the basis-rotating (off-diagonal) part of the generator degrades the shadow, since a generator diagonal in the measurement basis leaves $\beta=d$ and the protocol unchanged, whereas an off-diagonal coupling reduces $f$ to $\tfrac{2}{d+2}\big[1-\tfrac{\varepsilon^2}{d-1}\lVert H_{\mathrm{off}}\rVert_{\mathrm F}^2\big]+O(\varepsilon^3)$ and inflates the variance. That factor is $1+\tfrac{2\varepsilon^2}{d-1}\lVert H_{\mathrm{off}}\rVert_{\mathrm F}^2+O(\varepsilon^3)$, and it arises as follows. In \eqref{eq:seminorm_exact} the $\tr(O_0^2)$ term carries $(\beta-1)^{-2}\propto f^{-2}$ and the $\norm{O_0^2}_{\mathrm{sp}}$ term only $(\beta-1)^{-1}$, while the coefficient $a_{\mbo}$ of \eqref{eq:CaCa} depends on $\beta$ as well; so $f^{-2}$ is the leading behaviour in the regime where the $\tr(O_0^2)$ term dominates --- the same $\kappa\to\infty$ regime that carries the factor of two --- and not an exact scaling of the variance. Expanding the exact \eqref{eq:varO} to $O(\varepsilon^2)$ returns the same factor. Thus coherent errors never break the block structure. They act only through $\beta$, and are dangerous only insofar as they drive $\beta$ toward the invertibility threshold $\beta=1$.

\subsubsection{Readout error}
\label{sec:readout}
Classical measurement confusion is described by a column-stochastic matrix $R$ ($R_{b'b}\ge0$, $\sum_{b'}R_{b'b}=1$), where $R_{b'b}$ is the probability of recording $b'$ when the true outcome is $b$. As a channel applied before the ideal computational measurement, it reads out populations and relabels them,
\begin{equation}\label{eq:readout_channel}
\channel(A)=\sum_{b,b'}R_{b'b}\,\bra b A\ket b\,\Pi_{b'},\qquad K_{b'b}=\sqrt{R_{b'b}}\,\ketbra{b'}{b},
\end{equation}
and $\sum_{b,b'}K_{b'b}^\dagger K_{b'b}=\sum_b\big(\sum_{b'}R_{b'b}\big)\Pi_b=\id$ confirms trace preservation, so $\alpha=d$. Then $\channel(\Pi_b)=\sum_{b'}R_{b'b}\Pi_{b'}$ is diagonal, and
\begin{equation}\label{eq:readout_beta}
\beta=\sum_b\bra b\channel(\Pi_b)\ket b=\sum_b R_{bb}=\tr R,
\end{equation}
the expected number of correctly recorded outcomes. Readout error is inconsequential ($\beta=d$) iff $R=\id$, and otherwise degrades the shadow through the total misread weight $d-\tr R=\sum_b(1-R_{bb})$, giving $f=\tfrac{2(\tr R-1)}{(d-1)(d+2)}$. Because $\tr R\ge1$ for any detector better than fully randomizing, readout error only rescales $f$ and never destroys invertibility. This is the diagonal counterpart of the coherent case \eqref{eq:coherent_beta}, since both are unital with $\alpha=d$ and enter solely through $\beta$, one via the squared diagonal of a unitary, the other via the diagonal of a stochastic matrix.

Two consequences of \eqref{eq:readout_beta} delimit the model. First, only the diagonal of $R$ enters, so the protocol is insensitive to \emph{which} wrong outcome a misread produces: two detectors with equal per-outcome fidelities and entirely different crosstalk give the same shadow channel. Second, $R$ is $d\times d$, so $\beta=\tr R$ presumes it has been characterized, and characterizing a correlated $R$ costs exponentially many calibration experiments. Practical calibration therefore assumes an uncorrelated confusion model, and that assumption --- not the noise model of \cref{sec:noise_model} --- is where this treatment meets an experimental limit. Tensor-network calibration relaxes it: Guo and Yang~\cite{guo2026tensor} represent the readout process as a matrix product operator trained by likelihood optimization on calibration data, capture spatial correlations at a sample cost near-linear in the number of qubits, and demonstrate it up to $20$ qubits on a superconducting processor. Their construction returns the object $R$ that \eqref{eq:readout_beta} contracts to a scalar, so an MPO-calibrated $R$ feeds straight into $\beta=\tr R$ with nothing else in \cref{prop:global_noisy_channel} changed.

\section{Part II: Real Basis, Local Orthogonal Ensemble}
\label{sec:partII}

\begin{setting}[Real basis, local ensemble]\label{set:II}
\cref{set:I} with \ref{it:I_ens} replaced by: $\mbo=\mbo(2)^{\otimes n}$, or any product of single-qubit orthogonal $3$-designs such as $(\mcc_1\cap\mbo(2))^{\otimes n}$, and $\channel=\channel_1^{\otimes n}$ a product channel. \ref{it:I_basis}, \ref{it:I_noise} and \ref{it:I_obs} are unchanged, except that the observable class narrows to $\mbo(2)^{\otimes n}$-visible operators, spanned by $\{\id,X,Z\}^{\otimes n}$.
\end{setting}

\noindent We replace the global ensemble by $\mbo(2)^{\otimes n}$ and the noise by product noise $\channel=\channel_1^{\otimes n}$, keeping the real basis. Now the twirl factorizes across qubits, which replaces the symmetric visible space by the locally symmetric one and turns the polynomial seminorm into an exponential in Pauli weight. Proof in \cref{app:local_proof}.

\subsection{Reconstruction: the local shadow channel}
\label{sec:local}

For product ensembles and product noise, the analysis factorizes across qubits.

\begin{restatable}[Local noisy channel]{proposition}{proplocal}
\label{prop:local_noisy}
Let $\mbo=\mbo(2)^{\otimes n}$ with the computational basis, and $\channel=\channel_1^{\otimes n}$ a product of single-qubit channels. Then
\begin{equation}\label{eq:local_channel}
\mcm_{\mbo,\channel}(A)=\mcd_{1,f_1(\channel_1)}^{\otimes n}\big(A_{\mathrm{L.S.}}\big),\qquad
f_1(\channel_1)=\frac{\tr[\channel_1\circ\diag]-1}{2},
\end{equation}
where $A_{\mathrm{L.S.}}$ is the projection of $A$ onto locally symmetric operators, i.e.\ Pauli strings drawn from $\{\id,X,Z\}^{\otimes n}$. The classical shadow factorizes as $\hat\rho=\bigotimes_{j}\mcd_{1,1/f_1}(U_j^\intercal\Pi_{b_j}U_j)$, and the visible space is $\mathrm{span}_{\Complex}\{\id,X,Z\}^{\otimes n}$.
\end{restatable}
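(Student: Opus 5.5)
The plan is to reduce to a single qubit by factorization and then apply \cref{prop:global_noisy_channel} at $d=2$. First, I will show that the noisy channel \eqref{eq:noisy_channel} factorizes. Take $U=\bigotimes_j U_j$ with independent $U_j$ drawn from $\mbo(2)$ (or a single-qubit orthogonal $3$-design), $\channel=\channel_1^{\otimes n}$, and $\Pi_b=\bigotimes_j\Pi_{b_j}$. Then for a product input $A=\bigotimes_j A_j$ the summand is a product of single-qubit factors $\bra{b_j}\channel_1(U_jA_jU_j^\intercal)\ket{b_j}\,U_j^\intercal\Pi_{b_j}U_j$. The sum over $b\in\B^n$ and the expectation over the independent $U_j$ therefore split into a product. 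This gives $\mcm_{\mbo,\channel}(\bigotimes_jA_j)=\bigotimes_j\mcm_{\mbo(2),\channel_1}(A_j)$, and linearity extends it to all $A$, so $\mcm_{\mbo,\channel}=\mcm_{\mbo(2),\channel_1}^{\otimes n}$.

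Second, I will identify the single-qubit channel. The twirl here involves $\channel_1(U\cdot U^\intercal)$ together with the snapshot $U^\intercal\Pi U$, so only second moments of $U$ appear (two copies of $U$ and two of $U^\intercal$). Those moments are available for $\mbo(2)$ through the $k=2$ Brauer commutant $\mathrm{span}\{\id,\mbs,\om\}$. The $d=2$ singularity of the order-three Gram matrix therefore plays no role. I will check that the proof of \cref{prop:global_noisy_channel} in \cref{app:global_channel_proof} uses only the order-two Weingarten calculus and that its denominators $d(d-1)(d+2)$ do not vanish at $d=2$. If so, that proof applies verbatim at $d=2$. With $\alpha=2$ for trace-preserving or unital $\channel_1$, it gives $\mcm_{\mbo(2),\channel_1}=\mcd_{1,f}\circ(\cdot)_{\mathrm{sym}}$ with $f=2(\beta_1-1)/(1\cdot4)=(\beta_1-1)/2$, where $\beta_1=\tr[\channel_1\circ\diag]$. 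This is the stated $f_1$.

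I expect the main obstacle to be this check. If the appendix proof silently invokes third moments, I would redo the computation directly for $\mbo(2)$, which is short. By the design property the average is unchanged if I replace $\mbo(2)$ with the real Clifford group on one qubit. That group twirls the Paulis: it keeps $\id$, sends $Y\mapsto\pm Y$, and mixes $\{X,Z\}$ symmetrically. I would compute the images of $\id,X,Y,Z$ under the channel and verify three things: $\id\mapsto\id$, $Y\mapsto0$, and $X,Z\mapsto\tfrac{\beta_1-1}{2}X,\tfrac{\beta_1-1}{2}Z$.

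Third, I will assemble the result. Since $(\cdot)_{\mathrm{sym}}$ on one qubit is the projector onto $\mathrm{span}\{\id,X,Z\}$ (it kills $Y$, the only antisymmetric Pauli), the tensor power $(\cdot)_{\mathrm{sym}}^{\otimes n}$ is the orthogonal projector onto $\mathrm{span}\{\id,X,Z\}^{\otimes n}$, which is $A\mapsto A_{\mathrm{L.S.}}$. By \cref{claim:residue}\ref{it:res_comm}, $\mcd_{1,f_1}$ commutes with the symmetrization on each factor, so \eqref{eq:local_channel} follows. The visible space $\im\mcm^\dagger=(\ker\mcm)^\perp$ is then the span of the Pauli strings with no $Y$, given $f_1\neq0$. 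Finally, on each factor the inverse is $\mcd_{1,1/f_1}$, since the single-qubit snapshot $U_j^\intercal\Pi_{b_j}U_j$ is symmetric. Hence $\hat\rho=\bigotimes_j\mcd_{1,1/f_1}(U_j^\intercal\Pi_{b_j}U_j)$, exactly as in \cref{claim:invert}.
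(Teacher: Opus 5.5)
Your proposal is correct and follows the paper's proof. The paper factors the channel across qubits, applies \cref{prop:global_noisy_channel} at $d=2$, and uses that single-qubit symmetrization annihilates exactly $Y$. Your check goes through: that proposition's proof uses only the order-two Weingarten matrix, with denominator $d(d-1)(d+2)=8\neq0$ at $d=2$, so the fallback Clifford computation is unnecessary. Likewise, passing from $(\mcd_{1,f_1}\circ(\cdot)_{\mathrm{sym}})^{\otimes n}$ to $\mcd_{1,f_1}^{\otimes n}\circ(\cdot)_{\mathrm{sym}}^{\otimes n}$ is just the mixed-product rule for tensor products of maps, so you do not need \cref{claim:residue}.
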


\subsection{Locally symmetric Pauli seminorm and sample complexity}
\label{sec:local_seminorm}

Definition \eqref{eq:seminorm_def} contains a maximization over all states of the full
$d$-dimensional system, so its factorization across qubits is not a formal consequence of
the channel factorizing: we need to know that the maximizer may be taken to be a product
state. It may, for the following reason, which we isolate because both Part~II and Part~IV
use it.

\begin{lemma}[Factorization of the maximization]\label{lem:tensormax}
Let $S_1,\dots,S_n$ be positive semidefinite operators on $\Complex^2$. Then
\begin{equation}\label{eq:tensormax}
\max_{\sigma\in\Density_{2^n}}\tr\Big[\sigma\bigotimes_{j=1}^nS_j\Big]
=\prod_{j=1}^n\lambda_{\max}(S_j).
\end{equation}
\end{lemma}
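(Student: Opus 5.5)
The plan is to reduce the maximization to a statement about the spectrum of a tensor product of positive semidefinite operators. First I will record the standard variational fact: for any positive semidefinite (indeed any Hermitian) $S$ on $\Complex^{D}$, $\max_{\sigma\in\Density_D}\tr[\sigma S]=\lambda_{\max}(S)$. The upper bound comes from writing $\sigma=\sum_k p_k\ketbra{\psi_k}{\psi_k}$ with $p_k\ge0$ and $\sum_kp_k=1$, so that $\tr[\sigma S]=\sum_kp_k\bra{\psi_k}S\ket{\psi_k}\le\lambda_{\max}(S)$. The lower bound comes from taking $\sigma$ to be the projector onto a top eigenvector. The maximum is attained because $\Density_D$ is compact and the objective is linear.

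I then apply this with $D=2^n$ and $S=\bigotimes_jS_j$, so it remains to show $\lambda_{\max}\big(\bigotimes_jS_j\big)=\prod_j\lambda_{\max}(S_j)$. Each $S_j$ is Hermitian, so I diagonalize it as $S_j=\sum_{a}\lambda^{(j)}_a\ketbra{e^{(j)}_a}{e^{(j)}_a}$ in an orthonormal eigenbasis. The product vectors $\bigotimes_j\ket{e^{(j)}_{a_j}}$ then form an orthonormal eigenbasis of $\bigotimes_jS_j$, with eigenvalues $\prod_j\lambda^{(j)}_{a_j}$. Positive semidefiniteness is used exactly here: since every $\lambda^{(j)}_a\ge0$, the largest product of non-negative factors is obtained by taking the largest factor in each slot, giving $\prod_j\lambda_{\max}(S_j)$. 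Without positivity, a product of two negative eigenvalues could exceed the product of the maxima, and the claim would fail. I will note this because it is the only place the hypothesis enters.

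Finally, I will point out that the maximizer can be taken to be the product state $\sigma^\star=\bigotimes_j\ketbra{e^{(j)}_{\max}}{e^{(j)}_{\max}}$. For that state $\tr[\sigma^\star\bigotimes_jS_j]=\prod_j\bra{e^{(j)}_{\max}}S_j\ket{e^{(j)}_{\max}}$ factorizes directly, which is the form Parts~II and IV will need. There is no real obstacle here. The only point requiring care is the positivity step, and in the later applications one must check that the single-qubit operators produced by the factorized seminorm are indeed positive semidefinite (e.g.\ squares of Hermitian operators), so that the lemma applies.
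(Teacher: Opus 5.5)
Your proposal is correct and follows essentially the same route as the paper. Both use the variational identity $\max_{\sigma}\tr[\sigma A]=\lambda_{\max}(A)$, the fact that the spectrum of $\bigotimes_j S_j$ consists of products of eigenvalues of the factors, and non-negativity of those eigenvalues to conclude that the largest product is $\prod_j\lambda_{\max}(S_j)$, attained at the tensor product of top eigenvectors.
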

\begin{proof}
For Hermitian $A$, $\max_{\sigma\in\Density}\tr[\sigma A]=\lambda_{\max}(A)$, the maximum
being attained at a top eigenprojector. Now the spectrum of $\bigotimes_jS_j$ is the set of
products $\prod_j\mu_j$ with $\mu_j\in\spec S_j$, and all $\mu_j\ge0$ since $S_j\succeq0$;
hence the largest such product is $\prod_j\lambda_{\max}(S_j)$, attained at the tensor
product of the corresponding eigenvectors.
\end{proof}
\noindent Positivity makes the largest product the product of the largest, and it
holds here because each $S_j$ is a single-shot second moment. We write $S$ rather than $R$ because $R$ is already the readout confusion matrix of \cref{sec:readout} and the second-moment ratio of \cref{cor:var_ratio}.

\begin{restatable}[Locally symmetric Pauli seminorm]{proposition}{propPauli}
\label{prop:pauli_seminorm}
Let $\channel=\channel_1^{\otimes n}$ be a product of single-qubit channels in the sense of \cref{sec:noise_model}, so that each $\channel_1$ is completely positive with $\alpha_1=2$, and write $f_1=f_1(\channel_1)$ as in \eqref{eq:local_channel}. For a locally symmetric Pauli operator $P\in\{\id,X,Z\}^{\otimes n}$ of weight $\mathrm{wt}(P)$,
\begin{equation}\label{eq:pauli_seminorm}
\norm{P}^2_{\shadow,\mbo,\channel_1^{\otimes n}}=\Big(2\,f_1(\channel_1)^2\Big)^{-\mathrm{wt}(P)}
=\Big(\tfrac{1}{\sqrt2\,f_1(\channel_1)}\Big)^{2\,\mathrm{wt}(P)} .
\end{equation}
Consequently, for $M$ locally symmetric Pauli targets,
\begin{equation}\label{eq:local_sample}
N_{\mathrm{tot}}\le\frac{68\log(2M/\delta)}{\eps^2}\max_{1\le i\le M}\big(2f_1(\channel_1)^2\big)^{-\mathrm{wt}(P_i)}.
\end{equation}
\end{restatable}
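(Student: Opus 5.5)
The plan is to push the product structure of \cref{prop:local_noisy} through the definition \eqref{eq:seminorm_def}, reduce the maximization over $\sigma$ to single-qubit eigenvalue problems via \cref{lem:tensormax}, and evaluate one single-qubit third-moment average explicitly.

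\emph{Step 1: the inverse adjoint.} By \cref{prop:local_noisy} and \cref{claim:residue}\ref{it:res_sa}, $\mcm^{-1,\dagger}_{\mbo,\channel}=\mcm^{-1}_{\mbo,\channel}$ acts on the visible space $\mathrm{span}\{\id,X,Z\}^{\otimes n}$ as $\mcd_{1,1/f_1}^{\otimes n}$. The map $\mcd_{1,1/f_1}$ fixes $\id$ and multiplies traceless single-qubit operators by $1/f_1$. So for $P=\bigotimes_jP_j$ with $P_j\in\{\id,X,Z\}$ I get $\mcm^{-1,\dagger}(P)=f_1^{-\mathrm{wt}(P)}P$. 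Since $P$ is traceless whenever $\mathrm{wt}(P)\ge1$, \cref{lem:reduction} lets me work with $P$ itself.

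\emph{Step 2: factorize the quadratic form.} I rewrite each outcome weight through the dual channel, $\bra b\channel(U\sigma U^\intercal)\ket b=\tr[\sigma\,U^\intercal\channel^*(\Pi_b)U]$. Here $U=\bigotimes_jU_j$, $\Pi_b=\bigotimes_j\Pi_{b_j}$ and $\channel^*=(\channel_1^*)^{\otimes n}$ are all products, and $\big(\bra bUPU^\intercal\ket b\big)^2=\prod_j\bra{b_j}U_jP_jU_j^\intercal\ket{b_j}^2$. The $U_j$ are independent, so the integrand of $\Phi_\sigma(P)$ in \eqref{eq:Phisigma} becomes $\tr[\sigma\bigotimes_jS_j]$, with
\[
S_j=\E_{U_j}\sum_{b_j}\bra{b_j}U_j\tilde P_jU_j^\intercal\ket{b_j}^2\,U_j^\intercal\channel_1^*(\Pi_{b_j})U_j,\qquad \tilde P_j=\mcd_{1,1/f_1}(P_j).
\]
Complete positivity of $\channel_1$ makes $\channel_1^*$ positive, so each $S_j\succeq0$. \cref{lem:tensormax} then gives $\norm{P}^2_{\shadow}=\prod_j\lambda_{\max}(S_j)$.

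\emph{Step 3: the single-qubit blocks.} This is the main computation and the main obstacle. If $P_j=\id$, the squared factor is $1$ and $S_j=\E_{U_j}U_j^\intercal\channel_1^*(\id)U_j$. The defining representation of $\mbo(2)$ is absolutely irreducible, so the first-moment twirl is $\tfrac{\tr\channel_1^*(\id)}{2}\id=\tfrac{\alpha_1}{2}\id=\id$ by \cref{lem:adjoint}, and $\lambda_{\max}=1$.

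If $P_j\in\{X,Z\}$, I would parametrize $\mbo(2)$ directly rather than use the singular $d=2$ Gram matrix of \cref{app:brauer}: rotations $R_\theta$ and reflections $R_\theta Z$, with $\theta$ uniform. Then $U_jZU_j^\intercal=\pm(\cos2\theta\,Z\pm\sin2\theta\,X)$, and similarly for $X$. The key observation is that $\bra0M\ket0^2=\bra1M\ket1^2$ for traceless $M$. The squared weight is therefore $f_1^{-2}\cos^22\theta$ (or $\sin^22\theta$) for both outcomes, and the $b_j$-sum collapses to $f_1^{-2}\cos^22\theta\,U_j^\intercal\channel_1^*(\id)U_j$.

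For trace-preserving $\channel_1$ we have $\channel_1^*(\id)=\id$, so $S_j=f_1^{-2}\E[\cos^22\theta]\,\id=\tfrac{1}{2f_1^2}\id$, hence $\lambda_{\max}=1/(2f_1^2)$. For merely unital $\channel_1$, I would write $\channel_1^*(\id)=\id+aX+cZ$ (and possibly a $Y$ term). The $\cos^22\theta$-weighted conjugation averages of $X$ and $Z$ should cancel once rotations and reflections are combined, since $\cos^22\theta$ is invariant under $\theta\mapsto\theta+\pi/2$, which sends $R_\theta^\intercal XR_\theta\mapsto-R_\theta^\intercal XR_\theta$. The $Y$ component flips sign under reflections. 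Checking both cancellations carefully is the point I expect to need the most care. Multiplying over the $\mathrm{wt}(P)$ nontrivial sites gives \eqref{eq:pauli_seminorm}, and the noiseless check $f_1=\tfrac12$ returns $2^{\mathrm{wt}(P)}$, in agreement with~\cite{west2025real}.

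\emph{Step 4: sample complexity.} Since $\Var[\hat o_i]\le\norm{P_i}^2_{\shadow}$, \cref{fact:mom} with batch size $\lceil34\,\norm{P_i}^2_{\shadow}/\eps^2\rceil$ applies to each target. I then take $K=2\log(2M/\delta)$ batches and a union bound over the $M$ targets, which yields the prefactor $34\cdot2=68$ in \eqref{eq:local_sample}.
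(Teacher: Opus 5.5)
Your proof is correct, and its scaffolding is the paper's: the scalar action of $\mcm^{-1,\dagger}$ on Pauli strings, the factorization of $\Phi_\sigma$ into per-qubit positive operators, \cref{lem:tensormax}, and the median-of-means count with $K=2\log(2M/\delta)$. The genuine difference is the single-qubit block.

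The paper obtains it by setting $d=2$ in the global Weingarten formula \eqref{eq:Eb_result}. That step is legitimate only because \cref{app:d2} shows the two-block coefficients $c_A,c_B$ stay finite although $G(2)$ is singular. The noise then drops out through the arithmetic identity $(10-4\beta_1)\cdot4+(2\beta_1-2)\cdot8=24$.

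You bypass the Brauer calculus by parametrizing $\mbo(2)$ directly. Your observation that a traceless qubit operator has diagonal entries $\pm m$ makes the squared weight outcome-independent, so $\sum_b\channel_1^*(\Pi_b)$ collapses to $\channel_1^*(\id)$. This \emph{explains} the paper's cancellation rather than verifying it: $\beta_1$ can enter only through $f_1^{-2}$, and only $\tr\channel_1^*(\id)=\alpha_1=2$ survives the twirl.

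The step you flagged closes as you expected. Write $\channel_1^*(\id)=\id+aX+bY+cZ$. The $X$ and $Z$ parts vanish because $\theta\mapsto\theta+\pi/2$ preserves $\cos^22\theta$ and $\sin^22\theta$ while negating $R_\theta^\intercal MR_\theta$ for $M\in\mathrm{span}\{X,Z\}$. The $Y$ part cancels between rotations, which commute with $Y$, and reflections $R_\theta Z$, which negate it, since the weights on the two components are equal. Hence $S_j=\tfrac{1}{2f_1^2}\id$ under exactly the paper's hypothesis $\alpha_1=2$, whether or not $\channel_1$ is trace-preserving.

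The paper's route buys uniformity with Part~I. Yours is self-contained and avoids the $d=2$ solvability analysis. Its small cost is that it is a Haar computation, so for a product of finite single-qubit $3$-designs you should add that $S_j$ is a third-moment quantity and cite \cref{prop:design_indep}.

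Two minor points. \cref{claim:residue}\ref{it:res_sa} is stated for the global factorization; what you actually use is self-adjointness of $\mcd_{1,f_1}$ and $(\cdot)_{\mathrm{sym}}$ factor by factor. Also, a single shadow record serving all $M$ targets needs the common batch size $\lceil34\max_i\norm{P_i}^2_{\shadow}/\eps^2\rceil$.
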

\Cref{fig:local_advantage} validates \eqref{eq:pauli_seminorm} numerically against exact enumeration.
Because the single-qubit orthogonal and unitary parameters satisfy $f_1^{\mbo}=(\beta_1-1)/2$ and $f_1^{\mbu}=(\beta_1-1)/3$, their ratio is exactly $3/2$ independent of the noise, so the local unitary bound $(3f_1^{\mbu\,2})^{-\mathrm{wt}(P)}$ exceeds the orthogonal bound by a factor
\begin{equation}
\frac{N^{\mbu}_{\mathrm{tot}}}{N^{\mbo}_{\mathrm{tot}}}=\frac{\big(3 f_1^{\mbu\,2}\big)^{-\mathrm{wt}(P)}}{\big(2 f_1^{\mbo\,2}\big)^{-\mathrm{wt}(P)}}=\Big(\frac{2 f_1^{\mbo\,2}}{3 f_1^{\mbu\,2}}\Big)^{\mathrm{wt}(P)}=\Big(\tfrac32\Big)^{\mathrm{wt}(P)} .
\end{equation}
So the exponential-in-weight advantage of local real shadows persists verbatim under arbitrary single-qubit product noise, where the noiseless bound $2^{\mathrm{wt}(P)}$ is recovered at $\channel_1=\id$ ($f_1=1/2$).

\begin{figure}[t]
\centering
\includegraphics[width=\linewidth]{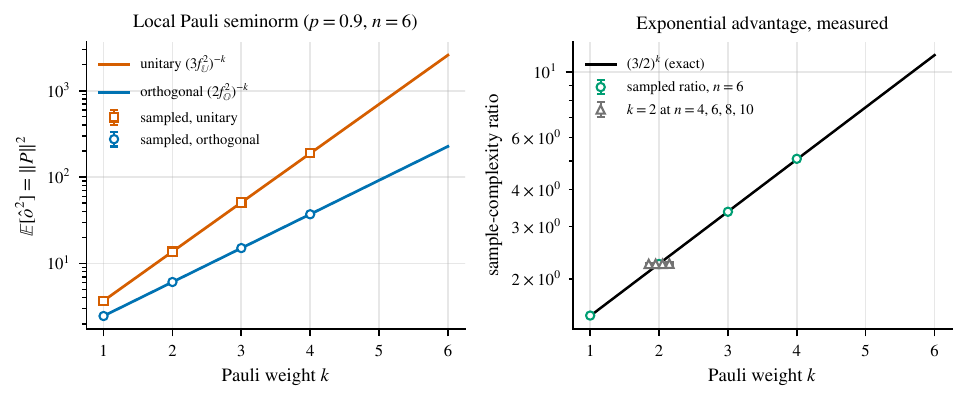}
\caption{Local Pauli shadow seminorm under single-qubit depolarizing noise ($p=0.9$). \emph{Left:} the orthogonal $(2f_1^2)^{-\mathrm{wt}}$ of \eqref{eq:pauli_seminorm} and the unitary $(3f_{1,\mbu}^2)^{-\mathrm{wt}}$ (lines) against the measured single-shot second moment at $n=6$ (markers, $95\%$ CI). The prediction is exact rather than an upper bound, because at $d=2$ the contracted third moment is a multiple of the identity (for a real basis this is the computation $R_j=\tfrac12\id$ in the proof of \cref{prop:pauli_seminorm}; \cref{prop:d2complex} is its complex-basis counterpart, and hypothesises $\beta_1\neq\tilde\beta_1$). \emph{Right:} the measured ratio against $(3/2)^k$; triangles are the weight-$2$ ratio at $n=4,6,8,10$, offset horizontally, showing independence of system size.}
\label{fig:local_advantage}
\end{figure}

\subsection{Noise models}
\label{sec:local_examples}
Every single-qubit channel enters \eqref{eq:local_channel} through a single number, which the following identifies. Write $\Lambda$ for the Pauli transfer matrix of $\channel_1$, $\Lambda_{\mu\nu}=\tfrac12\tr[\sigma_\mu\channel_1(\sigma_\nu)]$ with $\sigma_\mu\in\{\id,X,Y,Z\}$.
\begin{claim}[The local noise parameter]\label{claim:local_lambda}
For any single-qubit linear map $\channel_1$,
\begin{equation}\label{eq:beta1_lambda}
\beta_1=1+\Lambda_{zz},\qquad f_1(\channel_1)=\frac{\Lambda_{zz}}{2} ,
\end{equation}
so the local shadow channel, its inverse, and the seminorm \eqref{eq:pauli_seminorm} depend on the noise only through $\Lambda_{zz}=\tfrac12\tr[Z\channel_1(Z)]$. The protocol is inconsequential exactly when $\Lambda_{zz}=1$ and non-invertible exactly when $\Lambda_{zz}=0$.
\end{claim}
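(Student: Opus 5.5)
The plan is a direct computation of $\beta_1$ in the Pauli basis, followed by substitution into \cref{prop:local_noisy} and \cref{prop:pauli_seminorm}. By \eqref{eq:scalars_reduction} with $d=2$, $\beta_1=\bra0\channel_1(\Pi_0)\ket0+\bra1\channel_1(\Pi_1)\ket1$. I will expand the projectors as $\Pi_0=\tfrac12(\id+Z)$ and $\Pi_1=\tfrac12(\id-Z)$, and the diagonal functionals as $\bra0X\ket0=\tfrac12\tr[(\id+Z)X]$ and $\bra1X\ket1=\tfrac12\tr[(\id-Z)X]$. Substituting and using linearity of $\channel_1$ gives
\[
\beta_1=\tfrac14\Big(\tr[(\id+Z)\channel_1(\id+Z)]+\tr[(\id-Z)\channel_1(\id-Z)]\Big).
\]
In this sum the cross terms $\tr[\id\,\channel_1(Z)]$ and $\tr[Z\,\channel_1(\id)]$ appear with opposite signs and cancel. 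What remains is $\beta_1=\tfrac12\tr[\channel_1(\id)]+\tfrac12\tr[Z\channel_1(Z)]=\Lambda_{\id\id}+\Lambda_{zz}$.

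Some care is needed on the hypothesis. The claim says ``any single-qubit linear map'', but the identity $\beta_1=1+\Lambda_{zz}$ needs $\Lambda_{\id\id}=\tfrac12\alpha_1=1$. That holds under the standing assumption $\alpha_1=2$ of \cref{prop:pauli_seminorm}, that is, trace-preserving or unital, by \cref{lem:adjoint}. I will state this explicitly; for a general linear map the correct formula is $\beta_1=\tfrac12\alpha_1+\Lambda_{zz}$. Then $f_1=(\beta_1-1)/2=\Lambda_{zz}/2$ follows directly from the definition in \eqref{eq:local_channel}.

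For the dependence statement, \cref{prop:local_noisy} shows that the channel and the shadow are built from $\mcd_{1,f_1}$ alone. The seminorm \eqref{eq:pauli_seminorm} is a function of $f_1$ alone. So all three depend on the noise only through $\Lambda_{zz}$. The two limiting cases come from the same substitution. The protocol is inconsequential iff $f_1=f_1(\id)=\tfrac12$, i.e.\ $\Lambda_{zz}=1$, which matches $\beta_1=2=d$ in \cref{claim:beta_range}. It is non-invertible iff $f_1=0$, i.e.\ $\Lambda_{zz}=0$, since $\mcd_{1,f}$ is invertible exactly when $f\neq0$.

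The only real obstacle is bookkeeping: confirming the cross-term cancellation, and making the $\alpha_1=2$ requirement visible rather than silently assuming it.
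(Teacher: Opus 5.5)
Your proof is correct and is essentially the paper's computation, done with the full Pauli transfer matrix instead of the Bloch affine form $v\mapsto\Lambda v+t$. The paper's cancelling translation $t_z$ is your cross term $\tfrac12\tr[Z\channel_1(\id)]$. The Bloch form tacitly assumes trace preservation, so it never shows the second cross term $\tr[\channel_1(Z)]$ or the term $\Lambda_{\id\id}=\tfrac12\alpha_1$.

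Your flag on the hypothesis is right and sharpens the paper. For an arbitrary linear map the identity is $\beta_1=\tfrac12\alpha_1+\Lambda_{zz}$; for example $\channel_1=2\,\mathrm{id}$ gives $\beta_1=4\neq1+\Lambda_{zz}=3$. So ``any single-qubit linear map'' must be read with the standing assumption $\alpha_1=2$, and your version covers unital as well as trace-preserving maps.
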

\begin{proof}
In the Bloch parametrization a state with Bloch vector $v$ has $\bra0\cdot\ket0=(1+v_z)/2$ and $\bra1\cdot\ket1=(1-v_z)/2$, while $\Pi_0$ and $\Pi_1$ have Bloch vectors $\pm\hat z$ and $\channel_1$ acts as $v\mapsto\Lambda v+t$. Hence $\bra0\channel_1(\Pi_0)\ket0=\tfrac12(1+\Lambda_{zz}+t_z)$ and $\bra1\channel_1(\Pi_1)\ket1=\tfrac12(1+\Lambda_{zz}-t_z)$, whose sum is $\beta_1=1+\Lambda_{zz}$; the translation $t_z$ cancels, which is why a non-unital channel is no worse here than a unital one with the same $\Lambda_{zz}$. Then $f_1=\tfrac{\beta_1-1}{2}=\tfrac{\Lambda_{zz}}{2}$ by \eqref{eq:local_channel}, and $f_1=f_1(\id)=\tfrac12$ iff $\Lambda_{zz}=1$, $f_1=0$ iff $\Lambda_{zz}=0$.
\end{proof}
\noindent The reason a single entry suffices is that the noise acts between the evolution and the measurement, so the twirl sees it only through $\channel_1^{*}(\Pi_b)$, that is only through how populations map to populations. \Cref{tab:local_noise} collects the five models; the derivations are in \cref{app:local_examples}.

\begin{table}[t]
\centering
\begin{tabular}{lccc}
\hline
single-qubit channel $\channel_1$ & $\Lambda_{zz}$ & $f_1$ & $\norm{P}^2_{\shadow}$, weight $k$\\
\hline
depolarizing $\mcd_{1,p}$ & $p$ & $p/2$ & $(p^2/2)^{-k}$\\
amplitude damping $\mathrm{AD}_{1,p}$ & $p$ & $p/2$ & $(p^2/2)^{-k}$\\
dephasing $\diag$ & $1$ & $1/2$ & $2^{-k}$\\
rotation by $\theta$ about $\hat n$ & $1-2(n_x^2+n_y^2)\sin^2\tfrac\theta2$ & $\Lambda_{zz}/2$ & $(\Lambda_{zz}^2/2)^{-k}$\\
readout, flip rates $q_0,q_1$ & $1-q_0-q_1$ & $(1-q_0-q_1)/2$ & $\big((1-q_0-q_1)^2/2\big)^{-k}$\\
\hline
\end{tabular}
\caption{The five noise models of \cref{sec:examples} at the single-qubit level, each entering the local protocol only through $\Lambda_{zz}$ (\cref{claim:local_lambda}). Derivations in \cref{app:local_examples}; the noiseless column is $\Lambda_{zz}=1$, $f_1=\tfrac12$, $\norm{P}^2=2^{k}$. Dephasing is inconsequential, and a $\theta=\pi/2$ rotation about any axis in the $x$--$y$ plane drives $f_1$ to zero exactly. Every entry is derived in \cref{app:local_examples}.}
\label{tab:local_noise}
\end{table}

In every case the local sample complexity \eqref{eq:local_sample} is exponential in the Pauli weight, with base improved by $3/2$ per qubit over the unitary protocol, since $f_1^{\mbo}/f_1^{\mbu}=3/2$ independently of $\Lambda_{zz}$.

\section{Part III: Complex Basis, Global Orthogonal Ensemble}
\label{sec:partIII}

\begin{setting}[Complex basis, global ensemble]\label{set:III}
\cref{set:I} with \ref{it:I_basis} replaced by: $\mcw$ is an arbitrary orthonormal basis, of reality fraction $\varsigma=\alpha_{\mathrm r}/d\in[0,1]$ as in \eqref{eq:reality_fraction}, with $\varsigma=1$ recovering \cref{set:I}. The invertibility condition of \ref{it:I_noise} becomes $\beta+\tilde\beta\neq2$, and \ref{it:I_obs} widens: $O$ need not be symmetric, because the visible space is all of $\mcl(\Complex^d)$ whenever the basis is not real (\cref{claim:complex_visible}).
\end{setting}

\noindent Returning to the global ensemble, we relax the reality of the basis. Its new ingredient is the transposed-diagonal scalar $\tilde\beta$ of \eqref{eq:scalars}, and the channel stays depolarizing but acts on a $\beta,\tilde\beta$-weighted combination of $A$ and $A^\intercal$, and becomes invertible on the full operator space, so non-symmetric observables become estimable. Proof in \cref{app:complex_basis_proof}.

\subsection{Reconstruction: the complex-basis global channel}
\label{sec:complex}

Relaxing the reality of the basis tunes the protocol continuously between real and unitary shadows. Both the basis reality $\alpha_{\mathrm r}$ and the reality fraction $\varsigma$ of \eqref{eq:reality_fraction} now come into play, and the relevant noise scalar is now the transposed-diagonal invariant $\tilde\beta$ of \eqref{eq:scalars}.

\begin{restatable}[Complex-basis noisy channel]{proposition}{propcomplex}
\label{prop:complex_basis}
For an orthogonal $3$-design, a trace-preserving or unital channel $\channel$ (so that $\alpha=d$), and a basis $\mcw$,
\begin{equation}\label{eq:complex_channel}
\mcm_{\mbo,\channel,\mcw}(A)=\mcd_{n,f(\channel)}\big(\tilde A\big),
\qquad
f(\channel)=\frac{\beta+\tilde\beta-2}{(d-1)(d+2)},
\end{equation}
where
\begin{equation}\label{eq:Atilde}
\begin{aligned}
\tilde A&=\frac{\big(\beta(d+1)-d-\tilde\beta\big)A+\big(\tilde\beta(d+1)-d-\beta\big)A^\intercal}{d\,(\beta+\tilde\beta-2)}\\
&=q_\beta A+(1-q_\beta)A^\intercal ,\\
q_\beta&\coloneqq\frac{\beta(d+1)-d-\tilde\beta}{d\,(\beta+\tilde\beta-2)} ,
\end{aligned}
\end{equation}
For a real basis $\tilde\beta=\beta$, whence $\tilde A=A_{\mathrm{sym}}$ and \eqref{eq:complex_channel} reduces to \cref{prop:global_noisy_channel}.
\end{restatable}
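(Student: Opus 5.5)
The channel only involves the second moment of the ensemble, so I would obtain it from the order-two orthogonal twirl, whose commutant is $\mathrm{span}\{\id,\mbs,\om\}$ (\cref{sec:weingarten_toolbox}); an orthogonal $3$-design is in particular a $2$-design, which is all that will be used.

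The first step is to rewrite \eqref{eq:noisy_channel} for the basis $\mcw$ in tensor form. Using the dual $\channel^*$ of \cref{sec:scalars},
\[
\bra w\channel(UAU^\intercal)\ket w=\tr[\channel^*(\Pi_w)\,UAU^\intercal].
\]
Hence
\[
\mcm_{\mbo,\channel,\mcw}(A)=\tr_1\!\Big[(A\otimes\id)\,\mct^{(2)}_{\mbo(d)}(X)\Big],\qquad X=\sum_w\channel^*(\Pi_w)\otimes\Pi_w .
\]
The twirl here is by $U^{\intercal\otimes2}$, but $U\mapsto U^\intercal$ preserves the Haar measure on $\mbo(d)$ (and a design closed under transposition), so this is the standard twirl.

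The second step computes the three overlaps of $X$ with the Brauer diagrams. By \cref{lem:adjoint}, $\tr X=\sum_w\tr\channel^*(\Pi_w)=\alpha$. Next,
\[
\tr(\mbs X)=\sum_w\tr[\channel^*(\Pi_w)\Pi_w]=\sum_w\bra w\channel(\Pi_w)\ket w ,
\]
which is $\beta$ with the diagonal taken in $\mcw$. Finally, the transpose trick $\bra\Omega B\otimes C\ket\Omega=\tr(BC^\intercal)$ gives
\[
\tr(\om X)=\sum_w\tr[\channel^*(\Pi_w)\Pi_w^\intercal]=\sum_w\bra w\channel(\Pi_w^\intercal)\ket w,
\]
which is $\tilde\beta$. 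I would remark that $\beta,\tilde\beta$ are to be read with $\ket b$ replaced by $\ket w$; this is how \eqref{eq:scalars} specializes for a general basis.

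The Gram matrix of $(\id,\mbs,\om)$ is $G=\begin{pmatrix}d^2&d&d\\ d&d^2&d\\ d&d&d^2\end{pmatrix}$, which is invertible for $d\ge2$. Solving $G c=(\alpha,\beta,\tilde\beta)^\intercal$ gives the coefficients in $\mct^{(2)}(X)=c_{\id}\id+c_{\mbs}\mbs+c_{\Omega}\om$. Because $G$ is $d\id+(d^2-d)$ times a matrix with the $2$-design symmetry between $\mbs$ and $\om$, it inverts in closed form. Roughly,
\[
c_{\mbs}=\frac{(d+1)\beta-\tilde\beta-\alpha}{d(d-1)(d+2)},
\]
with $c_\Omega$ obtained by $\beta\leftrightarrow\tilde\beta$, and $c_{\id}$ fixed by the first row.

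The third step takes partial traces against $A\otimes\id$:
\[
\tr_1[(A\otimes\id)\id]=\tr(A)\id,\qquad \tr_1[(A\otimes\id)\mbs]=A,\qquad \tr_1[(A\otimes\id)\om]=A^\intercal .
\]
This gives
\[
\mcm(A)=c_{\id}\tr(A)\id+c_{\mbs}A+c_\Omega A^\intercal .
\]
Now set $\alpha=d$ and put $f=c_{\mbs}+c_\Omega$, which should equal $(\beta+\tilde\beta-2)/((d-1)(d+2))$. Define $q_\beta=c_{\mbs}/f$; multiplying numerator and denominator by $d$ yields the displayed $\tilde A$ in \eqref{eq:Atilde}. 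Realness of $q_\beta$ follows because $\beta,\tilde\beta$ are real: they are sums of diagonal entries of Hermitian operators $\channel(\Pi_w)$, $\channel(\Pi_w^\intercal)$ (Hermiticity preservation), which is the point flagged in \cref{claim:residue}.

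It then remains to check that $c_{\id}\tr(A)\id$ equals $(1-f)\tr(\tilde A)\id/d$. Since $\tr\tilde A=\tr A$ by \cref{claim:residue}\ref{it:res_comm}, this amounts to the identity $d^2c_{\id}+dc_{\mbs}+dc_\Omega=\alpha=d$ read off the first row of $Gc$. That identity gives $dc_{\id}=1-f$, hence $\mcm=\mcd_{n,f}(\tilde A)$. For the real-basis check, $\tilde\beta=\beta$ gives $c_{\mbs}=c_\Omega$, so $q_\beta=\tfrac12$ and $f=2(\beta-1)/((d-1)(d+2))$, recovering \cref{prop:global_noisy_channel}.

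The main obstacle I expect is bookkeeping rather than ideas: getting the conjugation and transpose conventions right in $\tr(\om X)$ (so that $\Pi_w^\intercal$ rather than $\bar\Pi_w$ lands on the correct tensor factor), and inverting $G$ without arithmetic slips. I would guard against both by checking the noiseless limit $\channel=\id$, where $\beta=d$ and $\tilde\beta=\alpha_{\mathrm r}$ should reproduce the complex-basis channel of~\cite{west2025real}.
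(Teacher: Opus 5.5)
Your proposal is correct and follows the paper's own proof. Both use the order-two orthogonal twirl with overlaps $(\alpha,\beta,\tilde\beta)$ against $(\id,\mbs,\om)$, invert the $3\times3$ Gram matrix, take the partial traces giving $\tr(A)\id$, $A$, $A^\intercal$, and factor out $f=c_{\mbs}+c_{\Omega}$. The differences are cosmetic: you sum over $w$ before twirling, and you read off $d\,c_{\id}=1-f$ from the first row of the Gram system rather than computing $c_{\id}$ explicitly. Like the paper, you implicitly need $\beta+\tilde\beta\neq2$ in order to divide by $f$.
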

At $\channel=\id$, where $\beta=d$ and $\tilde\beta=\alpha_{\mathrm r}$, \eqref{eq:complex_channel} reduces coefficient by coefficient to the noiseless general-reality channel of~\cite{west2025real}, $\mcm_{\mbo(d);\alpha_{\mathrm r}}(A)=\big[\tr(A)(d^2-\alpha_{\mathrm r})\id+(d^2-\alpha_{\mathrm r})A+(\alpha_{\mathrm r}d+\alpha_{\mathrm r}-2d)A^\intercal\big]/[d(d-1)(d+2)]$. The reality-weighted mixture of $A$ and $A^\intercal$ is therefore theirs; what \cref{prop:complex_basis} adds is the channel, which replaces the single scalar $\alpha_{\mathrm r}$ by the pair $\beta,\tilde\beta$. In the large-$d$ limit at fixed reality fraction $\varsigma$ of \eqref{eq:reality_fraction} one recovers, as $\varsigma\to0$, the unitary shadow channel acting on $A$ itself. \Cref{fig:complex_crossover} shows the crossover at each $n$ against the unitary parameter of~\cite{koh2022classical}, with $f$ measured from simulated snapshots as well as computed, and beside it the variance. The two behave differently. One reaches its unitary value at the rate \eqref{eq:complex_relgap}. The other overshoots it at every finite $d$ and returns only as $d$ grows, which is \cref{cor:unitary_limit} and the reason the $\varsigma\to0$ statement is a joint limit. (We use $\varsigma$ rather than the symbol $f$ of~\cite{west2025real}, who write $\alpha_{\mathrm r}=fd$, because $f$ is reserved here for the depolarizing parameter of \eqref{eq:depol_def}; see \cref{rem:convention}.) The invertibility condition becomes $\beta+\tilde\beta\neq2$.

\begin{figure}[t]
\centering
\includegraphics[width=\linewidth]{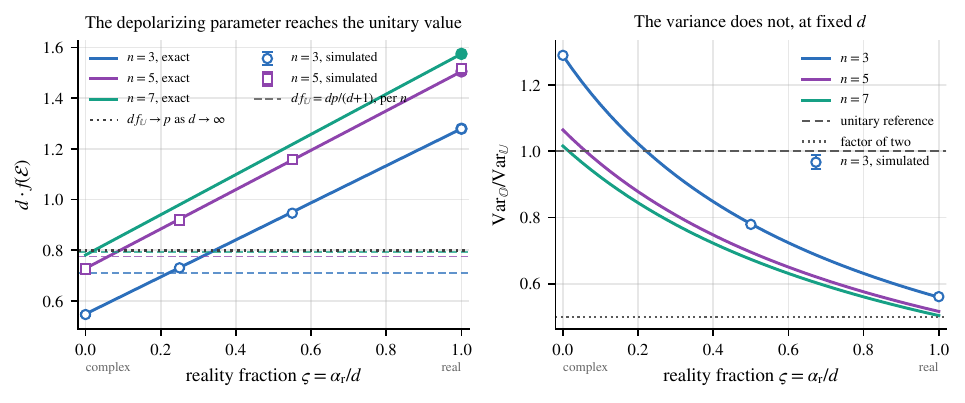}
\caption{Complex-basis crossover for depolarizing noise ($p=0.8$) at $n=3,5,7$, against the reality fraction $\varsigma$ of \eqref{eq:reality_fraction}. The two panels do not tell the same story, which is the point of showing them together. \emph{Left:} the rescaled depolarizing parameter $d\,f(\channel)$ (solid, \eqref{eq:depol_complex}) against the unitary parameter $d\,f_{\mbu}=dp/(d+1)$ of~\cite{koh2022classical} (dashed, one per $n$ and colour-matched; dotted is its $d\to\infty$ limit $p$). The parameter reaches its unitary value as $\varsigma\to0$, at the rate \eqref{eq:complex_relgap}. Markers: $f$ measured by running the protocol end to end, $1.5\times10^6$ shots per point, assuming neither $f$ nor $q_\beta$ --- the raw snapshot $X$ is probed with a traceless operator \emph{before} any inversion, so $\E[\tr(AX)]=f\,\tr(A\,\Psi_{q_\beta}(\rho))$, and $\rho$ is real symmetric so that $\Psi_{q_\beta}(\rho)=\rho$ for every $q_\beta$ (\cref{claim:residue}), leaving $\hat f=\E[\tr(\rho_0X)]/\norm{\rho_0}_2^2$. All eight points agree with the closed form within $0.84$ of their own confidence interval. \emph{Right:} the single-shot variance of the same protocol as a ratio to the unitary one on the same $(\rho,O_0)$. It does \emph{not} reach the unitary value at fixed $d$. At $\varsigma=1$ the ratio is the factor of two, $0.559$, $0.516$, $0.504$ at $n=3,5,7$, falling toward $\tfrac12$; at $\varsigma=0$ it is $1.289$, $1.064$, $1.016$, \emph{above} the unitary reference at every finite $d$ and approaching it only as $d$ grows. Recovering unitary shadows is therefore a joint limit in $(\varsigma,d)$, not a limit in $\varsigma$ --- the finite-$d$ statement \cref{fig:complex_noise} makes at a single $n$, here as a trend. Both curves are closed form via \cref{prop:complex_variance}, which is what makes $n=5,7$ reachable at all: $\mcm^{\dagger}=\mcm$ (\cref{claim:residue}) removes the $d^2\times d^2$ pseudoinverse, and the result agrees with the exact twirl at $d=4$ to $3.2\times10^{-15}$. Open markers: an independent end-to-end simulation at $n=3$, $6\times10^5$ shots, within $0.81$ of its own confidence interval.}
\label{fig:complex_crossover}
\end{figure}

\subsection{Inverse, estimator, and the enlarged visible space}
\label{sec:complex_est}
By \cref{prop:complex_basis} the channel is $\mcd_{n,f}\circ\Psi_{q_\beta}$ with $q_\beta$ as in \eqref{eq:Atilde}, so it is an instance of \cref{claim:residue} and the enlargement of the visible space is decided by a single scalar. \Cref{claim:invert} is the real-basis case, where that scalar takes the value that makes $\Psi$ a projector.

\begin{claim}[The complex basis enlarges the visible space]\label{claim:complex_visible}
Let $\channel$ be completely positive with $\beta+\tilde\beta\neq2$, so that $f\neq0$. Then
\begin{equation}\label{eq:qbeta_half}
q_\beta-\tfrac12=\frac{(\beta-\tilde\beta)(d+2)}{2d\,(\beta+\tilde\beta-2)} ,
\end{equation}
so $q_\beta=\tfrac12$ if and only if $\tilde\beta=\beta$, which by \eqref{eq:scalars} holds exactly on a real basis. Consequently the visible space is the symmetric subspace when the basis is real, and all of $\mcl(\Complex^d)$ otherwise, and in the latter case
\begin{equation}\label{eq:complex_inverse}
\hat\rho=\Psi_{q_\beta}^{-1}\big(\mcd_{n,1/f}(U^\intercal\Pi_wU)\big),\qquad
\Psi_{q_\beta}^{-1}(X)=\frac{q_\beta X-(1-q_\beta)X^\intercal}{2q_\beta-1} ,
\end{equation}
is unbiased on all of $\mcl(\Complex^d)$.
\end{claim}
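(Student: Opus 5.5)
The plan is to derive everything from \cref{prop:complex_basis} and \cref{claim:residue}, so that the only new computation is the identity \eqref{eq:qbeta_half}. First I would check that $q_\beta$ is real, which \cref{claim:residue}\ref{it:res_sa} requires. By complete positivity, $\channel(\Pi_b)\succeq0$ and $\channel(\Pi_b^\intercal)\succeq0$, so each diagonal entry $\bra b\channel(\cdot)\ket b$ is real and non-negative. Hence $\beta,\tilde\beta\in\Reals$ by \eqref{eq:scalars}. Since $\beta+\tilde\beta\neq2$, the quotient defining $q_\beta$ in \eqref{eq:Atilde} is then a real number. The same hypothesis gives $f(\channel)=(\beta+\tilde\beta-2)/[(d-1)(d+2)]\neq0$.

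Next, I would subtract $\tfrac12$ from $q_\beta$ over the common denominator $2d(\beta+\tilde\beta-2)$. The numerator becomes $2\beta(d+1)-2d-2\tilde\beta-d\beta-d\tilde\beta+2d$. The $\pm2d$ terms cancel, leaving $\beta(d+2)-\tilde\beta(d+2)=(\beta-\tilde\beta)(d+2)$, which is \eqref{eq:qbeta_half}. Because $d+2>0$ and the denominator is nonzero, this gives $q_\beta=\tfrac12\iff\tilde\beta=\beta$. For a real basis, $\Pi_b^\intercal=\Pi_b$ and the two scalars in \eqref{eq:scalars} coincide term by term, so that direction is immediate.

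The dichotomy for the visible space then follows from the final statement of \cref{claim:residue}. With $f\neq0$ we have $\ker\mcm=\ker\Psi_{q_\beta}$ and $\mcv=(\ker\Psi_{q_\beta})^\perp$. This is the antisymmetric complement, namely the symmetric operators, when $q_\beta=\tfrac12$, and it is all of $\mcl(\Complex^d)$ when $q_\beta\neq\tfrac12$.

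For the estimator \eqref{eq:complex_inverse}, I would use the commutation in \cref{claim:residue}\ref{it:res_comm} to write the inverse of $\mcm=\mcd_{n,f}\circ\Psi_{q_\beta}$ as $\Psi_{q_\beta}^{-1}\circ\mcd_{n,1/f}$, with $\Psi_{q_\beta}^{-1}$ given by \ref{it:res_inv}. By the definition of the shadow channel, $\E[U^\intercal\Pi_wU]=\mcm(\rho)$ averaged over $U$ and $w$. Linearity then gives $\E\hat\rho=\mcm^{-1}\mcm(\rho)=\rho$ for every $\rho$, and hence unbiasedness of $\tr(O\hat\rho)$ for every $O\in\mcl(\Complex^d)$.

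The step I expect to be the main obstacle is the converse reading of ``$\tilde\beta=\beta$ holds exactly on a real basis''. For a fixed noisy channel, $\tilde\beta=\beta$ can hold accidentally on a non-real basis; for instance, any $\channel$ that is unchanged when $\Pi_b$ is replaced by $\Pi_b^\intercal$ would do. I would therefore state the dichotomy as controlled by the condition $\tilde\beta=\beta$ itself, and note separately that a real basis forces it. At $\channel=\id$ the converse is genuine, since $\tilde\beta=\alpha_{\mathrm r}$ equals $\beta=d$ iff $\alpha_w=1$ for all $w$, i.e.\ iff the basis is real. For general noise I would flag the converse as generic, holding off a measure-zero coincidence.
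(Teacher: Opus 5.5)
Your argument is the paper's own: the same common-denominator computation of $q_\beta-\tfrac12$, reality of $\beta,\tilde\beta$ from positivity, the final clause of \cref{claim:residue} for the visible space, and $\mcm^{-1}\mcm=\mathrm{id}$ on $\mcv$ for unbiasedness (the commutation you invoke is not needed, since $(\mcd_{n,f}\circ\Psi_{q_\beta})^{-1}=\Psi_{q_\beta}^{-1}\circ\mcd_{n,1/f}$ holds directly). Your caveat about the converse is also correct, and it is a genuine defect of the statement that the paper's proof never addresses; the exceptions are, however, more natural than an accidental measure-zero coincidence: transposition preserves diagonal entries, so the dephasing channel $\channel=\diag$ of \cref{sec:examples} (and any $\channel'\circ\diag$, e.g.\ the readout model of \cref{sec:readout}) has $\tilde\beta=\beta$ on \emph{every} basis, and its shadow channel kills every antisymmetric $A$ outright because $UAU^\intercal$ is antisymmetric and so has zero diagonal --- for instance at $d=2$ with $\ket{w_1}=\cos\theta\ket{0}+i\sin\theta\ket{1}$, $\ket{w_2}=i\sin\theta\ket{0}+\cos\theta\ket{1}$ and $0<\theta<\pi/4$, the basis is non-real ($\alpha_{\mathrm r}=2\cos^22\theta<2$) while $\beta=\tilde\beta=2-\sin^22\theta\neq1$, so $q_\beta=\tfrac12$ and $Y$ stays invisible, which is why the dichotomy must be stated in terms of $\tilde\beta=\beta$, exactly as you propose.
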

\begin{proof}
Subtracting $\tfrac12$ from $q_\beta=\tfrac{\beta(d+1)-d-\tilde\beta}{d(\beta+\tilde\beta-2)}$ and clearing the common denominator gives numerator $2\beta(d+1)-2d-2\tilde\beta-d\beta-d\tilde\beta+2d=(\beta-\tilde\beta)(d+2)$, which is \eqref{eq:qbeta_half}; it vanishes exactly when $\tilde\beta=\beta$, the denominator being nonzero by hypothesis. Both scalars are real by \cref{claim:beta_range}, so \cref{claim:residue} applies with $q=q_\beta$: for $q_\beta\neq\tfrac12$ it makes $\Psi_{q_\beta}$ invertible on $\mcl(\Complex^d)$ with the stated inverse, hence $\ker\mcm=0$ and $\mcv=\mcl(\Complex^d)$; for $q_\beta=\tfrac12$ it is the projector onto the symmetric operators, whose orthogonal complement is $\ker\mcm$. Unbiasedness is then $\mcm^{-1}\mcm=\mathrm{id}$ on $\mcv$.
\end{proof}

This enlargement is bought at the price of the variance prefactor set by $f=\tfrac{\beta+\tilde\beta-2}{(d-1)(d+2)}$. Equation \eqref{eq:qbeta_half} says how the two are traded. That gap $\beta-\tilde\beta$ which opens the visible space is the numerator of $q_\beta-\tfrac12$, while $\beta+\tilde\beta-2$ is the denominator of both.

\subsection{Variance, sample complexity, and the reality-tuned crossover}
\label{sec:complex_var}
We obtain the single-shot second moment exactly as in \cref{sec:shadow_norm}, now with the third-moment twirl of $\channel^*(\Pi_w)\otimes\Pi_w\otimes\Pi_w$ evaluated for a complex $\Pi_w$ (so that the transposed diagrams $\Omega$ contribute through $\tilde\beta$ rather than collapsing onto $\beta$). Writing $\hat O=\mcm_{\mbo,\channel}^{-1\dagger}(O)$, the estimator variance is
\begin{equation}\label{eq:complex_var}
\Var[\hat o]=\sum_w\mathbb E_{U}\,\bra w\channel(U\rho U^\intercal)\ket w\,\bra w U\hat O U^\intercal\ket w^{2}-\tr(O\rho)^2 ,
\end{equation}
a $\beta,\tilde\beta$-weighted generalization of \eqref{eq:varO} which reduces to it when $\tilde\beta=\beta$ (real basis). It does not reduce to the unitary variance of~\cite{koh2022classical} at $\varsigma=0$ for fixed $d$, since what \eqref{eq:depol_complex} establishes is the convergence of the depolarizing \emph{parameter}, while the second moment matches its unitary counterpart only as $d\to\infty$, coefficient by coefficient and from a direction that depends on the observable (\cref{cor:unitary_limit}). A closed form exists for the fifteen-term third-moment twirl. Beyond $d,\beta,\tilde\beta$ it requires three reality-sensitive invariants of the noisy measurement,
\begin{equation}\label{eq:complex_invariants}
\gamma=\sum_w \tr\!\big[\channel^*(\Pi_w)\big]\,\lvert\braket{w}{w^{*}}\rvert^{2},\quad
\delta=\sum_w \braket{w}{w^{*}}\bra{w^{*}}\channel^*(\Pi_w)\ket{w},
\end{equation}
with $\gamma\in\Reals$ and $\delta\in\Complex$. On a real basis $\gamma=d$ and $\delta=\beta$. We derive the following in \cref{app:complex_third_moment} and verify it numerically below.

\begin{restatable}[Complex-basis second moment]{proposition}{propcomplexvar}
\label{prop:complex_variance}
Let $d\ge3$, let $\channel$ be completely positive and either trace-preserving or unital (so that $\alpha=d$), let $\hat O=\mcm_{\mbo,\channel}^{-1\dagger}(O)$, and write $D=d(d-2)(d-1)(d+2)(d+4)$. Then $\mathbb E[\hat o^2]=\tr[\rho R]$ with the Hermitian operator
\begin{equation}\label{eq:complex_R}
R=\big(s\,\tr[\hat O^2]+\tilde s\,\tr[\hat O\hat O^\intercal]\big)\mathds 1
+A\,\hat O^{2}+B\,\hat O^\intercal\hat O+\bar B\,\hat O\hat O^\intercal+E\,(\hat O^\intercal)^{2},
\end{equation}
whose coefficients are
\begin{align}
A&=\tfrac{2}{D}\big[(d^2{+}d{-}4)\beta-(d{-}4)\tilde\beta-d^2-2d\,\mathrm{Re}\,\delta+2\gamma\big],\qquad
E=A\big|_{\beta\leftrightarrow\tilde\beta},\notag\\
B&=\tfrac{2}{D}\big[-2d(\beta{+}\tilde\beta)+(d^2{+}2d{-}4)\delta+4\bar\delta-(d{+}2)\gamma+4d\big],\label{eq:complex_coeffs}\\
s&=\tfrac{1}{D}\big[d^3{+}2d^2{-}4d-2d(\beta{+}\tilde\beta)-(d{+}2)\gamma+4(\delta{+}\bar\delta)\big],\notag\\
\tilde s&=\tfrac{1}{D}\big[(d^2{+}3d{-}2)\gamma-2d(d{+}2)+8(\beta{+}\tilde\beta)-2(d{+}2)(\delta{+}\bar\delta)\big].\notag
\end{align}
The seminorm is $\lVert\hat O\rVert_{\mathrm{sh}}^2=\max_\sigma\tr[\sigma R]=\lambda_{\max}(R)$, which equals $\lVert R\rVert_\infty$ because $R\succeq0$: $\tr[\sigma R]=\mathbb E[\hat o^2]\ge0$ for every state $\sigma$. Also $\Var[\hat o]=\tr[\rho R]-\tr(O\rho)^2$. On a real basis ($\tilde\beta=\beta,\ \gamma=d,\ \delta=\beta$) with symmetric $O$, \eqref{eq:complex_R} collapses to the real seminorm \eqref{eq:seminorm_general}, where the four operator coefficients coincide and the operator-to-scalar ratio is exactly $4d(\beta-1)/[d(d+3)-4\beta]$. The restriction $d\ge3$ is necessary and not cosmetic, since $D$ carries the factor $(d-2)$ inherited from the Gram determinant \eqref{eq:detfac}, so \eqref{eq:complex_coeffs} is singular at $d=2$. The single-qubit case, which Part~IV needs, is computed separately in \cref{app:d2complex}. There the answer is finite and in fact simpler.
\end{restatable}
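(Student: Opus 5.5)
The plan is to write the single-shot second moment as a third-moment twirl and evaluate it with the order-three orthogonal Weingarten calculus of \cref{app:brauer}. Since $\hat o=\bra wU\hat OU^\intercal\ket w$ for outcome $w$, we have
\[
\mathbb E[\hat o^2]=\sum_w\E_U\tr\!\big[\channel(U\rho U^\intercal)\Pi_w\big]\,\bra wU\hat OU^\intercal\ket w^2 .
\]
Moving the noise onto the projector with the dual $\channel^*$ of \cref{sec:scalars}, and writing everything as one trace on three tensor factors, this becomes
\[
\tr\!\Big[(\rho\otimes\hat O\otimes\hat O)\,\mct^{(3)}_{\mbo(d)}\Big(\textstyle\sum_w\channel^*(\Pi_w)\otimes\Pi_w\otimes\Pi_w\Big)\Big],
\]
up to the transpose bookkeeping $U^\intercal\leftrightarrow U$, which the orthogonal twirl absorbs. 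The twirl is then expanded as $\sum_{ij}\Wg_{ij}\langle x_i,Y\rangle x_j$ over the fifteen Brauer diagrams, using $\Wg=G(d)^{-1}$, which exists for $d\ge3$ by the determinant factorization \eqref{eq:detfac}.

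The first step is to compute the fifteen overlaps $\langle x_i,Y\rangle$, where $Y=\sum_w\channel^*(\Pi_w)\otimes\Pi_w\otimes\Pi_w$. The six permutation diagrams produce traces of products of $\channel^*(\Pi_w)$ with $\Pi_w$ and $\Pi_w$, which give $\alpha$ and $\beta$ by \cref{lem:adjoint}. The nine diagrams containing a cup/cap pair $\om$ produce partial transposes. These bring in $\Pi_w^\intercal=\ketbra{w^*}{w^*}$, and the resulting contractions are exactly $\tilde\beta$, $\gamma=\sum_w\tr[\channel^*(\Pi_w)]\,|\braket{w}{w^*}|^2$, $\delta$ and $\bar\delta$. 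Here $\alpha=d$ is used to eliminate $\sum_w\tr\channel^*(\Pi_w)$.

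The second step is to multiply by $\Wg$. I would group the output diagrams by the operator they produce once $\rho\otimes\hat O\otimes\hat O$ is contracted and the trace over the second and third factors is taken. Contracting with $\hat O\otimes\hat O$ on the last two legs turns each diagram into one of three kinds of term. The first kind is a scalar times $\id$, namely $\tr\hat O^2$ or $\tr\hat O\hat O^\intercal$; products such as $\tr\hat O\cdot\tr\hat O$ do not survive because $\tr\hat O=0$. The second kind is an operator product $\hat O^2$, $\hat O^\intercal\hat O$, $\hat O\hat O^\intercal$ or $(\hat O^\intercal)^2$ acting on the $\rho$ leg. The third kind gives terms linear in $\hat O$ times a trace, and these vanish by tracelessness. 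Collecting coefficients then yields $s,\tilde s,A,B,\bar B,E$, with $D$ as the common denominator coming from $\det G$.

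Hermiticity of $R$ follows from $(\hat O^\intercal\hat O)^\dagger=\bar O\ldots$, or more directly from the fact that $\tr[\sigma R]$ is real for all $\sigma$. The latter pairs $B$ with $\bar B$ and forces $A$ and $E$ to be real. The symmetry $E=A|_{\beta\leftrightarrow\tilde\beta}$ should follow from the transpose-conjugation symmetry of the diagram set, which swaps $\Pi_w$ and $\Pi_w^\intercal$.

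The remaining claims are formal. Positivity $R\succeq0$ follows because $\tr[\sigma R]$ is a second moment and hence non-negative, so $\max_\sigma\tr[\sigma R]=\lambda_{\max}=\norm{R}_\infty$. The variance formula is immediate from unbiasedness (\cref{claim:complex_visible}). For the real-basis check, substitute $\tilde\beta=\beta$, $\gamma=d$ and $\delta=\beta$. For symmetric $O$ we have $\hat O^\intercal=\hat O$, so the four operator terms merge into a single coefficient and the two scalar terms merge as well. The total should then be compared with \eqref{eq:seminorm_general}, using $\hat O=O_0/f$. Singularity at $d=2$ is visible in the factor $(d-2)$ in $D$.

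The main obstacle is the second step: inverting the $15\times15$ Gram matrix symbolically and tracking how each of the 225 Weingarten terms maps onto the six structures. My first attempt would be to exploit the block structure of $G$ under the $S_3$ action on legs, which should reduce the inversion to a few small blocks. I would then fix any sign and conjugation errors against a numerical twirl at $d=4$.
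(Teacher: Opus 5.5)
Your proposal is correct and follows the paper's proof essentially step for step. Both expand the order-three orthogonal twirl over the fifteen Brauer diagrams, reduce the measurement overlaps to $\alpha,\beta,\tilde\beta,\gamma,\delta,\bar\delta$, contract against $\rho\otimes\hat O\otimes\hat O$ to get the six surviving words, and use the real-basis substitution as the check; twirling the measurement tensor instead of the observable tensor, and deducing Hermiticity from the reality of $\tr[\sigma R]$, are harmless variations. The obstacle you anticipate is milder than you fear, because $\Wg_{ij}$ depends only on the loop number $\ell(i,j)$ of \eqref{eq:loops}, taking the values $(d^2+3d-2)/D$, $-(d+2)/D$ and $2/D$ for $\ell=3,2,1$, so each coefficient in \eqref{eq:complex_coeffs} is a short sum along one or two rows of \eqref{eq:loops} weighted by the measurement overlaps of \cref{tab:overlaps}.
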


That $R$ is Hermitian follows from \eqref{eq:complex_coeffs}, in which $A$, $E$, $s$ and $\tilde s$ are real, since $\delta$ enters them only through $\mathrm{Re}\,\delta$ or $\delta+\bar\delta$, and $\hat O^2$ and $(\hat O^\intercal)^2$ are Hermitian, while the remaining pair contributes $B\,\hat O^\intercal\hat O+\bar B\,\hat O\hat O^\intercal$, which is Hermitian because $(\hat O^\intercal\hat O)^\dagger=\hat O\hat O^\intercal$ and the two coefficients are complex conjugates. Hence $\Var[\hat o]=\tr[\rho R]-\tr(O\rho)^2$ is real for every state. Separately, $R$ carries a transpose symmetry, $A\leftrightarrow E$ under $\beta\leftrightarrow\tilde\beta$ and $B\leftrightarrow\bar B$ under $\delta\leftrightarrow\bar\delta$, which expresses the fact that replacing the basis by its complex conjugate exchanges $\hat O$ with $\hat O^\intercal$. We verified \eqref{eq:complex_R} against the exact twirl for Haar-random complex CPTP channels and complex bases at $d=4,\dots,7$, and the real-basis reduction symbolically.

Qualitatively this is a bias--variance crossover controlled by the basis reality. Decreasing $\varsigma$ from $1$ enlarges the visible space from the symmetric subspace toward all of $\mcl(\Complex^d)$, while the variance rises monotonically away from the real-basis value, which is the value carrying the factor-of-two advantage. \cref{fig:complex_noise} shows this for $n=2$ under depolarizing noise ($p=0.85$): the exact variance of a fixed symmetric observable climbs from $3.47$ at the real end to $10.86$ at a fully complex basis.

On that plot the unitary reference does not bound the curve. They cross at $\varsigma=0.40$, and the fully complex value lies a factor $1.85$ above the unitary reference $5.86$, so at $d=4$ a fully complex basis measured with orthogonal unitaries is appreciably worse than unitary shadows. This is consistent with the $\varsigma\to0$ statement accompanying \eqref{eq:reality_fraction}, which is a joint limit $d\to\infty$, and \cref{cor:unitary_limit} makes the approach exact. At $\varsigma=0$ the second moment is again a combination of $\tr(O_0^2)$ and $\tr(\rho O_0^2)$, and the ratio of each coefficient to its unitary counterpart is a rational function of $d$ and $p$, the first being $1+2/d+O(d^{-2})$ and the second $1-2/d+O(d^{-2})$. Both tend to $1$, which is the sense in which unitary shadows are recovered, but from opposite sides. Consequently, the sign of the finite-$d$ deviation is a property of the observable rather than of the protocol. For the symmetric observable plotted here, which is $\tr(O_0^2)$-dominated, the ratio to the unitary variance is $1.85$ at $d=4$, $1.29$ at $d=8$, $1.03$ at $d=64$ and $1.002$ at $d=1024$, approaching $1$ from above, while the ratio to the real-basis value falls from $3.13$ towards $2$, the factor of two surrendered in full. By contrast the rank-one fidelity projector of \cref{sec:ghz}, which weights the $\rho$-term far more heavily, instead gives $1.21$ at $d=4$ and $1-0.70/d+O(d^{-2})$ asymptotically, so from $d=16$ onwards a fully complex basis is marginally better than the unitary reference there, and its ratio to the real-basis value tends to $4/3$ rather than $2$. This is the same class dependence that \cref{sec:comparison} identifies for the factor of two itself.

For depolarizing noise the reality-sensitive invariants are functions of $\alpha_{\mathrm r}$ alone, $\tilde\beta=p\alpha_{\mathrm r}+1-p$, $\gamma=\alpha_{\mathrm r}$ and $\delta=\big(p+\tfrac{1-p}{d}\big)\alpha_{\mathrm r}$, so the variance depends on the measurement basis only through $\varsigma$. So the plotted curve is a function of the reality fraction rather than one path among many, and a structurally unrelated family of bases reproduces it. Finally, the sample-complexity bound retains the form \eqref{eq:sample_complexity} with $(\beta-1)$ replaced by $\tfrac12(d-1)(d+2)f=\tfrac12(\beta+\tilde\beta-2)$, so that at fixed noise the price of the enlarged visible space is exactly the loss of the factor of two.

\begin{figure}[t]
\centering
\includegraphics[width=0.59\linewidth]{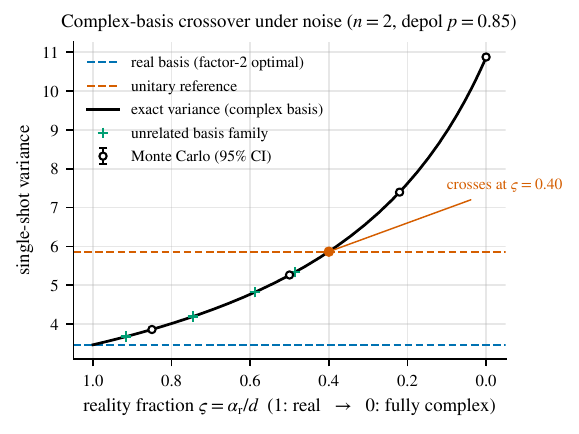}
\caption{Reality-tuned crossover under noise ($n=2$, depolarizing $p=0.85$). The exact single-shot variance of a fixed symmetric observable (solid; open circles are an independent end-to-end simulation with $95\%$ CI) rises monotonically as $\varsigma$ falls from $1$ (real basis, lower dashed) to $0$, from $3.47$ to $10.86$. At this fixed $d=4$ the unitary reference (upper dashed) is not a ceiling: the curve crosses it at $\varsigma=0.40$ and ends a factor $1.85$ above it. This does not contradict the recovery of unitary shadows as $\varsigma\to0$, which is a joint limit with $d\to\infty$; the finite-$d$ deviation and its sign are worked out in the text above. Separately, and for depolarizing noise only, the variance depends on the basis through $\varsigma$ alone --- the crosses are a structurally unrelated family of bases, plotted to show exactly that.}
\label{fig:complex_noise}
\end{figure}

\subsection{Noise models}
\label{sec:complex_examples}
For a complex basis of reality $\alpha_{\mathrm{r}}$ we have $\tilde\beta=p\alpha_{\mathrm{r}}+1-p$ and
\begin{equation}\label{eq:depol_complex}
f(\mcd_{n,p})=\frac{p\,(\alpha_{\mathrm{r}}+d-2)}{(d-1)(d+2)},\qquad
\beta-\tilde\beta=p(d-\alpha_{\mathrm{r}})\ge0 ,
\end{equation}
which for $\alpha_{\mathrm{r}}=d$ (reality fraction $\varsigma=1$) returns \eqref{eq:depol_f}, and for $\varsigma\to0$ with $d\to\infty$ tends to $p/d$, matching the unitary depolarizing parameter $p/(d+1)$ to leading order. The relative gap between the two is exactly
\begin{equation}\label{eq:complex_relgap}
\frac{f(\mcd_{n,p})-f_{\mbu}}{f_{\mbu}}=-\frac{2d}{(d-1)(d+2)}=-\frac2d+O(d^{-2}) ,
\end{equation}
independent of $p$, and every number quoted here is \eqref{eq:complex_relgap} evaluated at the stated $n$ rather than a fit or a measurement: at $\varsigma=0$ and $n=9$ it is $3.9\times10^{-3}$, on a parameter of size $f\approx1.9\times10^{-3}$, so the absolute deviation is $7.6\times10^{-6}$. So the gap closes exponentially in $n$, and the depolarizing parameter of the complex-basis orthogonal protocol is indistinguishable from the unitary one to five decimal places by nine qubits.

\section{Part IV: Complex Basis, Local Orthogonal Ensemble}
\label{sec:partIV}

\begin{setting}[Complex basis, local ensemble]\label{set:IV}
\cref{set:I} with both \ref{it:I_ens} and \ref{it:I_basis} replaced: $\mbo=\mbo(2)^{\otimes n}$, $\channel=\channel_1^{\otimes n}$, and $\mcw=\mcw_1^{\otimes n}$ a product basis with single-qubit reality $\alpha_{\mathrm r,1}\in[0,2]$. The single-qubit invariants of \eqref{eq:complex_invariants} are taken at $d=2$, where \cref{prop:complex_variance} does not apply and \cref{prop:d2complex} is used instead.
\end{setting}

\noindent The final setting (\cref{set:IV}) combines locality (\cref{set:II}) with a complex basis (\cref{set:III}): a product ensemble $\mbo(2)^{\otimes n}$ with a product complex basis and product noise. Everything factorizes, and each factor is the single-qubit complex-basis channel of Part~III.

\subsection{Reconstruction: the complex-basis local channel}
\label{sec:local_complex}

Combining the local factorization of \cref{sec:local} with the complex-basis analysis of \cref{sec:complex} gives the fourth and last setting: a product ensemble $\mbo(2)^{\otimes n}$ with a \emph{product complex} basis $\mcw=\bigotimes_j\mcw_j$ and product noise $\channel=\channel_1^{\otimes n}$.

\begin{restatable}[Complex-basis local channel]{proposition}{proplocalcomplex}
\label{prop:local_complex}
For $\mbo=\mbo(2)^{\otimes n}$, a product basis $\mcw=\bigotimes_j\mcw_j$, and product noise $\channel=\channel_1^{\otimes n}$, the shadow channel factorizes,
\begin{equation}\label{eq:local_complex_channel}
\mcm_{\mbo,\channel,\mcw}=\bigotimes_{j=1}^{n}\mcm_{\mbo(2),\channel_1,\mcw_j},\qquad
\mcm_{\mbo(2),\channel_1,\mcw_j}(A)=\mcd_{1,f_1}\big(q_{\beta_1}A+(1-q_{\beta_1})A^\intercal\big),
\end{equation}
with $f_1=\tfrac{\beta_1+\tilde\beta_1-2}{4}$ and $q_{\beta_1}=\tfrac{3\beta_1-2-\tilde\beta_1}{2(\beta_1+\tilde\beta_1-2)}$ the single-qubit specialization ($d=2$) of \cref{prop:complex_basis}, where $\beta_1,\tilde\beta_1$ are the single-qubit scalars for basis $\mcw_j$. For a complex per-qubit basis ($\beta_1\neq\tilde\beta_1$, i.e.\ $q_{\beta_1}\neq\tfrac12$), each single-qubit factor is invertible on all of $\mcl(\Complex^2)$, so the visible space is the full local operator space and non-symmetric (e.g.\ $Y$-supported) Pauli observables become estimable.
\end{restatable}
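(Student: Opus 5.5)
The plan has three steps: factorize the channel across qubits, compute the single-qubit channel by hand, and then apply \cref{claim:residue} on each factor.

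\emph{Factorization.} Substitute the product structure into \eqref{eq:noisy_channel}. With $U=\bigotimes_jU_j$ (the $U_j$ independent), $\channel=\channel_1^{\otimes n}$, $\Pi_w=\bigotimes_j\Pi_{w_j}$, and a product input $A=\bigotimes_jA_j$, each summand becomes a product of the single-qubit summands:
\[
\bra w\channel(UAU^\intercal)\ket w=\prod_j\bra{w_j}\channel_1(U_jA_jU_j^\intercal)\ket{w_j},
\qquad
U^\intercal\Pi_wU=\bigotimes_jU_j^\intercal\Pi_{w_j}U_j .
\]
The expectation over independent $U_j$ and the sum over $w=(w_1,\dots,w_n)$ then factor, giving $\mcm(\bigotimes_jA_j)=\bigotimes_j\mcm_{\mbo(2),\channel_1,\mcw_j}(A_j)$. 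Since both sides are linear and product operators span $\mcl(\Complex^{2^n})$, this gives \eqref{eq:local_complex_channel}. The same argument works for any product of single-qubit orthogonal $3$-designs, because only second and third moments of each factor enter.

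\emph{Single-qubit factor.} Here we cannot simply set $d=2$ in \cref{prop:complex_basis}. That proof goes through the order-three Weingarten calculus, and the Gram matrix is singular at $d=2$. The channel, however, is only a second-moment object in $U$: it is the twirl of $\channel_1^*(\Pi_w)\otimes\Pi_w$, and the order-two commutant $\mathrm{span}\{\id,\mbs,\om\}$ has an invertible $3\times3$ Gram matrix at every $d\ge2$. So I would rerun the order-two computation behind \cref{prop:complex_basis} and check that none of its denominators vanish at $d=2$. The coefficients in \eqref{eq:complex_channel}--\eqref{eq:Atilde} have denominators $(d-1)(d+2)$ and $d(\beta+\tilde\beta-2)$, both nonzero at $d=2$ when $\beta_1+\tilde\beta_1\neq2$. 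Substituting $d=2$ gives $f_1=(\beta_1+\tilde\beta_1-2)/4$ and $q_{\beta_1}=(3\beta_1-2-\tilde\beta_1)/\big(2(\beta_1+\tilde\beta_1-2)\big)$.

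As an independent check I would verify the formula directly on the Pauli basis. Using $\mbo(2)$ (rotations and reflections), twirl $\id, X, Z, Y$ and confirm:
\begin{itemize}
\item $X$ and $Z$ are scaled by $f_1$;
\item $Y$ is scaled by $f_1(2q_{\beta_1}-1)$;
\item $\id$ is fixed.
\end{itemize}
The structural reason is that $\mbo(2)$ acts on $(X,Z)$ as a rotation and on $Y$ by the sign $\det U$. The main obstacle is exactly this $d=2$ specialization, in particular justifying that the second-moment derivation does not inherit the $(d-2)$ singularity. The Pauli check settles it concretely if the general argument is in doubt.

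\emph{Invertibility and visible space.} Each factor has the form $\mcd_{1,f_1}\circ\Psi_{q_{\beta_1}}$. By \eqref{eq:qbeta_half} at $d=2$, $q_{\beta_1}\neq\tfrac12$ exactly when $\beta_1\neq\tilde\beta_1$. Then \cref{claim:residue}\ref{it:res_inv} makes $\Psi_{q_{\beta_1}}$ invertible on $\mcl(\Complex^2)$, and $f_1\neq0$ makes $\mcd_{1,f_1}$ invertible. A tensor product of invertible maps is invertible, so $\mcm$ is invertible on all of $\mcl(\Complex^{2^n})$. Hence $\ker\mcm=0$, $\mcv=\im\mcm^\dagger$ is the full space, and $Y$-supported Pauli strings are visible.
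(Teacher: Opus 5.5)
Your proof is correct and follows the paper's route: factorize the integrand of \eqref{eq:noisy_channel} across qubits, specialize \cref{prop:complex_basis} to $d=2$, and invert each factor through $\Psi_{q_{\beta_1}}^{-1}$ and $\mcd_{1,1/f_1}$. One correction: your statement that the proof of \cref{prop:complex_basis} goes through the order-three Weingarten calculus is mistaken, since that proof (\cref{app:complex_basis_proof}) uses only the order-two system, whose Weingarten matrix \eqref{eq:wg2} has denominator $d(d-1)(d+2)$, so the paper's direct $d=2$ specialization is already legitimate and the $(d-2)$ singularity you worried about (which belongs to the third moment of \cref{prop:complex_variance}) never enters.
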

\begin{proof}
Factorization is immediate: the Haar measure on $\mbo(2)^{\otimes n}$, the product basis, and the product channel all factor across qubits, so the integrand of \eqref{eq:noisy_channel} factors and $\mcm=\bigotimes_j\mcm_{\mbo(2),\channel_1,\mcw_j}$; each factor is \cref{prop:complex_basis} at $d=2$. Each single-qubit map $A\mapsto q_{\beta_1}A+(1-q_{\beta_1})A^\intercal$ has inverse $X\mapsto\tfrac{q_{\beta_1}X-(1-q_{\beta_1})X^\intercal}{2q_{\beta_1}-1}$ for $q_{\beta_1}\neq\tfrac12$, so $\mcm_{\mbo(2),\channel_1,\mcw_j}$ is invertible whenever $f_1\neq0$ and $q_{\beta_1}\neq\tfrac12$. A real basis has $\tilde\beta_1=\beta_1$, hence $q_{\beta_1}=\tfrac12$ by \eqref{eq:Atilde}, and the map reduces to $A\mapsto A_{\mathrm{sym}}$, so only the symmetric $\{\id,X,Z\}$ block survives, since $(Y)_{\mathrm{sym}}=\tfrac12(Y+Y^\intercal)=0$.
\end{proof}
This completes the parallel treatment of the four measurement settings.

\subsection{Seminorm and sample complexity}
\label{sec:local_complex_seminorm}
A tensor product of invertible single-qubit maps, \eqref{eq:local_complex_channel} makes the operator whose largest eigenvalue \eqref{eq:seminorm_def} computes again a tensor product of per-qubit second moments, and \cref{lem:tensormax} lets the maximization be performed qubit by qubit. Hence for a Pauli string $P=\bigotimes_j P_j$,
\begin{equation}\label{eq:local_complex_seminorm}
\norm{P}^2_{\shadow,\mbo,\channel,\mcw}=\prod_{j=1}^{n}c_1\!\big(P_j;\mcw_j\big),
\end{equation}
where $c_1(P_j;\mcw_j)$ is the single-qubit factor. That factor is a third-moment quantity, so it is not determined by \cref{prop:local_complex}, which fixes only the channel, and it cannot be read off from \cref{prop:complex_variance} either, whose coefficients are singular at $d=2$. We therefore derive it directly in \cref{app:d2complex}, where two things are proved. First, at $d=2$ the contracted third moment $R$ is a multiple of the identity (\cref{prop:d2complex}), so $\mcm_{\mbo(2),\channel_1,\mcw_j}^{-1,\dagger}$ maps a Pauli to a scalar multiple of itself, and every quadratic word in $\{P,P^\intercal\}$ is $\pm\id$ because $P^2=\id$ and $P^\intercal=\pm P$. So the maximization over states in \eqref{eq:seminorm_def} is vacuous, and each $c_1$ is an exact scalar rather than an optimized bound. Second, that scalar is
\begin{equation}\label{eq:c1_main}
c_1(P_j;\mcw_j)=\frac{\alpha_{\mathrm r}^{(j)}}{4f_1^2}\quad\text{for }P_j\in\{X,Z\},
\qquad
c_1(Y;\mcw_j)=\frac{2-\alpha_{\mathrm r}^{(j)}}{2f_1^2\,(2q_{\beta_1}-1)^2},
\end{equation}
with $\alpha_{\mathrm r}^{(j)}$ the reality of the $j$th basis, where the identity qubits contribute $c_1(\id)=1$.

What differs from Part~II is that for a complex per-qubit basis all single-qubit Paulis have a finite factor, so $Y$-supported strings---invisible to the real local protocol---become estimable. For the real (computational) basis, $\alpha_{\mathrm r}^{(j)}=2$ and one recovers Part~II exactly: $c_1(X)=c_1(Z)=(2f_1^2)^{-1}$ and $c_1(Y)=\infty$ (invisible). Under single-qubit depolarizing noise, the two factors are fully explicit: $c_1(X)=c_1(Z)=4/(p^2\alpha_{\mathrm r})$ and $c_1(Y)=2/(p^2(2-\alpha_{\mathrm r}))$ by \eqref{eq:c1_depol}. For instance at $p=0.9$ on a basis of reality $\alpha_{\mathrm r}=1.6775$ they are $c_1(X)=c_1(Z)=2.94$ and $c_1(Y)=7.66$, against the real-basis values $(2f_1^2)^{-1}=2.47$ (with $f_1=p/2=0.45$) and $\infty$. These are exact, and are reproduced by an independent evaluation of the commutant twirl. Thus the complex-local protocol trades a modestly larger per-qubit factor on $X,Z$ for a finite, and hence estimable, factor on $Y$, so that the sample complexity remains exponential in the weight, $\prod_j c_1(P_j;\mcw_j)$, while the visible Pauli alphabet is enlarged from $\{\id,X,Z\}$ to all of $\{\id,X,Y,Z\}$. Note the shape of the trade-off in \eqref{eq:c1_depol}, where the symmetric factors diverge as $\alpha_{\mathrm r}\to0$ and the $Y$ factor as $\alpha_{\mathrm r}\to2$, so no single per-qubit basis is uniformly best and the reality should be tuned to the target Pauli support.

\subsection{The enlarged local visible space}
\label{sec:local_complex_numerics}
At $n=2$ the shadow superoperator has rank $16$ (full) for a complex product basis against $9=3^2$, the $\{\id,X,Z\}^{\otimes2}$ block, for a real one, and $Y$ is recovered unbiasedly in the complex case while lying in the kernel in the real case. This is the local analogue of the enlargement of Part~III.

\section{Many-body case studies: what controls the advantage}
\label{sec:casestudy}

\cref{prop:criterion} decides, for a given state and observable, how much of the factor of two is actually realized, but as stated it is a statement about a scalar $\kappa$ rather than about physics. Here we make it usable (\cref{sec:crit}), read off three qualitatively different regimes (\cref{sec:regimes}), and instantiate them in three many-body settings: GHZ certification (\cref{sec:ghz}), transverse-field Ising energy estimation (\cref{sec:tfim}), and the scalar spin chirality (\cref{sec:chir}), the last of which is invisible to the real protocol and motivates the complex bases of Parts~III--IV. We evaluate all benchmarks in this section in closed form or by exact commutant twirls, so they carry no sampling error; an independent Monte Carlo cross-check is reported in \cref{sec:tfim}.

\subsection{Reading the criterion}
\label{sec:crit}

Two properties make the parameter \eqref{eq:kappa_def} of \cref{prop:criterion} usable. First, $\kappa$ is controlled by the observable's \emph{norm profile}. Bounding $r=\tr(\rho O_0^2)\le\norm{O_0^2}_{\mathrm{sp}}=\norm{O_0}_\infty^2$ gives
\begin{equation}\label{eq:kappa_bound}
\kappa\;\ge\;\frac{d(d+3)-4\beta}{4d(\beta-1)}\;Q,\qquad
Q\;=\;\frac{\norm{O_0}_2^2}{\norm{O_0}_\infty^2}\in[1,d],
\end{equation}
which for depolarizing noise ($\beta-1=p(d-1)$) reads $\kappa\gtrsim Q/(4p)$ at large $d$. So the factor of two is a large-$\norm{O}_2/\norm{O}_\infty$ phenomenon in a precise sense, in that $Q$ is the quantity that must be large, and it must be large compared with $4p$, not with $1$. Second, \eqref{eq:kappa_bound} holds with equality precisely when $\rho$ is supported on the top eigenspace of $O_0^2$, since the only inequality used is $r=\tr(\rho O_0^2)\le\norm{O_0^2}_{\mathrm{sp}}$. That is the generic situation in many-body physics, where one estimates a Hamiltonian in its own eigenstate, and the same configuration saturates the bound of \cref{cor:var_ratio}, where $\rho$ an eigenstate of $O_0$ gives $m^2=r$ exactly.

\subsection{Three regimes}
\label{sec:regimes}

\cref{prop:criterion} sorts observables into three behaviours, all of which occur naturally.

\paragraph{(i) Rank-one observables: the ratio saturates below two.} For a pure-state projector $O=\ketbra{\psi}{\psi}$ evaluated in that same state, $\rho=O$, we compute $t=\tr(O_0^2)=1-1/d$ and, from $O_0^2=(1-2/d)O+\id/d^2$, $r=\tr(\rho O_0^2)=(1-1/d)^2$, so $Q=(1-1/d)^{-1}\to1$, and the norm profile is as small as it can be. Substituting in \eqref{eq:kappa_def} with $\beta=1+p(d-1)$,
\begin{equation}\label{eq:kappa_rank1}
\kappa_{\text{rank-1}}=\frac{d^2+3d-4-4p(d-1)}{4p(d-1)^2}\;\xrightarrow[d\to\infty]{}\;\frac{1}{4p},
\end{equation}
a constant. So the second-moment ratio does not approach two but saturates at the finite ceiling
\begin{equation}\label{eq:ceiling}
\frac{\mathbb E_{\mbu}[\hat o^2]}{\mathbb E_{\mbo}[\hat o^2]}\ \longrightarrow\ \frac{2\cdot\frac1{4p}+1}{\frac1{4p}+1}=\frac{2(1+2p)}{1+4p},
\end{equation}
equal to $1.217$ at $p=0.9$ and $1.2$ in the noiseless limit $p=1$. By \cref{cor:var_ratio} the variance ratio for such a target converges to a strictly larger value, since $\kappa$ stays bounded and $m^2/\E_{\mbo}[\hat o^2]$ does not vanish, and both remain below two. Note the direction of the noise dependence: \eqref{eq:ceiling} increases as $p$ decreases, reaching $1.455$ at $p=0.3$ and $1.714$ at $p=0.1$. Noise inflates the $O_0$-only term relative to the $\rho$-weighted one, pushing even rank-one observables toward the factor-of-two regime, and the advantage of real shadows is thus never degraded by noise, and for low-rank targets it is mildly enhanced.

\paragraph{(ii) Extensive Pauli sums: the ratio approaches two exponentially fast.} Let $O_0=\sum_\mu c_\mu P_\mu$ be a sum of $L$ distinct non-identity Pauli strings, as every local Hamiltonian is. Pauli orthogonality $\tr(P_\mu P_\nu)=d\,\delta_{\mu\nu}$ gives exactly
\begin{equation}\label{eq:pauli_2norm}
\norm{O_0}_2^2=d\sum_\mu c_\mu^2 ,
\end{equation}
whereas extensivity gives only $\norm{O_0}_\infty=\Theta(n)$. Hence
\begin{equation}\label{eq:Q_extensive}
Q=\frac{d\sum_\mu c_\mu^2}{\norm{O_0}_\infty^2}=\Theta\!\Big(\frac{d}{n}\Big),
\end{equation}
since a local Hamiltonian has $\sum_\mu c_\mu^2=\Theta(n)$, being a sum of $O(n)$ terms of bounded coefficient, and $\norm{O_0}_\infty=\Theta(n)$, bounded above by the triangle inequality and below by the extensivity of the ground energy. Both halves are needed; the coefficient sum alone fixes only the numerator. So the norm profile of an extensive observable is exponentially large in $n$, $\kappa=\Theta(d/(pn))$, and by \eqref{eq:ratio_master} the deficit from the factor of two closes exponentially,
\begin{equation}\label{eq:deficit}
2-\frac{\mathbb E_{\mbu}[\hat o^2]}{\mathbb E_{\mbo}[\hat o^2]}=\frac{1}{1+\kappa}+O(d^{-1})=\Theta\!\Big(\frac{p\,n}{d}\Big).
\end{equation}
This is the regime relevant to essentially all Hamiltonian-estimation workloads.

\paragraph{(iii) Dense observables: immediately at two.} For a dense symmetric $O_0$ with a flat spectrum, $\norm{O_0}_2^2=\Theta(d)\norm{O_0}_\infty^2$, so $Q=\Theta(d)$, $\kappa=\Theta(d/p)$ and the ratio is within $O(1/d)$ of two already at modest $n$.

\subsection{GHZ certification revisited}
\label{sec:ghz}

Consider certifying the $n$-qubit GHZ state $\ket{\mathrm{GHZ}}=(\ket0^{\otimes n}+\ket1^{\otimes n})/\sqrt2$ under global depolarizing noise ($p=0.9$) with the fidelity observable $O=\ketbra{\mathrm{GHZ}}{\mathrm{GHZ}}$. This is regime (i) in its purest form, with a projector that is symmetric, hence visible, but whose norm profile is minimal. Evaluating \eqref{eq:kappa_def} exactly gives $\kappa=0.407,0.304,0.284,0.279,0.278$ for $n=2,4,6,8,10$, converging to the predicted $1/(4p)=0.2778$ of \eqref{eq:kappa_rank1}, and the exact second-moment ratios are $1.404,1.288,1.238,1.223,1.219$, converging to the second-moment ceiling $2(1+2p)/(1+4p)=1.217$ of \eqref{eq:ceiling} rather than to two. Variance ratios are larger, $1.70,1.47,1.37,1.34,1.34$, and converge to a different value, because $\rho$ is here an eigenstate of $O_0$, so $m^2=r$ and the bound of \cref{cor:var_ratio} is saturated at $x=0.353$, giving $(R-x)/(1-x)=1.338$ from $R=1.219$. Rank-one targets are exactly where the two ceilings separate, and \eqref{eq:ceiling} bounds the second moment, not the variance. Real shadows still reduce the sample count by roughly a quarter to a third for GHZ fidelity, but the clean factor of two is provably unavailable for a rank-one target, since by \eqref{eq:ceiling} no choice of $n$ can produce it.

\subsection{Transverse-field Ising energy estimation}
\label{sec:tfim}

For the complementary regime, take the critical transverse-field Ising chain
\begin{equation}\label{eq:tfim}
H=-J\sum_{i=1}^{n-1}Z_iZ_{i+1}-h\sum_{i=1}^{n}X_i,\qquad h=J=1,
\end{equation}
and estimate $\tr(H\rho)$ in its own ground state $\rho=\ketbra{\psi_0}{\psi_0}$. Both $X$ and $Z$ are symmetric, so $H=H^\intercal$ lies in the visible space of \cref{prop:global_noisy_channel} and no complex basis is needed; $H$ is traceless, so $O_0=H$. This is regime (ii), and the constants can be pinned down. There are $L=2n-1$ Pauli terms with unit coefficients, so \eqref{eq:pauli_2norm} gives $\norm{H}_2^2=d(2n-1)$ exactly, while criticality fixes the ground-state energy density $\norm{H}_\infty/n\to4/\pi$, whence
\begin{equation}\label{eq:Q_tfim}
Q=\frac{d(2n-1)}{\norm{H}_\infty^2}\;\longrightarrow\;\frac{2d}{(4/\pi)^2n}=\frac{\pi^2}{8}\frac{d}{n},
\qquad
\kappa\;\simeq\;\frac{Q}{4p}\;\simeq\;\frac{\pi^2}{32}\frac{d}{p\,n}.
\end{equation}
Exact evaluation is consistent with both, with a visible $1/n$ finite-size correction: $Q\cdot n/d=1.2399,1.2396,1.2394,1.2391$ for $n=8,9,10,11$, approaching $\pi^2/8=1.2337$ from above, and at $p=0.9$, $n=10$ the exact $\kappa=35.27$ against $\pi^2d/(32pn)=35.09$. Second-moment ratios come out as $1.538,1.648,1.808,1.923,1.974$ for $n=2,4,6,8,10$ and the variance ratios $1.82,1.82,1.90,1.96,1.99$, so the factor of two is essentially saturated by ten qubits, with the deficit closing like $n/d$ as predicted by \eqref{eq:deficit}. \cref{fig:ghz_case_study} shows both families collapsing onto the single master curve \eqref{eq:ratio_master}, which is the content of \cref{prop:criterion}.

As an independent end-to-end check of the analytics, the numbers above are closed-form and carry no sampling error. We also simulated the full protocol for \eqref{eq:tfim}, drawing $U\sim\mbo(d)$, applying the depolarizing channel, sampling an outcome, and inverting. Averaging $M=40$ independent trajectories of $2.5\times10^5$ shots each ($10^7$ shots), the empirical mean of $\hat o$ agrees with $\tr(H\rho)$ within $1.3$ standard errors and the empirical variance matches \eqref{eq:varO} to relative deviations of $1.2\times10^{-4}$ ($n=3$) and $2.4\times10^{-4}$ ($n=4$).

\begin{figure}[t]
\centering
\includegraphics[width=\linewidth]{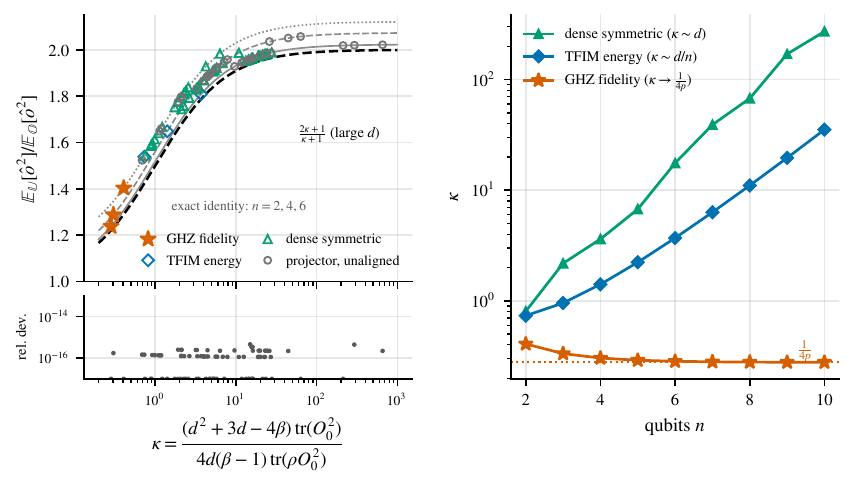}
\caption{The advantage is governed by the single parameter $\kappa$ of \eqref{eq:kappa_def}. \emph{Left:} exact second-moment ratios for four instance families at $n=2,4,6$ and depolarizing $p=0.9$, spanning three decades in $\kappa$. Grey curves are the exact identity \eqref{eq:ratio_identity} for each dimension, while the dashed curve is its large-$d$ limit, which is not an upper envelope here because $\varrho_L>2$ at this noise strength, by \eqref{eq:betastar}. The strip below shows each instance's deviation from the identity for its own dimension. \emph{Right:} $\kappa$ saturates at $1/(4p)$ for a rank-one projector but grows like $d/n$ for an extensive Hamiltonian and like $d$ for a dense observable.}
\label{fig:ghz_case_study}
\end{figure}

\subsection{Chirality: antisymmetric observables and time reversal}
\label{sec:chir}

Our last case study is an observable the real protocol cannot estimate at all, which locates precisely where Parts~III--IV become necessary. Consider the scalar spin chirality on three sites,
\begin{equation}\label{eq:chirality}
\chi_{ijk}=\mathbf S_i\cdot(\mathbf S_j\times\mathbf S_k)=\tfrac18\sum_{abc}\epsilon_{abc}\,\sigma_i^a\sigma_j^b\sigma_k^c ,
\end{equation}
the order parameter of chiral phases and the observable underlying chirality-based qubit encodings in frustrated magnets. Every term of \eqref{eq:chirality} contains each of $X,Y,Z$ exactly once, hence exactly one factor of $Y$; since $P^\intercal=(-1)^{\#Y}P$ for a Pauli string, every term is antisymmetric and
\begin{equation}\label{eq:chi_antisym}
\chi_{ijk}^\intercal=-\chi_{ijk},\qquad\text{equivalently}\qquad (\chi_{ijk})_{\mathrm{sym}}=0 .
\end{equation}
By \cref{prop:global_noisy_channel} the noisy real shadow channel annihilates $\chi$ for every CPTP $\channel$, so the chirality lies in the kernel rather than in a poorly conditioned direction, and no amount of data and no noise characterization can recover it from a real-basis protocol. This has a physical reading. Combining Hermiticity, $\chi=\chi^\dagger$ with $\dagger$ the Hermitian adjoint, with \eqref{eq:chi_antisym} gives $\bar\chi=-\chi$, so $\chi$ has purely imaginary entries. (Here $\bar{\,\cdot\,}$ is entrywise complex conjugation in the computational basis, as in \cref{sec:observables}; we avoid the star, which is reserved throughout for the Hilbert--Schmidt adjoint of a channel.) Hence for any real $\rho$,
\begin{equation}\label{eq:chi_vanishes}
\overline{\tr(\chi\rho)}=\tr(\bar\chi\,\bar\rho)=\tr(-\chi\rho)=-\tr(\chi\rho),
\end{equation}
so $\tr(\chi\rho)$ is purely imaginary; but $\chi$ and $\rho$ are Hermitian, so $\tr(\chi\rho)\in\Reals$, and therefore $\tr(\chi\rho)=0$. Observables the real protocol cannot see are those whose expectation values vanish unless time-reversal symmetry is broken.

We illustrate this with the $J_1$--$J_2$ Heisenberg triangle $H=J_1\sum_i\mathbf S_i\cdot\mathbf S_{i+1}+J_2\sum_i\mathbf S_i\cdot\mathbf S_{i+2}$ ($J_1=1$, $J_2=0.5$, periodic, $n=3$). Because $X\otimes X$, $Y\otimes Y$ and $Z\otimes Z$ are all symmetric, $H$ is real, and its ground state may be chosen real, and the exact chirality expectation in that state vanishes identically, as the argument above requires. Adding a chiral (Dzyaloshinskii--Moriya-like) term, $H_\lambda=H+\lambda\chi_{012}$ with $\lambda=0.8$, breaks time reversal and yields a ground state with $\tr(\chi\rho)=0.4330$. Estimating this value under depolarizing noise ($p=0.9$):
\begin{itemize}[leftmargin=*]
\item \emph{Real basis.} The exact estimator mean is $0$, i.e.\ the bias equals the entire signal, and $\norm{\mcm_{\mbo,\channel}(\chi)}_2=6\times10^{-17}$ confirms kernel membership. Here the shadow superoperator has rank $36=d(d+1)/2$ out of $d^2=64$, the symmetric block of \cref{prop:global_noisy_channel}.
\item \emph{Complex basis.} Following \cref{prop:complex_basis}, a basis of reality $\alpha_{\mathrm r}<d$ makes the superoperator full rank ($64$ of $64$) and the estimator unbiased to $\lesssim3\times10^{-15}$. Single-shot variances of $\hat\chi$ are $9.40$, $2.07$ and $0.98$ at $\alpha_{\mathrm r}/d=0.900$, $0.577$ and $0.178$, so the more the basis departs from real, the better conditioned the previously invisible direction becomes.
\item \emph{The price.} The same complex basis raises the variance of the \emph{visible} energy observable from $6.55$ (real basis) to $9.03$ at $\alpha_{\mathrm r}/d=0.577$, a $38\%$ increase, the loss of the factor of two quantified in \cref{sec:complex_var}.
\end{itemize}
These are exact statements about the twirl; to confirm that the \emph{sampled} protocol realizes them, we also ran the full estimator end to end (draw $U\sim\mbo(d)$, apply $\channel$, sample an outcome in the complex basis $\mcw$, invert) for $2\times10^6$ shots at each of five bases. Monte Carlo mean and variance of $\hat\chi$ agree with the exact values to within $1.3$ and $1.0$ confidence intervals, respectively, at a relative precision of $\sim0.2\%$ on the variance. \cref{fig:chirality} displays both effects, with the Monte Carlo overlaid. In practice, then, use the real ensemble with a real basis for energies, structure factors and any $\{\id,X,Z\}$-supported correlator, where the factor of two and the local $(3/2)^k$ apply, and to switch to a complex basis only for the $Y$-supported, time-reversal-odd observables, currents and chiralities, paying a bounded variance penalty for access to a strictly larger observable algebra.

\begin{figure}[t]
\centering
\includegraphics[width=0.878\linewidth]{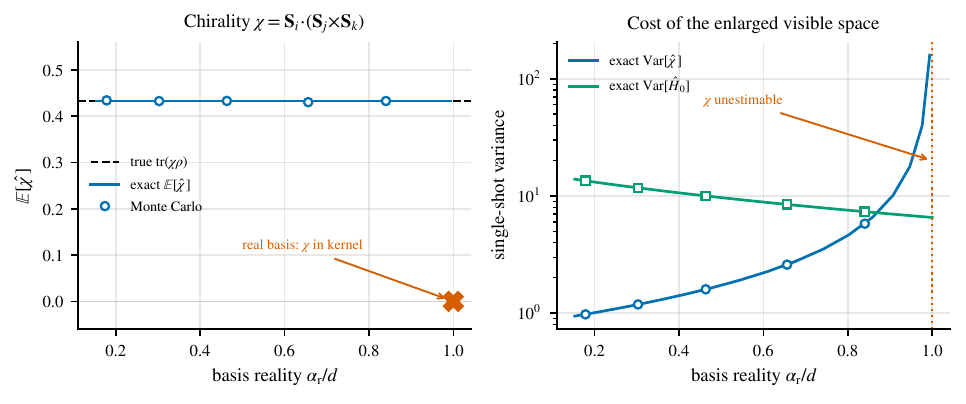}
\caption{An observable outside the real visible space ($n=3$, depolarizing $p=0.9$). Curves are exact; open symbols are an independent end-to-end simulation with $95\%$ CI. \emph{Left:} the scalar chirality \eqref{eq:chirality} in a time-reversal-broken $J_1$--$J_2$ ground state. A real basis is maximally biased---$\chi$ lies in the kernel, so $\hat\chi\equiv0$, while every complex basis is unbiased. \emph{Right:} the trade-off. $\Var[\hat\chi]$ (circles) is finite only for a complex basis and improves as the basis becomes less real, while $\Var[\hat H_0]$ for the visible energy (squares) degrades.}
\label{fig:chirality}
\end{figure}

\section{Design-independence: all orthogonal 3-designs coincide}
\label{sec:optimality}
Our construction rests on the orthogonal $3$-design property, which raises the question of whether its consequences are specific to the Haar measure on $\mbo(d)$ or shared by every $3$-design. They are shared, and exactly rather than approximately. We prove this, and distinguish it from optimality, which is a different and weaker statement.

\subsection{The equivalence}
\label{sec:opt_thm}
Fix a measurement basis and a known channel $\channel$. Every quantity in this paper, the shadow channel \eqref{eq:global_depol}, the estimator, the variance \eqref{eq:varO}, and the seminorm \eqref{eq:seminorm_exact}, is built from the integrals \eqref{eq:noisy_channel} and \eqref{eq:complex_var}, which involve the ensemble $\nu$ only through its moments of order $\le3$. This yields the exact equivalence on which the use of real Cliffords rests.
\begin{proposition}[Design-independence]
\label{prop:design_indep}
Let $\nu,\nu'$ be ensembles on $\mbo(d)$, and fix a state, an observable, and a known channel $\channel$.
\begin{enumerate}[label=(\roman*),leftmargin=*]
\item\label{it:di_second} If $\nu$ and $\nu'$ agree as orthogonal $2$-designs, the noisy shadow channel, the classical shadow, the estimator and its bias are identical for the two.
\item\label{it:di_third} If they agree as orthogonal $k$-designs for every $k\le3$ --- in particular if both are orthogonal $3$-designs, for instance the Haar measure on $\mbo(d)$ and the real Cliffords $\mcc_n\cap\mbo(d)$ --- then the single-shot variance and the shadow seminorm agree as well.
\end{enumerate}
The split is not an artefact of the proof: \cref{sec:opt_thm} exhibits two ensembles satisfying \ref{it:di_second} but not \ref{it:di_third} whose variances genuinely differ.
\end{proposition}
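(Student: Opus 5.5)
The plan is to show that each quantity in the statement is a linear functional of a moment operator $\E_{U\sim\nu}U^{\otimes k}(\cdot)U^{\intercal\otimes k}$ with $k\le3$. Equality of those moment operators then forces equality of the quantities. For \ref{it:di_second} I start from \eqref{eq:noisy_channel}, using the dual $\channel^*$ of \cref{sec:scalars}:
\[
\mcm_{\nu,\channel}(A)=\sum_b\E_{U\sim\nu}\tr\!\big[\channel^*(\Pi_b)\,UAU^\intercal\big]\,U^\intercal\Pi_bU
=\sum_b\tr_1\!\Big[(A^\intercal\otimes\id)\,\E_{U\sim\nu}(U^{\intercal})^{\otimes2}\big(\channel^*(\Pi_b)^\intercal\otimes\Pi_b\big)U^{\otimes2}\Big],
\]
up to routine placement of transposes. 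This is a contraction of the $k=2$ twirl of the fixed operator $\channel^*(\Pi_b)^{(\intercal)}\otimes\Pi_b$. If $\nu$ and $\nu'$ agree on order-$2$ moments, the two channels coincide as superoperators. So do their restricted inverses on $\mcv=\im\mcm^\dagger$, extended by zero as in \cref{lem:reduction}. So does the classical shadow map $\sigma\mapsto\mcm^{-1}(\sigma)$, applied pointwise to each snapshot. The estimator's mean $\E[\hat o]=\langle\mcm^{-1,\dagger}(O),\mcm(\rho)\rangle$, and hence its bias, is then a function of $\mcm$ alone. I will also note that agreement of $k=2$ moments implies agreement at $k=1$, obtained by partial trace, so no separate first-order hypothesis is needed.

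For \ref{it:di_third} I write the single-shot second moment as in \eqref{eq:complex_var}:
\[
\E[\hat o^2]=\sum_w\E_{U}\tr\!\big[\channel^*(\Pi_w)\,U\rho U^\intercal\big]\,\bra wU\hat OU^\intercal\ket w^2,
\]
where $\hat O=\mcm^{-1,\dagger}(O)$ is already ensemble-independent by \ref{it:di_second}. This is $\tr\big[(\rho\otimes\hat O\otimes\hat O)\,\E_U (U^{\intercal})^{\otimes3}(\channel^*(\Pi_w)\otimes\Pi_w\otimes\Pi_w)U^{\otimes3}\big]$, a linear functional of the order-$3$ twirl. It is therefore identical for $\nu,\nu'$, and so is the variance, since the mean is fixed by \ref{it:di_second}. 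The seminorm \eqref{eq:seminorm_def} is the maximum over $\sigma$ of the same expression with $\rho\mapsto\sigma$. The objective agrees pointwise in $\sigma$, so the maxima agree. To cover Haar and $\mcc_n\cap\mbo(d)$ I will invoke the orthogonal $3$-design property \cite{hashagen2018real}. One care point is that the twirl of the paper is written $U^{\otimes k}(\cdot)U^{\intercal\otimes k}$, while the integrand above involves $U^{\intercal\otimes k}(\cdot)U^{\otimes k}$. I will handle this by noting that $U\mapsto U^\intercal=U^{-1}$ preserves the design condition, because both are group averages or Haar. If $\nu$ is not inverse-closed, I will instead define design agreement directly on both orientations.

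The step I expect to be the main obstacle is the closing sentence: exhibiting two ensembles that agree as $2$-designs but differ at order $3$, with genuinely different variances. My first try is to compare Haar on $\mbo(d)$ with a real orthogonal $2$-design that fails to be a $3$-design. I would take $d=2$ with the dihedral subgroup generated by a rotation by $\pi/4$. Its order-$2$ commutant equals that of $\mbo(2)$, while its order-$3$ commutant is strictly larger, because extra invariants survive for $\mathbb Z_8$ rotations. I would then evaluate the single-qubit second moment explicitly for $O=Z$ and a state off the $Z$ axis, and compare it against the Haar value $R_j=\tfrac12\id$. If that group turns out to be a $3$-design, I will fall back on the $\pi/2$ dihedral group (the real Clifford group, which is a $3$-design in the orthogonal sense) and instead perturb by mixing in a non-uniform weight that preserves second moments.
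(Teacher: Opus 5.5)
Your arguments for parts (i) and (ii) are correct and follow the paper's proof. Each listed quantity is a linear functional of the order-$2$ or order-$3$ twirl, the seminorm being a pointwise maximum of such functionals, and $\hat O$ is already fixed by part (i). The orientation issue needs neither inverse-closure nor a two-sided definition. The map $A\mapsto\E_\nu U^{\otimes k}AU^{\intercal\otimes k}$ is represented by $\E_\nu U^{\otimes 2k}$, and the other orientation by its transpose, so agreement in one orientation gives agreement in the other.

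The gap is the closing clause: your plan would not produce the separating pair. First, the groups are misidentified.
\begin{itemize}
\item The dihedral group generated by a rotation by $\pi/4$ (order $16$) is exactly $\mcc_1\cap\mbo(2)$. The character count $\dim\comm(\Gamma,k)=|\Gamma|^{-1}\sum_{U\in\Gamma}(\tr U)^{2k}$ gives $3$ and $10$ at $k=2,3$, which are the $\mbo(2)$ values, so it is a $3$-design. In general $D_m$ is a $k$-design iff $m\nmid 2j$ for every $j\le k$, so no extra invariants survive for eightfold rotations at order $3$.
\item Your fallback, the $\pi/2$ dihedral group, is the signed real Pauli group $\{\pm\id,\pm X,\pm Z,\pm XZ\}$. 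Its count at $k=2$ is $4\neq3$, so it is not even a $2$-design, and it is not the real Clifford group.
\end{itemize}

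Second, and decisively, no ensemble at $d=2$ can separate the variances, reweighted or not. For traceless $Y$ on $\Complex^2$ one has $\bra{w^\perp}Y\ket{w^\perp}=-\bra wY\ket w$. Hence $\bra wU\hat O_0U^\intercal\ket w^2$ is the same for both outcomes. For trace-preserving $\channel$ the outcome weights $\bra w\channel(U\rho U^\intercal)\ket w$ sum to $1$. The second moment of the traceless part is therefore $\E_U\bra{w_1}U\hat O_0U^\intercal\ket{w_1}^2$, which is an order-$2$ quantity and does not depend on $\rho$.

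This is why the Haar value $R_j=\tfrac12\id$ you meant to compare against is state-independent. It is also why the paper finds that $D_3$ and $D_6$, which are $2$-designs but not $3$-designs, reproduce the $3$-design variance exactly.

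A witness has to live at $d\ge3$. The paper's is at $d=4$: the index-$4$ subgroup $\Gamma_{288}$ of $\mcc_2\cap\mbo(4)$, whose commutant dimensions are $3$ and $21$ (against $3$ and $15$). For a specific symmetric observable and a diagonal state, its second moment under depolarizing noise is $20.6\%$ below the $3$-design value. On Pauli strings the two agree exactly, so the witness must not be a Pauli string either.
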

\begin{proof}
Each of the listed quantities is, by construction, a function of the ensemble only through the integrals \eqref{eq:noisy_channel} and \eqref{eq:complex_var}, and the two integrals use different moments. The shadow channel and the classical shadow are built from the second moment $\E_{U\sim\nu}[U^{\otimes2}(\cdot)U^{\intercal\otimes2}]$ alone; the estimator is built from them, and its bias is a difference of two such quantities, so all four agree as soon as the second moments do, which is \ref{it:di_second}. The single-shot variance and the seminorm are built from the third moment, which agrees only under the stronger hypothesis, giving \ref{it:di_third}. In each case the moments coincide by definition of a design.
\end{proof}
\noindent The same freedom is exploited in the unitary case to shrink the ensemble rather than to justify it: Zhang \etal~\cite{zhang2024minimal} reduce the Clifford ensemble to $2^n+1$ circuits drawn from mutually unbiased bases while leaving the post-processing channel unchanged.
In practice one therefore implements the finite real Clifford group rather than a continuous Haar average, without altering any statistical prediction. We do not prove the design property here; we take it from Refs.~\cite{hashagen2018real,nebe2006self}, and do not attempt to confirm it by enumeration, the ambient Clifford group already containing $2^{15}\prod_{j=1}^{3}(4^j-1)=9.3\times10^7$ elements modulo phases at $n=3$.

\paragraph{On optimality.} It is tempting to go further and claim that the orthogonal $3$-design minimizes the variance among all orthogonal $2$-designs. That claim is false, and the reason separates exactly what \cref{prop:design_indep} buys from what it does not. Shadow channel, estimator and unbiasedness are built from the second moment alone, so a $2$-design already fixes them; the variance is built from the third, so a $2$-design does not fix it.

For a subgroup $\Gamma\le\mbo(d)$ the commutant of $U^{\otimes k}$ always contains the Brauer commutant, so $\Gamma$ is a $k$-design exactly when the two have equal dimension, and by Schur orthogonality that dimension is a sum of traces,
\begin{equation}\label{eq:dimcomm_char}
\dim\comm(\Gamma,k)=\frac{1}{\lvert\Gamma\rvert}\sum_{U\in\Gamma}\big(\tr U\big)^{2k} ,
\end{equation}
since the character of $U^{\otimes k}$ is $(\tr U)^k$. This makes the search tractable, the targets being $\dim\comm(\mbo(d),2)=3$ and $\dim\comm(\mbo(d),3)=15$ for $d\ge3$.

At $d=4$, take $\Gamma_{288}$, the subgroup of index $4$ in the real two-qubit Clifford group $\mcc_2\cap\mbo(4)$ generated by
\begin{equation}\label{eq:g288}
\tfrac12\begin{psmallmatrix}1&-1&-1&-1\\1&-1&1&1\\-1&-1&1&-1\\-1&-1&-1&1\end{psmallmatrix},\qquad
\begin{psmallmatrix}0&0&0&-1\\1&0&0&0\\0&0&-1&0\\0&-1&0&0\end{psmallmatrix},\qquad
\tfrac12\begin{psmallmatrix}1&-1&1&1\\1&-1&-1&-1\\1&1&1&-1\\-1&-1&1&-1\end{psmallmatrix} .
\end{equation}
By \eqref{eq:dimcomm_char} it has $\dim\comm(\Gamma_{288},2)=3$ and $\dim\comm(\Gamma_{288},3)=21$, so it is an orthogonal $2$-design and not a $3$-design, while the ambient group of order $1152$ has $3$ and $15$ and is a $3$-design. Its noisy shadow channel therefore agrees with the $3$-design's \emph{exactly}, by \cref{prop:design_indep}\,\ref{it:di_second}; evaluated numerically the two differ by less than $10^{-14}$, which is the floating-point floor rather than a residual discrepancy. Its single-shot second moment does not agree: under depolarizing noise at $p=0.9$, for the traceless symmetric observable obtained by normalizing
\begin{equation}\label{eq:g288_witness}
O=\begin{psmallmatrix}-1&1&0&1\\1&-1&0&0\\0&0&-1&0\\1&0&0&1\end{psmallmatrix},
\qquad \rho=\diag(0.55,0.25,0.15,0.05),
\end{equation}
the ensemble $\Gamma_{288}$ gives a second moment $20.6\%$ \emph{lower} than the exact $3$-design does. On any Pauli string the two agree exactly, which locates where the freedom lives.

The same search at $d=2$ is a complete enumeration rather than a sample, the finite subgroups of $\mbo(2)$ being exactly the cyclic $C_m$ and dihedral $D_m$; of the $24$ conjugacy classes tested, $D_3$ and $D_6$ are $2$-designs that are not $3$-designs, and both reproduce the exact $3$-design variance regardless. The reason is the rank collapse of \cref{app:brauer}: at $d=2$ a traceless symmetric $O_0$ has $O_0^2\propto\id$, so the contracted third moment is a multiple of the identity (\cref{prop:d2complex}) and the directions in which their third moments differ are directions the protocol never probes. The single-qubit case is thus insensitive to the distinction, and a counter-example must live at $d\ge3$.

Tailoring the ensemble to a target observable class in this way is the mechanism behind biased and locally-scrambled shadow schemes~\cite{hu2023classical,west2025real}. What \cref{prop:design_indep} establishes is the weaker but useful invariance that within the orthogonal $3$-design class the protocol is canonical, so the factor-of-two and $(3/2)^k$ advantages are properties of the class as a whole rather than artefacts of a particular design. We state the optimality question that remains in \cref{sec:discussion}.

\section{Discussion}
\label{sec:discussion}

We have shown that the real classical shadows protocol admits a clean and complete theory of known Markovian noise, entirely parallel to the unitary case of Koh and Grewal~\cite{koh2022classical} but with the orthogonal Weingarten calculus in place of the unitary one. The central practical message is that the sample-complexity advantages of real shadows are \emph{robust}, since the local $(3/2)^k$ survives arbitrary known CPTP noise, and so does the global factor of two whenever the observable's norm profile grows---for the second moment by \eqref{eq:ratio_identity}, for the variance by \cref{cor:var_ratio}---with the noise entering that limit only through a factor within $2/d$ of two. For rank-one targets both saturate strictly below two. The mechanism is that noise only rescales a depolarizing parameter and never touches the visible space. Complex bases reach unitary shadows in the joint large-$d$, zero-reality limit, with the two depolarizing parameters agreeing to a relative $2/d$ already at hardware-relevant sizes. At small $d$, however, the approach is not from below, a fully complex basis measured with orthogonal unitaries can be markedly worse than unitary shadows (\cref{fig:complex_noise}).

The structural reason sharpens a natural expectation about which noise the real construction ``tolerates.'' On a real basis, \cref{prop:global_noisy_channel} (and its local counterpart, \cref{prop:local_noisy}) holds for every trace-preserving or unital $\channel$, because the derivation twirls over $\mbo(d)$ and never invokes any reality or conjugation property of the noise, so the shadow channel is $\mcd_{n,f}\circ(\cdot)_{\mathrm{sym}}$ for any such $\channel$, with all noise dependence carried by the single scalar $\beta$ (as $\alpha=d$). One might expect coherent errors with complex Kraus operators to spoil the real structure by coupling the ``real'' and ``imaginary'' sectors of operator space; \cref{sec:coherent} shows they do not. A coherent over-rotation is absorbed entirely into $\beta=\sum_b\lvert V_{bb}\rvert^2$, and the true compatibility condition is not that $\channel$ commute with complex conjugation but simply that $\beta\neq1$, i.e.\ that the symmetric sector not be fully depolarized. In this sense every CPTP noise is structurally compatible with real-basis shadows. A channel is harmful only quantitatively, through how far it drives $\beta$ from its noiseless value, and coherent errors are notable precisely because they can approach the threshold $\beta=1$ while remaining structurally benign.

The reality of the noise enters in the complex-basis setting of Parts~III--IV. There the sector coupling is real, but it is controlled by the interplay of the basis and the noise rather than by the noise alone, since the basis reality $\alpha_{\mathrm r}$ sets the size of the enlarged visible space, the gap $\tilde\beta-\beta$ measures the departure of the basis from real (for depolarizing noise it is exactly $p(\alpha_{\mathrm r}-d)$, a purely basis-driven quantity), and, in the third moment of \cref{prop:complex_variance}, the imaginary part of the invariant $\delta$ is what couples the real and imaginary operator sectors in the variance. A conjugation-commuting channel keeps $\delta$ real but, through $\alpha_{\mathrm r}$ alone, still gives $\tilde\beta\neq\beta$ on a complex basis; conjugation-mixing noise additionally makes $\mathrm{Im}\,\delta\neq0$. The naive dichotomy ``conjugation-symmetric noise is compatible, complex noise is not'' is therefore not the right one, because on a real basis all noise is compatible, while on a complex basis the relevant quantities are $\alpha_{\mathrm r}$, $\tilde\beta-\beta$, and $\mathrm{Im}\,\delta$.

We conclude by listing the open problems this work leaves.
\begin{enumerate}[leftmargin=*]
\item \textbf{Unknown noise.} Our inversion assumes $\channel$ is characterized in advance. The robust-estimation strategy of Chen \etal~\cite{chen2021robust}, which learns a stochastic noise model, should transfer to the orthogonal ensemble; quantifying the residual bias when the assumed channel $\mathcal F$ differs from the true $\channel$ is a concrete next step. \cref{rem:blind} settles the extreme case $\mathcal F=\id$ exactly, the bias is the multiplicative factor $\tfrac{\beta-1}{d-1}$ of \eqref{eq:blind_bias}, and the same argument gives $f(\channel)/f(\mathcal F)$ for any assumed $\mathcal F$, so the open part is not the bias itself but the sample cost of learning $\beta$ to a given accuracy. The most directly transferable precedent is~\cite{wu2024error}, which calibrates noise of exactly the class assumed here for an ensemble that is likewise governed by orthogonal Haar averages, and which quantifies the calibration overhead as $\tilde{\mathcal O}(\sqrt n)$ measurements; We leave it as an open problem to determine whether a comparable overhead suffices for $\mbo(2^n)$, where the relevant commutant is the order-three Brauer algebra of \cref{app:brauer} rather than its $\mbo(2n)$ counterpart.
\item \textbf{Optimality of the ensemble.} \cref{prop:design_indep} shows every orthogonal $3$-design gives identical predictions, and \cref{sec:opt_thm} exhibits an orthogonal $2$-design at $d=4$ that is not a $3$-design and beats every $3$-design on a particular observable by $20.6\%$, so the $3$-design is \emph{not} variance-optimal among $2$-designs. We leave it as an open problem to decide whether the orthogonal $3$-design is optimal in a minimax (worst-case-observable) or average sense.
\item \textbf{Beyond GTM.} Gate-dependent and non-Markovian noise break the clean depolarizing form. We leave it as an open problem to determine whether a weaker structural statement survives, for instance that the visible space is still preserved even when the channel is no longer a depolarizer on it. The shallow-circuit analysis of~\cite{yu2026light} indicates what to expect, and locates the difference in the placement of the noise rather than in any failure of gate-independence. Their noise is also independent of the sampled gates, but it acts between circuit layers rather than once after the evolution, and the bias is then observable-dependent rather than a single scalar, damping the estimated Pauli coefficient exponentially in the size of a contiguous string. Relaxing spatial uniformity and time-stationarity does not change this: Gaussian site-to-site fluctuations and a deterministic linear drift both leave the exponential law intact, while fluctuations correlated across a whole layer, which couple an entire light-cone slice, can instead generate nonlinear corrections. Recovering the collapse onto $\beta$ at shallow depth therefore needs more than gate-independence. It needs the depth-$L$ twirl to see a single invariant.
\item \textbf{Error mitigation.} Because inversion already implements $\mcm_{\mbo,\channel}^{-1}$ --- the inverse of the \emph{shadow} channel, not of $\channel$ itself --- deterministically in post-processing, combining real shadows with probabilistic error cancellation~\cite{temme2017error} or zero-noise extrapolation may compound the advantages.
\item \textbf{Circuit depth.} Every statement in this paper is for an \emph{exact} orthogonal $3$-design. A circuit family of finite depth realizes at best an approximate one, and whether logarithmic-depth approximate orthogonal designs exist is open~\cite{west2025real,schuster2024random}. We leave it as an open problem to establish or rule this out, and to settle the resulting trade-off between gate cost and the sample-complexity advantage.
\end{enumerate}

\section*{Data and code availability}
The code used to generate every figure in this paper, together with the exact orthogonal
twirls via the Brauer commutant, the symbolic Weingarten computations of \cref{app:brauer},
the protocol simulators they rest on, and the two-design search of \cref{sec:opt_thm}, is openly available at
\url{https://github.com/Atharva-11/Noisy-Real-Classical-Shadows}. No experimental data were used.

\begin{acknowledgments}
This project is supported by the National Research Foundation, Singapore through the National Quantum Office, hosted in A*STAR, under the Advanced Quantum Algorithms and Solutions (AQAS) Funding Initiative (S25Q9DA001).
\end{acknowledgments}

\appendix
\onecolumn

\section{Weingarten Calculus for the Orthogonal Group: Brauer Algebra, Gram Matrix, and the Singularity at \texorpdfstring{$d=2$}{d=2}}
\label{app:brauer}

This appendix develops the algebraic machinery behind every orthogonal Haar average used above, from the Brauer algebra to the order-three Weingarten function. The order-three Gram determinant is factored and its singularity at $d=2$ is explained representation-theoretically, with explicit null vectors and a proof that the linear systems the protocol needs remain solvable there. The treatment draws on the classical literature~\cite{brauer1937algebras,weyl1939classical,goodman2009symmetry}, on orthogonal Weingarten calculus~\cite{collins2006integration,collins2009some}, and on the structure theory of the Brauer algebra~\cite{wenzl1988structure,graham1996cellular,rui2005criterion}, specialized to $k=3$.

\subsection{The Brauer algebra}
\label{app:brauer_def}

\begin{definition}[Brauer diagram and algebra~\cite{brauer1937algebras}]
\label{def:brauer}
A \emph{Brauer diagram} on $2k$ nodes is a perfect matching (partition into $k$ unordered pairs) of a set of $2k$ points arranged in a top row $1,\dots,k$ and a bottom row $1',\dots,k'$. For a parameter $\delta$ in a commutative ring, the \emph{Brauer algebra} $\Brauer_k(\delta)$ is the free module on the set of Brauer diagrams, with multiplication defined on diagrams by vertical concatenation: to form $D_1 D_2$, place $D_1$ above $D_2$, identify the bottom row of $D_1$ with the top row of $D_2$, read off the resulting matching between the remaining outer nodes, and multiply by $\delta^{c}$ where $c$ is the number of closed loops formed in the middle. Extending bilinearly makes $\Brauer_k(\delta)$ an associative unital algebra of dimension $(2k-1)!!$.
\end{definition}

The $(2k-1)!!$ matchings split by the number of ``through-strands'' (top-to-bottom pairs), which for $k=3$ is either $3$ (the $6$ permutation diagrams, indexed by $S_3$) or $1$ (the $9$ diagrams with one top ``cup'' and one bottom ``cap''). The symmetric-group algebra $\Complex[S_k]$ embeds as the span of the through-permutation diagrams. The Brauer algebra is cellular in the sense of Graham and Lehrer~\cite{graham1996cellular}, with cell (standard) modules $W_\lambda$ labelled by partitions $\lambda$ of $k, k-2, k-4,\dots$. For $k=3$ the labels are $\lambda\vdash3$ (through-number $3$) and $\lambda\vdash1$ (through-number $1$). Generically $\Brauer_k(\delta)$ is semisimple and $\dim\Brauer_k(\delta)=\sum_\lambda(\dim W_\lambda)^2$. For $k=3$ this is $1^2+2^2+1^2+3^2=15$, from $\dim W_{(3)}=1$, $\dim W_{(2,1)}=2$, $\dim W_{(1,1,1)}=1$ and $\dim W_{(1)}=3$. Semisimplicity fails at a finite set of integers $\delta$, determined precisely by Rui~\cite{rui2005criterion} (building on Wenzl~\cite{wenzl1988structure}), and we will see the failure at $\delta=2$ directly.

\subsection{Schur--Weyl duality for the orthogonal group}
\label{app:sw}

Let $V=\Complex^d$ with the orthogonal group $\mbo(d)$ acting diagonally on $V^{\otimes k}$ by $U\mapsto U^{\otimes k}$. The natural map sends each Brauer diagram to the tensor contraction that connects tensor legs according to the matching, with each cup $\sum_i\ket{i}\ket{i}$ and each cap $\sum_j\bra j\bra j$. Concretely, for $\sigma\in\Brauer_k(d)$,
\begin{equation}
\phi_d(\sigma)=\sum_{i_1,\dots,i_{2k}=0}^{d-1}\;\ket{i_{k+1}\cdots i_{2k}}\bra{i_1\cdots i_k}\prod_{\{a,b\}\in\sigma}\delta_{i_a i_b}.
\end{equation}
This is an algebra homomorphism $\phi_d:\Brauer_k(d)\to\End(V^{\otimes k})$, the closed-loop factor $\delta=d$ arising from $\tr\id_V=d$.

\begin{theorem}[Schur--Weyl duality for $\mbo(d)$~\cite{brauer1937algebras,goodman2009symmetry}]
\label{thm:sw}
The image of $\phi_d$ is exactly the commutant of the diagonal $\mbo(d)$ action,
\begin{equation}
\im\phi_d=\comm(\mbo(d),k)\;:=\;\End_{\mbo(d)}(V^{\otimes k})=\{X:[X,U^{\otimes k}]=0\ \forall U\in\mbo(d)\}.
\end{equation}
The map $\phi_d$ is an isomorphism onto its image, and $\Brauer_k(d)\to\comm(\mbo(d),k)$ is an algebra isomorphism, whenever $d\ge k$. For $d<k$ it has a nonzero kernel.
\end{theorem}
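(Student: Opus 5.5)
The plan has three parts: the containment $\im\phi_d\subseteq\comm(\mbo(d),k)$, the reverse containment via the first fundamental theorem of invariant theory, and the injectivity dichotomy via a dimension count and an explicit kernel element.

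\emph{Containment.} For $\im\phi_d\subseteq\comm(\mbo(d),k)$ I would check the generators. It suffices to show that each Brauer diagram's contraction commutes with $U^{\otimes k}$, and since $\phi_d$ is multiplicative it is enough to treat the permutation diagrams and a single cup--cap diagram. Permutations of tensor legs commute with any $U^{\otimes k}$. The cup $\ket{\Omega}=\sum_i\ket{ii}$ satisfies $(U\otimes U)\ket\Omega=\ket{\Omega}$ because $(U\otimes U)\ket\Omega$ is the vectorization of $UU^\intercal=\id$. Dually, $\bra\Omega(U\otimes U)=\bra\Omega$, using $U^\intercal U=\id$. Combining these, every diagram is $U^{\otimes k}$-equivariant.

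\emph{Reverse containment.} For $\comm(\mbo(d),k)\subseteq\im\phi_d$ I would invoke the first fundamental theorem of invariant theory for the orthogonal group. First, identify $\End(V^{\otimes k})$ with $(V^{\otimes 2k})^{*}$ using the invariant bilinear form $\sum_i\bra i\otimes\bra i$. This identification is $\mbo(d)$-equivariant because $V\cong V^{*}$ as $\mbo(d)$-modules, so commutant elements correspond exactly to $\mbo(d)$-invariant $2k$-linear forms. The FFT states that these invariant forms are spanned by the products $\prod_{\{a,b\}}\langle v_a,v_b\rangle$ over perfect matchings of the $2k$ slots. Under the identification, these are precisely the $\phi_d$ of Brauer diagrams.

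The FFT is usually stated for the complex group $\mbo(d,\Complex)$, so I would pass to it in two steps. The real group $\mbo(d)$ is Zariski-dense in $\mbo(d,\Complex)$, so both groups have the same commutant. For the complex group I would prove the FFT in the standard way: polarize multilinear invariants to polynomial invariants of $2k$ vectors, then use that such polynomials are functions of the Gram matrix $(\langle v_a,v_b\rangle)$, via Weyl's argument with a generic basis and orbit closures. This FFT step is the main obstacle. If a fully self-contained proof is too long, I would cite~\cite{goodman2009symmetry} for it, as the theorem statement already does.

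\emph{Injectivity for $d\ge k$.} Given surjectivity onto the commutant, injectivity for $d\ge k$ reduces to a dimension count. By Schur orthogonality,
\[
\dim\comm(\mbo(d),k)=\int_{\mbo(d)}(\tr U)^{2k}\,\dd U ,
\]
the same identity as \eqref{eq:dimcomm_char}. The Diaconis--Shahshahani moment theorem then gives that $\tr U$ has the Gaussian moments $(2k-1)!!$ for all $2k\le 2d$, that is, for all $k\le d$. The moment therefore equals $\dim\Brauer_k(d)$, and a surjection between spaces of equal finite dimension is a bijection. Since $\phi_d$ is already a homomorphism, it is an algebra isomorphism.

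\emph{Nonzero kernel for $d<k$.} For $d<k$ I would exhibit an explicit kernel element. Take the antisymmetrizer $\sum_{\pi\in S_{d+1}}\sgn(\pi)\,\pi$ on the first $d+1\le k$ strands, tensored with the identity on the remaining strands. Its image under $\phi_d$ is, up to a factor, the projector onto $\Lambda^{d+1}V\otimes V^{\otimes(k-d-1)}$. This is zero because $\Lambda^{d+1}\Complex^d=0$. The element itself is nonzero in $\Brauer_k(d)$, since distinct permutation diagrams are distinct basis diagrams of the free module. Hence $\ker\phi_d\neq0$ whenever $d<k$.
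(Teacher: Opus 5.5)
Your proposal is correct and follows essentially the paper's route. The paper takes $\im\phi_d=\comm(\mbo(d),k)$ from Brauer and Goodman--Wallach, which is the first fundamental theorem you invoke. It supports injectivity for $d\ge k$ by the same count $\dim\comm(\mbo(d),k)=\int_{\mbo(d)}(\tr U)^{2k}\,\dd U=(2k-1)!!$ of \eqref{eq:commdim}. Its null vector $\mathbf n_1$, i.e.\ $\phi_2(A_3)=0$ with $A_3$ as in \eqref{eq:A3}, is the $d=2$, $k=3$ case of your antisymmetrizer on $d+1$ strands.

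The one soft spot is the moment identity, and the paper shares it by stating the moment value without proof. You need it in its sharp range $k\le d$, so check that the version of the Diaconis--Shahshahani theorem you cite really reaches that range. Also, given surjectivity, the identity is equivalent to the injectivity you are proving, so citing it is no more elementary than citing Brauer.

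You can bypass it entirely. For $d\ge k$, each diagram $D$ admits an index string that is constant on each of its $k$ arcs and takes distinct values on distinct arcs. At that string $\phi_d(D')$ has a nonzero matrix element iff $D'=D$, so the operators $\phi_d(D)$ are linearly independent directly.
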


The dimension of the commutant is a Haar moment. Since $V\cong V^*$ for real orthogonal $U$,
\begin{equation}\label{eq:commdim}
\dim\comm(\mbo(d),k)=\dim\big(V^{\otimes 2k}\big)^{\mbo(d)}=\int_{\mbo(d)}\big(\tr U\big)^{2k}\,\dd U .
\end{equation}
For $d\ge k$ the integral equals $(2k-1)!!$, matching $\dim\Brauer_k(d)$ and confirming injectivity of $\phi_d$. The case $d=2$ is special and central here: parametrizing $\mathrm{SO}(2)$ by rotations $U(\theta)$ with $\tr U(\theta)=2\cos\theta$ and noting the reflection component of $\mbo(2)$ contributes $\tr U=0$,
\begin{equation}\label{eq:o2moment}
\int_{\mbo(2)}(\tr U)^{2k}\,\dd U=\frac12\cdot\frac{1}{2\pi}\int_0^{2\pi}(2\cos\theta)^{2k}\dd\theta=\frac12\,2^{2k}\,\frac{1}{2^{2k}}\binom{2k}{k}=\frac12\binom{2k}{k}.
\end{equation}
For $k=3$ this gives $\tfrac12\binom{6}{3}=10<15$: at $d=2$ the fifteen order-three Brauer diagrams span only a ten-dimensional commutant, so $\ker\phi_2$ has dimension five. This is the source of every $d=2$ degeneracy below.

\subsection{The order-two and order-three commutants}
\label{app:comm23}

For $k=2$ the three diagrams give
\begin{equation}
\comm(\mbo(d),2)=\mathrm{span}\{\id,\ \mbs,\ \om\},
\end{equation}
where $\mbs$ is the swap and $\om=\sum_{i,j}\ket{ii}\bra{jj}$ is the (unnormalized) projector onto the maximally entangled vector $\ket\Omega=\sum_i\ket{ii}$. By \eqref{eq:o2moment} with $k=2$, $\dim\comm(\mbo(2),2)=\tfrac12\binom42=3$, so these remain independent even at $d=2$: the order-two calculus has no singularity.

For $k=3$ the fifteen diagrams give the basis used throughout,
\begin{equation}\label{eq:comm3basis}
\comm(\mbo(d),3)=\mathrm{span}\big\{\underbrace{\mbs_\pi}_{\pi\in S_3}\big\}\cup\big\{\underbrace{\Omega_{ab;xy}}_{9\text{ contractions}}\big\},
\end{equation}
where $\mbs_\pi\ket{i_1i_2i_3}=\ket{i_{\pi^{-1}(1)}i_{\pi^{-1}(2)}i_{\pi^{-1}(3)}}$ is the permutation operator and
\begin{equation}
\Omega_{ab;xy}=\sum_{i,j}\ket{i}_a\ket{i}_b\bra{j}_x\bra{j}_y\ \ (\text{identity on the remaining factor})
\end{equation}
carries a cup on output legs $a,b$ and a cap on input legs $x,y$. We order the basis as
\begin{equation}\label{eq:basisorder}
\begin{aligned}
\big(&\mbs_{e},\mbs_{(23)},\mbs_{(12)},\mbs_{(13)},\mbs_{(132)},\mbs_{(123)};\\
&\ \Omega_{12;12},\Omega_{23;23},\Omega_{13;13},\Omega_{12;23},\Omega_{23;12},\Omega_{13;23},\Omega_{13;12},\Omega_{23;13},\Omega_{12;13}\big),
\end{aligned}
\end{equation}
and denote these $x_1,\dots,x_{15}$. A graphical calculus for composing and tracing these operators is given in~\cite{mele2023introduction} for permutation diagrams and in~\cite{garciamartin2025quantum} for Brauer diagrams, and the partial-trace identities we need are collected in \cref{app:shadow_norm_proof} (\cref{tab:trace_rel}).

\subsection{The Gram matrix and the Weingarten function}
\label{app:gram}

The Haar average $\mct^{(k)}_{\mbo(d)}(A)=\int_{\mbo(d)}U^{\otimes k}A\,U^{\intercal\otimes k}\dd U$ is the orthogonal projector (in the Hilbert--Schmidt inner product) onto $\comm(\mbo(d),k)$~\cite{collins2006integration,garciamartin2025quantum}. Expanding in the basis $\{x_i\}$ and imposing $\langle x_i,\mct^{(k)}(A)\rangle=\langle x_i,A\rangle$ gives the Weingarten formula
\begin{equation}\label{eq:wein}
\mct^{(k)}_{\mbo(d)}(A)=\sum_{i,j}\Wg_{ij}\,\langle x_i,A\rangle\,x_j,\qquad \Wg=G(d)^{+},
\end{equation}
where $G(d)^{+}$ is the Moore--Penrose pseudoinverse (the ordinary inverse when $d\ge k$) of the \emph{Gram matrix}
\begin{equation}\label{eq:gramdef}
G(d)_{ij}=\langle x_i,x_j\rangle_{\mathrm{HS}}=\tr\!\big[\phi_d(x_i)^\dagger\phi_d(x_j)\big]=d^{\,\ell(i,j)},
\end{equation}
with $\ell(i,j)$ the number of closed loops in the composite diagram $\bar x_i\!\circ x_j$~\cite{collins2006integration,collins2009some}. This last equality makes the Gram matrix a matrix of monomials in $d$, and we prove it.
\begin{lemma}[Loop rule]\label{lem:loop}
For Brauer diagrams $D_M,D_N$ on $2k$ nodes, $\tr[\phi_d(D_M)^\dagger\phi_d(D_N)]=d^{\ell(M,N)}$, where $\ell(M,N)$ is the number of connected components (closed loops) of the diagram obtained by stacking $D_M$ (reflected) on $D_N$ and joining matched endpoints.
\end{lemma}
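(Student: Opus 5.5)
The plan is to expand both sides in the computational basis and read the trace as a sum over index assignments constrained by Kronecker deltas. By the formula for $\phi_d$, the operator $\phi_d(D_N)$ has matrix element $\bra{i_{k+1}\cdots i_{2k}}\phi_d(D_N)\ket{i_1\cdots i_k}=\prod_{\{a,b\}\in N}\delta_{i_ai_b}$. This element is real, equal to $0$ or $1$. Hence $\phi_d(D_M)^\dagger=\phi_d(D_M)^\intercal$, and it is obtained from $\phi_d(D_M)$ by exchanging the roles of the top and bottom rows. That exchange is exactly the reflected diagram $\bar D_M$.

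Writing out the trace gives
\[
\tr[\phi_d(D_M)^\dagger\phi_d(D_N)]=\sum_{\mathbf i,\mathbf j}\overline{\phi_d(D_M)_{\mathbf j\mathbf i}}\,\phi_d(D_N)_{\mathbf j\mathbf i}.
\]
Here $\mathbf i$ labels the $k$ input legs and $\mathbf j$ the $k$ output legs, and both factors are evaluated on the same $2k$ indices. The summand is therefore $\prod_{\{a,b\}\in M}\delta_{i_ai_b}\prod_{\{a,b\}\in N}\delta_{i_ai_b}$, a product over the edges of the multigraph on the $2k$ nodes whose edge set is $M\sqcup N$.

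The key combinatorial step is that the union of two perfect matchings on the same vertex set is a disjoint union of even cycles. This includes $2$-cycles, which arise when a pair lies in both $M$ and $N$. Each vertex has degree exactly $2$, with one edge from each matching, so every component is a cycle alternating between $M$- and $N$-edges. These cycles are precisely the closed loops formed by stacking $\bar D_M$ on $D_N$ and joining matched endpoints, so their number is $\ell(M,N)$.

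On each cycle the deltas force all indices to be equal, and distinct cycles impose no constraints on each other. The sum therefore factorizes into one free index in $\{0,\dots,d-1\}$ per cycle, giving $d^{\ell(M,N)}$.

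The only real obstacle is bookkeeping. One has to check that the adjoint corresponds to reflection, with cups and caps exchanged and top and bottom swapped, so that the diagram being counted is $\bar D_M$ stacked on $D_N$ rather than $D_M$ on $D_N$. One also has to confirm that the trace identifies the $k$ input legs of $\phi_d(D_N)$ with the input legs of $\phi_d(D_M)$, and likewise the outputs, so that no extra twist of the labels is introduced. Once the index identification is written explicitly as above, this is immediate, since both factors are evaluated on the same $2k$-tuple of indices.
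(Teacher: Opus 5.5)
Your proof is correct and follows essentially the same route as the paper's: both operators are written as products of Kronecker deltas, each connected component of the stacked diagram forces its indices to be equal, and each component contributes one free sum of $d$. Your argument is more explicit on two points the paper leaves implicit: that the Hilbert--Schmidt trace evaluates both delta products on the same $2k$-tuple of indices, and that the union of two perfect matchings is a disjoint union of alternating cycles (with shared pairs giving $2$-cycles, which also yields the paper's diagonal remark $\ell(i,i)=k$).
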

\begin{proof}
Each diagram is a product of cups $\sum_i\ket{ii}$, caps $\sum_j\bra{jj}$, and through-lines $\delta$, so $\phi_d(D)$ is a product of Kronecker deltas indexed by the arcs of $D$. Forming $\tr[\phi_d(D_M)^\dagger\phi_d(D_N)]$ contracts every free index against its partner, since an index is summed over $\{0,\dots,d-1\}$, and the deltas force all indices lying on one connected component of the stacked diagram to be equal. Each component therefore contributes a single free sum $\sum_{a=0}^{d-1}1=d$, and independent components multiply, giving $d^{\ell(M,N)}$. On the diagonal, the stack of a diagram on itself has $k$ loops through the $k$ arcs plus none extra, i.e.\ $\ell(i,i)=k$. Here $k=3$.
\end{proof}
The exponent matrix $\ell$ is
\begin{equation}\label{eq:loops}
\ell=\left(\begin{smallmatrix}
3&2&2&2&1&1&2&2&2&1&1&1&1&1&1\\
2&3&1&1&2&2&1&2&1&1&1&1&2&1&2\\
2&1&3&1&2&2&2&1&1&1&1&2&1&2&1\\
2&1&1&3&2&2&1&1&2&2&2&1&1&1&1\\
1&2&2&2&3&1&1&1&1&2&1&1&2&2&1\\
1&2&2&2&1&3&1&1&1&1&2&2&1&1&2\\
2&1&2&1&1&1&3&1&1&2&2&1&2&1&2\\
2&2&1&1&1&1&1&3&1&2&2&2&1&2&1\\
2&1&1&2&1&1&1&1&3&1&1&2&2&2&2\\
1&1&1&2&2&1&2&2&1&3&1&2&1&1&2\\
1&1&1&2&1&2&2&2&1&1&3&1&2&2&1\\
1&1&2&1&1&2&1&2&2&2&1&3&2&1&1\\
1&2&1&1&2&1&2&1&2&1&2&2&3&1&1\\
1&1&2&1&2&1&1&2&2&1&2&1&1&3&2\\
1&2&1&1&1&2&2&1&2&2&1&1&1&2&3
\end{smallmatrix}\right),
\end{equation}
in the ordering \eqref{eq:basisorder} (the diagonal is $\ell(i,i)=3$ since $\tr\phi_d(x_i)^\dagger\phi_d(x_i)=d^3$). We compute every orthogonal third moment in this paper from \eqref{eq:wein}--\eqref{eq:loops}.

\subsection{The determinant and its representation-theoretic content}
\label{app:det}

Over the integers, the Gram determinant of $G(d)=\big(d^{\ell(i,j)}\big)$ factors as
\begin{equation}\label{eq:detfac}
\det G(d)=d^{15}\,(d-1)^{14}\,(d-2)^{5}\,(d+2)^{10}\,(d+4),
\end{equation}
which we verified by exact symbolic computation from \eqref{eq:loops}. Each root of \eqref{eq:detfac} is a value of the loop parameter $\delta=d$ at which the trace bilinear form on $\Brauer_3(\delta)$ degenerates, i.e.\ at which $\Brauer_3(\delta)$ fails to be semisimple~\cite{wenzl1988structure,rui2005criterion}. For the physically relevant integers $d\ge2$ the only vanishing factor is $(d-2)^5$, and the negative roots $d=-2,-4$ (and $d=1$) lie outside the tensor-space regime and reflect the general non-semisimplicity locus of $\Brauer_3$~\cite{rui2005criterion}. Its exponent $5$ of $(d-2)$ is exactly $\dim\ker\phi_2=15-\tfrac12\binom63=5$ from \eqref{eq:o2moment}: it is the dimension of the radical of the form, equivalently the number of independent linear relations among the fifteen diagrams at $d=2$.

\subsection{The singularity at \texorpdfstring{$d=2$}{d=2}}
\label{app:d2}

At $d=2$ the rank of $G(2)$ is $10$ and $\ker G(2)=\ker\phi_2=\Rad$, the radical of the trace form, of dimension $5$ (\cref{eq:detfac}, \cref{thm:radical}). We identify this kernel completely.

\paragraph{The antisymmetric relation.}
Under $\mbo(d)$, $V^{\otimes3}$ contains the totally antisymmetric subspace $\Alt^3(\Complex^d)$ of dimension $\binom d3$. The Brauer antisymmetrizer
\begin{equation}\label{eq:A3}
A_3=\sum_{\pi\in S_3}\sgn(\pi)\,\mbs_\pi=\id-\mbs_{(12)}-\mbs_{(13)}-\mbs_{(23)}+\mbs_{(123)}+\mbs_{(132)}
\end{equation}
maps under $\phi_d$ to $3!$ times the projector onto $\Alt^3(\Complex^d)$. At $d=2$, $\binom23=0$, so $\Alt^3(\Complex^2)=\{0\}$ and $\phi_2(A_3)=0$, since the wedge $e_{i_1}\wedge e_{i_2}\wedge e_{i_3}$ of three vectors in a two-dimensional space vanishes. In the ordering \eqref{eq:basisorder} this is the null vector
\begin{equation}
\mathbf n_1=(1,-1,-1,-1,1,1,\,0,0,0,0,0,0,0,0,0)^\intercal .
\end{equation}

\paragraph{The four contraction relations.}
Remaining relations are specific to $d=2$ and originate in the $\mbo(2)$ Levi-Civita tensor $\epsilon_{ab}$ ($\epsilon_{01}=-\epsilon_{10}=1$), which satisfies the Fierz identity
\begin{equation}\label{eq:fierz}
\epsilon_{ab}\,\epsilon_{cd}=\delta_{ac}\delta_{bd}-\delta_{ad}\delta_{bc}\qquad(d=2).
\end{equation}
Substituting \eqref{eq:fierz} for a cup--cap pair rewrites each $\Omega$ diagram as a combination of a permutation diagram and an $\epsilon\epsilon$ term with no counterpart for $d\ge3$; these $\epsilon\epsilon$ terms cancel in exactly four independent combinations, collapsing the fifteen diagrams onto the ten-dimensional commutant. Computing the null space of $G(2)$ by exact rational arithmetic yields the following basis of $\ker G(2)$ (verified by direct substitution of the matrix elements \eqref{eq:mat-elts}):
\begin{align}
\mathbf n_1&=(1,-1,-1,-1,1,1,\,0,0,0,0,0,0,0,0,0)^\intercal,\nn
\mathbf n_2&=(1,0,0,-1,0,0,\,-1,-1,0,1,1,0,0,0,0)^\intercal,\nn
\mathbf n_3&=(0,0,1,0,-1,0,\,-1,0,0,1,0,-1,1,0,0)^\intercal,\nn
\mathbf n_4&=(1,0,-1,0,0,0,\,0,-1,-1,0,0,1,0,1,0)^\intercal,\nn
\mathbf n_5&=(1,-1,-1,0,1,0,\,0,0,-1,-1,0,1,0,0,1)^\intercal .\label{eq:nullvecs}
\end{align}
Each $\mathbf n_i$ encodes an operator identity in $\End((\Complex^2)^{\otimes3})$. For instance $\mathbf n_2$ states
\begin{equation}
\id-\mbs_{(13)}-\Omega_{12;12}-\Omega_{23;23}+\Omega_{12;23}+\Omega_{23;12}=0 ,
\end{equation}
which we check on the matrix elements
\begin{equation}\label{eq:mat-elts}
\bra{j_1j_2j_3}\mbs_\pi\ket{i_1i_2i_3}=\prod_r\delta_{j_r,i_{\pi^{-1}(r)}},\qquad
\bra{j_1j_2j_3}\Omega_{ab;xy}\ket{i_1i_2i_3}=\delta_{j_a j_b}\,\delta_{i_x i_y}\,\delta_{j_e i_f},
\end{equation}
where $e$ is the leftover output leg and $f$ the leftover input leg. Note that the cup, sitting on the output legs $a,b$, constrains the \emph{output} indices $j$, and the cap on the input legs $x,y$ the \emph{input} indices $i$, in accordance with the definition of $\Omega_{ab;xy}$ in \cref{app:comm23}. We check the identity by exhausting the finitely many $i,j\in\{0,1\}^3$ and using that any term with a factor $\delta_{j_a j_b}$, $j_a\neq j_b$, vanishes. A single index pair shows the bookkeeping. Take $j=(1,1,0)$ and $i=(0,0,0)$. Then $\id$ and $\mbs_{(13)}$ both need $j_1=i_1$ and vanish; $\Omega_{23;23}$ and $\Omega_{23;12}$ need $j_2=j_3$ and vanish; while
\begin{equation}\label{eq:mat-elts-instance}
\bra{110}\Omega_{12;12}\ket{000}=\delta_{j_1j_2}\delta_{i_1i_2}\delta_{j_3i_3}=1,
\qquad
\bra{110}\Omega_{12;23}\ket{000}=\delta_{j_1j_2}\delta_{i_2i_3}\delta_{j_3i_1}=1 ,
\end{equation}
the leftover legs being $(e,f)=(3,3)$ in the first case and $(3,1)$ in the second. These enter $\mathbf n_2$ with opposite signs, so the entry cancels. For the all-zero pair $j=i=(0,0,0)$ every term equals $1$, and the cancellation is $1-1-1-1+1+1=0$.

\begin{theorem}[Radical of $\Brauer_3(2)$]
\label{thm:radical}
$\Brauer_3(2)$ is not semisimple, and the radical of its trace form has dimension $5$ and equals $\ker\phi_2$, so $\rank G(2)=10=\dim\comm(\mbo(2),3)$.
\end{theorem}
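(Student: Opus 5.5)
The plan is to identify the radical of the trace form with $\ker\phi_2$ directly, and then read off all three assertions from that identification.

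\emph{Step 1: the trace form is the pullback of the Hilbert--Schmidt form.} On $\Brauer_3(2)$ take the bilinear form $\langle x,y\rangle=\tr[\phi_2(x)^\dagger\phi_2(y)]$. By \cref{lem:loop} its matrix in the diagram basis is $G(2)$, so its radical is $\ker G(2)$ (real coefficients; the complex case is the same with sesquilinear conjugation).

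\emph{Step 2: $\ker G(2)=\ker\phi_2$.} If $\phi_2(x)=0$, then $G(2)x=0$ trivially. Conversely, if $G(2)x=0$, then pairing with $\bar x$ gives $\norm{\phi_2(x)}_{\mathrm{HS}}^2=\bar x^\intercal G(2)x=0$. This works because the entries $d^{\ell}$ are real, so $\sum_{ij}\bar x_i x_j\langle x_i,x_j\rangle=\tr[\phi_2(x)^\dagger\phi_2(x)]$. Hence $\phi_2(x)=0$. This is the key point: positive semidefiniteness of a Gram matrix makes its kernel equal to the kernel of the coordinate map.

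\emph{Step 3: dimensions and the rank.} By \cref{thm:sw}, $\im\phi_2=\comm(\mbo(2),3)$. Equation \eqref{eq:commdim} together with \eqref{eq:o2moment} gives this the dimension $\tfrac12\binom63=10$. Rank--nullity on the $15$-dimensional algebra then gives $\dim\ker\phi_2=5$ and $\rank G(2)=10$. As a consistency check, the five null vectors \eqref{eq:nullvecs} are verified to lie in the kernel via \eqref{eq:mat-elts}. Their linear independence is visible from the pivot pattern in the last nine coordinates together with $\mathbf n_1$, which supported only on the permutation block. The factor $(d-2)^5$ in \eqref{eq:detfac} is consistent with this count.

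\emph{Step 4: non-semisimplicity.} I expect the main obstacle here, because the trace form in Step 1 is not the algebra's intrinsic (regular-representation) trace form. The argument I would use instead is that $\ker\phi_2$ is a two-sided ideal, since $\phi_2$ is an algebra homomorphism. I would then show it is nilpotent, or at least that it contains a nonzero nilpotent ideal.

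A cleaner route is as follows. Suppose $\Brauer_3(2)$ were semisimple. Then its cell modules $W_\lambda$ would be simple, of dimensions $1,2,1,3$, and $\Brauer_3(2)$ would be the direct sum of their matrix algebras. The image $\phi_2(\Brauer_3(2))$ would then be the sum over the $\lambda$ that survive in $V^{\otimes3}$. Here $W_{(1,1,1)}$ dies, because $\Alt^3(\Complex^2)=0$ by the relation $\mathbf n_1$, and this accounts for only $1$ of the $5$ lost dimensions. The image would therefore have dimension $4+1+9=14$, or fewer if whole blocks die. Since a surviving block contributes $(\dim W_\lambda)^2$, the image dimension $10$ must be a sum of a subset of $\{1,4,1,9\}$. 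The possible sums include $10=9+1$, so this count alone is not a contradiction, and I would supplement it. In that decomposition $W_{(3)}$ or $W_{(1,1,1)}$ would survive together with $W_{(1)}$, while $W_{(2,1)}$ dies. But $W_{(2,1)}$ visibly occurs in $V^{\otimes3}$ at $d=2$: the mixed-symmetry $S_3$-isotypic component of $(\Complex^2)^{\otimes3}$ is nonzero, and $W_{(3)}$ also occurs there, since $\mathrm{Sym}^3\neq0$. This is a contradiction, so the radical is nonzero.

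Finally, I would cite Rui's criterion~\cite{rui2005criterion} as confirmation that $\delta=2$, $k=3$ lies in the non-semisimple locus, and identify the intrinsic radical with the kernel found in Steps 2--3.
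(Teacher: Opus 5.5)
\textbf{The gap is Step 4, and it cannot be closed, because its target is false: $\Brauer_3(2)$ is semisimple.}

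Your contradiction conflates the $S_3$-isotypic decomposition of $(\Complex^2)^{\otimes3}$ with its $\mbo(2)$/Brauer decomposition. The mixed-symmetry component is $(\det\otimes\Complex^2)\otimes S^{(2,1)}$. Since $\det\otimes\Complex^2\cong\Complex^2$ as $\mbo(2)$-modules, it is two copies of the defining representation. It therefore lies in the isotypic component of type $(1)$, on which $\Brauer_3(2)$ acts through the block of $W_{(1)}$; restricted to $\Complex[S_3]$, $W_{(1)}$ is the permutation module $\Complex^3\supset S^{(2,1)}$. Likewise $\mathrm{Sym}^3(\Complex^2)$ is its harmonic part (type $(3)$, dimension $2$) plus a trace part of type $(1)$.

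So the configuration you tried to exclude is exactly what happens: blocks $(3)$ and $(1)$ survive, $(2,1)$ and $(1,1,1)$ die, and $10=1+9$. Under $\mbo(d)$ a partition labels a constituent of $V^{\otimes k}$ only if $\lambda'_1+\lambda'_2\le d$, which excludes $(2,1)$ and $(1,1,1)$ at $d=2$.

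Semisimplicity itself is a cell-module check. The through-number-$3$ cell modules are Specht modules of $S_3$, nondegenerate over $\Complex$. The module $W_{(1)}$ has Gram matrix $(\delta-1)I_3+J_3$ ($J_3$ the all-ones matrix), with determinant $(\delta+2)(\delta-1)^2$, which equals $4$ at $\delta=2$. This agrees with Rui's criterion at $k=3$ (non-semisimple only for $\delta\in\{-2,1\}$), so citing \cite{rui2005criterion} would refute the claim rather than confirm it.

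Your nilpotency plan also fails at once. Permutation diagrams compose without loops, so $\mathbf n_1=A_3$ satisfies $A_3^2=6A_3$ in $\Brauer_3(2)$, and $A_3/6$ is a nonzero idempotent in $\ker\phi_2$. In fact $\ker\phi_2\cong M_2(\Complex)\oplus\Complex$ is the semisimple ideal of the blocks $(2,1)$ and $(1,1,1)$, and the Jacobson radical is zero.

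The factor $(d-2)^5$ in \eqref{eq:detfac} therefore measures non-faithfulness of $\phi_2$, not non-semisimplicity. It is accounted for by the $\mbo(d)$-irrep dimensions $d(d-2)(d+2)/3$ and $\binom{d}{3}$ vanishing at $d=2$, with exponents $(\dim W_{(2,1)})^2=4$ and $(\dim W_{(1,1,1)})^2=1$. As you suspected at the start of Step 4, the form of Step 1 is not intrinsic to the algebra. The paper's proof does not establish non-semisimplicity either: it bounds $\dim\ker G(2)$ and then calls that kernel a radical. The first clause of the statement should be dropped, and ``radical of the trace form'' read as the radical of the Hilbert--Schmidt pullback form.

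Everything else in the statement is what your Steps 1--3 prove, by a route different from the paper's. The paper gets $\dim\ker G(2)\le5$ from the Smith normal form of $G(d)$ over $\Complex[d]$ and the order-$5$ vanishing of $\det G(d)$ at $d=2$. It gets $\dim\ker G(2)\ge5$ from the null vectors \eqref{eq:nullvecs}, and then uses $\ker G(2)=\ker\phi_2$ without proving the inclusion $\ker G(2)\subseteq\ker\phi_2$. Your Step 2 supplies that inclusion by positive semidefiniteness. Step 3 replaces both the determinant factorization and the null-vector bookkeeping with the surjectivity half of \cref{thm:sw} plus the moment $\tfrac12\binom{6}{3}=10$ of \eqref{eq:o2moment}, so the null vectors become only a consistency check. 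The paper's route buys an explicit kernel basis, which it reuses for \eqref{eq:solv}; yours needs nothing about $G(d)$ beyond its being a Gram matrix.
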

\begin{proof}
Two bounds meet at $5$. For the upper bound, the entries of $G(d)$ are polynomials in $d$ and $\det G(d)\not\equiv0$, so $G(d)$ has a Smith normal form $\diag(f_1,\dots,f_{15})$ over the principal ideal domain $\Complex[d]$, with $\det G(d)=\prod_i f_i$ up to a nonzero constant. Each $f_i$ vanishing at $d=2$ contributes at least $1$ to the order of vanishing of $\det G(d)$ there, and that order is $5$ by \eqref{eq:detfac}, so at most five of the $f_i$ vanish at $d=2$, and $\dim\ker G(2)\le5$. For the lower bound, the five vectors \eqref{eq:nullvecs} are linearly independent and each lies in $\ker G(2)$, by the operator identity it encodes, so $\dim\ker G(2)\ge5$. Together the two bounds give $\dim\ker G(2)=5$ and $\rank G(2)=10$. Since $\ker G(2)=\ker\phi_2$ is the radical of the trace form, that radical has dimension $5$, and $\rank G(2)=\dim\comm(\mbo(2),3)$ by \cref{thm:sw}.
\end{proof}
\noindent Both halves are ours, and neither appeals to the classification of the parameters at which $\Brauer_k(\delta)$ fails to be semisimple. We read the upper bound off the determinant \eqref{eq:detfac}. We read the lower bound off the explicit basis \eqref{eq:nullvecs}, obtained by exact rational row reduction of $G(2)$ and confirmed by verifying each of the five operator identities it encodes, as illustrated above for $\mathbf n_2$. This agrees with the semisimplicity criteria of~\cite{rui2005criterion,wenzl1988structure}, cited here for priority. It is easy to invert the direction of the determinant bound, so we state it explicitly. Vanishing order of $\det G(d)$ bounds the nullity from \emph{above}, not below, as $\diag\big((d-2)^5,1,1,1,1\big)$ shows---its determinant carries the same factor while its nullity at $d=2$ is $1$. Instead the explicit basis supplies the matching lower bound.

\begin{figure}[t]
\centering
\includegraphics[width=0.59\linewidth]{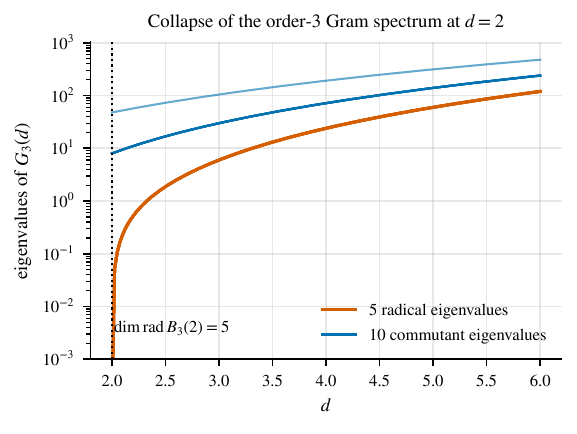}
\caption{The fifteen eigenvalues of the order-three Gram matrix $G_3(d)=\big(d^{\ell(i,j)}\big)$. Exactly five (red) vanish as $d\to2$, the analytic signature of the five-dimensional radical of $\Brauer_3(2)$ and of the factor $(d-2)^5$ in $\det G_3(d)$, and the other ten (blue) stay bounded away from zero and span $\comm(\mbo(2),3)$.}
\label{fig:gram_spectrum}
\end{figure}

\paragraph{Solvability of the physical systems.}
Although $G(2)$ is singular, the linear systems arising in the protocol are always consistent, and for a reason that requires no property whatsoever of the target. Let $T\in\End((\Complex^2)^{\otimes3})$ be arbitrary and $c_i=\langle x_i,T\rangle$ the corresponding right-hand side of \eqref{eq:wein}. If $\mathbf n\in\ker G(2)$ then, as established above, $\ker G(2)=\ker\phi_2$, so $\sum_i n_i x_i=0$ as an operator, and therefore
\begin{equation}\label{eq:consistency}
\mathbf n\cdot c=\sum_i n_i\langle x_i,T\rangle=\Big\langle \sum_i n_i x_i,\,T\Big\rangle=\langle 0,T\rangle=0 .
\end{equation}
Thus $c\perp\ker G(2)$, since $G(2)$ is symmetric, $\im G(2)=(\ker G(2))^{\perp}$, so $c\in\im G(2)$ and the system $G(2)x=c$ is solvable for every $T$. With the pseudoinverse, \eqref{eq:wein} returns the minimum-norm solution. Any other solution differs by an element of $\ker\phi_2$ and hence gives the same operator $\mct^{(3)}(T)$, the unique element of $\comm(\mbo(2),3)$ with the prescribed overlaps. All $d=2$ (single-qubit, local) results are therefore well defined.

It is nonetheless instructive to see what \eqref{eq:consistency} says about the targets that actually arise. Both the channel derivation and the seminorm derivation use $T=A\otimes B\otimes B$ with $A=\channel^*(\Pi_b)$, $B=\Pi_b$ (for $k=2$ the target is $A\otimes B$). Evaluating $\mathbf n_2\cdot c$ and $\mathbf n_4\cdot c$ (the only two of the five that involve the $\Omega$ diagrams non-trivially in this way) against the trace data of \cref{tab:trace_rel} turns \eqref{eq:consistency} into the identity
\begin{equation}\label{eq:solv}
s_5=s_2+s_3-s_4-(1-\alpha_b)\,s_1,
\end{equation}
in the notation $s_1=\tr[\channel^*(\Pi_b)]$, $s_2=\bra b\channel^*(\Pi_b)\ket b$, $s_3=\bra{b^*}\channel^*(\Pi_b)\ket{b^*}$, $s_4=\tr[\channel^*(\Pi_b)\Pi_b^\intercal\Pi_b]$, $s_5=\tr[\channel^*(\Pi_b)\Pi_b\Pi_b^\intercal]$, $\alpha_b=|\braket{b}{b^*}|^2$. For a computational-basis measurement $\Pi_b^\intercal=\Pi_b$, $\alpha_b=1$, and \eqref{eq:solv} holds identically ($s_4=s_5$, $s_2=s_3$).

\paragraph{Finiteness of the coefficients at $d=2$.}
Consistency alone does not guarantee that the twirl coefficients stay finite as $d\to2$, and we show they do. Writing $t_1=\tr[\channel^*(\Pi_b)]$, $t_2=\bra b\channel^*(\Pi_b)\ket b$, the overlap vector for $T=\channel^*(\Pi_b)\otimes\Pi_b\otimes\Pi_b$ is, by \cref{tab:trace_rel}, $r_i=t_1$ for $x_i\in\{\id,\mbs_{(23)},\Omega_{23;23}\}$ and $r_i=t_2$ for the remaining twelve $x_i$. This two-block partition is preserved by the symmetry of the target, since $T$ is fixed by $\mbs_{(23)}$ and by the cup--cap exchange on the last two legs, and these operations permute the fifteen diagrams while preserving both blocks, so the overlap vector $r$ and each row sum of $G(d)$ restricted to a block are constant across the block. (The two blocks are not a single group orbit, and they mix the through-numbers, but they are the coarsest partition on which $r$ and the relevant Gram row sums are simultaneously constant, which is all the ansatz needs.) We may therefore seek a solution constant on each block, $\mct^{(3)}(T)=c_A\!\!\sum_{x_i\in\text{block }A}\!\!x_i+c_B\!\!\sum_{x_i\in\text{block }B}\!\!x_i$. Substituting into $G(d)\,x=r$, every $A$-row of $G(d)$ has the same sum over the $A$-columns and over the $B$-columns, and likewise every $B$-row, so the fifteen equations reduce to the $2\times2$ block system
\begin{equation}\label{eq:block2x2}
\begin{pmatrix}S_{AA}&S_{AB}\\ S_{BA}&S_{BB}\end{pmatrix}\!\begin{pmatrix}c_A\\ c_B\end{pmatrix}=\begin{pmatrix}t_1\\ t_2\end{pmatrix},\qquad
\begin{aligned}
S_{AA}&=d^2(d+2), & S_{AB}&=4d(d+2),\\
S_{BA}&=d(d+2), & S_{BB}&=d(d+2)(d+3),
\end{aligned}
\end{equation}
where $S_{XY}$ is the (block-independent) sum of the Gram entries from a row in block $X$ over the columns in block $Y$, computed from $G(d)=(d^{\ell(i,j)})$. Its determinant is $S_{AA}S_{BB}-S_{AB}S_{BA}=d^2(d+2)^2(d-1)(d+4)$, so Cramer's rule gives, for all fifteen rows simultaneously,
\begin{equation}\label{eq:cAcB}
c_A=\frac{(d+3)\,t_1-4\,t_2}{p_D(d)},\qquad
c_B=\frac{d\,t_2-t_1}{p_D(d)},\qquad
p_D(d)=d(d-1)(d+2)(d+4),
\end{equation}
with exactly zero residual, verified symbolically. Its denominator is $p_D(d)$, which contains no factor $(d-2)$, because although $\det G(d)\propto(d-2)^5$ vanishes at $d=2$, the physical solution is obtained by projecting $r$ onto $\mathrm{row}(G)=\comm$, and this projection is smooth through the degeneracy. Hence the coefficients are finite at $d=2$,
\begin{equation}\label{eq:cAcB2}
c_A(2)=\frac{5t_1-4t_2}{48}=\frac{5t_1}{48}-\frac{t_2}{12},\qquad
c_B(2)=\frac{2t_2-t_1}{48}=-\frac{t_1}{48}+\frac{t_2}{24},
\end{equation}
and substituting \eqref{eq:cAcB2} back reproduces $r(2)$ exactly. So the $d=2$ (single-qubit, hence local) shadow channel and seminorm are given by explicit finite formulas, on which the local results of Part~II rely.

Our two-block argument uses the reality of the basis twice: through $\Omega_{23;23}\mapsto t_1\tr[\Pi_b\Pi_b^\intercal]=t_1$, placing $\Omega_{23;23}$ in block $A$, and through $s_2=s_3$, $s_4=s_5$. For a complex basis, neither holds, and the overlap vector is no longer two-valued, and \eqref{eq:cAcB}--\eqref{eq:cAcB2} do not apply. Since Part~IV needs precisely that case, we treat it separately next.

\subsection{The \texorpdfstring{$d=2$}{d=2} third moment for a complex basis}
\label{app:d2complex}

Part~IV requires the single-qubit third moment for a basis $\mcw=\{\ket{w_1},\ket{w_2}\}$ that need not be real. This is the one place where the closed form \eqref{eq:complex_R} of \cref{app:complex_third_moment} is unavailable, since its coefficients carry the denominator $D=d(d-2)(d-1)(d+2)(d+4)$, which vanishes at $d=2$. We therefore compute the object Part~IV actually needs, the contracted operator $R$, not the fifteen coefficients---directly at $d=2$, and find that it is finite and, remarkably, a multiple of the identity.

Three single-qubit facts drive the computation.

\begin{lemma}[Equal reality at $d=2$]\label{lem:equalreality}
For any orthonormal basis $\{\ket{w_1},\ket{w_2}\}$ of $\Complex^2$ we have $\alpha_{w_1}=\alpha_{w_2}=\tfrac12\alpha_{\mathrm r}$. Consequently, for any trace-preserving or unital single-qubit channel $\channel_1$,
\begin{equation}\label{eq:gamma_d2}
\gamma_1\;=\;\sum_w\tr[\channel_1^*(\Pi_w)]\,\alpha_w\;=\;\tfrac12\alpha_{\mathrm r}\,\tr[\channel_1(\id)]\;=\;\alpha_{\mathrm r},
\end{equation}
whether or not $\channel_1$ is unital.
\end{lemma}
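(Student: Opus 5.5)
We first prove the equal-reality claim directly from the structure of $\Complex^2$, and then read off \eqref{eq:gamma_d2}.

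For a unit vector $\ket w\in\Complex^2$ write $\ket w=(a,b)^\intercal$, so that $\braket{w}{w^*}=\bar a\bar a+\bar b\bar b=\overline{a^2+b^2}$ and $\alpha_w=|a^2+b^2|^2$. Up to a global phase, the orthogonal complement of $\ket w$ is $\ket{w^\perp}=(-\bar b,\bar a)^\intercal$. Substituting gives $\alpha_{w^\perp}=|\bar b^2+\bar a^2|^2=|a^2+b^2|^2=\alpha_w$. A global phase $e^{i\theta}$ multiplies $\braket{w}{w^*}$ by $e^{-2i\theta}$ and so leaves its modulus unchanged. Hence $\alpha_{w_1}=\alpha_{w_2}$, and since their sum is $\alpha_{\mathrm r}$ by definition, each equals $\tfrac12\alpha_{\mathrm r}$.

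We could also phrase this invariantly: $|\braket{w}{w^*}|=|\bra w C\ket w|$ with $C$ the conjugation, and at $d=2$ we have $\ket{w^\perp}\propto \epsilon\ket{w^*}$. The explicit coordinate computation is shorter, so we use it.

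Next, since $\alpha_w$ is independent of $w$, it factors out of the sum:
\[
\gamma_1=\tfrac12\alpha_{\mathrm r}\sum_w\tr[\channel_1^*(\Pi_w)].
\]
By \cref{lem:adjoint}, $\sum_w\tr[\channel_1^*(\Pi_w)]=\tr[\channel_1(\id)]=\alpha_1$. That lemma needs only $\sum_w\Pi_w=\id$, which holds for any orthonormal basis, not just the computational one. Finally, $\alpha_1=2$ whenever $\channel_1$ is trace-preserving or unital: $\tr\channel_1(\id)=\tr\id=2$ in the trace-preserving case, and $\channel_1(\id)=\id$ in the unital case. This gives $\gamma_1=\alpha_{\mathrm r}$, hence the phrase ``whether or not unital.''

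The only step with real content is the identity $\alpha_{w^\perp}=\alpha_w$, which is special to $d=2$ because the orthogonal complement is fixed by antilinear data of $\ket w$. It is routine once the complement is written explicitly. The main thing to check is that the phase ambiguity in both $\ket{w_2}$ and $\ket{w^*}$ does not affect the modulus squared.
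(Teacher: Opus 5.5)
Your proof is correct. The second half matches the paper: you factor out the common $\alpha_w$ and use $\sum_w\tr[\channel_1^*(\Pi_w)]=\tr[\channel_1(\id)]=2$. The paper evaluates $\tr[\channel_1^*(\Pi_w)]=\bra w\channel_1(\id)\ket w$ directly rather than citing \cref{lem:adjoint}, and your remark that the lemma needs only $\sum_w\Pi_w=\id$ is what makes the citation legitimate.

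The equal-reality step, however, takes a different route.
\begin{itemize}
\item The paper places the basis in the columns of a unitary $W$ and notes $\alpha_{w_k}=|M_{kk}|^2$ with $M=W^\intercal W$, which is symmetric and unitary. It then reads $|M_{11}|=|M_{22}|$ off the column norms $|M_{11}|^2+|M_{12}|^2=1=|M_{12}|^2+|M_{22}|^2$. This needs no explicit orthogonal complement and no phase bookkeeping, since rephasing a column of $W$ only rephases the corresponding $M_{kk}$. It also shows at once why the statement is special to $d=2$: for $d\ge3$ the off-diagonal parts of two columns share only one entry, so their diagonal moduli need not agree.
\item Your route exposes the mechanism as the antiunitary map $\ket w\mapsto\epsilon\ket{w^*}$ onto the orthogonal complement. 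It actually yields slightly more. With $\ket{w^\perp}=(-\bar b,\bar a)^\intercal$ one gets $\braket{w^\perp}{(w^\perp)^*}=a^2+b^2=\overline{\braket{w}{w^*}}$, so the two overlaps are complex conjugates up to the phase of $\ket{w_2}$, not merely of equal modulus.
\end{itemize}
Your treatment of that phase is right: rephasing $\ket w$ by $e^{i\theta}$ multiplies $\braket{w}{w^*}$ by $e^{-2i\theta}$ and leaves $\alpha_w$ unchanged.
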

\begin{proof}
Collect the basis vectors as the columns of a unitary $W$, so $\ket{w_k}=W\ket k$. Then $\braket{w_k}{w_k^*}=\sum_j\overline{W_{jk}}\,\overline{W_{jk}}=\overline{(W^{\intercal}W)_{kk}}$, so $\alpha_{w_k}=|M_{kk}|^2$ with $M=W^\intercal W$. Here $M$ is unitary and symmetric, $M=\big(\begin{smallmatrix}a&b\\ b&c\end{smallmatrix}\big)$; unitarity of its two columns gives $|a|^2+|b|^2=1=|b|^2+|c|^2$, hence $|a|=|c|$ and $\alpha_{w_1}=\alpha_{w_2}$. Their sum is $\alpha_{\mathrm r}$, so each equals $\tfrac12\alpha_{\mathrm r}$. For \eqref{eq:gamma_d2}, $\tr[\channel_1^*(\Pi_w)]=\tr[\Pi_w\channel_1(\id)]=\bra w\channel_1(\id)\ket w$, so pulling out the common factor $\alpha_w=\tfrac12\alpha_{\mathrm r}$ leaves $\sum_w\bra w\channel_1(\id)\ket w=\tr[\channel_1(\id)]=\alpha=2$.
\end{proof}
This is special to $d=2$, since for $d\ge3$ the individual $\alpha_w$ differ and $\gamma$ is an independent invariant. It collapses the six invariants $\alpha,\beta,\tilde\beta,\gamma,\delta,\bar\delta$ of \eqref{eq:scalars}--\eqref{eq:complex_invariants} to fewer at $d=2$: summing \eqref{eq:solv} over $w$ gives, in addition, the linear relation
\begin{equation}\label{eq:d2_delta}
2\,\mathrm{Re}\,\delta\;=\;\beta+\tilde\beta-\alpha+\gamma\qquad(d=2),
\end{equation}
which has no analogue for $d\ge3$ and which is exactly the shadow cast by the five-dimensional radical of \cref{thm:radical} on the invariants.

\begin{lemma}[The inverse acts by a scalar on single-qubit Paulis]\label{lem:lambda}
Let $\mcm_1=\mcm_{\mbo(2),\channel_1,\mcw}=\mcd_{1,f_1}\circ\Phi_{q}$ be the channel of \cref{prop:local_complex}, with $\Phi_q(A)=qA+(1-q)A^\intercal$, $q=q_{\beta_1}$, and $f_1\neq0$, $q\neq\tfrac12$. Let $P\in\{X,Y,Z\}$, so $P$ is traceless with $P^2=\id$ and $P^\intercal=\eta P$, $\eta=+1$ for $P\in\{X,Z\}$ and $\eta=-1$ for $P=Y$. Then
\begin{equation}\label{eq:lambda}
\hat P\;:=\;\mcm_1^{-1,\dagger}(P)\;=\;\lambda_\eta\,P,\qquad
\lambda_{+1}=\frac{1}{f_1},\qquad \lambda_{-1}=\frac{1}{f_1(2q-1)},
\end{equation}
and $2q-1=(\beta_1-\tilde\beta_1)/(2f_1)$.
\end{lemma}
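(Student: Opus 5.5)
Since $\mcm_1=\mcd_{1,f_1}\circ\Phi_q$ with $q$ real, \cref{claim:residue}\ref{it:res_sa} gives $\mcm_1^\dagger=\mcm_1$. Because $q\neq\tfrac12$ and $f_1\neq0$, \cref{claim:residue}\ref{it:res_inv} shows $\mcm_1$ is invertible on all of $\mcl(\Complex^2)$. Hence $\mcm_1^{-1,\dagger}=(\mcm_1^{\dagger})^{-1}=\mcm_1^{-1}=\Phi_q^{-1}\circ\mcd_{1,1/f_1}$. The statement then reduces to computing this explicit map on a single Pauli $P$.

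First I apply $\mcd_{1,1/f_1}$. Since $\tr P=0$, definition \eqref{eq:depol_def} gives $\mcd_{1,1/f_1}(P)=P/f_1$. Next I apply the inverse formula of \cref{claim:residue}\ref{it:res_inv}, $\Phi_q^{-1}(X)=\tfrac{qX-(1-q)X^\intercal}{2q-1}$, using $P^\intercal=\eta P$. This yields $\Phi_q^{-1}(P)=\tfrac{q-(1-q)\eta}{2q-1}P$. For $\eta=+1$ the prefactor is $\tfrac{2q-1}{2q-1}=1$, so $\lambda_{+1}=1/f_1$. For $\eta=-1$ the prefactor is $\tfrac{1}{2q-1}$, so $\lambda_{-1}=1/(f_1(2q-1))$. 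The equivalent check in the forward direction is that $\Phi_q(P)=(q+(1-q)\eta)P$, which equals $P$ or $(2q-1)P$; either route gives \eqref{eq:lambda}.

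For the last identity, I specialise \eqref{eq:qbeta_half} to $d=2$: $2q-1=\tfrac{(\beta_1-\tilde\beta_1)\cdot4}{2\cdot2(\beta_1+\tilde\beta_1-2)}=\tfrac{\beta_1-\tilde\beta_1}{\beta_1+\tilde\beta_1-2}$. By \cref{prop:local_complex}, $f_1=(\beta_1+\tilde\beta_1-2)/4$, so $\beta_1+\tilde\beta_1-2=4f_1$. Therefore $2q-1=(\beta_1-\tilde\beta_1)/(4f_1)$.

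This does not match the stated $(\beta_1-\tilde\beta_1)/(2f_1)$, and the mismatch is the main thing to resolve, since the rest is routine. I would first recheck the normalisation directly from $q_{\beta_1}=\tfrac{3\beta_1-2-\tilde\beta_1}{2(\beta_1+\tilde\beta_1-2)}$. That gives $2q_{\beta_1}-1=\tfrac{2(\beta_1-\tilde\beta_1)}{2(\beta_1+\tilde\beta_1-2)}$, which again equals $(\beta_1-\tilde\beta_1)/(4f_1)$. Two explanations are possible. One is that the intended $f_1$ is normalised as $(\beta_1+\tilde\beta_1-2)/2$. The other is that the factor in the statement is a typo, and the proof should record $(\beta_1-\tilde\beta_1)/(4f_1)$ as what follows from \cref{prop:local_complex}. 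The only hypothesis-dependent point to flag is that $q$ is real, which holds because $\beta_1$ and $\tilde\beta_1$ are real (\cref{claim:beta_range}); self-adjointness, and hence the use of the dagger, relies on this.
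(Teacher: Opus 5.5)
Your derivation of $\hat P=\lambda_\eta P$ is correct and follows the paper's argument. Both use the self-adjointness of $\Phi_q$ (for real $q$) and of $\mcd_{1,f_1}$, then inversion, then $P^\intercal=\eta P$ and $\tr P=0$. Your ordering $\Phi_q^{-1}\circ\mcd_{1,1/f_1}$ differs from the paper's $\mcd_{1,1/f_1}\circ\Phi_q^{-1}$, but the two are equal by \cref{claim:residue}\ref{it:res_comm}. The genuine problem is the final clause. You did not prove it; you concluded it was false. That conclusion comes from an arithmetic slip of a factor of $2$, made twice, and the statement is correct as written.

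First, \eqref{eq:qbeta_half} gives $q_\beta-\tfrac12$, not $2q_\beta-1$. At $d=2$ it reads $q-\tfrac12=\frac{\beta_1-\tilde\beta_1}{\beta_1+\tilde\beta_1-2}$. Doubling gives
\[
2q-1=\frac{2(\beta_1-\tilde\beta_1)}{\beta_1+\tilde\beta_1-2}=\frac{2(\beta_1-\tilde\beta_1)}{4f_1}=\frac{\beta_1-\tilde\beta_1}{2f_1}.
\]
Second, in your recheck, doubling $q_{\beta_1}=\frac{3\beta_1-2-\tilde\beta_1}{2(\beta_1+\tilde\beta_1-2)}$ cancels the $2$ in its denominator. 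So
\[
2q_{\beta_1}-1=\frac{3\beta_1-2-\tilde\beta_1-(\beta_1+\tilde\beta_1-2)}{\beta_1+\tilde\beta_1-2}=\frac{2(\beta_1-\tilde\beta_1)}{\beta_1+\tilde\beta_1-2},
\]
not $\frac{2(\beta_1-\tilde\beta_1)}{2(\beta_1+\tilde\beta_1-2)}$.

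Depolarizing noise gives a quick sanity check. There $\beta_1-\tilde\beta_1=p(2-\alpha_{\mathrm r})$ and $f_1=p\alpha_{\mathrm r}/4$. Computing directly from $q_{\beta_1}$ gives $2q_{\beta_1}-1=2(2-\alpha_{\mathrm r})/\alpha_{\mathrm r}$, which is what the paper uses in \eqref{eq:c1_depol}. Your formula would give half that. So you need neither a renormalized $f_1$ nor a typo correction: with $f_1=(\beta_1+\tilde\beta_1-2)/4$ from \cref{prop:local_complex}, the identity holds exactly. Once the arithmetic is fixed, your proof is complete.
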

\begin{proof}
For all $X,A$ we have $\tr[X^\dagger A^\intercal]=\tr[(X^\dagger A^\intercal)^\intercal]=\tr[\overline X A]=\langle X^\intercal,A\rangle$, so $\Phi_q^\dagger=\Phi_q$ for real $q$; and $\mcd_{1,f}^\dagger=\mcd_{1,f}$. Hence $\mcm_1^\dagger=\Phi_q\circ\mcd_{1,f_1}$ and $\mcm_1^{-1,\dagger}=(\mcm_1^\dagger)^{-1}=\mcd_{1,1/f_1}\circ\Phi_q^{-1}$. Since $P^\intercal=\eta P$,
\begin{equation}
\Phi_q^{-1}(P)=\frac{qP-(1-q)P^\intercal}{2q-1}=\frac{q-\eta(1-q)}{2q-1}\,P,
\end{equation}
which is $P$ for $\eta=+1$ and $\tfrac{1}{2q-1}P$ for $\eta=-1$. Applying $\mcd_{1,1/f_1}$ to the traceless operator $P$ multiplies by $1/f_1$. Finally $q-\tfrac12=\tfrac{\beta_1-\tilde\beta_1}{\beta_1+\tilde\beta_1-2}$ from \eqref{eq:Atilde} at $d=2$, and $\beta_1+\tilde\beta_1-2=4f_1$.
\end{proof}

Together the two lemmas give the single-qubit factor Part~IV needs, and at $d=2$ the answer is
finite where \cref{prop:complex_variance} is singular.

\begin{proposition}[Single-qubit complex-basis seminorm factors]\label{prop:d2complex}
Let $\channel_1$ be a trace-preserving or unital single-qubit channel, $\mcw$ a basis of reality $\alpha_{\mathrm r}$ with $f_1\neq0$ and $\beta_1\neq\tilde\beta_1$, and $P\in\{X,Y,Z\}$ with $P^\intercal=\eta P$. Then the contracted third moment
\begin{equation}\label{eq:Rj_d2}
R\;=\;\sum_w\tr_{23}\!\Big[\mct^{(3)}_{\mbo(2)}\!\big(\channel_1^*(\Pi_w)\otimes\Pi_w\otimes\Pi_w\big)\big(\id\otimes\hat P\otimes\hat P\big)\Big]
\end{equation}
is a multiple of the identity, $R=c_1(P;\mcw)\,\id$, with the finite value
\begin{equation}\label{eq:c1_closed}
c_1(P;\mcw)=
\begin{cases}
\dfrac{\alpha_{\mathrm r}}{4\,f_1^{2}}, & P^\intercal=+P \quad (P\in\{X,Z\}),\\[3mm]
\dfrac{2-\alpha_{\mathrm r}}{2\,f_1^{2}\,(2q_{\beta_1}-1)^{2}}, & P^\intercal=-P \quad (P=Y).
\end{cases}
\end{equation}
In particular $c_1$ is independent of the state, so the maximization over $\sigma$ in \eqref{eq:seminorm_def} is vacuous and \eqref{eq:c1_closed} is exact rather than an optimized bound.
\end{proposition}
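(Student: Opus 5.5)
I would not evaluate the fifteen-term Weingarten expansion at $d=2$, where $G(2)$ is singular. Instead I would compute $R$ directly from its defining property: for every state $\sigma$, $\tr[\sigma R]$ is the single-shot second moment
\[
\tr[\sigma R]=\E_{U\sim\mbo(2)}\sum_w\bra w\channel_1(U\sigma U^\intercal)\ket w\,\bra wU\hat PU^\intercal\ket w^2 .
\]
This is legitimate because \cref{app:d2} shows that the twirl $\mct^{(3)}_{\mbo(2)}$ is the Haar average whatever basis of the commutant is used. First I would invoke \cref{lem:lambda} to replace $\hat P$ by $\lambda_\eta P$, so that $\lambda_\eta^2$ factors out. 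Moving the channel to the dual, the integrand becomes $\tr\big[\sigma\,U^\intercal\channel_1^*(\Pi_w)U\big]\,\lambda_\eta^2\bra wUPU^\intercal\ket w^2$. The Haar measure on $\mbo(2)$ is then parametrized explicitly: rotations $U(\theta)$ with $\theta$ uniform, plus the reflection coset with weight $\tfrac12$.

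\emph{Case $P=Y$.} Since $Y$ is antisymmetric and $2\times2$, $UYU^\intercal=\det(U)\,Y$ for every $U\in\mbo(2)$. Hence $\bra wUYU^\intercal\ket w^2=\bra wY\ket w^2$ does not depend on $U$. This number is $1-\alpha_w$: complex conjugation flips the $y$-component of the Bloch vector, so $|\braket{w}{w^*}|^2=1-\bra wY\ket w^2$. By \cref{lem:equalreality} it equals $1-\tfrac12\alpha_{\mathrm r}$ for both $w$. What remains is the first-moment average $\E_U U^\intercal A U$ with $A=\channel_1^*(\Pi_w)$. Because of the reflections, the only conjugation invariant of $\mbo(2)$ on $2\times2$ matrices is the identity (the rotation-invariant $Y$ is odd under reflection), so this average is $\tfrac12\tr(A)\,\id$. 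Summing over $w$ and using $\sum_w\tr\channel_1^*(\Pi_w)=\alpha=2$ (\cref{lem:adjoint}) gives
\[
R=\lambda_{-1}^2\big(1-\tfrac12\alpha_{\mathrm r}\big)\id ,
\]
which is the second line of \eqref{eq:c1_closed} once $\lambda_{-1}$ from \eqref{eq:lambda} is inserted.

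\emph{Case $P\in\{X,Z\}$.} Now $U(\theta)PU(\theta)^\intercal$ rotates $P$ inside $\mathrm{span}\{X,Z\}$ by angle $2\theta$. Write the $xz$-projection of the Bloch vector of $w$ as $n_w$. Then $\bra wUPU^\intercal\ket w^2=|n_w|^2\cos^2(2\theta+\varphi_w)$, which contains only Fourier modes $0$ and $4\theta$. By the Bloch-vector identity above, $|n_w|^2=\alpha_w$. The other factor, $U^\intercal AU$, has a trace part (mode $0$), an $xz$ part (modes $\pm2\theta$), and a $Y$ part that is fixed under rotations and changes sign under reflections. When the $\theta$-average and the reflection-coset average are taken, only the mode-$0$ times mode-$0$ term survives. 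That term is $\tfrac12\alpha_w\cdot\tfrac12\tr(A)\,\id$. Summing over $w$ with $\alpha_w=\tfrac12\alpha_{\mathrm r}$ and $\sum_w\tr A=2$ gives
\[
R=\lambda_{+1}^2\,\tfrac14\alpha_{\mathrm r}\,\id=\frac{\alpha_{\mathrm r}}{4f_1^2}\,\id .
\]

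The main obstacle is the bookkeeping in the $X,Z$ case. I need to check that the $Y$ component and the $xz$ component of $\channel_1^*(\Pi_w)$ really have zero overlap with the $4\theta$ mode, and that the reflection coset reproduces the rotation average rather than adding a sign-dependent term. To settle this I would write a reflection as $U(\theta)Z$ and verify both points explicitly. Since $R\propto\id$, $\tr[\sigma R]$ does not depend on $\sigma$, so the maximization in \eqref{eq:seminorm_def} is vacuous. As a cross-check, at $\alpha_{\mathrm r}=2$ the $X,Z$ value reduces to $(2f_1^2)^{-1}$, which is Part~II.
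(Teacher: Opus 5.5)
Your proof is correct, and it takes a genuinely different route from the paper's.

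\textbf{The paper's route.} The paper stays inside the Brauer calculus. It expands the $d=2$ twirl as $\sum_i c_i x_i$ by solving the singular system $G(2)x=r$ against the complex-basis overlap vector \eqref{eq:r_d2}. It argues that the contracted combinations $S_\pm$ do not depend on which solution is chosen. It then invokes the $d=2$-only relation \eqref{eq:d2_delta}, $2\,\mathrm{Re}\,\delta=\beta_1+\tilde\beta_1-\alpha+\gamma_1$, which is the trace of the five-dimensional radical. That relation cancels $\beta_1$, $\tilde\beta_1$ and $\mathrm{Re}\,\delta$, leaving $S_++S_-=\gamma_1/4$ and $S_+-S_-=1-\gamma_1/2$.

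\textbf{Your route.} You compute the Haar integral over $\mbo(2)$ directly. That integral is the definition of $\mct^{(3)}$, so you need no appeal to \cref{app:d2}: it is the Weingarten expansion, not the integral, that required the solvability argument. Three facts do all the work.
\begin{enumerate}
\item $UYU^\intercal=\det(U)\,Y$.
\item $\mbo(2)$ acts irreducibly on $\Complex^2$, so its first-moment twirl is $\tfrac12\tr(\cdot)\,\id$.
\item There is a Fourier-mode mismatch: the squared matrix element carries modes $0,\pm4\theta$, while the $xz$ part of $U^\intercal\channel_1^*(\Pi_w)U$ carries modes $\pm2\theta$.
\end{enumerate}
These make it transparent why only $\tr\channel_1^*(\Pi_w)$ and $\alpha_w$ survive. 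Your two sums are exactly $\tfrac14\sum_w\alpha_w\tr\channel_1^*(\Pi_w)=\gamma_1/4$ and $\tfrac12\sum_w(1-\alpha_w)\tr\channel_1^*(\Pi_w)=1-\gamma_1/2$. These are the paper's values, reached without a pseudoinverse and without \eqref{eq:d2_delta}.

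\textbf{The check you flagged.} It does go through. Write a reflection as $U=U(\theta)Z$. Then $ZPZ=\pm P$ for $P\in\{X,Z\}$, so the squared matrix element is the same function of $\theta$ as on the rotation coset. Meanwhile $ZYZ=-Y$ reverses the rotation-invariant $Y$ component of $U^\intercal\channel_1^*(\Pi_w)U$, so the two cosets cancel it. Conjugation by $Z$ keeps the $xz$ part at modes $\pm2\theta$, so that part averages to zero on both cosets.

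\textbf{Comparison.} The paper's route buys uniformity with the $d\ge3$ computation of \cref{prop:complex_variance}: it uses the same overlap table and diagram bookkeeping. It also buys an explicit link between the noise-independence of $R$ and the radical of the order-three Brauer algebra at $d=2$. Your route buys an elementary, self-contained argument that sidesteps the singular Gram matrix and the choice-of-solution issue. It also explains structurally why noise drops out at $d=2$, whereas the paper exhibits this as a cancellation.
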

\begin{proof}
Write $\Sigma=\sum_w\mct^{(3)}_{\mbo(2)}(\channel_1^*(\Pi_w)\otimes\Pi_w^{\otimes2})=\sum_i c_i x_i$ in the Brauer basis \eqref{eq:basisorder}, and the coefficients $c_i$ exist and are finite by the solvability argument above, and are independent of which particular solution is chosen because $\ker G(2)=\ker\phi_2$. By \cref{lem:lambda}, $\hat P=\lambda_\eta P$, so the right-hand column of \cref{tab:trace_rel} with $A=\id$ and $B=\hat P=\lambda_\eta P$ evaluates every diagram using only $\tr\hat P=0$, $\hat P^2=\lambda_\eta^2\id$, $\hat P^\intercal=\eta\hat P$:
\begin{equation}
\begin{aligned}
\id&\mapsto\id\,\tr[\hat P]^2=0, &
\mbs_{(23)}&\mapsto\id\,\tr[\hat P^2]=2\lambda_\eta^2\,\id,\\
\mbs_{(12)},\mbs_{(13)}&\mapsto\hat P\,\tr[\hat P]=0, &
\mbs_{(123)},\mbs_{(132)}&\mapsto\hat P^2=\lambda_\eta^2\id,\\
\Omega_{12;12},\Omega_{13;13}&\mapsto\tr[\hat P]\,\hat P^\intercal=0, &
\Omega_{23;23}&\mapsto\id\,\tr[\hat P\hat P^\intercal]=2\eta\lambda_\eta^2\id,\\
\Omega_{12;23},\Omega_{13;23}&\mapsto\hat P^\intercal\hat P=\eta\lambda_\eta^2\id, &
\Omega_{23;12},\Omega_{23;13}&\mapsto\hat P\hat P^\intercal=\eta\lambda_\eta^2\id,\\
\Omega_{13;12},\Omega_{12;13}&\mapsto(\hat P^\intercal)^2=\lambda_\eta^2\id.
\end{aligned}
\end{equation}
Every surviving term is a multiple of $\id$, using only that a single-qubit Pauli squares to the identity and is an eigenvector of the transpose; hence $R$ is state-independent at $d=2$. Collecting them,
\begin{equation}\label{eq:R_SpSm}
R=\lambda_\eta^{2}\big(S_{+}+\eta\,S_{-}\big)\,\id,\qquad
\begin{aligned}
S_{+}&=2c_{2}+c_{5}+c_{6}+c_{13}+c_{15},\\
S_{-}&=2c_{8}+c_{10}+c_{11}+c_{12}+c_{14},
\end{aligned}
\end{equation}
in the numbering of \eqref{eq:basisorder}. Because $R$ is determined by $\Sigma$ and \eqref{eq:R_SpSm} holds for both $\eta=\pm1$, the two combinations $S_\pm$ are themselves independent of the choice of solution. In the notation of \eqref{eq:solv}, the overlap vector for a complex basis is
\begin{equation}\label{eq:r_d2}
r=\big(s_1,\,s_1,\,s_2,\,s_2,\,s_2,\,s_2;\ s_3,\,\alpha_ws_1,\,s_3,\,s_4,\,s_5,\,s_4,\,s_3,\,s_5,\,s_3\big)
\end{equation}
in the ordering \eqref{eq:basisorder}: the two diagrams with no measurement leg contracted give $s_1$, the four remaining permutations $s_2$, the four $\Omega$ diagrams that transpose both measurement legs $s_3$, the cap on the two measurement legs $\alpha_ws_1$, and the four diagrams coupling slot $1$ to a single measurement leg $s_4$ or $s_5$ according to orientation. Solving $G(2)x=r$ and forming the two combinations of \eqref{eq:R_SpSm} gives
\begin{equation}\label{eq:SpSm_raw}
S_{+}=\frac{s_1}{6}+\frac{s_2+s_3-s_4-s_5}{12}-\frac{\alpha_ws_1}{24},\qquad
S_{-}=-\frac{s_1}{12}-\frac{s_2+s_3-s_4-s_5}{6}+\frac{5\,\alpha_ws_1}{24}.
\end{equation}
Summing over $w$ replaces $s_1,s_2,s_3,s_4,s_5,\alpha_ws_1$ by $\alpha,\beta_1,\tilde\beta_1,\delta,\bar\delta,\gamma_1$ respectively, so that with $\alpha=2$,
\begin{equation}\label{eq:SpSm_mid}
S_{+}=\frac13+\frac{\beta_1+\tilde\beta_1-2\,\mathrm{Re}\,\delta}{12}-\frac{\gamma_1}{24},\qquad
S_{-}=-\frac16-\frac{\beta_1+\tilde\beta_1-2\,\mathrm{Re}\,\delta}{6}+\frac{5\gamma_1}{24}.
\end{equation}
Now the noise scalars cancel, and it is \eqref{eq:d2_delta} that cancels them: substituting $2\,\mathrm{Re}\,\delta=\beta_1+\tilde\beta_1-2+\gamma_1$ leaves
\begin{equation}\label{eq:SpSm}
S_{+}=\frac{4-\gamma_1}{8},\qquad S_{-}=\frac{3\gamma_1-4}{8}\qquad(\alpha=2),
\end{equation}
so the contracted third moment at $d=2$ depends on the noise and the basis only through $\gamma_1$. This is not an accident of the single-qubit case so much as the radical of \cref{thm:radical} asserting itself, since \eqref{eq:d2_delta} is a relation among the invariants that holds only at $d=2$, and it is exactly what removes $\beta_1$ and $\tilde\beta_1$ here. With $\gamma_1=\alpha_{\mathrm r}$ from \cref{lem:equalreality},
\begin{equation}
S_{+}+S_{-}=\frac{\alpha_{\mathrm r}}{4},\qquad S_{+}-S_{-}=\frac{2-\alpha_{\mathrm r}}{2}.
\end{equation}
Substituting into \eqref{eq:R_SpSm} together with \eqref{eq:lambda} gives \eqref{eq:c1_closed}. Every denominator is a product of $f_1$ and $2q_{\beta_1}-1$, both nonzero by hypothesis, so $c_1$ is finite. No factor of $(d-2)$ appears.
\end{proof}

Two special cases fix the result. A real basis has $\alpha_{\mathrm r}=2$, whence $c_1(X)=c_1(Z)=\tfrac{2}{4f_1^2}=(2f_1^2)^{-1}$, exactly \cref{prop:pauli_seminorm}, while the numerator $2-\alpha_{\mathrm r}$ of the $Y$ case vanishes together with $2q_{\beta_1}-1$, and the correct reading there is not $0/0$ but that $Y$ lies in the kernel, so $\mcm_1^{-1,\dagger}(Y)$ does not exist and $c_1(Y)=\infty$ by the convention of \cref{sec:local_complex_seminorm}. And for single-qubit depolarizing noise $\mcd_{1,p}$ on a basis of reality $\alpha_{\mathrm r}$ we have $\beta_1=1+p$ and $\tilde\beta_1=p\alpha_{\mathrm r}+1-p$, hence $\beta_1-\tilde\beta_1=p(2-\alpha_{\mathrm r})$, $f_1=\tfrac{p\,\alpha_{\mathrm r}}{4}$ and $2q_{\beta_1}-1=\tfrac{\beta_1-\tilde\beta_1}{2f_1}=\tfrac{2(2-\alpha_{\mathrm r})}{\alpha_{\mathrm r}}$, and \eqref{eq:c1_closed} collapses to the fully explicit pair
\begin{equation}\label{eq:c1_depol}
c_1(X)=c_1(Z)=\frac{4}{p^{2}\alpha_{\mathrm r}},\qquad
c_1(Y)=\frac{2}{p^{2}\,(2-\alpha_{\mathrm r})} ,
\end{equation}
which exhibits the trade-off of Part~IV in closed form, in that as the per-qubit basis is made less real ($\alpha_{\mathrm r}$ decreasing from $2$) the symmetric factors $c_1(X),c_1(Z)$ grow like $\alpha_{\mathrm r}^{-1}$ while the previously infinite $c_1(Y)$ falls like $(2-\alpha_{\mathrm r})^{-1}$, and the real-basis values $2/p^2$ and $\infty$ are recovered at $\alpha_{\mathrm r}=2$. \cref{prop:d2complex} and \eqref{eq:c1_depol} were checked against the exact $\mbo(2)$ commutant twirl for depolarizing, amplitude-damping and Haar-random CPTP single-qubit channels on random complex bases.

\section{Proof of \texorpdfstring{\cref{prop:global_noisy_channel}}{the global channel}}
\label{app:global_channel_proof}

\propglobal*

\begin{proof}
Write $\channel^*$ for the bilinear trace dual of \cref{sec:scalars}, $\tr[\channel^*(X)\,Y]=\tr[X\,\channel(Y)]$ for all $X,Y$; this exists for any linear superoperator and needs no Kraus representation, matching the hypothesis of the proposition. (It has to be this dual and not the Hilbert--Schmidt adjoint, for which $\langle A,\channel(B)\rangle=\langle\channel^*(A),B\rangle$ gives $\tr[\channel(X)\Pi_b]=\tr[\channel^*(\Pi_b)^\dagger X]$ and so requires $\channel^*(\Pi_b)$ to be Hermitian --- automatic when $\channel$ preserves Hermiticity, but not for a merely linear $\channel$. The two coincide on the completely positive channels of \cref{sec:noise_model}, so nothing elsewhere in the paper is affected.) From \eqref{eq:noisy_channel} with a real basis and orthogonal $U$ ($U^\dagger=U^\intercal$),
\begin{align}
\mcm_{\mbo,\channel}(A)
&=\int_{\mbo(d)}\!\dd\mu(U)\sum_b \tr\!\big[\channel(UAU^\intercal)\Pi_b\big]\,U^\intercal\Pi_b U
=\int\!\dd\mu(U)\sum_b \tr\!\big[UAU^\intercal\channel^*(\Pi_b)\big]\,U^\intercal\Pi_b U\nn
&=\sum_b\int\!\dd\mu(U)\,\tr\!\big[U^\intercal\channel^*(\Pi_b)U\,A\big]\,U^\intercal\Pi_b U
=\sum_b\tr_1\!\Big[\mct^{(2)}_{\mbo(d)}\!\big(\channel^*(\Pi_b)\otimes\Pi_b\big)(A\otimes\id)\Big],
\label{eq:chan_start}
\end{align}
using cyclicity and the adjoint property in the first line. That last equality is the tensor lift: for any operators $C,B$ and any $U$,
\begin{equation}\label{eq:tensorlift}
\tr\!\big[U^\intercal C U\,A\big]\;U^\intercal B U=\tr_1\!\Big[\big((U^\intercal C U)\otimes(U^\intercal B U)\big)(A\otimes\id)\Big],
\end{equation}
where $\tr_1$ traces out the first tensor factor; averaging over $U$ and using
$(U^\intercal C U)\otimes(U^\intercal B U)=(U^{\intercal})^{\otimes2}(C\otimes B)U^{\otimes2}$
turns the Haar integral into the order-two twirl
\[
\mct^{(2)}_{\mbo(d)}(C\otimes B)=\int\dd\mu(U)\,(U^\intercal)^{\otimes2}(C\otimes B)U^{\otimes2},
\]
with $C=\channel^*(\Pi_b)$, $B=\Pi_b$.

By \eqref{eq:wein}, $\mct^{(2)}_{\mbo(d)}$ projects onto the commutant $\comm(\mbo(d),2)=\mathrm{span}\{\id,\mbs,\om\}$ (the image of the Brauer algebra $\Brauer_2(d)$, of dimension $3$), where $\mbs$ is the swap and $\om=\ketbra{\Omega}{\Omega}$ with $\ket{\Omega}=\sum_i\ket{ii}$ the unnormalized maximally entangled vector. Hence $\mct^{(2)}_{\mbo(d)}(\channel^*(\Pi_b)\otimes\Pi_b)=c_{\id}\id+c_{\mbs}\mbs+c_{\om}\om$, with the coefficients fixed by pairing both sides against the three basis operators under $\langle X,Y\rangle=\tr[X^\dagger Y]$. Pairing (right-hand) traces are
\begin{equation}
\langle\id,\channel^*(\Pi_b)\otimes\Pi_b\rangle=\tr[\channel^*(\Pi_b)]\tr[\Pi_b]=\tr[\channel^*(\Pi_b)],\quad
\langle\mbs,\cdot\rangle=\tr[\channel^*(\Pi_b)\Pi_b]=\bra b\channel^*(\Pi_b)\ket b,
\end{equation}
\begin{equation}
\langle\om,\cdot\rangle=\bra\Omega\big(\channel^*(\Pi_b)\otimes\Pi_b\big)\ket\Omega=\tr[\channel^*(\Pi_b)\Pi_b^\intercal]=\bra b\channel^*(\Pi_b)\ket b,
\end{equation}
where $\langle\mbs,X\otimes Y\rangle=\tr[\mbs(X\otimes Y)]=\tr[XY]$, $\langle\om,X\otimes Y\rangle=\bra\Omega(X\otimes Y)\ket\Omega=\tr[XY^\intercal]$, and the last equality uses the real basis $\Pi_b^\intercal=\Pi_b$. Its order-two Gram matrix has entries $\langle\id,\id\rangle=\tr[\id]=d^2$, $\langle\id,\mbs\rangle=\tr[\mbs]=d$, $\langle\id,\om\rangle=\tr[\om]=\braket{\Omega}{\Omega}=d$, $\langle\mbs,\mbs\rangle=\tr[\mbs^2]=\tr[\id]=d^2$, $\langle\mbs,\om\rangle=\tr[\mbs\,\om]=\bra\Omega\mbs\ket\Omega=\braket{\Omega}{\Omega}=d$ (as $\mbs\ket\Omega=\ket\Omega$), and $\langle\om,\om\rangle=\tr[\om^2]=d\,\tr[\om]=d^2$ (as $\om^2=d\,\om$), i.e.\ in the basis $(\id,\mbs,\om)$,
\begin{equation}
G_2=\begin{pmatrix}d^2&d&d\\ d&d^2&d\\ d&d&d^2\end{pmatrix},
\end{equation}
whose inverse is the second-moment orthogonal Weingarten matrix
\begin{equation}\label{eq:wg2}
\Wg_2=G_2^{-1}=\frac{1}{d(d-1)(d+2)}\begin{pmatrix}d+1&-1&-1\\-1&d+1&-1\\-1&-1&d+1\end{pmatrix}\qquad(d\neq0,1),
\end{equation}
so that, writing the pairing vector as $(t_1,t_2,t_2)$ with $t_1=\tr[\channel^*(\Pi_b)]$ and $t_2=\bra b\channel^*(\Pi_b)\ket b$, the coefficient vector $(c_{\id},c_{\mbs},c_{\om})^\intercal=\Wg_2\,(t_1,t_2,t_2)^\intercal$ is
\begin{equation}\label{eq:coeffs_b}
c_{\mbs}=c_{\om}=\frac{d\,t_2-t_1}{d(d-1)(d+2)}=\frac{d\bra b\channel^*(\Pi_b)\ket b-\tr[\channel^*(\Pi_b)]}{d(d-1)(d+2)},\quad
c_{\id}=\frac{(d+1)t_1-2t_2}{d(d-1)(d+2)}.
\end{equation}
Since $\tr_1[\id(A\otimes\id)]=\tr(A)\id$, $\tr_1[\mbs(A\otimes\id)]=A$, and $\tr_1[\om(A\otimes\id)]=A^\intercal$,
\begin{equation}
\mcm_{\mbo,\channel}(A)=\sum_b\big(c_{\id}\tr(A)\id+c_{\mbs}A+c_{\om}A^\intercal\big)=\sum_b\big(c_{\id}\tr(A)\id+2c_{\mbs}A_{\mathrm{sym}}\big).
\end{equation}
Summing the numerators with \cref{lem:adjoint} ($\sum_b\tr[\channel^*(\Pi_b)]=\alpha$, $\sum_b\bra b\channel^*(\Pi_b)\ket b=\beta$) gives \eqref{eq:global_general}. For trace-preserving or unital $\channel$, $\alpha=d$, so the $A_{\mathrm{sym}}$ coefficient is $2\sum_b c_{\mbs}=\tfrac{2(\beta-1)}{(d-1)(d+2)}=f$ and the $\tr(A)\id$ coefficient is $\tfrac{1-f}{d}$, i.e.\ \eqref{eq:global_depol}.
\end{proof}

\section{Proof of \texorpdfstring{\cref{prop:shadow_norm}}{the shadow seminorm}}
\label{app:shadow_norm_proof}

\propseminorm*

\begin{proof}
By \eqref{eq:global_general} the shadow channel has the form $\mcm_{\mbo,\channel}(A)=c_{\id}\tr(A)\id+f\,A_{\mathrm{sym}}$ with $f=\tfrac{2(d\beta-\alpha)}{d(d-1)(d+2)}$, whose Hilbert--Schmidt adjoint is $\mcm^\dagger_{\mbo,\channel}(X)=c_{\id}\tr(X)\id+f\,X_{\mathrm{sym}}$. Since $O_0$ is symmetric and traceless this gives $\mcm^\dagger_{\mbo,\channel}(O_0)=f\,O_0$ and hence $\mcm^{-1,\dagger}_{\mbo,\channel}(O_0)=\tfrac1f O_0$, for any $\alpha$ (for $\alpha=d$ we have in addition $c_{\id}=\tfrac{1-f}{d}$, i.e.\ $\mcm^{-1}_{\mbo,\channel}=\mcd_{n,1/f}\circ(\cdot)_{\mathrm{sym}}$, but that is not needed here). Thus \eqref{eq:seminorm_def} reads
\begin{equation}\label{eq:sem_start}
\norm{O_0}^2_{\shadow,\mbo,\channel}=\frac1{f^2}\max_{\sigma\in\Density_d}\tr\!\Big[\sigma\sum_b\mathbf E_{\channel^*}(b)\Big],\quad
\mathbf E_{\channel^*}(b)=\E_{U\sim\mbo}U^\intercal\channel^*(\Pi_b)U\,\big(\bra bU O_0 U^\intercal\ket b\big)^2.
\end{equation}
Writing $(\bra b U O_0 U^\intercal\ket b)^2=\tr_{23}\big[(U^{\otimes2})(O_0\otimes O_0)(U^{\intercal\otimes2})\ketbra{bb}{bb}\big]$ and merging with the first factor,
\begin{equation}
\mathbf E_{\channel^*}(b)=\tr_{23}\!\Big[\mct^{(3)}_{\mbo(d)}\!\big(\channel^*(\Pi_b)\otimes\Pi_b\otimes\Pi_b\big)(\id\otimes O_0\otimes O_0)\Big].
\end{equation}
Solving the order-three system \eqref{eq:wein} for $\channel^*(\Pi_b)\otimes\Pi_b\otimes\Pi_b$ via the left column of \cref{tab:trace_rel} (with $t_1=\tr[\channel^*(\Pi_b)]$, $t_2=\bra b\channel^*(\Pi_b)\ket b$, $p_D(d)=d(d-1)(d+2)(d+4)$),
\begin{align}
\mct^{(3)}_{\mbo(d)}\!\big(\channel^*(\Pi_b)\otimes\Pi_b^{\otimes2}\big)
=\ &\frac{(d+3)t_1-4t_2}{p_D(d)}\big(\id+\mbs_{(23)}+\Omega_{23;23}\big)\nn
&+\frac{dt_2-t_1}{p_D(d)}\big(\mbs_{(12)}+\mbs_{(13)}+\mbs_{(123)}+\mbs_{(132)}\nn
&\hspace{5.2em}+\Omega_{12;12}+\Omega_{13;13}+\Omega_{13;12}+\Omega_{12;13}\nn
&\hspace{5.2em}+\Omega_{12;23}+\Omega_{23;12}+\Omega_{13;23}+\Omega_{23;13}\big).
\end{align}
Contracting with $(\id\otimes O_0\otimes O_0)$ under $\tr_{23}$ (right column of \cref{tab:trace_rel} with $A=\id$, $B=O_0$) and using $\tr O_0=0$ to kill every term carrying a $\tr[B]$ or $\tr[O_0]$ factor: among the three block-$A$ diagrams the identity gives $\id\,\tr[O_0]^2=0$ and drops out, while $\mbs_{(23)}\mapsto\id\,\tr[O_0^2]$ and $\Omega_{23;23}\mapsto\id\,\tr[O_0O_0^\intercal]=\id\,\tr[O_0^2]$ each survive, for a total of $2\tr[O_0^2]\,\id$ (not three times $\tr[O_0^2]\,\id$: the identity diagram is annihilated by tracelessness, even though all three share the coefficient $c_A$). Among the twelve block-$B$ diagrams the four carrying a $\tr[B]$ factor ($\mbs_{(12)},\mbs_{(13)},\Omega_{12;12},\Omega_{13;13}$) drop out and the remaining eight each give $O_0^2$ (using $O_0^\intercal=O_0$). Hence
\begin{equation}
\mathbf E_{\channel^*}(b)=\frac{(d+3)t_1-4t_2}{p_D(d)}\,2\tr[O_0^2]\,\id+\frac{dt_2-t_1}{p_D(d)}\,8\,O_0^2.
\label{eq:Eb_result}
\end{equation}
Summing over $b$ ($\sum_b t_1=\alpha$, $\sum_b t_2=\beta$) and taking $\max_\sigma\tr[\sigma\,\cdot\,]$, the identity term is fixed and $\max_\sigma\tr[\sigma O_0^2]=\norm{O_0^2}_{\mathrm{sp}}$:
\begin{equation}
\max_\sigma\tr\!\Big[\sigma\sum_b\mathbf E_{\channel^*}(b)\Big]=\frac{(d+3)\alpha-4\beta}{p_D(d)}2\tr[O_0^2]+\frac{d\beta-\alpha}{p_D(d)}8\norm{O_0^2}_{\mathrm{sp}}.
\end{equation}
Multiplying by $f^{-2}=\tfrac{d^2(d-1)^2(d+2)^2}{4(d\beta-\alpha)^2}$ and simplifying gives \eqref{eq:seminorm_general}; $\alpha=d$ gives \eqref{eq:seminorm_exact}.
\end{proof}

\begin{table}[h]
\centering\small
\begin{tabular}{|l|l|l||l|l|l|}
\hline
Op. & $\tr[(A\!\otimes\! B\!\otimes\! B)(\cdot)]$ & $\tr_{23}[(\cdot)(A\!\otimes\! B\!\otimes\! B)]$ & Op. & $\tr[(A\!\otimes\! B\!\otimes\! B)(\cdot)]$ & $\tr_{23}[(\cdot)(A\!\otimes\! B\!\otimes\! B)]$\\
\hline\hline
$\id$ & $\tr[A]\tr[B]^2$ & $A\tr[B]^2$ & $\Omega_{12;23}$ & $\tr[AB^\intercal B]$ & $B^\intercal B A$\\
$\mbs_{(23)}$ & $\tr[A]\tr[B^2]$ & $A\tr[B^2]$ & $\Omega_{23;12}$ & $\tr[ABB^\intercal]$ & $BB^\intercal A$\\
$\mbs_{(12)}$ & $\tr[AB]\tr[B]$ & $BA\tr[B]$ & $\Omega_{13;23}$ & $\tr[AB^\intercal B]$ & $B^\intercal B A$\\
$\mbs_{(13)}$ & $\tr[AB]\tr[B]$ & $BA\tr[B]$ & $\Omega_{13;12}$ & $\tr[A(B^\intercal)^2]$ & $(B^\intercal)^2A$\\
$\mbs_{(132)}$ & $\tr[AB^2]$ & $B^2A$ & $\Omega_{23;13}$ & $\tr[ABB^\intercal]$ & $BB^\intercal A$\\
$\mbs_{(123)}$ & $\tr[AB^2]$ & $B^2A$ & $\Omega_{12;13}$ & $\tr[A(B^\intercal)^2]$ & $(B^\intercal)^2A$\\
$\Omega_{12;12}$ & $\tr[B]\tr[AB^\intercal]$ & $\tr[B]B^\intercal A$ & $\Omega_{23;23}$ & $\tr[A]\tr[BB^\intercal]$ & $A\tr[BB^\intercal]$\\
$\Omega_{13;13}$ & $\tr[B]\tr[AB^\intercal]$ & $\tr[B]B^\intercal A$ & & & \\
\hline
\end{tabular}
\caption{Trace and partial-trace identities for the fifteen order-three commutant operators against $A\otimes B\otimes B$ (cf.~\cite{west2025real}). With $B=\Pi_b$ and $A=\channel^*(\Pi_b)$ these define the scalars $s_1,\dots,s_5$ of \cref{app:d2}.}
\label{tab:trace_rel}
\end{table}

\section{Proof of \texorpdfstring{\cref{cor:seminorm_bounds}}{the bounds}}
\label{app:seminorm_bounds}

\corbounds*
\begin{proof}
For $A\succeq0$ on $\Complex^d$, $d^{-1}\tr(A)\le\norm{A}_{\mathrm{sp}}\le\tr(A)$; apply to $O_0^2$. With $\norm{O_0^2}_{\mathrm{sp}}\ge d^{-1}\tr(O_0^2)$ in \eqref{eq:seminorm_exact},
\begin{align}
\norm{O_0}^2_{\shadow,\mbo,\channel}
&\ge\frac{(d-1)(d+2)}{(d+4)(\beta-1)}\Big(\frac{d(d+3)-4\beta}{2d(\beta-1)}+\frac2d\Big)\tr(O_0^2)\nn
&=\frac{(d-1)(d+2)}{(d+4)(\beta-1)}\cdot\frac{d^2+3d-4}{2d(\beta-1)}\tr(O_0^2)\nn
&=\frac{(d-1)^2(d+2)}{2d(\beta-1)^2}\tr(O_0^2)\ \ge\ \frac{(d-1)^2}{2(\beta-1)^2}\tr(O_0^2),
\end{align}
using $d^2+3d-4=(d-1)(d+4)$ and $(d+2)/d\ge1$. With $\norm{O_0^2}_{\mathrm{sp}}\le\tr(O_0^2)$,
\begin{align}
\norm{O_0}^2_{\shadow,\mbo,\channel}
&\le\frac{(d-1)(d+2)}{(d+4)(\beta-1)}\Big(\frac{d(d+3)-4\beta}{2d(\beta-1)}+2\Big)\tr(O_0^2)\nn
&=\frac{(d-1)(d+2)}{(d+4)}\cdot\frac{(d-1)(d+4\beta)}{2d(\beta-1)^2}\tr(O_0^2)\nn
&\le\frac{(d-1)^2(d+4\beta)}{2d(\beta-1)^2}\tr(O_0^2)\ \le\ \frac{5(d-1)^2}{2(\beta-1)^2}\tr(O_0^2),
\end{align}
using $d(d+3)-4\beta+4d(\beta-1)=(d-1)(d+4\beta)$, then $(d+2)/(d+4)\le1$ and $\beta\le d\Rightarrow d+4\beta\le5d$; finally $\tr(O_0^2)\le\tr(O^2)$.
\end{proof}

\noindent The sample-complexity statement is the median-of-means construction applied to this bound.

\corsample*
\begin{proof}
Fix $i$ and abbreviate $o_i=\tr(O_i\rho)$. Since $\hat o_i$ is unbiased (\cref{prop:global_noisy_channel}) and $O_{i,0}$ denotes the traceless part of $O_i$,
\begin{equation}\label{eq:var_le_seminorm}
\Var[\hat o_i]\;\le\;\E\big[\hat o_i^{\,2}\big]\;\le\;\norm{O_{i,0}}^2_{\shadow,\mbo,\channel},
\end{equation}
the second inequality because \eqref{eq:seminorm_def} is the maximum of $\E[\hat o_i^2]$ over all states $\sigma\in\Density_d$, and $\rho$ is one of them. Write $\sigma_\star^2=\max_i\norm{O_{i,0}}^2_{\shadow,\mbo,\channel}$ and take
\begin{equation}
N=\frac{34\,\sigma_\star^2}{\eps^2},\qquad K=2\log\frac{2M}{\delta},
\end{equation}
so that a single record of $N_{\mathrm{tot}}=NK$ shadows is formed and each $\tr(O_i\rho)$ is estimated from it by the median of $K$ batch means of $N$ shots. For each $i$ we have $N\ge34\Var[\hat o_i]/\eps^2$ by \eqref{eq:var_le_seminorm}, so \cref{fact:mom} applies and gives $\Pr[\,|\hat\mu_i-o_i|\ge\eps\,]\le2\me^{-K/2}=\delta/M$. A union bound over $i=1,\dots,M$ bounds the probability that any estimate misses by $\eps$ by $\delta$. Hence
\begin{equation}
N_{\mathrm{tot}}=NK=\frac{68\,\log(2M/\delta)}{\eps^2}\,\max_{1\le i\le M}\norm{O_{i,0}}^2_{\shadow,\mbo,\channel}
\end{equation}
suffices, and inserting the upper bound of \eqref{eq:seminorm_bounds}, which already relaxes $\tr(O_{i,0}^2)$ to $\tr(O_i^2)$, turns the prefactor $68$ into $68\cdot\tfrac52=170$ and yields \eqref{eq:sample_complexity}. For the unitary ensemble the statement is identical with the seminorm bound of~\cite{koh2022classical} in place of \eqref{eq:seminorm_bounds}, whose constant is $3$ rather than $\tfrac52$, giving $68\cdot3=204$.
\end{proof}

\section{Proof of \texorpdfstring{\cref{prop:complex_basis}}{the complex-basis channel}}
\label{app:complex_basis_proof}

\propcomplex*
\begin{proof}
Repeating \eqref{eq:chan_start} for a general basis, only the third pairing trace changes, since $\Pi_b^\intercal\neq\Pi_b$:
\begin{equation}
\langle\om,\channel^*(\Pi_b)\otimes\Pi_b\rangle=\tr[\channel^*(\Pi_b)\Pi_b^\intercal]=\bra b\channel(\Pi_b^\intercal)\ket b.
\end{equation}
The order-two system becomes
\[
\big(\begin{smallmatrix}d^2&d&d\\ d&d^2&d\\ d&d&d^2\end{smallmatrix}\big)(c_{\id},c_{\mbs},c_{\om})^\intercal=(\tr[\channel^*(\Pi_b)],\ \bra b\channel^*(\Pi_b)\ket b,\ \bra b\channel(\Pi_b^\intercal)\ket b)^\intercal ,
\]
with solution
\begin{align}
c_{\id}&=\frac{(d+1)\tr[\channel^*(\Pi_b)]-\bra b\channel^*(\Pi_b)\ket b-\bra b\channel(\Pi_b^\intercal)\ket b}{d(d-1)(d+2)},\nn
c_{\mbs}&=\frac{(d+1)\bra b\channel^*(\Pi_b)\ket b-\tr[\channel^*(\Pi_b)]-\bra b\channel(\Pi_b^\intercal)\ket b}{d(d-1)(d+2)},\nn
c_{\om}&=\frac{(d+1)\bra b\channel(\Pi_b^\intercal)\ket b-\tr[\channel^*(\Pi_b)]-\bra b\channel^*(\Pi_b)\ket b}{d(d-1)(d+2)}.
\end{align}
Summing over $b$ with $\sum_b\tr[\channel^*(\Pi_b)]=\alpha$, $\sum_b\bra b\channel^*(\Pi_b)\ket b=\beta$, and $\sum_b\bra b\channel(\Pi_b^\intercal)\ket b=\tilde\beta$ (which defines $\tilde\beta=\tr[\channel\circ\widetilde\diag]$, and the last equality uses $\langle\Pi_b,\channel(\Pi_b^\intercal)\rangle=\bra b\channel(\Pi_b^\intercal)\ket b$), and using $\mcm(A)=\sum_b(c_{\id}\tr(A)\id+c_{\mbs}A+c_{\om}A^\intercal)$, gives for unital $\channel$ ($\alpha=d$)
\begin{equation}
\mcm_{\mbo,\channel,\mcw}(A)=\underbrace{\frac{d(d+1)-\beta-\tilde\beta}{(d-1)(d+2)}}_{1-f}\frac{\tr(A)\id}{d}+\frac{\beta(d+1)-d-\tilde\beta}{d(d-1)(d+2)}A+\frac{\tilde\beta(d+1)-d-\beta}{d(d-1)(d+2)}A^\intercal.
\end{equation}
Hence $f=\tfrac{\beta+\tilde\beta-2}{(d-1)(d+2)}$, and factoring $f$ from the $A,A^\intercal$ terms yields $\mcm=\mcd_{n,f}(\tilde A)$ with $\tilde A$ as in \eqref{eq:Atilde}. A real basis has $\Pi_b^\intercal=\Pi_b\Rightarrow\tilde\beta=\beta$, giving $q_\beta=\tfrac12$, $\tilde A=A_{\mathrm{sym}}$, and \cref{prop:global_noisy_channel}.
\end{proof}

\section{Proof of \texorpdfstring{\cref{prop:complex_variance}}{the complex-basis second moment}}
\label{app:complex_third_moment}

\propcomplexvar*
\begin{proof}

\noindent The single-shot second moment is the Haar--$\mathbb O(d)$ third-moment twirl contracted against the noisy measurement,
\begin{equation}\label{eq:cf_start}
\mathbb E[\hat o^2]=\sum_w\tr\!\Big[\mathcal T^{(3)}\!\big(\rho\otimes\hat O\otimes\hat O\big)\,\big(\channel^*(\Pi_w)\otimes\Pi_w\otimes\Pi_w\big)\Big],
\end{equation}
where $\mathcal T^{(3)}(Y)=\sum_{i,j}(\mathrm{Wg})_{ij}\,\langle x_i,Y\rangle\,x_j$ is the orthogonal projector onto the commutant, $\{x_i\}_{i=1}^{15}$ the Brauer basis of \cref{app:brauer_def}, $\mathrm{Wg}=G^{-1}$, and $G_{ij}=\langle x_i,x_j\rangle=\tr[x_i^\dagger x_j]$ the Gram matrix of \cref{app:gram}. Writing $N=\rho\otimes\hat O\otimes\hat O$ and $M=\sum_w\channel^*(\Pi_w)\otimes\Pi_w\otimes\Pi_w$,
\begin{equation}\label{eq:cf_bilinear}
\mathbb E[\hat o^2]=\mathbf b^\intercal\,\mathrm{Wg}\,\mathbf a,\qquad
b_i=\langle x_i,N\rangle=\tr[x_i^\dagger N],\qquad a_j=\tr[x_j M].
\end{equation}

\paragraph{Overlaps are single invariants.} Each Brauer diagram $x_i$ is a perfect matching of the six legs $\{1,2,3,1',2',3'\}$, and its overlap with a product $X_1\otimes X_2\otimes X_3$ is computed by the \emph{cycle rule}. Superpose the matching (which carries $\delta$-contractions) on the three ``matrix'' edges $k\!\to\!k'$ carrying $X_k$, since every leg meets exactly one matching edge and one matrix edge, the superposition is a disjoint union of cycles, and each cycle contributes the trace of the ordered product of the $X_k$ it visits, transposed whenever the matrix edge is traversed against its orientation. Thus every overlap is a single product of traces of words in $\{X_k,X_k^\intercal\}$, and no sum of monomials arises. Two representative cases: the transposition $\mbs_{(12)}$ has cycles $\{1,2\},\{3\}$, giving $\tr[X_1X_2]\tr[X_3]$, and the cup--cap diagram $\Omega_{12;12}$, which caps legs $1,2$ and cups $1',2'$, gives $\tr[X_1X_2^\intercal]\tr[X_3]$ (the cap forces the transpose). Evaluated on the observable tensor $N=\rho\otimes\hat O\otimes\hat O$ with $\hat O$ traceless and $\tr\rho=1$, the surviving overlaps are $\tr[\hat O^2]$, $\tr[\hat O\hat O^\intercal]$ and the four state-weighted words $\tr[\rho\hat O^2],\tr[\rho\hat O^\intercal\hat O],\tr[\rho\hat O\hat O^\intercal],\tr[\rho(\hat O^\intercal)^2]$. Every word containing an isolated $\tr[\hat O]$ vanishes. Evaluated on the measurement tensor $M=\sum_w\channel^*(\Pi_w)\otimes\Pi_w\otimes\Pi_w$ (here $X_1=\channel^*(\Pi_w)$, $X_2=X_3=\Pi_w$, and $\tr\Pi_w=1$, $\Pi_w^2=\Pi_w$), the words collapse onto the six invariants of \eqref{eq:scalars} and \eqref{eq:complex_invariants}: the identity and $\mbs_{(23)}$ give $\sum_w\tr[\channel^*(\Pi_w)]=d$, and the transpositions and $3$-cycles pairing slot $1$ with an untransposed (resp.\ transposed) $\Pi_w$ give $\sum_w\bra w\channel^*(\Pi_w)\ket w=\beta$ (resp.\ $\sum_w\bra{w^*}\channel^*(\Pi_w)\ket{w^*}=\tilde\beta$), and the diagram capping the two measurement legs gives $\sum_w\tr[\channel^*(\Pi_w)]\,\lvert\braket{w}{w^*}\rvert^2=\gamma$; and the remaining $\Omega$ diagrams, which couple slot $1$ to a single measurement leg through one cap, give $\delta$ or (with the opposite orientation) $\bar\delta$. Because the cycle rule assigns each diagram exactly one word, these identifications are exact algebraic identities, not fits.

\paragraph{The two overlap vectors.} Applying the cycle rule to all fifteen diagrams in the ordering \eqref{eq:basisorder} gives \cref{tab:overlaps}. Both columns are exhibited in the Hermitian convention $\langle x,Y\rangle=\tr[x^\dagger Y]$ used for the Gram matrix, and the measurement column takes six distinct values and the observable column six.

\begin{table}[h]
\centering\small
\begin{tabular}{|l|c|c||l|c|c|}
\hline
$x_i$ & $\langle x_i,M\rangle$ & $\langle x_i,N\rangle$ & $x_i$ & $\langle x_i,M\rangle$ & $\langle x_i,N\rangle$\\
\hline\hline
$\mbs_{e}$ & $\alpha$ & $0$ & $\Omega_{12;12}$ & $\tilde\beta$ & $0$\\
$\mbs_{(23)}$ & $\alpha$ & $\tr[\hat O^2]$ & $\Omega_{23;23}$ & $\gamma$ & $\tr[\hat O\hat O^\intercal]$\\
$\mbs_{(12)}$ & $\beta$ & $0$ & $\Omega_{13;13}$ & $\tilde\beta$ & $0$\\
$\mbs_{(13)}$ & $\beta$ & $0$ & $\Omega_{12;23}$ & $\delta$ & $\tr[\rho\hat O\hat O^\intercal]$\\
$\mbs_{(132)}$ & $\beta$ & $\tr[\rho\hat O^2]$ & $\Omega_{23;12}$ & $\bar\delta$ & $\tr[\rho\hat O^\intercal\hat O]$\\
$\mbs_{(123)}$ & $\beta$ & $\tr[\rho\hat O^2]$ & $\Omega_{13;23}$ & $\delta$ & $\tr[\rho\hat O\hat O^\intercal]$\\
 & & & $\Omega_{13;12}$ & $\tilde\beta$ & $\tr[\rho(\hat O^\intercal)^2]$\\
 & & & $\Omega_{23;13}$ & $\bar\delta$ & $\tr[\rho\hat O^\intercal\hat O]$\\
 & & & $\Omega_{12;13}$ & $\tilde\beta$ & $\tr[\rho(\hat O^\intercal)^2]$\\
\hline
\end{tabular}
\caption{The two overlap vectors of \eqref{eq:cf_bilinear} for a general basis, obtained from the cycle rule. Left of each pair: the measurement tensor $M=\sum_w\channel^*(\Pi_w)\otimes\Pi_w^{\otimes2}$, whose overlaps are the six invariants $\alpha,\beta,\tilde\beta,\gamma,\delta,\bar\delta$ of \eqref{eq:scalars} and \eqref{eq:complex_invariants}. Right: the observable tensor $N=\rho\otimes\hat O\otimes\hat O$ with $\hat O$ traceless, so that every diagram carrying an isolated $\tr[\hat O]$ vanishes. On a real basis $\tilde\beta=\beta$, $\gamma=\alpha=d$ and $\delta=\bar\delta=\beta$, and the measurement column collapses to the two-valued vector used in \cref{app:shadow_norm_proof}.}
\label{tab:overlaps}
\end{table}

\paragraph{The contraction.} Write $\tau$ for the transpose involution on diagrams, $\mbs_\pi\mapsto\mbs_{\pi^{-1}}$ and $\Omega_{ab;xy}\mapsto\Omega_{xy;ab}$, so that $x_j^\intercal=x_{\tau(j)}$, then $\tr[x_jM]=\langle x_{\tau(j)},M\rangle$, and \eqref{eq:cf_bilinear} reads, entirely in terms of \cref{tab:overlaps},
\begin{equation}\label{eq:cf_pairing}
\mathbb E[\hat o^2]=\sum_{i,j=1}^{15}\langle x_i,N\rangle\,\mathrm{Wg}(d)_{ij}\,\langle x_{\tau(j)},M\rangle,
\qquad \mathrm{Wg}(d)=G(d)^{-1}=\big(d^{\,\mathsf E}\big)^{-1},
\end{equation}
with $\mathsf E$ the integer exponent matrix of \cref{app:gram}. (The involution must not be dropped: it exchanges $\Omega_{12;23}\leftrightarrow\Omega_{23;12}$ and $\Omega_{13;23}\leftrightarrow\Omega_{23;13}$, hence $\delta\leftrightarrow\bar\delta$, and omitting it conjugates the coefficient of $\hat O^\intercal\hat O$. It is inert exactly when $\hat O$ is symmetric, since then $\tr[\rho\hat O^\intercal\hat O]=\tr[\rho\hat O\hat O^\intercal]$ and the exchanged rows of \cref{tab:overlaps} agree; it bites for the non-symmetric $\hat O$ that \eqref{eq:complex_R} is stated to cover.) Unlike the real-basis case of \cref{app:shadow_norm_proof}, the measurement vector is not constant on a two-block partition, so no ansatz reduces \eqref{eq:cf_pairing}, and the $15\times15$ inverse is required. Carrying it out and grouping the six observable words gives the operator $R$ of \eqref{eq:complex_R} with the coefficients \eqref{eq:complex_coeffs}. This is a finite rational computation in $d$ which we performed symbolically. A reader may confirm it by inverting $G(d)$ and evaluating \eqref{eq:cf_pairing} against \cref{tab:overlaps}.

All four operator coefficients share the denominator $D=d(d-2)(d-1)(d+2)(d+4)$, the squarefree part of the Gram determinant \eqref{eq:detfac}, whose roots $d=0,1,2,-2,-4$ are exactly the loop-parameter values at which the trace form on $\Brauer_3$ degenerates. In particular $D$ contains $(d-2)$, so \eqref{eq:complex_R} is stated for $d\ge3$, and the single-qubit case that Part~IV needs is treated separately in \cref{app:d2complex}.

\paragraph{Real-basis reduction.} Substituting $\tilde\beta=\beta$, $\gamma=d$, $\delta=\beta$ (a real basis, $\Pi_w^\intercal=\Pi_w$) makes the four operator coefficients equal,
\begin{equation}
A=B=\bar B=E=\frac{2(\beta-1)}{(d-1)(d+2)(d+4)},\qquad
s+\tilde s=\frac{2\,[\,d(d+3)-4\beta\,]}{d(d-1)(d+2)(d+4)},
\end{equation}
where the common value of $A,B,\bar B,E$ follows from $\tfrac2D\big[(d^2{+}d{-}4)\beta-(d{-}4)\beta-d^2-2d\beta+2d\big]=\tfrac2D\,d(d-2)(\beta-1)$ and the cancellation of $d(d-2)$ against the same factor in $D$. Hence for symmetric $O$ (whence $\hat O^\intercal=\hat O$) \eqref{eq:complex_R} reduces to $R=(s{+}\tilde s)\tr[\hat O^2]\mathds1+4A\,\hat O^2$ with operator-to-scalar ratio
\begin{equation}
\frac{4A}{s+\tilde s}=\frac{8(\beta-1)}{(d-1)(d+2)(d+4)}\cdot\frac{d(d-1)(d+2)(d+4)}{2\,[\,d(d+3)-4\beta\,]}=\frac{4d(\beta-1)}{d(d+3)-4\beta},
\end{equation}
i.e.\ exactly the real seminorm \eqref{eq:seminorm_general}. This recovers the independently derived Part~I result as a special case and fixes all conventions.

\paragraph{Consistency and a convention on the twirl.} Two internal checks pin the result. First, the real-basis reduction above returns the independently derived Part~I seminorm \eqref{eq:seminorm_general} exactly, which fixes every sign and normalization. Second, the Hermiticity $R=R^\dagger$ is forced by the identity
\begin{equation}\label{eq:delta_conj}
\sum_w\braket{w^{*}}{w}\,\bra{w}\channel^*(\Pi_w)\ket{w^{*}}=\bar\delta ,
\end{equation}
which holds because $\channel^*(\Pi_w)$ is Hermitian and $\braket{w^*}{w}=\overline{\braket{w}{w^*}}$, so that the coefficients of $\hat O^\intercal\hat O$ and $\hat O\hat O^\intercal$ in \eqref{eq:complex_R} are complex conjugates. This is consistent with $\Var[\hat o]=\tr[\rho R]-\tr(O\rho)^2\in\Reals$ for all states. One algebraic subtlety underlies the whole computation. For a complex operand $Y$ the projection coefficients $\langle x_i,Y\rangle$ onto the \emph{real} Brauer basis are complex, so the twirl $\mct^{(3)}(Y)=\sum_{ij}(\mathrm{Wg})_{ij}\langle x_i,Y\rangle x_j$ must retain them in full. Keeping only their real part---harmless for the real-basis Parts~I--II, where $Y$ is real-symmetric and the coefficients are real---would drop the imaginary contributions to $\delta$ and misstate the complex-basis second moment. Identity \eqref{eq:complex_R} holds with the complex coefficients retained.

\paragraph{The fully complex limit.} The remaining special case is the opposite end of the reality axis, where the comparison with unitary shadows is made. In the noiseless setting this limit is due to West \etal~\cite{west2025real}, who exhibit an implementable basis at which orthogonal shadows reproduce the unitary protocol as $d\to\infty$, and the mechanism is that orthogonal twirls of suitable states already give exact $3$-designs~\cite{schatzki2024random}. What the following adds is the channel, since the limit survives known noise and the rate at which it is approached is set by the same scalars as everywhere else.
\end{proof}

At zero reality the coefficients collapse, and comparing them with the unitary ones shows in what
sense unitary shadows are recovered:

\begin{corollary}[Unitary limit at zero reality]
\label{cor:unitary_limit}
Let $d\ge3$, let $\channel=\mcd_{n,p}$ be global depolarizing with $0<p\le1$, and let the basis satisfy $\braket{w}{w^{*}}=0$ for every $w$ (for $d=2^n$ such bases exist. Pair the computational basis into $(\ket{a}\pm i\ket{b})/\sqrt2$), so that $\varsigma=0$ and $\gamma=\delta=0$, $\beta=pd+1-p$, $\tilde\beta=1-p$. Let $O_0$ be symmetric and traceless. Then $\hat O=O_0/f$ with $f$ as in \eqref{eq:complex_channel}, and \eqref{eq:complex_R} gives
\begin{equation}\label{eq:zero_reality_moment}
\mathbb E[\hat o^2]=C^{\mcw}_t\,\tr(O_0^2)+C^{\mcw}_r\,\tr(\rho O_0^2),
\end{equation}
\begin{equation}\label{eq:zero_reality_coeffs}
C^{\mcw}_t=\frac{(d-1)(d+2)\big[d^2+2d-2pd+8p-8\big]}{d\,p^2(d-2)^2(d+4)},
\qquad
C^{\mcw}_r=\frac{2(d-4)(d-1)(d+2)}{p\,(d-2)^2(d+4)} .
\end{equation}
Writing $C^{\mbu}_t,C^{\mbu}_r$ for the corresponding coefficients of \eqref{eq:varU} at the same $\beta$, the two pairs agree to leading order in $1/d$, with exact ratios
\begin{align}
\frac{C^{\mcw}_t}{C^{\mbu}_t}&=\frac{(d-1)(d+2)^2\big[d^2+2d-2pd+8p-8\big]}{(d-2)^2(d+1)(d+4)(d+2-2p)}=1+\frac2d+\frac{8p+2}{d^2}+O(d^{-3}),\label{eq:ratio_t}\\
\frac{C^{\mcw}_r}{C^{\mbu}_r}&=\frac{(d-4)(d-1)(d+2)^2}{(d-2)^2(d+1)(d+4)}=1-\frac2d+\frac{2}{d^2}+O(d^{-3}).\label{eq:ratio_r}
\end{align}
Both tend to $1$, so a fully complex basis measured with orthogonal unitaries reproduces the unitary second moment as $d\to\infty$, but they approach it from opposite sides, at the same rate $2/d$.
\end{corollary}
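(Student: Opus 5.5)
Specialize \cref{prop:complex_variance} to the zero-reality depolarizing case and compare its coefficients with those of \eqref{eq:varU}.

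\emph{Invariants.} For $\channel=\mcd_{n,p}$ we have $\channel^*=\channel$, so $\channel^*(\Pi_w)=p\Pi_w+(1-p)\id/d$.
\begin{itemize}
\item Then $\beta=pd+1-p$, and $\tilde\beta=\sum_w\big(p|\braket{w}{w^*}|^2+(1-p)/d\big)=1-p$.
\item Next, $\gamma=\sum_w\tr[\channel^*(\Pi_w)]\,\alpha_w=0$.
\item Finally, $\delta=\sum_w\braket{w}{w^*}\bra{w^*}\channel^*(\Pi_w)\ket w=0$, because every term carries the factor $\braket{w}{w^*}$ or $\braket{w^*}{w}$, both of which vanish.
\end{itemize}
This agrees with the depolarizing formulas $\gamma=\alpha_{\mathrm r}$ and $\delta=(p+\tfrac{1-p}{d})\alpha_{\mathrm r}$ at $\alpha_{\mathrm r}=0$. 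For the existence claim, the pairing $(\ket a\pm i\ket b)/\sqrt2$ gives $\braket{w}{w^*}=(1+i^2)/2=0$, and $d=2^n$ is even so the pairing is possible.

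\emph{Reduce $\hat O$.} By \eqref{eq:qbeta_half}, $q_\beta\neq\tfrac12$, and by \cref{claim:residue} we have $\mcm^\dagger=\mcm$. For symmetric traceless $O_0$, $\Psi_{q}^{-1}(O_0)=O_0$, so $\hat O=O_0/f$ with $f=(\beta+\tilde\beta-2)/((d-1)(d+2))=p(d-2)/((d-1)(d+2))$. This value of $f$ is where the $(d-2)^2$ in the denominators of \eqref{eq:zero_reality_coeffs} comes from.

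\emph{Plug into \eqref{eq:complex_R}.} Since $\hat O$ is symmetric, $\hat O^\intercal=\hat O$, so $R=(s+\tilde s)\tr[\hat O^2]\id+(A+B+\bar B+E)\hat O^2$. Hence $\mathbb E[\hat o^2]=f^{-2}\big[(s+\tilde s)\tr O_0^2+(A+2\mathrm{Re}B+E)\tr(\rho O_0^2)\big]$.

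At $\gamma=\delta=0$ the coefficients simplify:
\begin{itemize}
\item $s+\tilde s=\tfrac1D\big[d^3+2d^2-4d-2d(d+2)+(8-2d)(\beta+\tilde\beta)\big]$;
\item $A+E=\tfrac2D\big[(d^2-2d+4)\cdots\big]$, which is linear in $\beta+\tilde\beta$ with constant $-2d^2$;
\item $2\mathrm{Re}B=\tfrac4D\big[-2d(\beta+\tilde\beta)+4d\big]$.
\end{itemize}
Substitute $\beta+\tilde\beta=2+p(d-2)$ and cancel the factor $d(d-2)$ that appears in numerator and in $D$. Multiplying by $f^{-2}$ should then reproduce \eqref{eq:zero_reality_coeffs}. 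This rational simplification is the main bookkeeping step and the place where an arithmetic slip is most likely, so I would check it symbolically and also against the factor $(d-4)$ predicted in $C_r$.

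\emph{Compare with the unitary coefficients.} From \eqref{eq:varU} with $\beta-1=p(d-1)$:
\begin{itemize}
\item $C^{\mbu}_r=2(d^2-1)/((d+2)p(d-1))=2(d+1)/(p(d+2))$;
\item $C^{\mbu}_t=(d^2-1)(d^2+d-2\beta)/(d(d+2)p^2(d-1)^2)$, where $d^2+d-2\beta=(d-1)(d+2-2p)$, so $C^{\mbu}_t=(d+1)(d+2-2p)/(dp^2(d+2))$.
\end{itemize}
Dividing gives the exact ratios \eqref{eq:ratio_t}--\eqref{eq:ratio_r}. Expanding each factor in $1/d$ then gives $1\pm2/d+O(d^{-2})$ with the stated second-order terms, so both ratios tend to $1$ from opposite sides.
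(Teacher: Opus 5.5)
Your proposal follows the paper's proof step for step (the invariants at $\alpha_{\mathrm r}=0$, the reduction $\hat O=O_0/f$ via $\Psi_{q_\beta}^{-1}(O_0)=O_0$, the collapse of \eqref{eq:complex_R} to $(s+\tilde s)\tr[\hat O^2]\id+(A+2B+E)\hat O^2$, then division by $C^{\mbu}_t$ and $C^{\mbu}_r$), and it is correct. One arithmetic slip to fix before your symbolic check: at $\gamma=\delta=0$ the $\beta$ and $\tilde\beta$ coefficients in $A+E$ combine to $(d^2+d-4)-(d-4)=d^2$, not $d^2-2d+4$, which gives $A+E=2d^2(\beta+\tilde\beta-2)/D$ and hence $A+2B+E=2p(d-4)/[(d-1)(d+2)(d+4)]$; in $s+\tilde s$ only the factor $(d-2)$ cancels against $D$, not $d(d-2)$, because its numerator is $(d-2)\big[d^2+2d-2pd+8p-8\big]$.
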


\begin{proof}
At $\varsigma=0$ each $\Pi_w$ has $\braket{w}{w^*}=0$, so $\gamma=\delta=0$ by \eqref{eq:complex_invariants}, and $\tilde\beta=1-p$, $\beta=pd+1-p$ by \eqref{eq:depol_complex} at $\alpha_{\mathrm r}=0$. For symmetric $O_0$ the inverse of \cref{sec:complex_est} acts as the identity on the transpose structure, $\Phi_q^{-1,\dagger}(O_0)=\tfrac{q_\beta O_0-(1-q_\beta)O_0^\intercal}{2q_\beta-1}=O_0$, so $\hat O=O_0/f$ and $\hat O^\intercal=\hat O$. Then \eqref{eq:complex_R} collapses to $R=(s+\tilde s)\tr[\hat O^2]\mathds1+(A+B+\bar B+E)\hat O^2$, and \eqref{eq:zero_reality_coeffs} is $(s+\tilde s)/f^2$ and $(A+2B+E)/f^2$ evaluated at those invariants, $B$ being real because $\delta=0$. Ratios \eqref{eq:ratio_t}--\eqref{eq:ratio_r} follow by cancelling against $C^{\mbu}_t=\tfrac{(d+1)(d+2-2p)}{d\,p^2(d+2)}$ and $C^{\mbu}_r=\tfrac{2(d+1)}{p(d+2)}$ read off from \eqref{eq:varU} at $\beta=pd+1-p$, and both are rational in $d$ and $p$, and expanding at large $d$ gives the stated series.
\end{proof}

Opposite signs in \eqref{eq:ratio_t} and \eqref{eq:ratio_r} account for an otherwise puzzling feature of \cref{sec:complex_var}, that whether a fully complex basis is worse or better than unitary shadows at finite $d$ depends on the observable, because $\mathbb E[\hat o^2]$ mixes the two coefficients in a ratio set by $\kappa$. An observable dominated by its $\tr(O_0^2)$ term approaches the unitary value from above like $1+2/d$. One dominated by $\tr(\rho O_0^2)$ approaches from below. A curiosity of \eqref{eq:zero_reality_coeffs} is the factor $(d-4)$, because of which the $\rho$-weighted coefficient vanishes identically at $d=4$, so for $n=2$ qubits, zero reality and depolarizing noise, the single-shot second moment is the same for every state.

\section{Proof of \texorpdfstring{\cref{prop:local_noisy,prop:pauli_seminorm}}{the local results}}
\label{app:local_proof}

\proplocal*
\begin{proof}
Here $\mbo(2)^{\otimes n}$, the computational-basis measurement, and $\channel_1^{\otimes n}$ all factor across qubits, so $\mcm_{\mbo(2)^{\otimes n},\channel_1^{\otimes n}}=\mcm_{\mbo(2),\channel_1}^{\otimes n}$. Each single-qubit factor is \cref{prop:global_noisy_channel} at $d=2$: writing $A=\sum_{i_1\dots i_n}a_{i_1\dots i_n}A_{i_1}\otimes\cdots\otimes A_{i_n}$ in the Pauli basis $A_{i_j}\in\{\id,X,Y,Z\}$,
\begin{align}
\mcm_{\mbo(2)^{\otimes n},\channel_1^{\otimes n}}(A)
&=\bigotimes_{j=1}^n\sum_{i_j\in\{\id,X,Y,Z\}}a_{\dots}\,\mcd_{1,f_1}\big((A_{i_j})_{\mathrm{sym}}\big)\nn
&=\bigotimes_{j=1}^n\sum_{i_j\in\{\id,X,Z\}}a_{\dots}\,\mcd_{1,f_1}(A_{i_j})\nn
&=\mcd_{1,f_1}^{\otimes n}(A_{\mathrm{L.S.}}),
\end{align}
since $(Y)_{\mathrm{sym}}=\tfrac12(Y+Y^\intercal)=0$ while $\id,X,Z$ are symmetric. Here $f_1=f(\channel_1)|_{d=2}=\tfrac{2(\tr[\channel_1\circ\diag]-1)}{(2-1)(2+2)}=\tfrac{\tr[\channel_1\circ\diag]-1}{2}$. Inverting each factor on its symmetric subspace gives the stated product shadow.
\end{proof}

\propPauli*
\begin{proof}
Write $P=P_1\otimes\cdots\otimes P_k\otimes\id^{\otimes(n-k)}$ with $P_j\in\{X,Z\}$ (weight $k$). Because the ensemble, the basis and the channel all factor across qubits, so does the operator whose largest eigenvalue \eqref{eq:seminorm_def} computes: $\sum_b\mathbf E_{\channel^*}(b)=\bigotimes_{j=1}^nR_j$ with $R_j$ the single-qubit operator \eqref{eq:Rj_local} below, and $R_j=\id$ on the identity qubits. Each $R_j$ is a second moment, hence positive semidefinite, so \cref{lem:tensormax} applies and $\norm{P}^2_{\shadow}=\prod_{j=1}^kg_j$ with $g_j=\lambda_{\max}(R_j)/f_1^2$. Since $\mcm^{-1,\dagger}(P_j)=f_1^{-1}P_j$ (as $P_j$ is traceless symmetric), the derivation of \cref{app:shadow_norm_proof} at $d=2$ gives, per qubit,
\begin{equation}\label{eq:Rj_local}
R_j=\sum_{b\in\{0,1\}}\tr_{23}\!\Big[\mct^{(3)}_{\mbo(2)}\!\big(\channel_1^*(\Pi_b)\otimes\Pi_b\otimes\Pi_b\big)(\id\otimes P_j\otimes P_j)\Big].
\end{equation}
By \eqref{eq:Eb_result} with $O_0\to P_j$ ($P_j^2=\id$, $\tr[P_j^2]=2$, $d=2$, $p_D(2)=48$), and summing the per-qubit $t_1,t_2$ over $b$ to $\alpha_1=\tr[\channel_1(\id)]=2$ and $\beta_1=\tr[\channel_1\circ\diag]$,
\begin{equation}
\begin{aligned}
R_j&=\frac{(2+3)\cdot2-4\beta_1}{48}\,2\,\tr[P_j^2]\,\id+\frac{2\beta_1-2}{48}\,8\,P_j^2\\
&=\frac{(10-4\beta_1)\cdot4+(2\beta_1-2)\cdot8}{48}\,\id=\frac{24}{48}\,\id=\tfrac12\id,
\end{aligned}
\end{equation}
independent of $P_j$, so $\lambda_{\max}(R_j)=\tfrac12$, $g_j=(2f_1^2)^{-1}$ and $\norm{P}^2_{\shadow}=(2f_1^2)^{-k}$. This cancellation $(10-4\beta_1)4+(2\beta_1-2)8=24$ removes all noise dependence beyond the overall $f_1^{-2}$. Bound \eqref{eq:local_sample} follows from \cref{fact:mom} with $NK=68\log(2M/\delta)\eps^{-2}\norm{P}^2$.
\end{proof}

\section{Example noise models: details}
\label{app:examples}

\paragraph{Depolarizing.} For $\channel=\mcd_{n,p}$, $\channel(\Pi_b)=p\Pi_b+(1-p)\id/d$, so $\bra b\channel(\Pi_b)\ket b=p+(1-p)/d$ and $\beta=\sum_b[p+(1-p)/d]=pd+(1-p)$. Thus $f=\tfrac{2(pd+1-p-1)}{(d-1)(d+2)}=\tfrac{2p}{d+2}$. For a complex basis, $\langle\Pi_b,\channel(\Pi_b^\intercal)\rangle=p\langle\Pi_b,\Pi_b^\intercal\rangle+(1-p)/d=p\,\alpha_b+(1-p)/d$, so $\tilde\beta=p\alpha_{\mathrm r}+(1-p)$ with $\alpha_{\mathrm r}=\sum_b\alpha_b$, giving $f=\tfrac{p(\alpha_{\mathrm r}+d-2)}{(d-1)(d+2)}$ and $\beta-\tilde\beta=p(d-\alpha_{\mathrm r})\ge0$. Writing $\alpha_{\mathrm r}=\varsigma d$ with the reality fraction $\varsigma$ of \eqref{eq:reality_fraction}, the limit $\varsigma\to0$ with $d\to\infty$ gives $f\to p/d$, matching the unitary $p/(d+1)$ to leading order.

\paragraph{Amplitude damping.} With $\bra0\mathrm{AD}_{1,p}(\Pi_0)\ket0=1$, $\bra1\mathrm{AD}_{1,p}(\Pi_1)\ket1=p$, \eqref{eq:ad_beta} gives $\beta=(1+p)^n$. On the computational basis $\Pi_b^\intercal=\Pi_b$ so $\tilde\beta=\beta$. For a complex single-qubit basis with Bloch angles $(\theta,\phi)$,
\begin{equation}
\bra b\mathrm{AD}_{1,p}(\Pi_b^\intercal)\ket b=\cos^4\tfrac\theta2+p\sin^4\tfrac\theta2+(1-p)\sin^2\tfrac\theta2\cos^2\tfrac\theta2+2\sqrt p\cos(2\phi)\sin^2\tfrac\theta2\cos^2\tfrac\theta2,
\end{equation}
so $\tilde\beta$ is basis-angle dependent; only the real-basis value $\beta=(1+p)^n$ enters the protocol as implemented.

\paragraph{Dephasing.} $\channel=\diag$ gives $\beta=\tr[\diag\circ\diag]=\tr[\diag]=d$, so $f(\diag)=f(\id)$ and the shadow channel is unchanged. Any $\channel$ with $\beta=d$ is inconsequential.

\section{Local noise models: derivations}
\label{app:local_examples}

We evaluate $\Lambda_{zz}=\tfrac12\tr[Z\channel_1(Z)]$ for each model of \cref{tab:local_noise} and read off $f_1=\Lambda_{zz}/2$ by \cref{claim:local_lambda}. Throughout, $\norm{P}^2_{\shadow}=(2f_1^2)^{-\mathrm{wt}(P)}$ by \cref{prop:pauli_seminorm} and the sample complexity is \eqref{eq:local_sample}.

\paragraph{Depolarizing.} $\mcd_{1,p}(Z)=pZ$, so $\Lambda_{zz}=p$ and $f_1=p/2$. Then $\norm{P}^2=(2\cdot p^2/4)^{-\mathrm{wt}(P)}=(p^2/2)^{-\mathrm{wt}(P)}$, and $N_{\mathrm{tot}}\le\tfrac{68\log(2M/\delta)}{\eps^2}\max_i(p^2/2)^{-\mathrm{wt}(P_i)}$. By comparison the unitary local protocol has $f_1^{\mbu}=p/3$ and base $(3f_1^{\mbu\,2})^{-1}=3/p^2$ against $2/p^2$, the ratio $3/2$ per qubit.

\paragraph{Amplitude damping.} With the convention $\bra1\mathrm{AD}_{1,p}(\Pi_1)\ket1=p$ of \eqref{eq:ad_beta}, the Kraus operators are $K_0=\diag(1,\sqrt p)$ and $K_1=\sqrt{1-p}\,\ketbra01$, giving $\mathrm{AD}_{1,p}(Z)=pZ+(1-p)\id$ and hence $\Lambda_{zz}=p$, $f_1=p/2$. This channel is non-unital --- it has $t_z=1-p\neq0$ --- and that is where the cancellation of $t_z$ in \cref{claim:local_lambda} does work: amplitude damping and depolarizing at the same $p$ are indistinguishable to the local protocol, though they are very different channels. At $n$ qubits the global $\beta=(1+p)^n$ of \eqref{eq:ad_beta} is the product of the $n$ single-qubit values $\beta_1=1+p$, as it must be.

\paragraph{Dephasing.} $\diag(Z)=Z$, so $\Lambda_{zz}=1$, $f_1=\tfrac12=f_1(\id)$ and the local channel is unchanged, matching the global statement of \cref{sec:examples}. Its seminorm is the noiseless $2^{\mathrm{wt}(P)}$.

\paragraph{Coherent over-rotation.} For $V=\cos\tfrac\theta2\,\id-\mathrm i\sin\tfrac\theta2\,(\hat n\cdot\vec\sigma)$, a rotation by $\theta$ about the unit axis $\hat n$, conjugation acts on the Bloch sphere as the rotation matrix itself, so
\begin{equation}\label{eq:local_coherent}
\Lambda_{zz}=n_z^2+(1-n_z^2)\cos\theta=1-2(n_x^2+n_y^2)\sin^2\tfrac\theta2 ,\qquad
f_1=\tfrac12\Lambda_{zz} .
\end{equation}
Equivalently $\beta_1=2\lvert V_{00}\rvert^2$, the single-qubit case of \eqref{eq:coherent_beta}, since $\lvert V_{00}\rvert=\lvert V_{11}\rvert$ for any $2\times2$ unitary. Three readings. A rotation about $\hat z$ has $n_x=n_y=0$, so $\Lambda_{zz}=1$ and the error is inconsequential --- it is a phase error in the measurement basis, consistent with dephasing. Only the components of the generator transverse to the measurement axis degrade the protocol, which is \eqref{eq:coherent_dminusbeta} at $n=1$. And $\theta=\pi/2$ about any axis in the $x$--$y$ plane gives $\Lambda_{zz}=0$ exactly, hence $f_1=0$: such a rotation carries $Z$ to $\pm Y$, which is invisible to the real local protocol, so the shadow channel is not invertible there. This is the sharpest form of the warning in \cref{sec:coherent} that coherent errors are dangerous through their ability to reach $\beta=1$ while remaining structurally benign.

\paragraph{Readout error.} A single-qubit confusion matrix with $\Pr[\text{record }1\mid0]=q_0$ and $\Pr[\text{record }0\mid1]=q_1$ is $R=\begin{psmallmatrix}1-q_0&q_1\\q_0&1-q_1\end{psmallmatrix}$, so $\beta_1=\tr R=2-q_0-q_1$ by \eqref{eq:readout_beta} and
\begin{equation}\label{eq:local_readout}
\Lambda_{zz}=1-q_0-q_1,\qquad f_1=\tfrac12(1-q_0-q_1),\qquad
\norm{P}^2_{\shadow}=\Big(\tfrac{(1-q_0-q_1)^2}{2}\Big)^{-\mathrm{wt}(P)} .
\end{equation}
For the symmetric bit-flip $q_0=q_1=q$ this is $f_1=\tfrac12(1-2q)$, vanishing at $q=\tfrac12$, and the $n$-qubit product $\beta=\big(2(1-q)\big)^n=d(1-q)^n$ is the value plotted in \cref{fig:noise_dependence}. Only $q_0+q_1$ enters, so a detector that errs asymmetrically costs exactly what a symmetric one with the same total error rate costs --- the local counterpart of the global observation that only the diagonal of $R$ matters.

\bibliographystyle{quantum}
\bibliography{main}

\end{document}